
\documentclass[11pt]{article}

\usepackage[utf8]{inputenc}
\usepackage[T1]{fontenc}
\usepackage{lmodern}
\usepackage[DIV=11]{typearea} 

\usepackage{microtype}

\usepackage{amssymb}
\usepackage{amsmath}
\usepackage{amsthm}
\usepackage{thmtools}

\usepackage[algo2e,procnumbered,ruled,linesnumbered,vlined]{algorithm2e}

\usepackage{tikz}

\usepackage{xcolor}
\usepackage{enumitem}
\usepackage{xspace}
\usepackage{subcaption}
\usepackage{floatpag}

\usepackage{comment}

\usepackage[
	style=alphabetic,
	backref=true,
	doi=false,
	url=false,
	maxcitenames=3,
	mincitenames=3,
	maxbibnames=10,
	minbibnames=10,
	backend=bibtex8,
	sortlocale=en_US
]{biblatex}

\usepackage[ocgcolorlinks]{hyperref} 
\usepackage{cleveref}

\newcommand*\samethanks[1][\value{footnote}]{\footnotemark[#1]}

\makeatletter
\g@addto@macro\bfseries{\boldmath}
\makeatother

\makeatletter
\g@addto@macro\mdseries{\unboldmath}
\g@addto@macro\normalfont{\unboldmath}
\g@addto@macro\rmfamily{\unboldmath}
\g@addto@macro\upshape{\unboldmath}
\makeatother

\DeclareCiteCommand{\citem}
    {}
    {\mkbibbrackets{\bibhyperref{\usebibmacro{postnote}}}}
    {\multicitedelim}
    {}


\renewcommand*{\multicitedelim}{\addcomma\space}

\AtEveryCitekey{\clearfield{note}}

%

\newcommand{\myhref}[1]{%
  \iffieldundef{doi}
    {\iffieldundef{url}
       {#1}
       {\href{\strfield{url}}{#1}}}
    {\href{http://dx.doi.org/\strfield{doi}}{#1}}%
}

\DeclareFieldFormat{title}{\myhref{\mkbibemph{#1}}}
\DeclareFieldFormat
  [article,inbook,incollection,inproceedings,patent,thesis,unpublished]
  {title}{\myhref{\mkbibquote{#1\isdot}}}

\makeatletter
\AtBeginDocument{%
    \newlength{\temp@x}%
    \newlength{\temp@y}%
    \newlength{\temp@w}%
    \newlength{\temp@h}%
    \def\my@coords#1#2#3#4{%
      \setlength{\temp@x}{#1}%
      \setlength{\temp@y}{#2}%
      \setlength{\temp@w}{#3}%
      \setlength{\temp@h}{#4}%
      \adjustlengths{}%
      \my@pdfliteral{\strip@pt\temp@x\space\strip@pt\temp@y\space\strip@pt\temp@w\space\strip@pt\temp@h\space re}}%
    \ifpdf
      \typeout{In PDF mode}%
      \def\my@pdfliteral#1{\pdfliteral page{#1}}
      \def\adjustlengths{}%
    \fi
    \ifxetex
      \def\my@pdfliteral #1{}
      \def\adjustlengths{\setlength{\temp@h}{-\temp@h}\addtolength{\temp@y}{1in}\addtolength{\temp@x}{-1in}}%
    \fi%
    \def\Hy@colorlink#1{%
      \begingroup
        \ifHy@ocgcolorlinks
          \def\Hy@ocgcolor{#1}%
          \my@pdfliteral{q}%
          \my@pdfliteral{7 Tr}
        \else
          \HyColor@UseColor#1%
        \fi
    }%
    \def\Hy@endcolorlink{%
      \ifHy@ocgcolorlinks%
        \my@pdfliteral{/OC/OCPrint BDC}%
        \my@coords{0pt}{0pt}{\pdfpagewidth}{\pdfpageheight}%
        \my@pdfliteral{F}
        %
        \my@pdfliteral{EMC/OC/OCView BDC}%
        \begingroup%
          \expandafter\HyColor@UseColor\Hy@ocgcolor%
          \my@coords{0pt}{0pt}{\pdfpagewidth}{\pdfpageheight}%
          \my@pdfliteral{F}
        \endgroup%
        \my@pdfliteral{EMC}%
        \my@pdfliteral{0 Tr}
        \my@pdfliteral{Q}%
      \fi
      \endgroup
    }%
}
\makeatother

\addbibresource{references.bib}

\usetikzlibrary{positioning,shapes.misc,calc}

\colorlet{DarkRed}{red!50!black}
\colorlet{DarkGreen}{green!50!black}
\colorlet{DarkBlue}{blue!50!black}

\hypersetup{
	linkcolor = DarkRed,
	citecolor = DarkGreen,
	urlcolor = DarkBlue,
	bookmarks = true,
	bookmarksnumbered = true,
	linktocpage = true
}

\declaretheorem[numberwithin=section]{theorem}
\declaretheorem[numberlike=theorem]{lemma}
\declaretheorem[numberlike=theorem]{proposition}
\declaretheorem[numberlike=theorem]{corollary}
\declaretheorem[numberlike=theorem]{definition}

\declaretheorem[numberlike=theorem]{fact}
\declaretheorem[numberlike=theorem]{observation}
\declaretheorem[numberlike=theorem,name={Algorithm},refname={algorithm,algorithms},Refname={Algorithm,Algorithms}]{algothm}

\DontPrintSemicolon
\SetProcNameSty{textsc}
\SetFuncSty{textsc}

\SetKw{KwAnd}{and}
\SetKw{KwBreak}{break}

\SetKwProg{Procedure}{Procedure}{}{}

\SetKwFunction{Insert}{Insert}
\SetKwFunction{Delete}{Delete}
\SetKwFunction{Query}{Query}
\SetKwFunction{Distance}{Distance}
\SetKwFunction{FindCenter}{FindCenter}
\SetKwFunction{Initialize}{Initialize}
\SetKwFunction{Increase}{Increase}
\SetKwFunction{UpdateLevels}{UpdateLevels}
\SetKwFunction{Open}{Open}
\SetKwFunction{Move}{Move}
\SetKwFunction{GreedyOpen}{GreedyOpen}
\SetKwFunction{CenterCover}{CenterCover}
\SetKwFunction{MovingCenter}{MovingCenter}
\SetKwFunction{CenterCoverGreedyOpen}{CenterCover.GreedyOpen}
\SetKwFunction{CenterCoverInitialize}{CenterCover.Initialize}
\SetKwFunction{CenterCoverFindCenter}{CenterCover.FindCenter}
\SetKwFunction{CenterCoverDistance}{CenterCover.Distance}
\SetKwFunction{CenterCoverDelete}{CenterCover.Delete}
\SetKwFunction{MCopen}{MovingCenter.Open}
\SetKwFunction{MCinitialize}{MovingCenter.Initialize}
\SetKwFunction{MCfindCenter}{MovingCenter.FindCenter}
\SetKwFunction{MCdistance}{MovingCenter.Distance}
\SetKwFunction{MCdelete}{MovingCenter.Delete}
\SetKwFunction{MCmove}{MovingCenter.Move}

\newcommand{\poly}{\operatorname{poly}}
\newcommand{\dist}{d}
\newcommand{\treeroot}{r}
\newcommand{\lev}{\ell}
\newcommand{\comp}{\operatorname{Comp}}
\newcommand{\degree}{\operatorname{deg}}
\newcommand{\q}{R^{\text{c}}}

\newcommand{\qp}{R^{\text{c}}_p}
\newcommand{\Q}{R^{\text{d}}}
\newcommand{\Qhat}{\hat{R}^{\text{d}}}

\newcommand{\Qp}{R^{\text{d}}_p}
\newcommand{\cen}{c}
\newcommand{\radius}{r}
\newcommand{\ball}{B}
\newcommand{\T}{C}
\newcommand{\C}{C}
\newcommand{\dom}{D}
\newcommand{\opens}{\ensuremath{N_\textrm{open}}}
\newcommand{\nummoves}{\ensuremath{N_\textrm{move}}}
\newcommand{\moves}{\ensuremath{D_\textrm{move}}}
\newcommand{\cG}{\mathcal{G}}
\newcommand{\cH}{\mathcal{H}}

\title{Dynamic Approximate All-Pairs Shortest Paths: Breaking the $O(mn)$ Barrier and Derandomization\thanks{The definite version of this article is published as: \fullcite{HenzingerKNSICOMP15}. A preliminary version was presented at the \emph{2013 IEEE 54th Annual Symposium on Foundations of Computer Science (FOCS 2013)}.}}

\author{
Monika Henzinger\thanks{University of Vienna, Faculty of Computer Science, Austria. Supported by the Austrian Science Fund (FWF): P23499-N23, the Vienna Science and Technology Fund (WWTF) grant ICT10-002, the University of Vienna (IK \mbox{I049-N}), and a Google Faculty Research Award. The research leading to these results has received funding from the European Research Council under the European Union's Seventh Framework Programme (FP/2007-2013) / ERC Grant Agreement no. 340506 and from the European Union's Seventh Framework Programme (FP7/2007-2013) under grant agreement no.~317532.}
\and Sebastian Krinninger\samethanks[2]
\and Danupon Nanongkai\thanks{University of Vienna, Faculty of Computer Science, Austria. Work partially done while at ICERM, Brown University, USA, and Nanyang Technological University, Singapore 637371, and while supported in part by the following research grants: Nanyang Technological University grant M58110000, Singapore Ministry of Education (MOE) Academic Research Fund (AcRF) Tier 2 grant MOE2010-T2-2-082, and Singapore MOE  AcRF Tier 1 grant MOE2012-T1-001-094.}
}

\date{}

\hypersetup{
	pdftitle = {Dynamic Approximate All-Pairs Shortest Paths: Breaking the O(mn) Barrier and Derandomization},
	pdfauthor = {Monika Henzinger, Sebastian Krinninger, Danupon Nanongkai}
}

\begin{document}
\maketitle
\begin{abstract}
We study dynamic $(1+\epsilon)$-approximation algorithms for the all-pairs shortest paths problem in unweighted undirected $n$-node $m$-edge graphs under edge deletions. The fastest algorithm for this problem is a randomized algorithm with a total update time of $\tilde O(mn/\epsilon)$ and constant query time by Roditty and Zwick \citem[FOCS 2004]{RodittyZ12}. The fastest deterministic algorithm is from a 1981 paper by Even and Shiloach \citem[JACM 1981]{EvenS81}; it has a total update time of $O(mn^2)$ and constant query time.
We improve these results as follows:

\begin{enumerate}[label=(\arabic{*})]
\item We present an algorithm with a total update time of $\tilde O(n^{5/2}/\epsilon)$ and constant query time that has an additive error of $ 2 $ in addition to the $ 1+\epsilon $ multiplicative error. 
This beats the previous $\tilde O(mn/\epsilon)$ time when $m=\Omega(n^{3/2})$.
Note that the additive error is {\em unavoidable} since, even in the {\em static} case, an $O(n^{3-\delta})$-time (a so-called {\em truly subcubic}) combinatorial algorithm with $ 1+\epsilon $ multiplicative error cannot have an additive error less than $2-\epsilon$, unless we make a major breakthrough for Boolean matrix multiplication \citem[Dor et al.\ FOCS 1996]{DorHZ00} and many other long-standing problems \citem[Vassilevska Williams and Williams FOCS 2010]{WilliamsW10}.

The algorithm can also be turned into a $(2+\epsilon)$-approximation algorithm (without an additive error) with the same time guarantees, improving the recent $(3+\epsilon)$-approximation algorithm with $\tilde O(n^{5/2+O(\sqrt{\log{(1/\epsilon)} / \log n})})$ running time of Bernstein and Roditty \citem[SODA 2011]{BernsteinR11} in terms of both approximation and time guarantees. 

\item We present a deterministic algorithm with a total update time of $\tilde O(mn/\epsilon)$ and a query time of $O(\log\log n)$. The algorithm has a multiplicative error of $ 1+\epsilon $ and gives the first improved deterministic algorithm since 1981. It also answers an open question raised by Bernstein \citem[STOC 2013]{Bernstein13}.
The deterministic algorithm can be turned into a deterministic fully dynamic $ (1+\epsilon) $-approximation with an amortized update time of $ \tilde O (m n / (\epsilon t)) $ and a query time of $ \tilde O (t) $ for every $ t \leq \sqrt{n} $.
\end{enumerate}  
In order to achieve our results, we introduce two new techniques: (1) A {\em monotone Even--Shiloach tree} algorithm which  maintains a bounded-distance shortest-paths tree on a certain type of emulator called {\em locally persevering emulator}.  (2) A derandomization technique based on {\em moving Even--Shiloach trees}
as a way to derandomize the standard random set argument.
These techniques might be of independent interest.

\end{abstract}
\newpage

\tableofcontents
\newpage

\section{Introduction}\label{sec:intro}

Dynamic graph algorithms is one of the classic areas in theoretical computer science with a countless number of applications. It concerns maintaining properties of dynamically changing graphs. The objective of a dynamic graph algorithm is to efficiently process an online sequence of update operations, such as edge insertions and deletions, and query operations on a certain graph property. It has to quickly maintain the graph property despite an {\em adversarial} order of edge deletions and insertions. 
Dynamic graph problems are usually classified according to the types of updates allowed: 
{\em decremental} problems allow only deletions, {\em incremental} problems allow only insertions, and {\em fully dynamic} problems allow both.

\subsection{The Problem} 
We consider the {\em decremental all-pairs shortest paths} (APSP) problem where we wish to maintain the distances in an undirected unweighted graph under a sequence of the following delete and distance query operations: 
\begin{itemize}
\item \Delete{$u$, $v$}: delete edge $(u, v)$ from the graph, and
\item \Distance{$x$, $y$}: return the distance between node $x$ and node $y$ in the current graph $G$, denoted by $\dist_{G}(x, y)$.
\end{itemize}
We use the term {\em single-source shortest paths} (SSSP) to refer to the special case where the distance query can be done only when $x=s$ for a prespecified {\em source node} $s$. The efficiency is judged by two parameters: {\em query time}, denoting the time needed to answer {\em each} distance query, and {\em total update time}, denoting the time needed to process {\em all} edge deletions. The running time will be in terms of $n$, the number of nodes in the graph, and $m$, the number of edges {\em before} any deletion. We use  $\tilde O$-notation to hide an $O(\poly\log n)$ term. When it is clear from the context, we use ``time'' instead of ``total update time,'' and, unless stated otherwise, the query time is $O(1)$.
One of the main focuses of this problem in the literature, which is also the goal in this paper, is to {\em optimize the total update time} while keeping the query time and {\em  approximation guarantees} small.
We say that an algorithm provides an {\em $(\alpha, \beta)$-approximation} if the distance query on nodes $x$ and $y$ on the current graph $G$ returns an estimate $ \delta (x, y) $ such that $\dist_{G}(x, y)\leq \delta (x, y) \leq \alpha\dist_G(x, y)+\beta$.
We call $\alpha$ and $\beta$ {\em multiplicative} and {\em additive errors}, respectively. 
We are particularly interested in the case where $\alpha= 1+\epsilon $, for an arbitrarily small constant $\epsilon>0$, $\beta$ is a small constant, and the query time is constant or near-constant.

\paragraph*{Previous Results.}
Prior to our work, the best total update time for {\em deterministic} decremental APSP algorithms was $\tilde O(mn^2)$ by one of the earliest papers in the area from 1981 by Even and Shiloach \cite{EvenS81}. The fastest {\em exact randomized} algorithms are the $\tilde O(n^3)$-time algorithms by  Demetrescu and Italiano \cite{DemetrescuI06} and Baswana, Hariharan, and Sen \cite{BaswanaHS07}. The fastest {\em approximation} algorithm is the $\tilde O(mn)$-time $(1+\epsilon, 0)$-approximation algorithm by Roditty and Zwick \cite{RodittyZ12}. If we insist on an $O(n^{3-\delta})$ running time, for some constant $\delta>0$, Bernstein and Roditty \cite{BernsteinR11} obtain an $\tilde O(n^{2+1/k+O(1/\sqrt{\log n})})$-time $(2k-1+\epsilon, 0)$-approximation algorithm, for any integer $k\geq 2$, which gives, e.g., a $(3+\epsilon, 0)$-approximation guarantee in $\tilde O(n^{5/2+O(1/\sqrt{\log n})})$ time. All these algorithms have an $O(1)$ worst-case query time. See \Cref{sec:related work} for more detail and other related results.

\subsection{Our Results}
We present improved randomized and deterministic algorithms. Our deterministic algorithm provides a $(1+\epsilon, 0)$-approximation and runs in $\tilde O(mn/\epsilon)$ total update time. Our randomized algorithm runs in $\tilde O(n^{5/2}/\epsilon)$ time and can guarantee both $(1+\epsilon, 2)$- and $(2+\epsilon, 0)$-approximations. 
\Cref{table:compare} compares our results with previous results. In short, we make the following improvements over previous algorithms (further discussions follow). 
\begin{itemize}
\item The total running time of deterministic algorithms is improved from Even and Shiloach's $\tilde O(mn^2)$ to $\tilde O(mn)$ (at the cost of $(1+\epsilon, 0)$-approximation and $O(\log \log n)$ query time). This is the first improvement since 1981.
\item For $m=\omega(n^{3/2})$, the total running time is improved from Roditty and Zwick's $\tilde O(mn/\epsilon)$ to $\tilde O(n^{5/2}/\epsilon)$, at the cost of an additive error of $2$, which appears only when the distance is $O(1/\epsilon)$ (since otherwise it could be treated as a multiplicative error of $O(\epsilon)$) and is {\em unavoidable} (as discussed below). 
\item Our $(2+\epsilon, 0)$-approximation algorithm improves the algorithm of Bernstein and Roditty in terms of both total update time and approximation guarantee. The multiplicative error of $ 2+\epsilon $ is essentially the best we can hope for, if we do not want any additive error.
\end{itemize}
To obtain these algorithms, we present two novel techniques, called {\em moving Even--Shiloach tree} and {\em monotone Even--Shiloach tree}, based on a classic technique of Even and Shiloach \cite{EvenS81}. These techniques are reviewed in \Cref{sec:techniques}.

\begin{table}
\centering
\begin{tabular}{|c|c|c|c|}
\hline
\centering{\bf\footnotesize Reference} & {\bf\footnotesize Total Running Time}& {\bf\footnotesize Approximation} & {\bf\footnotesize Deterministic?} \\
\hline
  \cite{EvenS81} & $\tilde O(mn^2)$ & Exact & Yes\\
  {\bf This paper} & $\tilde O(mn / \epsilon)$ & $(1+\epsilon, 0)$ & Yes\\
\hline
  \cite{DemetrescuI06,BaswanaHS07} & $\tilde O(n^3)$  & Exact & No \\
  \cite{RodittyZ12} & $\tilde O(mn / \epsilon)$  & $(1+\epsilon, 0)$ & No \\
  {\bf This paper} & $\tilde O(n^{5/2} / \epsilon)$ & $(1+\epsilon, 2)$ & No\\
\hline
  \cite{BernsteinR11} & $\tilde O(n^{5/2+\sqrt{\log (6/\epsilon)}/\sqrt{\log n}})$  & $(3+\epsilon, 0)$ & No \\
  {\bf This paper} & $\tilde O(n^{5/2} / \epsilon)$ & $(2+\epsilon, 0)$ & No\\
\hline
\end{tabular}

\caption{Comparisons between our and previous algorithms that are closely related. For details of these and other results see \Cref{sec:related work}. All algorithms, except our deterministic algorithm, have $O(1)$ query time. Our deterministic algorithm has $O(\log \log n)$ query time.} 
\label{table:compare}
\end{table}

\paragraph*{Improved Deterministic Algorithm.}
In 1981, Even and Shiloach \cite{EvenS81} presented a deterministic decremental SSSP algorithm for undirected, unweighted graphs with a total update time of $O(mn)$ over all deletions. 
By running this algorithm from $n$ different nodes, we get an $O(mn^2)$-time decremental algorithm  for APSP. No progress on deterministic decremental APSP has been made since then. 
Our algorithm achieves the first improvement over this algorithm, at the cost of a $(1+\epsilon, 0)$-approximation guarantee and $O(\log \log n)$ query time. (Note that our algorithm is also faster than the current fastest randomized algorithm \cite{RodittyZ12} by a $\log n$ factor.)
Our deterministic algorithm also answers a question recently raised by Bernstein \cite{Bernstein13} which asks for a deterministic algorithm with a total update time of $\tilde O(mn/\epsilon)$. As pointed out in \cite{Bernstein13} and several other places, this question is important due to the fact that deterministic algorithms can deal with an {\em adaptive offline adversary} (the strongest adversary model in online computation \cite{BorodonE98,Ben-DavidBKTW94}), while the randomized algorithms developed so far assume an {\em oblivious adversary} (the weakest adversary model) where the order of edge deletions must be fixed before an algorithm makes random choices. 
Our deterministic algorithm answers exactly this question. 
Using known reductions, we also obtain a deterministic fully dynamic $ (1+\epsilon) $-approximation with an amortized running time of $ \tilde O (m n / (\epsilon t)) $ per update and a query time of $ \tilde O (t) $ for every $ t \leq n $.

\paragraph*{Improved Randomized Algorithm.} 
Our aim is to improve the $\tilde O(mn)$ running time of Roditty and Zwick \cite{RodittyZ12} to so-called {\em truly subcubic time}, i.e., $O(n^{3-\delta})$ time for some constant $\delta>0$, a running time that is highly sought after in many problems (e.g., \cite{WilliamsW10,VassilevskaWY09,RodittyT13}).
Note, however, that this improvement has to come at the cost of worse approximation.

\begin{fact}[\cite{DorHZ00,WilliamsW10}]\label{fact:truly subcubic lower bound}
For any $\alpha\geq 1$ and $\beta\geq 0$ such that $2\alpha+\beta<4$, there is \textbf{no} combinatorial $(\alpha, \beta)$-approximation algorithm, not even a static one, for APSP on unweighted undirected graphs that is truly subcubic, unless we make a major breakthrough on many long-standing open problems, such as a combinatorial Boolean matrix multiplication and triangle detection.
\end{fact}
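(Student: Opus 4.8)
The plan is to reduce combinatorial Boolean matrix multiplication (equivalently, combinatorial triangle detection, by the results cited in the statement) to $(\alpha,\beta)$-approximate APSP, using the classical layered-graph construction. The starting observation is that the hypothesis $2\alpha+\beta<4$ forces any $(\alpha,\beta)$-approximation to \emph{distinguish node pairs at distance exactly $2$ from node pairs at distance $\geq 4$}: if $\dist_G(x,y)=2$ the returned estimate is at most $2\alpha+\beta<4$, whereas if $\dist_G(x,y)\geq 4$ (including the disconnected case, where the estimate is $\infty$) the estimate is at least $4$. Hence thresholding the estimate at $4$ recovers, for every queried pair, exactly whether its true distance equals $2$.

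Given two $n\times n$ Boolean matrices $A$ and $B$, I would build an undirected, unweighted, tripartite graph on vertex sets $I$, $K$, $J$, each a copy of $\{1,\dots,n\}$: add an edge $\{i,k\}$ for $i\in I,\ k\in K$ whenever $A_{ik}=1$, and an edge $\{k,j\}$ for $k\in K,\ j\in J$ whenever $B_{kj}=1$, and no other edges. Every walk from a vertex of $I$ to a vertex of $J$ must alternate across the layers, and every step out of $I$ or $J$ lands back in $K$; consequently $\dist(i,j)=2$ if $(AB)_{ij}=1$ and $\dist(i,j)\geq 4$ (possibly $\infty$) otherwise — there are no $I$–$J$ distances equal to $1$ or $3$. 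Running the hypothetical approximate APSP algorithm on this graph and thresholding the estimates at $4$ therefore outputs the product $AB$ exactly. The graph has $3n$ nodes and $O(n^2)$ edges, and the reduction is purely combinatorial, so a truly subcubic combinatorial $(\alpha,\beta)$-approximate APSP algorithm, run with $N=3n$, would compute $AB$ in $O(N^{3-\delta})=O(n^{3-\delta})$ time.

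It remains to invoke the known reductions: by Dor, Halperin, and Zwick this already contradicts the widely believed hardness of combinatorial Boolean matrix multiplication, and by Vassilevska Williams and Williams a truly subcubic combinatorial BMM algorithm would collapse a whole family of long-standing problems (combinatorial triangle detection among them) to truly subcubic time. The two points that need a moment of care are (i) verifying that the layered construction admits no spurious short $I$–$J$ path, so that the relevant distances really are $2$ or at least $4$, and (ii) confirming that the reduction does not secretly rely on fast matrix multiplication, i.e.\ that it stays combinatorial; both are immediate for the construction above. I do not expect a genuine obstacle here — the Fact is essentially a repackaging of these standard reductions, stated at the granularity ($\alpha,\beta$) that our upper bounds operate at.
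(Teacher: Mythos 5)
Your proposal is correct and follows essentially the same route as the paper, whose Appendix~A proof is just a sketch deferring to the Dor--Halperin--Zwick reduction (which is exactly your tripartite layered-graph construction, exploiting that $2\alpha+\beta<4$ and the no-underestimate property let thresholding at $4$ separate distance-$2$ pairs from distance-$\geq 4$ pairs) together with the Vassilevska Williams--Williams equivalences. You simply spell out explicitly what the paper cites.
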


This fact is due to the reductions of Dor, Halperin, and Zwick \cite{DorHZ00} and Vassilevska Williams and Williams~\cite{WilliamsW10} (see \Cref{sec:proof of fact:truly subcubic lower bound} for a proof sketch). (Roditty and Zwick \cite{RodittyZ11} also showed a similar fact for decremental exact SSSP. For the weighted case, lower bounds can be obtained even for noncombinatorial algorithms by assuming the hardness of APSP computation~\cite{RodittyZ11,AbboudW14}.)
Very recently (after the preliminary version of this paper appeared), Henzinger~et~al.~\cite{HenzingerKNS15} showed that Fact~\ref{fact:truly subcubic lower bound} holds even for {\em noncombinatorial} algorithms assuming that there is no truly subcubic-time algorithm for a problem called {\em online Boolean matrix-vector multiplication}. Henzinger~et~al. \cite{HenzingerKNS15} argue that refuting this assumption will imply the same breakthrough as mentioned in Fact~\ref{fact:truly subcubic lower bound} if the term ``combinatorial algorithm'' (which is not a well-defined term) is interpreted in a certain way (in particular if it is interpreted as a ``Strassen-like algorithm'' as defined in \cite{BallardDHS12}, which captures all known fast matrix multiplication algorithms).
Thus, the best approximation guarantee we can expect from truly subcubic algorithms is, e.g., a multiplicative or additive error of at least $2$. Our algorithms achieve essentially these {\em best approximation guarantees}:
in $\tilde O(n^{5/2} / \epsilon)$ time, we get a $(1+\epsilon, 2)$-approximation, and, if we do not want any additive error, we can get a $(2+\epsilon, 0)$-approximation (see \Cref{thm:main_APSP_approximation} and \Cref{thm:2-approx} for the precise statements of these results).\footnote{We note that there is still some room to eliminate the $\epsilon$-term, i.e., to get a $(1, 2)$-approximation algorithm. But anything beyond this is unlikely to be possible.}
We note that, prior to our work, Bernstein and Roditty's algorithm \cite{BernsteinR11} could achieve, e.g., a $(3+\epsilon, 0)$-approximation guarantee in $\tilde O(n^{5/2+O(\sqrt{1/\log n})})$ time. This result is improved by our $(2+\epsilon, 0)$-approximation algorithm in terms of both time and approximation guarantees and is far worse than our $(1+\epsilon, 2)$-approximation guarantee, especially when the distance is large. 
Also note that the running time of our $(1+\epsilon, 2)$-approximation algorithm improves the $\tilde O(mn)$ one of Roditty and Zwick \cite{RodittyZ12} when $m=\omega(n^{3/2})$, except that our algorithm gives an additive error of $2$ which is unavoidable and appears only when the distance is $O(1/\epsilon)$ (since otherwise it could be counted as a multiplicative error of $O(\epsilon)$).

\subsection{Techniques}\label{sec:techniques}
Our results build on two previous algorithms. The first algorithm is the classic SSSP algorithm of Even and Shiloach \cite{EvenS81} (with the more general analysis of King~\cite{King99}), which we will refer to as the {\em Even--Shiloach tree}. 
The second algorithm is the $(1+\epsilon, 0)$-approximation APSP algorithm of Roditty and Zwick \cite{RodittyZ12}. We actually view the algorithm of Roditty and Zwick as a {\em framework} which runs several Even--Shiloach trees and maintains some properties while edges are deleted. We would like to alter the Roditty--Zwick framework but doing so usually makes it hard to bound the cost of maintaining Even--Shiloach trees (as we will discuss later). 
Our main technical contribution is the development of new variations of the Even--Shiloach tree, called {\em moving Even--Shiloach tree} and {\em monotone Even--Shiloach tree}, which are suitable for our modified Roditty--Zwick frameworks. Since there are many other algorithms that run Even--Shiloach trees as subroutines, it might be possible that other algorithms will benefit from our new Even--Shiloach trees as well.

\paragraph*{Review of Even--Shiloach Tree.}
The Even--Shiloach tree has two parameters: a root (or source) node $s$ and the range (or depth) $R$. It maintains a shortest paths tree rooted at $s$ and the distances between $s$ and all other nodes in the dynamic graph, up to distance $R$ (if the distance is more than $R$, it will be set to $\infty$). It has a query time of $O(1)$ and a total update time of $O(mR)$ over all deletions. 
The total update time crucially relies on the fact that the distance between $s$ and any node $v$ changes {\em monotonically}: it will increase at most $R$ times before it exceeds $R$ (i.e., from $1$ to $R$). 
This ``monotonicity'' property heavily relies on the ``decrementality'' of the model, i.e., the distance between two nodes never decreases when we delete edges, and is easily destroyed when we try to use the Even--Shiloach tree in a more general setting (e.g., when we want to allow edge insertions or alter the Roditty--Zwick framework). Most of our effort in constructing both randomized and deterministic algorithms will be spent on recovering from the destroyed decrementality.

\subsubsection{Monotone Even--Shiloach Tree for Improved Randomized Algorithms}

The high-level idea of our randomized algorithm is to run an existing decremental algorithm of Roditty and Zwick \cite{RodittyZ12} on a sparse {\em weighted} graph that approximates the distances in the original graph, usually referred to as an {\em emulator} (see \Cref{sec:emulator} for more detail). This approach is commonly used in the static setting (e.g., \cite{AingworthCIM99,DorHZ00,Elkin05,ElkinP04,AwerbuchBCP98,Cohen98,CohenZ01,ThorupZ05,Zwick02}), and it was recently used for the first time in the decremental setting by Bernstein and Roditty \cite{BernsteinR11}. As pointed out by Bernstein and Roditty, while it is a simple task to run an existing APSP algorithm on an emulator in the static setting, doing so in the decremental setting is not easy since it will destroy the ``decrementality'' of the setting:
when an edge in the original graph is deleted, we might have to {\em insert} an edge into the emulator. Thus, we cannot run decremental algorithms on an arbitrary emulator, because from the perspective of this emulator, we are not in a decremental setting.

Bernstein and Roditty manage to get around this problem by constructing an emulator with a special property.\footnote{In fact, their emulator is basically identical to one used earlier by Bernstein \cite{Bernstein09}, which is in turn a modification of a spanner developed by Thorup and Zwick~\cite{ThorupZ05, ThorupZ06}. However, the properties they proved are entirely new.} Roughly speaking, 
they show that their emulator guarantees that {\em the distance between any two nodes changes $\tilde O(n)$ times}.
Based on this simple property, they show that the $(2k-1, 0)$-approximation algorithm of Roditty and Zwick \cite{RodittyZ12} can be run on their emulator with a small running time. 
However, they {\em cannot} run the $(1+\epsilon, 0)$-approximation algorithm of Roditty and Zwick on their emulator. The main reason is that this algorithm relies on a more general property of a graph under deletions: for any $ R $ between $ 1 $ and $ n $, the distance between any two nodes changes at most $R$ times {\em before it exceeds $R$}  (i.e., it changes from $1$ to $R$). They suggested finding an emulator with this more general property as a future research direction. 

In our algorithm, we manage to run the $(1+\epsilon, 0)$-approximation algorithm of Roditty and Zwick on our emulator, but in a \emph {conceptually different} way from Bernstein and Roditty. 
In particular, we do not construct the emulator asked for by Bernstein and Roditty; rather, we show that there is a type of emulator such that, while edge insertions can occur often, their effect can be {\em ignored}. We then modify the algorithm of Roditty and Zwick to incorporate this ignoring of edge insertions. More precisely, the algorithm of Roditty and Zwick relies on the classic Even--Shiloach tree.  We develop a simple variant of this classic algorithm called the {\em monotone Even--Shiloach tree} that can handle restricted kinds of insertions and use it to replace the classic Even--Shiloach tree in the algorithm of Roditty and Zwick. 

Our modification to the Even--Shiloach tree is as follows. Recall that the Even--Shiloach tree can maintain the distances between a specific node $s$ and all other nodes, up to $R$, in $O(mR)$ total update time under edge deletions. This is because, for any node $v$, it has to do work $O(\degree(v))$ (the degree of $v$) only when the distance between $s$ and $v$ changes, which will happen at most $R$ times (from $1$ to $R$) in the decremental model. Thus, the total work on each node $v$ will be $O(R\degree(v))$ which sums to $O(mR)$ in total. This algorithm does not perform well when there are edge insertions: one edge insertion could cause a {\em decrease} in the distance between $s$ and $v$ by as much as $\Omega(R)$, causing an additional $\Omega(R)$ distance changes. 
The idea of our monotone Even--Shiloach tree is extremely simple: {\em ignore distance decreases}! It is easy to show that the total update time of our algorithm remains the same $O(mR)$ as the classic one. The hard part is proving that it gives a good approximation when run on an emulator. This is because it does not maintain the {\em exact} distances on an emulator anymore. So, even when the emulator gives a good approximate distance on the original graph, our monotone Even--Shiloach tree might not. Our monotone Even--Shiloach tree does not give any guarantee for the distances in the emulator, but we can show that it still approximates the distances in the original graph. 
Of course, this will {\em not} work on any emulator; but we can show that it works on a specific type of emulator that we call {\em locally persevering emulators}.\footnote{We remark that there are other emulators that can be maintained in the decremental setting; see, e.g., \cite{ThorupZ05,ThorupZ06,RodittyZ12,Bernstein09,BernsteinR11,AusielloFI06,Elkin11,BaswanaKS12}. We are the first to introduce the notion of locally persevering emulators and show that there is an emulator that has this property.}
Roughly speaking, a locally persevering emulator is an emulator where, for any ``nearby''\footnote{Note that the word ``nearby'' will be parameterized by a parameter $\tau$ in the formal definition. So, formally, we must use the term $(\alpha, \beta, \tau)$-locally persevering emulator where $\alpha$ and $\beta$ are multiplicative and additive approximation factors, respectively. See \Cref{sec:emulator} for detail.} nodes $u$ and $v$ in the original graph, either 
\begin{enumerate}[label=(\arabic{*})]
\item there is a shortest path from $ u $ to $ v $ in the original graph that also appears in the emulator, or 
\item \label{item:decremental path}there is a path in the emulator that approximates the distance in the original graph and {\em behaves in a persevering way}, in the sense that all edges of this path are in the emulator since before the first deletion and their weights never decrease.
We call the latter path a {\em persevering path}.
\end{enumerate}
Once we have the right definition of a locally persevering emulator, proving that our monotone Even--Shiloach tree gives a good distance estimate is conceptually simple (we sketch the proof idea below). 
Our last step is to show that such an emulator exists and can be efficiently maintained under edge deletions. We show (roughly) that we can maintain an emulator, which $(1+\epsilon, 2)$-approximates the distances and has $\tilde O(n^{3/2})$ edges, in $\tilde O(n^{5/2} / \epsilon)$ total update time under edge deletions. By running the $\tilde O(mn)$-time algorithm of Roditty and Zwick on this emulator, replacing the classic Even--Shiloach tree by our monotone version, we have the desired $\tilde O(n^{5/2} / \epsilon)$-time $(1+\epsilon, 2)$-approximation algorithm. To turn this algorithm into a $(2+\epsilon, 0)$-approximation, we  observe that we can check if two nodes are of distance $1$ easily; thus, we only have to use our $(1+\epsilon, 2)$-approximation algorithm to answer a distance query when the distance between two nodes is at least $2$. In this case, the additive error of $2$ can be treated as a multiplicative factor.

\paragraph*{Proving the Approximation Guarantee of the Monotone Even--Shiloach Tree.}
To illustrate why our monotone Even--Shiloach tree gives a good approximation when run on a locally persevering emulator, we sketch a result that is weaker and simpler than our main results; we show how to $(3, 0)$-approximate distances from a particular node $s$ to other nodes. This fact easily leads to a $(3+\epsilon, 0)$-approximation $\tilde O(n^{5/2} / \epsilon)$-time algorithm, which gives the same approximation guarantee as the algorithm of Bernstein and Roditty \cite{BernsteinR11} and is slightly faster and reasonably simpler. To achieve this, we use the following emulator which is a simple modification of the emulator of Dor, Halperin, and Zwick~\cite{DorHZ00}: Randomly select $ \tilde \Theta(\sqrt{n})$ nodes. At any time, the emulator consists of all edges incident to nodes of degree at most $\sqrt{n}$ and edges from each random node $c$ to every node $v$ of distance at most $2$ from $c$ with weight equal to the distance between $ v $ and $ c $. When the distance exceeds $ 2 $, the edge is deleted from the emulator. It can be shown that this emulator can be maintained in $\tilde O(mn^{1/2})=\tilde O(n^{5/2})$ time under edge deletions. Moreover, it is a $(3, 0)$-emulator with high probability, since for every edge $(u, v)$ either 
\begin{enumerate}[label=(\roman{*})]
\item $(u, v)$ is in the emulator, or 
\item\label{item:there is a path} there is a path $\langle u, c, v\rangle$ of length at most three, where $c$ is a random node.
\end{enumerate}
Observe further that if \ref{item:there is a path} happens, then the path $\langle u, c, v\rangle$ is {\em persevering} (as in \Cref{item:decremental path} above):
\begin{enumerate}[resume,label=(ii')]
\item $\langle u, c, v\rangle$ must be in this emulator since before the first deletion, and the weights of the edges $(u, c)$ and $(c, v)$ have never decreased. \label{item:two prime}
\end{enumerate}
It follows that this emulator is locally persevering.\footnote{We note that we are being vague here. To be formal, we later define the notion of $(\alpha, \beta, \tau)$-locally persevering emulator in \Cref{def:locally decremental emulator}, and the emulator we just defined will be $(3, 0, 1)$-locally persevering.} 
Now we show that when we run the monotone Even--Shiloach tree on the above emulator, it gives $(3, 0)$-approximate distances between $s$ and all other nodes. Recall that the monotone Even--Shiloach tree maintains a distance estimate, say $\ell(v)$, between $s$ and every node $v$ in the emulator.\footnote{Here $\ell$ stands for ``level'' as $\ell(v)$ is the level of $v$ in the breadth-first search tree rooted at $s$.} For every node $v$, the value of $\ell(v)$ is regularly updated, except that when the degree of a node drops to $ \sqrt{n} $ and the resulting insertion of an edge, say $(u, v)$, decreases the distance between $v$ and $s$ in the emulator; in particular, $\ell(v)>{\ell(u)+w(u, v)}$, where $w(u, v)$ is the weight of edge $(u, v)$.
A usual way to modify the Even--Shiloach tree for dealing with such an insertion~\cite{BernsteinR11} is to decrease the value of $\ell(v)$ to $\ell(u)+w(u, v)$.
Our monotone Even--Shiloach tree will \emph{not} do this and keeps $\ell(v)$ unchanged.
In this case, we say that the node $v$ and the edge $(u, v)$ {\em become stretched}. In general, an edge $(u, v)$ is {\em stretched} if $\ell(v)>\ell(u)+w(u, v)$ or $\ell(u)>\ell(v)+w(u, v)$, and a node is stretched if it is incident to a stretched edge. Two observations that we will use are
\begin{enumerate}[label=(O\arabic{*})]
\item \label{item:observation one} as long as a node $v$ is stretched, it will not change $\ell(v)$, and
\item \label{item:observation two} a stretched edge must be an inserted edge.
\end{enumerate}
We will argue that $\ell(v)$ of every node $v$ is at most three times its true distance to $s$ in the original graph. To prove this for a stretched node $v$, we simply use the fact that this is true before $v$ becomes stretched (by induction), and $\ell(v)$ has not changed since then (by \ref{item:observation one}). If $v$ is not stretched, we consider a shortest path $\langle v, u_1, u_2, \ldots, s\rangle$ from $ v $ to $ s $ in the original graph. We will prove that 
\[
\ell(v)\leq \ell(u_1)+3;
\] 
thus, assuming that $\ell(u_1)$ satisfies the claim (by induction), $\ell(v)$ will satisfy the claim as well. 
To prove this, observe that if the edge $(v, u_1)$ is contained in the emulator then we know that $\ell(v)\leq \ell(u_1)+1$ (since $v$ is not stretched), and we are done. Otherwise, by the fact that this emulator is locally persevering, we know that there is a path $\pi=\langle v, c, u_1\rangle$ of length at most $3$ in the emulator, and it is persevering (see \Cref{item:two prime}).
By \ref{item:observation two}, {\em edges in $\pi$ are not stretched}. It follows that 
$$\ell(v)\leq \ell(c)+w(v, c)\leq \ell(u_1)+w(v, c)+w(c, u_1)\leq  \ell(u_1)+3,$$ 
where $w(v, c)$ and $w(c, u_1)$ are the current weights of edges $(v, c)$ and $(c, u_1)$, respectively, in the emulator. The claim follows.

In \Cref{sec:faster}, we show how to refine the above argument to obtain a $(1+\epsilon, 2)$-approximation guarantee. The first refinement, which is simple, is extending the emulator above to a $(1+\epsilon, 2)$-emulator. This is done by adding edges from every random node $c$ to all nodes in distance at most $1/\epsilon$ from $c$. The next refinement, which is the main one, is the formal definition of $(\alpha, \beta, \tau)$-locally persevering emulators for some parameters $\alpha$, $\beta$, and $\tau$, and extending the proof outlined above to show that the monotone Even--Shiloach tree on such an emulator will give an $(\alpha+\beta/\tau, \beta)$-approximate distance estimate. We finally show that our simple $(1+\epsilon, 2)$-emulator is a $(1, 2, 1/\epsilon)$-locally persevering emulator.

\subsubsection{Moving Even--Shiloach Tree for Improved Deterministic Algorithms}

Many distance-related algorithms in both dynamic and static settings use the following {\em randomized argument} as an important technique:
if we select $\tilde O(h)$ nodes, called {\em centers}, uniformly at random, then every node will be at  distance at most $n/h$ from one of the centers with high probability~\cite{UllmanY91,RodittyZ12}. This even holds in the decremental setting (assuming an oblivious adversary). Like other algorithms, the Roditty--Zwick algorithm also heavily relies on this argument, which is the only reason it is randomized. Our goal is to derandomize this argument. 
Specifically, for several different values of $h$, the Roditty--Zwick framework selects $\tilde O(h)$ random centers and uses the randomized argument above to argue that every node in a connected component of size at least $n/h$ is {\em covered} by a center in the sense that it will always be within distance at most $n/h$ from at least one center; we call this set of centers a {\em center cover}. It also maintains an Even--Shiloach tree of depth $R=O(n/h)$ from these $h$ centers, which takes a total update time of $\tilde O(mR)$ for each tree and thus $\tilde O(hmR)=\tilde O(mn)$ over all trees. 
To derandomize the above process, we have two constraints: 

\begin{enumerate}[label=(\arabic{*})]
\item the center cover must be maintained (i.e., every node in a component of size at least $n/h$ has a center nearby), and
\item the number of centers (and thus Even--Shiloach trees maintained) must be $\tilde O(h)$ in total.
\end{enumerate}

Maintaining these constraints in the {\em static} setting is fairly simple, as in the following algorithm.

\begin{algothm}\label{algo:deterministic intuition 1}
As long as there is a node $v$ in a ``big'' connected component (i.e., of size at least $n/h$) that is not covered by any center,  make $v$ a new center.
\end{algothm}

\Cref{algo:deterministic intuition 1} clearly guarantees the first constraint.
The second constraint follows from the fact that the distance between any two centers is more than $n/h$. Since understanding the proof for guaranteeing the second constraint is important for understanding our {\em charging argument} later, we sketch it here. Let us label the centers by numbers $j=1, 2, \ldots, h$. For a center with number $j$, we let $\ball^j$ be a ``ball'' of radius $n/(2h)$; i.e., $\ball^j$ is a set of nodes in distance at most $n/(2h)$ from center number $j$.
Observe that $\ball^j$ and $\ball^{j'}$ are disjoint for distinct centers $j$ and $j'$ since the distance between these centers is more than $n/h$.
Moreover, $|\ball^j|\geq n/(2h)$ since every center is in a big connected component. So, the number of balls (thus the number of centers) is at most $n/(n/(2h))=2h$. This guarantees the second constraint. Thus, we can guarantee both constraints in the static setting. 

This, however, is not enough in the dynamic setting since {\em after} edge deletions, some nodes in big components might not be covered anymore, and if we keep repeating \Cref{algo:deterministic intuition 1}, we might have to keep creating new centers to such an extent that the second constraint is violated. 
The key idea that we introduce to avoid this problem is to allow a center and the Even--Shiloach tree rooted at it to {\em move}. We call this a {\em moving Even--Shiloach tree} or {\em moving centers} data structure. Specifically, in the moving Even--Shiloach tree, we view a root (center) $s$ {\em not} as a node, but as a {\em token} that can be placed on any node, and the task of the moving Even--Shiloach tree is to maintain the distance between the node on which the root is placed and all other nodes, up to distance $R$. We allow a {\em move operation} where we can move the root to a new node and the corresponding Even--Shiloach tree must be adjusted accordingly. To illustrate the power of the move operation, consider the following simple modification of \Cref{algo:deterministic intuition 1}. (Later, we also have to modify this algorithm due to other problems that we will discuss next.)
\begin{algothm}\label{algo:deterministic intuition 2}
As long as there is a node $v$ in a big connected component that is not covered by any center, we make it a center as follows. If there is a center in a small connected component, we move this center to $v$; otherwise, we open a new center at $v$.
\end{algothm}

\Cref{algo:deterministic intuition 2} reuses centers and Even--Shiloach trees in small connected components\footnote{We note the detail that we need a deterministic dynamic connectivity data structure~\cite{HenzingerK01,HolmLT01} to implement \Cref{algo:deterministic intuition 2}. The additional cost incurred is negligible.} without violating the first constraint since nodes in small connected components do not need to be covered. The second constraint can also be guaranteed by showing that $|\ball^j|\geq n/(2h)$ for all $j$ when we open a new center. Thus, by using moving Even--Shiloach trees, we can guarantee the two constraints above.
We are, however, {\em not done yet}. This is because our new move operation also incurs a cost! {\em The most nontrivial idea in our algorithm is a charging argument to bound this cost.} There are two types of cost. First, the {\em relocation cost}, which is the cost of constructing a new breadth-first search tree rooted at the new location of the center. This cost can be bounded by $O(m)$ since we can construct a breadth-first search tree by running the static $O(m)$-time algorithm. Thus, it will be enough to guarantee that we do not move Even--Shiloach trees more than $O(n)$ times. In fact, this is already guaranteed in \Cref{algo:deterministic intuition 2}
since we will {\em never} move an Even--Shiloach tree back to a previous node. 
The second cost, which is {\em much harder} to bound, is the {\em additional maintenance cost}. Recall that we can bound the total update time of an Even--Shiloach tree by $O(mR)$ because of the fact that the distance between its root (center) and each other node changes at most $R$ times before exceeding $R$, by increasing from $1$ to $R$. However, when we move the root from, say, a node $u$ to its neighbor $v$, the distance between the new root $v$ and some node, say $x$, might be smaller than the previous distance from $u$ to $x$. In other words, {\em the decrementality property is destroyed}. Fortunately, observe that the distance change will be {\em at most one} per node when we move a tree to a neighboring node. Using a standard argument, we can then conclude that {\em moving a tree between neighboring nodes costs an additional distance maintenance cost of $O(m)$}. This motivates us to define the notion of {\em moving distance} to measure how far we move the Even--Shiloach trees in total. We will be able to bound the maintenance cost by $O(mn)$ if we can show that the total moving distance (summing over all moving Even--Shiloach trees) is $O(n)$. Bounding the total moving distance by $O(n)$ while having only $O(h)$ Even--Shiloach trees is the most challenging part in obtaining our deterministic algorithm. We do it by using a careful charging argument. We sketch this argument here. For more intuition and detail, see \Cref{sec:deterministic}.

\paragraph*{Charging Argument for Bounding the Total Moving Distance.}
Recall that we denote the centers by numbers $j=1, 2, \ldots, h$. We make a few modifications to \Cref{algo:deterministic intuition 2}. The most important change is the introduction of the set $\T^j$ for each center $j$ (which is the root of a moving Even--Shiloach tree). This will lead to a few other changes. The importance of $\T^j$ is that we will ``charge'' the moving cost of center $j$ to nodes in $\T^j$; in particular, we bound the total moving distance to be $O(n)$ by showing that the moving distance of center $j$ can be bounded by $|\T^j|$, and $\T^j$ and $\T^{j'}$ are disjoint for distinct centers $j$ and $j'$. 
The other important changes are the definitions of ``ball'' and ``small connected component'' which will now depend on $\T^j$. 
\begin{itemize}
\item We change the definition of $\ball^j$ from a ball of radius $n/(2h)$ to a ball of radius $(n/(2h))-|\T^j|$. 
\item We redefine the notion of ``small connected component'' as follows: we say that a center $j$ is in a small connected component if the connected component containing it has less than $(n/(2h))-|\T^j|$ nodes (instead of $n/h$ nodes).
\end{itemize}
These new definitions might not be intuitive, but they are crucial for the charging argument. 
We also have to modify \Cref{algo:deterministic intuition 2} in a counterintuitive way: the most important modification is that we have to give up the nice property that the distance between any two centers is more than $n/(2h)$ as in \Cref{algo:deterministic intuition 1,algo:deterministic intuition 2}.
In fact, we will {\em always} move a center out of a small connected component, and we will move it {\em as little as possible}, even though the new location could be near other centers.  
In particular, consider the deletion of an edge $(u, v)$. It can be shown that there is {\em at most one} center $j$ that is in a small connected component (according to the new definition), and this center $j$ must be in the same connected component as $u$ or $v$. Suppose that such a center $j$ exists, and it is in the same connected component as $u$, say $X$. Then we will move center $j$ to $v$, which is just enough to move $j$ out of component $X$ (it is easy to see that $v$ is the node outside of $X$ that is nearest to $j$ before the deletion).  We will also update $\T^j$ by adding all nodes of $X$ to $\T^j$. This finishes the moving step, and it can be shown that there is no center in a small connected component now.
Next, we make sure that every node is covered by opening a new center at nodes that are not covered, as in \Cref{algo:deterministic intuition 1}.
To conclude, our algorithm is as follows.
\begin{algothm}\label{algo:deterministic intuition 3}
Consider the deletion of an edge $(u, v)$. Check whether there is a center $j$ that is in a ``small'' connected component $X$ (of size less than $(n/(2h))-|\T^j|$). If there is such a $j$ (there will be at most one such $j$), move it out of $X$ to a new node which is the unique node in $\{u, v\}\setminus X$. After moving, execute the static algorithm as in \Cref{algo:deterministic intuition 1}.
\end{algothm}

To see that the total moving distance is $O(n)$, observe that when we move a center~$j$ out of component $X$ in \Cref{algo:deterministic intuition 3}, we incur a moving distance of at most $|X|$ (since we can move $j$ along a path in $X$). Thus, we can always bound the total moving distance of center $j$ by $|\T^j|$. We additionally show that $\T^j$ and $\T^{j'}$ are disjoint for different centers $j$ and $j'$. So, the total moving distance over all centers is at most $\sum_j |\T^j| \leq n$. 
We also have to bound the number of centers. Since we give up the nice property that centers are far apart, we cannot use the same argument to show that the sets $\ball^j$ are disjoint and big (i.e., $|\ball^j|\geq n/(2h)$), as in \Cref{algo:deterministic intuition 2,algo:deterministic intuition 3}. However, using $\T^j$, we can still show something very similar: $\ball^j\cup \T^j$ and $\ball^{j'}\cup \T^{j'}$ are disjoint for distinct $j$ and $j'$, and $|\ball^j\cup \T^j|\geq n/(2h)$. Thus, we can still bound the number of centers by $O(h)$ as before.

\subsection{Related Work} \label{sec:related work}

Dynamic APSP has a long history, with the first papers dating back to 1967 \cite{Loubal67,Murchland67}\footnote{The early papers \cite{Loubal67,Murchland67}, however, were not able to beat the naive algorithm where we compute APSP from scratch after every change.}. It also has a tight connection with its {\em static} counterpart (where the graph does not change), which is one of the most fundamental problems in computer science: On the one hand, we wish to devise a dynamic algorithm that beats the naive algorithm where we recompute shortest paths {\em from scratch} using static algorithms after every deletion. On the other hand, the best we can hope for is to match the total update time of decremental algorithms to the best running time of static algorithms. 
To understand the whole picture, let us first recall the current situation in the static setting. We will focus on combinatorial algorithms\footnote{The vague term ``combinatorial algorithm'' is usually used to refer to algorithms that do not use algebraic operations such as matrix multiplication.} since our and most previous decremental algorithms are combinatorial. 
Static APSP on unweighted undirected graphs can be solved in $O(mn)$ time by simply constructing a breadth-first search tree from every node. Interestingly, this algorithm is the fastest combinatorial algorithm for APSP (despite other fast noncombinatorial algorithms based on matrix multiplication). In fact, a faster combinatorial algorithm will be a {\em major breakthrough}, not just because computing shortest paths is a long-standing problem by itself, but also because it will imply faster algorithms for other long-standing problems, as stated in \Cref{fact:truly subcubic lower bound}. 

The fact that the best static algorithm takes $O(mn)$ time means two things: First, the naive algorithm will take $O(m^2n)$ total update time. Second, the best total update time we can hope for is $O(mn)$.
A result that is perhaps the first to beat the naive $O(m^2n)$-time algorithm is from 1981 by Even and Shiloach \cite{EvenS81} for the special case of SSSP. Even and Shiloach actually studied decremental connectivity, but their main data structure gives an $O(mn)$ total update time with $O(1)$ query time for decremental SSSP; this implies a total update time of $O(mn^2)$ for decremental APSP.
Roditty and Zwick \cite{RodittyZ11} later provided evidence that the $O(mn)$-time decremental unweighted SSSP algorithm of Even and Shiloach is the fastest possible by showing that this problem is at least as hard as several natural static problems such as Boolean matrix multiplication and the problem of finding all edges of a graph that are contained in triangles.
For the incremental setting, Ausiello et al.~\cite{AusielloIMN91} presented an $\tilde O(n^3)$-time APSP algorithm on unweighted directed graphs. (An extension of this algorithm for graphs with small integer edge weights is given in \cite{AusielloIMN92}.) After that, many efficient fully dynamic algorithms have been proposed (e.g., \cite{HenzingerKRS97,King99,FakcharoenpholR06,DemetrescuI06,DemetrescuI02}).
Subsequently, Demetrescu and Italiano \cite{DemetrescuI04} achieved a major breakthrough for the fully dynamic case: they obtained a fully dynamic deterministic algorithm for the weighted directed APSP problem with an amortized time of $\tilde O(n^2)$ \emph{per update}, implying a total update time of $\tilde O(mn^2)$ over all deletions in the decremental setting, the same running time as the algorithm of Even and Shiloach. (Thorup \cite{Thorup04} presented an improvement of this result.)  An amortized update time of $\tilde O(n^2)$ is essentially optimal if the distance matrix is to be explicitly maintained, as done by the algorithm of Demetrescu and Italiano \cite{DemetrescuI04}, since each update operation may change $\Omega(n^2)$ entries in the matrix.
Even for unweighted, undirected graphs, no faster algorithm is known.
Thus, the $O(mn^2)$ total update time of Even and Shiloach {\em remains the best} for deterministic decremental algorithms, even on undirected unweighted graphs and if approximation is allowed. 

For the case of randomized algorithms, Demetrescu and Italiano~\cite{DemetrescuI06} obtained an exact decremental algorithm on weighted directed graphs with $\tilde O(n^3)$ total update time\footnote{This algorithm actually works in a much more general setting where each edge weight can assume $ S $ different values. Note that the amortized time per update of this algorithm is $\tilde O(S n)$, but this holds only when there are $\Omega(n^2)$ updates (see \cite[Theorem 10]{DemetrescuI06}). Also note that the algorithm is randomized with one-sided error.} (if weight increments are not considered).
Baswana, Hariharan, and Sen \cite{BaswanaHS07} obtained an exact decremental algorithm on unweighted directed graphs with $\tilde O(n^3)$ total update time. They also obtained a  $(1+\epsilon, 0)$-approximation algorithm with $\tilde O(m^{1/2}n^2)$ total update time. In \cite{BaswanaHS03}, they improved the running time further on undirected unweighted graphs, at the cost of a worse approximation guarantee: they obtained approximation guarantees of $(3, 0)$, $(5, 0)$, $(7, 0)$ in $\tilde O(mn^{10/9})$, $\tilde O(mn^{14/13})$, and $\tilde O(mn^{28/27})$ time, respectively. 
Roditty and Zwick \cite{RodittyZ12} presented two improved algorithms for unweighted, undirected graphs. The first was a $(1+\epsilon, 0)$-approximate decremental APSP algorithm with constant query time and a total update time of $\tilde O(mn)$. This algorithm remains the current fastest. The second algorithm achieves a worse approximation bound of $(2k-1, 0)$ for any $2\leq k\leq \log n$, but has the advantage of requiring less space ($ O(m+n^{1+1/k})$). 
By modifying the second algorithm to work on an emulator, Bernstein and Roditty \cite{BernsteinR11} presented the first truly subcubic algorithm which gives a $(2k-1+\epsilon, 0)$-approximation and has a total update time of $\tilde O (n^{1+1/k+O(1/\sqrt{\log n})})$. They also presented a $(1+\epsilon, 0)$-approximation $\tilde O(n^{2+O(1/\sqrt{\log n}}))$-time algorithm for SSSP, which is the first improvement since the algorithm of Even and Shiloach. Very recently, Bernstein \cite{Bernstein13} presented a $(1+\epsilon, 0)$-approximation $\tilde O(mn\log W)$-time algorithm for the directed weighted case, where $W$ is the ratio of the largest edge weight ever seen in the graph to the smallest such weight.

We note that the $(1+\epsilon, 0)$-approximation $\tilde O(mn)$-time algorithm of Roditty and Zwick matches the state of the art in the static setting; thus, it is essentially tight. However, by allowing additive error, this running time was improved in the static setting. For example, Dor, Halperin, and Zwick \cite{DorHZ00}, extending the approach of Aingworth et al.~\cite{AingworthCIM99}, presented a $(1,2)$-approximation for APSP in unweighted undirected graphs with a running time of $O(\min\{n^{3/2}m^{1/2}, n^{7/3}\})$. Elkin \cite{Elkin05} presented an algorithm for unweighted undirected graphs with a running time of $O(mn^\rho+n^2\zeta)$ that approximates the distances with a multiplicative error of $ 1+\epsilon $ and an additive error that is a function of $\zeta$, $\rho$, and $\epsilon$. There is no decremental algorithm with additive error prior to our algorithm.

\paragraph*{Subsequent Work.}
Independent of our work, Abraham and Chechik~\cite{AbrahamC13}  developed a randomized $ (1+\epsilon, 2) $-approximate decremental APSP algorithm with a total update time of $ \tilde O (n^{5/2+O(1/\sqrt{\log{n}})}) $ and constant query time. This result is very similar to one of ours, except that the running time in \cite{AbrahamC13} is slightly more than $ \tilde O (n^{5/2}) $. 
After the preliminary version of this paper~\cite{HenzingerKNFOCS13} appeared, we extended the randomized algorithm in this paper and obtained the following two algorithms for APSP~\cite{HenzingerKNSODA14}: (i) a $(1+\epsilon, 2(1+2 / \epsilon)^{k-2})$-approximation with total time $\tilde O( n^{2+1/k}(37/\epsilon)^{k-1})$ for any $2\leq k\leq \log n$ (improving the time in this paper with a higher additive error when $k\geq 3$), and (ii) a $(3+\epsilon)$-approximation with total time $\tilde O(m^{2/3}n^{3.8/3+O(1/\sqrt{\log n})})$ (it is faster than the algorithm in this paper for sparse graphs but causes more multiplicative error). These two algorithms heavily rely on the monotone Even--Shiloach tree introduced in this paper. In the same paper, the monotone Even--Shiloach tree was also used in combination with techniques in \cite{HenzingerKNICALP13} to obtain the first subquadratic-time algorithm for approximate SSSP. Very recently, we obtained an almost linear total update time for $ (1+\epsilon) $-approximate SSSP in weighted undirected graphs~\cite{HenzingerKNFOCS14}, where the monotone Even--Shiloach tree again played a central role. 
We also obtained the first improvement over Even--Shiloach's algorithm for single-source reachability and approximate single-source shortest paths on \emph{directed} graphs \cite{HenzingerKNSTOC14}.

\section{Background}\label{sec:background}

\subsection{Basic Definitions}

In the following we give some basic notation and definitions.

\begin{definition}[Dynamic graph]
A \emph{dynamic graph} $ \cG $ is a sequence of graphs $ \cG = (G_i)_{0 \leq i \leq k} $ that share a common set of nodes $ V $.
The set of edges of the graph $ G_i $ (for $ 0 \leq i \leq k $) is denoted by $ E (G_i) $.
The number of nodes of $ \cG $ is $ n = | V | $, and the initial number of edges of $ \cG $ is $ m = | E (G_0) | $.
The set of edges ever contained in $ \cG $ up to time $ t $ (where $ 0 \leq t \leq k $) is $ E_t (\cG) = \cup_{0 \leq i \leq t} E (G_i) $.
A dynamic weighted graph $ \cH $ is a sequence of weighted graphs $ \cH = (H_i, w_i)_{0 \leq i \leq k} $ that share a common set of nodes $ V $.
For $ 0 \leq i \leq k $ and every edge $ (u, v) \in E (H_i) $, the weight of $ (u, v) $ is given by $ w_i (u, v) $.
\end{definition}

Let us clarify how a dynamic graph $ \cG = (G_i)_{0 \leq i \leq k} $ is processed by a dynamic algorithm.
The dynamic graph $ \cG $ is a sequence of graphs picked by an adversary before the algorithm starts.
In its initialization phase, the algorithm may process the initial graph $ G_0 $, and in the $i$-th update phase the algorithm may process the graph $ G_i $.
At the beginning of the $i$-th update phase, the graph $ G_i $ is presented to the algorithm implicitly as the set of updates from $ G_{i-1} $ to $ G_i $.
The algorithm will, for example, be informed which edges were deleted from the graph.
After the initialization phase and after each update phase, the algorithm has to be able to answer queries.
In our case, these queries will usually be distance queries, and the algorithm will answer them in constant or near-constant time.
The \emph{total update time} of the algorithm is the total time spent processing the initialization and \emph{all} $ k $ updates.

\begin{definition}[Updates]
For a dynamic graph $ \cG = (G_i)_{0 \leq i \leq k} $ we say for an edge $ (u, v) $ that
\begin{itemize}
\item $ (u, v) $ is \emph{deleted} at time $ t $ if $ (u, v) $ is contained in $ G_{t-1} $ but not in $ G_t $;
\item $ (u, v) $ is \emph{inserted} at time $ t $ if $ (u, v) $ contained in $ G_t $ but not in $ G_{t-1} $.
\end{itemize}
For a dynamic weighted graph $ \cH = (H_i, w_i)_{0 \leq i \leq k} $, we additionally say for an edge $ (u, v) $ that
\begin{itemize}
\item the weight of $ (u, v) $ is \emph{increased} at time $ t $ if $ w_t (u, v) > w_{t-1} (u, v) $ (and $ (u, v) $ is contained in both $ G_{t-1} $ and $ G_t $);
\item the weight of $ (u, v) $ is \emph{decreased} at time $ t $ if $ w_t (u, v) < w_{t-1} (u, v) $ (and $ (u, v) $ is contained in both $ G_{t-1} $ and $ G_t $).
\end{itemize}
Every deletion, insertion, weight increase, or weight decrease is called an \emph{update}.
The \emph{total number of updates up to time $ t $} of a dynamic (weighted) graph $ \cG $ is denoted by $ \phi_t (\cG) $.
\end{definition}

\begin{definition}[Decremental graph]
A \emph{decremental graph} $ \cG $ is a dynamic graph $ \cG = (G_i)_{0 \leq i \leq k} $ such that for every $ 1 \leq i \leq k $ there is exactly one edge deletion at time $ i $.
Note that $ G_i $ is the graph after the $i$-th edge deletion.
\end{definition}

By our definition decremental graphs are always unweighted.
For a weighted version of this concept it would make sense to additionally allow edge weight increases.
In a decremental graph $ \cG = (G_i)_{0 \leq i \leq k} $ we necessarily have $ k \leq m $ because every edge can be deleted only once.
For decremental shortest paths algorithms the total update time usually does \emph{not} depend on the number of deletions $ k $.
This is the case because of the amortization argument typically used for these algorithms.
For this reason, it will often suffice for our purposes to bound $ \phi_k (\cG) $ or $ |E_k (\cG)| $ by numbers that do not depend on $ k $.

We now formulate the approximate all-pairs shortest paths (APSP) problem we are trying to solve.

\begin{definition}[Distance]
The distance of a node $ x $ to a node $ y $ in a graph $ G $ is denoted by $ \dist_G (x, y) $.
If $ x $ and $ y $ are not connected in $ G $, we set $ \dist_G (x, y) = \infty $.
In a weighted graph $ (H, w) $ the distance of $ x $ to $ y $ is denoted by $ \dist_{H, w} (x, y) $.
\end{definition}

\begin{definition}\label{def:approximate_APSP}
An \emph{$ (\alpha, \beta) $-approximate decremental all-pairs shortest paths (APSP) data structure} for a decremental graph $ \cG = (G_i)_{0 \leq i \leq k} $ maintains, for all nodes $ x $ and $ y $ and all $ 0 \leq i \leq k $, an estimate $ \delta_i (x, y) $ of the distance between $ x $ and $ y $ in $ G_i $.
After the $i$-th edge deletion (where $ 0 \leq i \leq k $), it provides the following operations:
\begin{itemize}
\item \Delete{$u$, $v$}: Delete the edge $ (u, v) $ from $ G_i $.
\item \Distance{$x$, $y$}: Return an estimate $ \delta_i (x, y) $ of the distance between $ x $ and $ \treeroot $ in $ G_i $ such that $ \dist_{G_i} (x, y) \leq \delta_i (x, y) \leq \alpha \dist_{G_i} (x, y) + \beta $.
\end{itemize}
The \emph{total update time} is the total time needed for performing all $ k $ delete operations and the initialization, and the \emph{query time} is the worst-case time needed to answer a single distance query.
The data structure is \emph{exact} if $ \alpha = 1 $ and $ \beta = 0 $.
\end{definition}

Similarly, we define a data structure for decremental single-source shortest paths (SSSP).
We incorporate two special requirements in this definition.
First, we are interested in SSSP data structures that only need to work up to a certain distance range\footnote{In this paper, there are two related parameters $\Q$ (introduced here) representing the ``distance range'' of an SSSP data structure (e.g., the Even--Shiloach tree described in \Cref{sec:ES tree}) and $\q$ (which will be introduced in \Cref{sec:Roditty_Zwick_framework}) representing the ``cover range'' of the center cover data structure.}  $ \Q $ from the source node which is specified by a parameter $ \Q $.
Second, we demand that the data structure tells us whenever a node leaves this distance range.
The latter is a technical requirement that simplifies some of our proofs.

\begin{definition}\label{def:approximate_SSSP}
An \emph{$ (\alpha, \beta) $-approximate decremental single-source shortest paths (SSSP) data structure} with \emph{source (or: root) node $ \treeroot $} and \emph{distance range parameter $ \Q $} for a decremental graph $ \cG = (G_i)_{0 \leq i \leq k} $ maintains, for every node $ x $ and all $ 0 \leq i \leq k $, an estimate $ \delta_i (x, \treeroot) \in \{ 0, 1, \ldots, \lfloor \alpha \Q + \beta \rfloor, \infty \} $ of the distance between $ x $ and $ \treeroot $ in $ G_i $.
After the $i$-th edge deletion (where $ 0 \leq i \leq k $), it provides the following operations:
\begin{itemize}
\item \Delete{$u$, $v$}: Delete the edge $ (u, v) $ from $ G_i $ and return the set of all nodes $ x $ such that $ \delta_i (x, \treeroot) \leq \alpha \Q + \beta $ and $ \delta_{i+1} (x, \treeroot) > \alpha \Q + \beta $.
\item \Distance{$x$}: Return an estimate $ \delta_i (x, \treeroot) $ of the distance between $ x $ and $ \treeroot $ in $ G_i $ such that $ \delta_i (x, \treeroot) \geq \dist_{G_i} (x, \treeroot) $, and if $ \dist_{G_i} (x, \treeroot) \leq \Q $, then also $ \delta_i (x, \treeroot) \leq \alpha \dist_{G_i} (x, \treeroot) + \beta $.
\end{itemize}
The \emph{total update time} is the total time needed for performing all $ k $ delete operations and the initialization, and the \emph{query time} is the worst-case time needed to answer a single distance query.
The data structure is \emph{exact} if $ \alpha = 1 $ and $ \beta = 0 $.
\end{definition}

Finally, we define the remaining notions on graphs we will use.

\begin{definition}[Degree]
We say that \emph{$ v $ is a neighbor of $ u $} if there is an edge $ (u, v) $ in $ G $.
The \emph{degree} of a node $ u $ in the graph $ G $, denoted by $ \deg_G (u) $, is the number of neighbors of $ u $ in $ G $.
The \emph{dynamic degree} of a node $ u $ in a dynamic graph $ \cG $ is $ \deg_\cG (u) = | \{ (u, v) \mid (u, v) \in E_k (\cG) \} | $.
\end{definition}

\begin{definition}[Paths]
Let $ (H, w) $ be a weighted graph, and let $ \pi $ be a path in $ (H, w) $.
The number of nodes on the path $ \pi $ is denoted by $ |\pi| $, and the total weight of the path (i.e., the sum of the weights of its edges) is denoted by $ w (\pi) $.
\end{definition}

\begin{definition}[Connected component]
For every graph $ G $ and every node $ x $ we denote by $ \comp_G (x) $ the connected component of $ x $ in $ G $, i.e., the set of nodes that are connected to $ x $ in $ G $. 
\end{definition}

\subsection{Decremental Shortest-Path Tree Data Structure (Even--Shiloach Tree)} \label{sec:ES tree}

The central data structure in dynamic shortest paths algorithms is the dynamic SSSP tree introduced by Even and Shiloach, in short \emph{ES-tree}. Even and Shiloach~\cite{EvenS81} developed this data structure for undirected, unweighted graphs. Later on, Henzinger and King~\cite{HenzingerK95} observed that it can be adapted to work on directed graphs, and King~\cite{King99} gave a modification for directed, weighted graphs with positive integer edge weights. In the following we review some important properties of this data structure.

We describe an ES-tree on dynamic weighted undirected graphs for a given root node $ \treeroot $ and a given distance range parameter $ \Q $. The data structure can handle arbitrary edge deletions and weight increases.  The data structure maintains, for every node $ v $, a label $ \ell (v) $, called the \emph{level} of $ v $. The level of $ v $ corresponds to the distance between $ v $ and the root $ \treeroot $. Any node $v$ whose distance to $\treeroot$ is more than $\Q$ has $\ell(v)=\infty$. Initially, the values of $\ell(v)$ can be computed in $\tilde O(m)$ time using, e.g., Dijkstra's algorithm. The level $\ell(v)$ implicitly implies the shortest-path tree since the parent of every node $v$ is a node $z$ such that $\ell(v)=\ell(z)+w(v, z)$. (Every node $v$ such that $\ell(v)=\infty$ will not be in the shortest-paths tree.) Every deletion of an edge $ (u, v) $ possibly affects the levels of several nodes. The algorithm tries to adjust the levels of these nodes as follows. 

\begin{figure}
\hspace{-1cm}%
\centering
\begin{subfigure}[t]{0.34\textwidth}
\centering
\scalebox{0.95}{\colorlet{darkgreen}{green!70!black}

\tikzstyle{level0}=[circle,fill=darkgreen!20,draw=darkgreen,thick,minimum size=20pt,inner sep=0pt]
\tikzstyle{levelother}=[circle,fill=blue!10,draw=blue,thick,minimum size=20pt,inner sep=0pt]
\tikzstyle{levellabel}=[rectangle,minimum size=20pt,inner sep=0pt]
\tikzstyle{edge-black} = [draw,ultra thick,-]
\tikzstyle{edge-grey} = [draw,thick,-,color=lightgray]

\begin{tikzpicture}[node distance=1.2cm and 0.5cm]

\node[levellabel] (level0) at (-2.5,0) {\small level 0};
\node[levellabel] (level1) [below= of level0] {\small level 1};
\node[levellabel] (level2) [below= of level1] {\small level 2};
\node[levellabel] (level3) [below= of level2] {\small level 3};

\node[level0] (r) at (0,0) {$r$};

\node[levelother] (b) [below= of r] {$b$};
\node[levelother] (a) [left= of b] {$a$};
\node[levelother] (c) [right= of b] {$c$};

\node[levelother] (d) [below= of a] {$d$};
\node[levelother] (e) [below= of b] {$e$};

\path[edge-black] (r) -- (a);
\path[edge-black] (r) -- (b);
\path[edge-black] (r) -- (c);
\path[edge-black] (a) -- (d);
\path[edge-black] (a) -- (e);

\path[edge-grey] (a) -- (b);
\path[edge-grey] (b) -- (e);
\path[edge-grey] (c) -- (e);
\path[edge-grey] (d) -- (e);

\end{tikzpicture}}
\caption{}\label{fig:ES-tree-one}
\end{subfigure}
\begin{subfigure}[t]{0.32\textwidth}
\centering
\scalebox{0.95}{\colorlet{darkgreen}{green!70!black}

\tikzstyle{level0}=[circle,fill=darkgreen!20,draw=darkgreen,thick,minimum size=20pt,inner sep=0pt]
\tikzstyle{levelother}=[circle,fill=blue!10,draw=blue,thick,minimum size=20pt,inner sep=0pt]
\tikzstyle{levellabel}=[rectangle,minimum size=20pt,inner sep=0pt]
\tikzstyle{edge-black} = [draw,ultra thick,-]
\tikzstyle{edge-grey} = [draw,thick,-,color=lightgray]
\tikzstyle{edge-grey} = [draw,thick,-,color=lightgray]
\tikzstyle{edge-red} = [draw,thick,-,color=red,dotted]
\tikzstyle{message} = [draw,thick,->,color=violet,dashed]

\tikzset{cross/.style={cross out, draw=black, ultra thick, minimum size=5*(#1-\pgflinewidth), inner sep=0pt, outer sep=0pt}, 
cross/.default={3pt}}

\begin{tikzpicture}[node distance=1.2cm and 0.5cm]

\node[level0] (r) at (1,0) {$r$};

\node[levelother] (b) [below= of r] {$b$};
\node[levelother] (c) [right= of b] {$c$};

\node[levelother] (e) [below= of b] {$e$};
\node[levelother] (d) [left= of e] {$d$};
\node[levelother] (a) [left= of d] {$a$}
	edge[ultra thick,-,bend right=45] (e);

\node[levelother,draw=none,fill=none] (x) [below= of a] {};

\path[edge-black] (r) -- (b);
\path[edge-black] (r) -- (c);
\path[edge-black] (a.north) -- (b);
\path[edge-black] (a) -- (d);

\path[edge-grey] (b) -- (e);
\path[edge-grey] (c) -- (e);
\path[edge-grey] (d) -- (e);

\path[edge-red] (r) -- (a.north)
	node [align=center,midway,cross,red,solid] {};

\path[message] ($ (a.south east) + (0.5,-0.6) $) -- ($ (a.south east) + (1.5,-0.6) $)
	node[below,align=center,midway,violet] {\tiny $\mathrm{level}(a)=2$};

\path[message] ($ (a.north east) + (0.75,0.85) $) -- ($ (b.south west) + (-0.4,-0.05) $)
	node[above,align=center,midway,violet,rotate=34] {\tiny $\mathrm{level}(a)=2$};
	
\path[message] ($ (a.east) + (0.1,0.2) $) -- ($ (d.west) + (-0.1,0.2) $)
	node[above,align=right,violet] {\vspace{1cm} \tiny ~~~$\mathrm{level}(a)=2$};

\end{tikzpicture}}
\caption{}\label{fig:ES-tree-two}
\end{subfigure}
\begin{subfigure}[t]{0.32\textwidth}
\centering
\scalebox{0.95}{\colorlet{darkgreen}{green!70!black}

\tikzstyle{level0}=[circle,fill=darkgreen!20,draw=darkgreen,thick,minimum size=20pt,inner sep=0pt]
\tikzstyle{levelother}=[circle,fill=blue!10,draw=blue,thick,minimum size=20pt,inner sep=0pt]
\tikzstyle{levellabel}=[rectangle,minimum size=20pt,inner sep=0pt]
\tikzstyle{edge-black} = [draw,ultra thick,-]
\tikzstyle{edge-grey} = [draw,thick,-,color=lightgray]
\tikzstyle{edge-grey} = [draw,thick,-,color=lightgray]
\tikzstyle{edge-red} = [draw,thick,-,color=red,dotted]
\tikzstyle{message} = [draw,thick,->,color=violet,dashed]

\tikzset{cross/.style={cross out, draw=black, ultra thick, minimum size=5*(#1-\pgflinewidth), inner sep=0pt, outer sep=0pt}, 
cross/.default={3pt}}

\begin{tikzpicture}[node distance=1.2cm and 0.5cm]

\node[level0] (r) at (1,0) {$r$};

\node[levelother] (b) [below= of r] {$b$};
\node[levelother] (c) [right= of b] {$c$};

\node[levelother] (e) [below= of b] {$e$};
\node (leer) [left= of e] {};
\node[levelother] (a) [left= of leer] {$a$};
	
\node[levelother] (d) [below= of a] {$d$};

\path[edge-black] (r) -- (b);
\path[edge-black] (r) -- (c);
\path[edge-black] (a.north) -- (b);
\path[edge-black] (a) -- (d);

\path[edge-black] (b) -- (e);
\path[edge-grey] (c) -- (e);
\path[edge-grey] (d) -- (e);
\path[edge-grey] (a) -- (e);

\path[edge-red] (r) -- (a.north)
	node [align=center,midway,cross,red,solid] {};

\path[message] ($ (d.north west) + (0,0.2) $) -- ($ (a.south west) + (0,-0.2) $)
	node[above,align=center,midway,violet,rotate=90] {\tiny $\mathrm{level}(d)=3$};

\path[message] ($ (d.east) + (0.25,0.45) $) -- ($ (e.south west) + (-0.4,-0.5) $)
	node[below,align=center,midway,violet,rotate=45] {\tiny $\mathrm{level}(d)=3$};

\end{tikzpicture}}
\caption{}\label{fig:ES-tree-three}
\end{subfigure}
\caption{Example of the view of the ES-tree as nodes talking to each other. (\subref{fig:ES-tree-one}) The ES-tree before the edge deletion. (\subref{fig:ES-tree-two}) After deleting the edge $(r, a)$, the level of the node $a$ changes to $2$. The node $a$ sends a message to all neighbors to inform them about this change. (\subref{fig:ES-tree-three}) This causes the node $d$ to change its level, and thus $d$ sends a message to inform its neighbors. There are no other changes, so the new ES-tree is as in (\subref{fig:ES-tree-three}). Thus, there are $5$ messages involved in constructing the new ES-tree, and \Cref{alg:ES_tree} shows that this process can be implemented in $5$ time units.}\label{fig:ES-tree}
\end{figure}

\paragraph*{Informal Description.}
How the ES-tree handles deletions can be intuitively viewed as nodes in the input graph talking to each other as follows. Imagine that every node $v$ in the input graph is a computing unit that tries to maintain its level $\ell(v)$ corresponding to its current distance to the root. It knows the levels of its neighbors and has to make sure that 
\begin{align}
\ell(v)&=\min_{u} \left(\ell(u)+w(u, v)\right)\label{eq:ES tree level condition}
\end{align} 
where the minimum is over all current neighbors $u$ of $v$. When we delete an edge incident to $v$, the value of $\ell(v)$ might change. If this happens, $v$ {\em sends a message} to each of its neighbors to inform about this change, since the levels of these nodes might have to change as well. Every neighbor of $v$ then updates its level accordingly, and if its level changes, it sends messages to its neighbors (including $v$), too.
(See \Cref{fig:ES-tree} for an example.) An important point, which we will show soon, is that {\em we can implement the ES-tree in time  proportional to the number of messages}. This means that when a node $v$'s level is changed, we can bound the time we need to maintain the ES-tree by its current degree. 
Thus, the contribution of a node $v$ to the running time to update the ES-tree after the $i$-th deletion or weight increase is $\deg_{G_i}(v)$ times its level change, i.e., $\min(\Q, \ell_{i} (v)) - \min(\Q, \ell_{i-1} (v))$ (the minimum is to avoid the case where $\ell_{i}(v)=\infty$). 
This intuitively leads to the following lemma.
\begin{lemma}[King \cite{King99}]\label{thm:king}
The ES-tree is an exact decremental SSSP data structure for shortest paths up to a given length $ \Q $.
It has constant query time, and in a decremental graph $ \cG = (G_i, w_i)_{0 \leq i \leq k} $ its total update time can be bounded by
\begin{equation*}
O \left(\phi_k(\cG) + t_{\mathrm{SP}} + \sum_{1 \leq i \leq k} \sum_{v \in V} \deg_{G_i} (v) \left(\min(\Q, \ell_{i} (v)) - \min(\Q, \ell_{i-1} (v))\right) \right) \, ,
\end{equation*}
where $ t_{\mathrm{SP}} $ is the time needed for computing an SSSP tree up to depth $ \Q $ and, for $ 0 \leq i \leq k $, $ \ell_i (v) $ is the level of $ v $ after the ES-tree has processed the $i$-th deletion or weight increase.
\end{lemma}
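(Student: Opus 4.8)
The plan is to fix a concrete implementation of the level‑maintenance procedure and then argue exactness and the running‑time bound separately.
For every node $ v $ I keep its level $ \ell(v) $, capped so that $ \ell(v) = \infty $ as soon as it would exceed $ \Q $; a parent pointer to a neighbor $ z $ witnessing $ \ell(v) = \ell(z) + w(v,z) $; and a pointer $ p_v $ into the adjacency list of $ v $.
Initialization computes an SSSP tree up to depth $ \Q $ in time $ t_{\mathrm{SP}} $.
When an edge $ (u,v) $ is deleted or its weight increased, we do nothing unless $ (u,v) $ is the parent edge of $ u $ or of $ v $; otherwise we insert the affected endpoint into a priority queue keyed by its current level.
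We then repeatedly extract a node $ y $ of minimum level: if the recorded parent of $ y $ still witnesses $ \ell(y) $, we discard $ y $; otherwise we advance $ p_y $ forward through the adjacency list of $ y $ until a neighbor $ z $ with $ \ell(z) + w(y,z) = \ell(y) $ is found (which becomes the new parent of $ y $) or the list is exhausted, in which case we recompute $ \ell(y) := \min_{z}(\ell(z) + w(y,z)) $ by one full scan (capping to $ \infty $ if this exceeds $ \Q $), reset $ p_y $, and insert into the queue every neighbor that currently records $ y $ as its parent.
Because levels are integers in $ \{0,1,\ldots,\Q\} $, the priority queue can be realized so that each operation costs $ O(1) $ amortized; a distance query returns the stored $ \ell(x) $ in $ O(1) $; and the nodes that leave the distance range during a deletion are exactly those whose level we set to $ \infty $, which we collect while processing.

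\textbf{Exactness.}
I would prove by induction on $ i $ that after the $ i $-th update $ \ell_i(v) = \dist_{G_i}(v,\treeroot) $ whenever $ \dist_{G_i}(v,\treeroot) \leq \Q $, and $ \ell_i(v) = \infty $ otherwise, the base case being the initial SSSP computation.
For the step, one direction is easy: $ \ell_i(v) \geq \dist_{G_i}(v,\treeroot) $ because following parent pointers from $ v $ (which terminates, as parent levels strictly decrease under positive integer weights) yields a walk to $ \treeroot $ of weight $ \ell_i(v) $.
For the reverse inequality when $ \dist_{G_i}(v,\treeroot) \leq \Q $, I would use that the algorithm only ever raises levels and that distances never decrease under deletions and weight increases, so levels are monotone nondecreasing in time; and that once the queue is empty every node satisfies the local condition $ \ell(v) = \min_{z \sim v}(\ell(z) + w_i(v,z)) $.
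Given the latter, if some $ v $ with minimal $ d := \dist_{G_i}(v,\treeroot) \leq \Q $ had $ \ell_i(v) > d $, then $ v \neq \treeroot $, its shortest‑path predecessor $ u $ would satisfy $ \dist_{G_i}(u,\treeroot) = d - w_i(u,v) < d $, hence $ \ell_i(u) = d - w_i(u,v) $ by minimality, and the local condition would force $ \ell_i(v) \leq \ell_i(u) + w_i(u,v) = d $, a contradiction.
The delicate point here is justifying that the queue processing restores the local condition at \emph{every} node, not merely those it explicitly touches; this rests on the enqueueing rule, namely that a node is enqueued exactly when an edge it currently routes through disappears or a neighbor it currently routes through raises its level, so no violation can survive once the minimum‑level candidate has been cleared.

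\textbf{Running time and the main obstacle.}
Initialization is $ t_{\mathrm{SP}} $, and the $ O(1) $ bookkeeping per update (in particular testing whether the modified edge was a parent edge) sums to $ O(\phi_k(\cG)) $.
For the queue‑processing work, fix a node $ y $ and the processing of the $ i $-th update.
Besides $ O(1) $ per enqueue of $ y $, all work on $ y $ consists of (a) advancing $ p_y $ while $ \ell(y) $ is fixed, and (b) one full adjacency‑list scan each time $ \ell(y) $ is raised to a new value.
The crucial monotonicity observation is that a neighbor $ z $ skipped by $ p_y $ while $ \ell(y) = L $ has, at that moment, $ \ell(z) > L - w(y,z) $, and since neighbor levels only increase it can never later witness $ \ell(y) = L $; hence the total cost of (a) over the whole time $ y $ stays at one level is $ O(\deg_{G_i}(y)) $, and (b) contributes $ O(\deg_{G_i}(y)) $ per distinct value $ \ell(y) $ takes during the $ i $-th update.
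As levels are monotone and capped, the number of such distinct values is at most $ \min(\Q,\ell_i(y)) - \min(\Q,\ell_{i-1}(y)) $, so the work attributed to the pair $ (i,y) $ is $ O(\deg_{G_i}(y)) $ times $ \min(\Q,\ell_i(y)) - \min(\Q,\ell_{i-1}(y)) $.
It remains to charge the $ O(1) $ enqueues themselves: each enqueue of $ y $ is triggered either by the updated edge being incident to $ y $ (only $ O(1) $ nodes per update, absorbed into $ \phi_k(\cG) $) or by some neighbor $ z $ raising its level, and a single raise of $ z $ enqueues at most $ \deg_{G_i}(z) $ neighbors, which already fits into the budget charged to $ z $.
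Summing over $ i $ and $ v $ gives the claimed bound.
The main obstacle is precisely this amortization: one must design the data structure — the forward scanning pointer together with the bucketed priority queue — so that every unit of work is attributable either to one of the $ \phi_k(\cG) $ updates, to the initial shortest‑path computation, or to an honest one‑unit increase of some level, and in particular so that a node whose level does \emph{not} change in a step never incurs a full rescan of its adjacency list.
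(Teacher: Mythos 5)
Your proposal is correct and takes essentially the same route as the paper: the identical charging argument in which each unit increase of $\ell(v)$ pays $O(\deg(v))$ for rescanning $v$'s neighborhood and notifying its children, while every extraction that does not raise a level is charged to the event (an update incident to the node, or a neighbor's level increase) that enqueued it, yielding exactly the stated sum plus $\phi_k(\cG)$ and $t_{\mathrm{SP}}$. The extra material you supply — the forward-scanning pointer, the bucketed queue, and the inductive correctness argument — is precisely the heap-free implementation detail that the paper itself omits and outsources to King (the paper only presents the heap version with an extra $\log n$ factor), so your treatment is a faithful, somewhat more detailed rendering of the same proof.
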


Recall that $\phi_k(\cG)$ is the total number of updates (deletions and weight increases). Note that when the graph is unweighted, only deletions are allowed. In this case, the $\phi_k(\cG)$ term can be ignored. 
\Cref{thm:king} can be simplified by using two specific bounds. These bounds are in fact what we need later in this paper. 

\begin{corollary}\label{cor:ES_running_time_distance_increase}
There is an exact decremental SSSP data structure for paths up to a given length $ \Q $ that has constant query time, and in a decremental graph $ \cG = (G_i, w_i)_{0 \leq i \leq k} $ with source node $ \treeroot $ its total update time can be bounded by
\begin{equation*}\label{eq:ES running time simplified}
O \left(\phi_k(\cG) + t_{\mathrm{SP}} + \sum_{v \in V} \deg_{G_0} (v) \cdot \left( \min(\Q, \dist_{G_k}(v, \treeroot)) - \min(\Q, \dist_{G_0}(v, \treeroot)) \right) \right)
\end{equation*}
and $ O (m \Q) $,
where $ t_{\mathrm{SP}} $ is the time needed for computing an SSSP tree up to depth $ \Q $, and $ \dist_{G_0}(v, \treeroot) $ is the initial distance of $ \treeroot $ to $ v $ and $ \dist_{G_k}(v, \treeroot) $ is the distance of $ v $ to $ \treeroot $ after all $ k $ edge deletions.
\end{corollary}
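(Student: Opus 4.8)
The plan is to derive \Cref{cor:ES_running_time_distance_increase} directly from \Cref{thm:king} by exploiting two features of the situation that the general lemma does not use: decrementality and the exactness of the ES-tree. First I would record that, since $ \cG $ is decremental, $ G_i $ is a subgraph of $ G_0 $ with weights only increasing, so $ \deg_{G_i} (v) \leq \deg_{G_0} (v) $ for every node $ v $ and every $ i $, and moreover $ \dist_{G_i} (v, \treeroot) $ is non-decreasing in $ i $. Combined with the exactness part of \Cref{thm:king}, this gives $ \min (\Q, \ell_i (v)) = \min (\Q, \dist_{G_i} (v, \treeroot)) $ for all $ i $ (the capping by $ \Q $ absorbs the case $ \ell_i (v) = \infty $, i.e. $ \dist_{G_i}(v,\treeroot) > \Q $), and in particular $ a_i (v) := \min (\Q, \ell_i (v)) $ is non-decreasing in $ i $, so each term $ a_i (v) - a_{i-1} (v) $ is non-negative.

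This non-negativity is the point that makes everything go through: in the summand $ \deg_{G_i} (v) (a_i (v) - a_{i-1} (v)) $ appearing in \Cref{thm:king} we may now replace $ \deg_{G_i} (v) $ by the larger quantity $ \deg_{G_0} (v) $ without decreasing the bound, pull it out of the inner sum over $ i $, and telescope:
\begin{align*}
\sum_{1 \leq i \leq k} \deg_{G_0} (v) \bigl( a_i (v) - a_{i-1} (v) \bigr) &= \deg_{G_0} (v) \bigl( a_k (v) - a_0 (v) \bigr) \\
&= \deg_{G_0} (v) \bigl( \min (\Q, \dist_{G_k} (v, \treeroot)) - \min (\Q, \dist_{G_0} (v, \treeroot)) \bigr) ,
\end{align*}
where the last equality also uses that the ES-tree is initialized with exact distances, so $ \min (\Q, \ell_0 (v)) = \min (\Q, \dist_{G_0} (v, \treeroot)) $. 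Summing over $ v \in V $ and substituting back into the bound of \Cref{thm:king} yields the first claimed bound.

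For the second bound I would simply estimate the telescoped difference by $ \min (\Q, \dist_{G_k} (v, \treeroot)) - \min (\Q, \dist_{G_0} (v, \treeroot)) \leq \Q $, so that $ \sum_{v \in V} \deg_{G_0} (v) \cdot \Q = 2 m \Q $; together with $ \phi_k (\cG) = k \leq m $ (each of the $ m $ initial edges is deleted at most once in a decremental graph) and the standard fact that a shortest-paths tree up to depth $ \Q $ can be built in time $ t_{\mathrm{SP}} = O (m \Q) $ by a bounded-depth BFS/Dijkstra-type scan, the whole expression collapses to $ O (m \Q) $. The only genuine subtlety — and the step I would be careful to get in the right order — is that the replacement $ \deg_{G_i} (v) \rightsquigarrow \deg_{G_0} (v) $ is legitimate only after one has established that every summand is non-negative, which is exactly where decrementality, exactness, and the $ \min (\Q, \cdot) $ capping all have to be combined before the telescoping is carried out.
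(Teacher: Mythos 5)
Your proposal is correct and follows essentially the same route as the paper: bound $\deg_{G_i}(v)$ by $\deg_{G_0}(v)$, use exactness and decrementality so the capped levels $\min(\Q,\ell_i(v))$ coincide with $\min(\Q,\dist_{G_i}(v,\treeroot))$ and increase monotonically, and telescope to the initial/final distances; the $O(m\Q)$ bound then follows by bounding each telescoped difference by $\Q$ (the paper states both steps only in sketch form). Your explicit remark that non-negativity of the increments must be established before replacing $\deg_{G_i}(v)$ by $\deg_{G_0}(v)$ is a valid and welcome clarification of a point the paper leaves implicit.
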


The first bound in \Cref{cor:ES_running_time_distance_increase} is because, for every node $v$, we can use $\deg_{G_i}(v)\leq \deg_{G_0}(v)$, and we can express the running time caused by $v$'s level change in terms of its initial level ($\min(\Q, \dist_{G_0}(v, \treeroot)$) and its final level ($\min(\Q, \dist_{G_k}(v, \treeroot))$). We will need this bound in \Cref{sec:deterministic}.
The second bound follows easily from the first one and we will need it in \Cref{sec:faster}.

\begin{algorithm2e}
\caption{ES-tree}
\label{alg:ES_tree}

\tcp{\textrm{The ES-tree is formulated for weighted undirected graphs.}}
\BlankLine

\tcp{\textrm{Internal data structures:
\begin{itemize}
\item $ N (u) $: for every node $ u $ a heap $ N (u) $ whose intended use is to store for every neighbor $ v $ of $ u $ in the current graph the value of $ \lev (v) + w (u, v) $, where $ w (u, v) $ is the weight of the edge $ (u, v) $ in the current graph
\item $ Q $: global heap whose intended use is to store nodes whose levels might need to be updated
\end{itemize}
}}
\vspace{-3ex}

\BlankLine

\Procedure{\Initialize{}}{
	Compute shortest paths tree from $ \treeroot $ in $ (G_0, w_0) $ up to depth $ \Q $\;
	\ForEach{node $ u $}{
		Set $ \lev (u) = \dist_{G_0} (u, \treeroot) $\;
		\lFor{every edge $ (u, v) $}{
			insert $ v $ into heap $ N(u) $ of $ u $ with key $ \lev(v) + w(u, v) $
		}
	}
}

\Procedure{\Delete{$u$, $v$}}{
	\Increase{$u$, $v$, $\infty$} 
}

\Procedure{\Increase{$u$, $v$, $w(u, v)$}}{
	\tcp{\textrm{Increase weight of edge $ (u, v) $ to $ w(u, v) $}}
	Insert $ u $ and $v$ into heap $ Q $ with keys $ \lev(u) $ and $\lev(v)$, respectively\;\label{line:insert u}
	Update key of $ v $ in heap $ N(u) $ to $ \lev(v) + w(u, v) $ and key of $ u $ in heap $ N(v) $ to $ \lev(u) + w(u, v) $\;\label{line:update N after increase}
	\UpdateLevels{}\;
}

\Procedure{\UpdateLevels{}}{
	\While{heap $ Q $ is not empty}{\label{line:while loop}
	    Take node $ y $ with minimum key $ \lev (y) $ from heap $ Q $ and remove it from $ Q $ \label{line: take y from Q}\;
		$ \lev' (y) \gets \min_{z} (\lev (z) + w (y, z)) $ \label{line: update ell(y)}\;
		\tcp{\textrm{$\lev'(y)$ can be retrieved from the heap $ N(y) $.  $\arg\min_{z} (\lev (z) + w (y, z)) $ is $y$'s parent in the ES-tree}}
		\If{$ \lev'(y) > \lev (y) $}{
			$\lev(y)\gets \lev'(y)$\;
			\lIf{$ \lev' (y) > \Q $}{
				$ \lev (y) \gets \infty $\label{line:set to infty}
			}

		\ForEach{neighbor $ x $ of $ y $}{\label{line: update neighbors' heaps}
				update key of $ y $ in heap $ N(x) $ to $ \lev(y) + w(x, y) $\;
				insert $ x $ into heap $ Q $ with key $ \lev(x) $ if $Q$ does not already contain~$ x $
			}
		}
	}
}
\end{algorithm2e}

\paragraph*{Implementation.}
The pseudocode for achieving the above result can be found in \Cref{alg:ES_tree}. (For simplicity we show an implementation using heaps, which causes an extra $\log n$ factor in the running time. King~\cite{King99} explains how to avoid heaps in order to improve the running time by a factor of $ \log{n} $.)
For every node $ x $ the ES-tree maintains a heap $ N (x) $ that stores for every neighbor $ y $ of $ x $ in the current graph the value of $ \ell (y) + w (x, y) $ where $ w (x, y) $ is the weight of the edge $ (x, y) $ in the current graph. (Intuitively, $N(x)$ corresponds to the ``knowledge'' of $x$ about its neighbors.) These data structures can be initialized in $\tilde O(m)$ time by running, for example, Dijkstra's algorithm (see procedure \Initialize{}).\footnote{Alternatively we could compute the initial shortest paths tree using the Even--Shiloach algorithm itself: Let $ G_0' $ be the modification of $ G_0 $ where we add an edge $ (\treeroot, v) $ of weight $ 1 $ for every node $ v $. We obtain $ G_0 $ from $ G_0' $ by deleting each such edge. Starting from a trivial shortest paths tree in $ G_0' $ in which the parent of every node $ v \neq \treeroot $ is $ \treeroot $, we obtain the shortest paths tree of $ G_0 $ in time $ O (\sum_{v \in V} \deg_{G_0} (v) \cdot \min(\Q, \dist_{G_0}(v, \treeroot))) = O (m \Q) $.}

Edge deletions and weight increases are handled in procedure \Delete{} and \Increase{}, respectively; in fact, deletion is a special case of weight increase where we set the edge weight to $\infty$. 
Every weight increase of an edge $ (u, v) $ might cause the levels of some nodes to increase. The algorithm uses a heap $Q$ to keep track of such nodes. Initially (at the time $w(u, v)$ is increased) the algorithm inserts $u$ and $v$ to $Q$ as the levels of $u$ and $v$ might increase (see Line~\ref{line:insert u}). It also updates $N(u)$ and $N(v)$ as in Line~\ref{line:update N after increase}. Then it updates the levels on nodes in $Q$ using procedure \UpdateLevels{}. 

Procedure \UpdateLevels{} processes the nodes in $Q$ in the order of their current level (see the while-loop starting on Line~\ref{line:while loop}). In every iteration it will process $y$ in $Q$ with smallest $\ell(y)$ (as in Line \ref{line: take y from Q}). The lowest level that is possible for a node $ y $ is $\ell'(y)= \min_{z} (\ell (z) + w (y, z)) $, the minimum of $ \ell (z) + w(y, z) $ over all neighbors $ z $ of $ y $ in the current graph (following \Cref{eq:ES tree level condition}). 
Therefore every node $ y $ will repeatedly update its level to $ \ell' (y) $ (unless its level already has this value); see Line \ref{line: update ell(y)}. (An exception is when the level of a node $ x $ exceeds the desired depth $ \Q $. In this case the level of $ x $ is set to $ \infty $ and $ x $ will never be connected to the tree again. See Line~\ref{line:set to infty}.) If this updating rule leads to a level increase, the algorithm has to update the heap $ N(x) $ of every neighbor $ x $ and put $x$ to the heap $Q$ (since the level of $x$ might increase), as in the for-loop starting on Line \ref{line: update neighbors' heaps} (this is equivalent to having $y$ send a message to $x$ in the informal description). 

The running time analysis takes into account the level increases occurring in the ES-tree.
It is based on the following observation: For every node $ x $ processed in the while-loop of the procedure \UpdateLevels{}  in \Cref{alg:ES_tree}, if the level of $ x $ increases, the algorithm has to spend time $ O (\deg(x) \log{n}) $ updating the heaps $N(y)$ of all neighbors $y$ of $x$ and adding these neighbors to heap $ Q $. If the level of $ x $ does not increase, the algorithm only has to spend time $ O (\log{n}) $. In the second case the running time can be charged to one of the following events that causes $x$ to be in~$Q$: (1) a weight increase of some edge $(x, y)$, and (2) a level increase of some neighbor of $ x $. This leads to the result in \Cref{thm:king}.

\subsection{The Framework of Roditty and Zwick}\label{sec:Roditty_Zwick_framework}

In the following we review the algorithm of Roditty and Zwick~\cite{RodittyZ12} because its main ideas are the basis of our own algorithms.
We will put their arguments in a certain structure that clarifies for which part of the algorithm we obtain improvements.
Their algorithm is based on the following observation.
Consider approximating the distance $\dist_G(x, y)$ for some pair of nodes $x$ and $y$. For some $0<\epsilon\leq 1$, we want a $(1+O(\epsilon), 0)$-approximate value of $\dist_G(x, y)$. Assume that we know an integer $p$  such that $2^p$ is a ``distance guess'' of $\dist_G(x, y)$, i.e.,
\begin{align}2^p \leq \dist_G (x, y) \leq 2^{p+1}.\label{eq:distance guess}\end{align}
Now, suppose that there is a node $z$ that is close to $x$, i.e., 
\begin{align} \dist_G (x, z) \leq \epsilon 2^p.\label{eq:near center}
\end{align}
Then it follows that we can use $ \dist_G (x, z) + \dist_G (z, y) $ as a $ (1 + 2 \epsilon) $-approximation of the true distance $ \dist_G (x, y) $; this follows from applying the triangle inequality twice (also see \Cref{fig:CenterCover-one}):
\begin{multline}
\dist_G (x, y) \leq \dist_G (x, z) + \dist_G (z, y) \\ \leq \dist_G (x, z)+(\dist_G (z, x)+\dist_G(x, y)) \leq (1 + 2 \epsilon) \dist_G (x, y) \, . \label{eq:approximate using center}
\end{multline}
Thus, under the assumption that for any $ x $ we only want to determine the distances from $ x $ to nodes $ y $ with $ \dist_G (x, y) $ in the range from $ 2^p $ to $ 2^{p+1} $, we only have to make sure that there is a node $z$ that satisfies \Cref{eq:near center}; we call such node $z$ a {\em center}. We will maintain a set $ U $ of nodes such that for every node $x$ there is a node $ z \in U $ that satisfies \Cref{eq:near center}. In fact, we only need this to be true for nodes $x$ that are in a ``big'' connected component since if the connected component containing $x$ is too small, then there is no node $y$ that satisfies \Cref{eq:distance guess}.
We call such $ U $ a \emph{center cover}. The following definition states this more precisely.

\begin{figure}
\hspace{-1cm}%
\centering
\begin{subfigure}[b]{0.45\textwidth}
\centering
\scalebox{0.88}{\tikzstyle{vertex}=[circle,fill=black,minimum size=5pt,inner sep=0pt,outer sep=0pt]
\tikzstyle{vertez}=[draw=blue!50,fill=blue!20,minimum size=5pt,inner sep=0pt,outer sep=0pt]
\tikzstyle{shortest-path} = [draw,thick]
\tikzstyle{shortest-path-dash} = [draw,thick,dashed]
\tikzstyle{hop-path-small} = [draw,ultra thick,-,color=blue]
\tikzstyle{hop-path-big} = [draw,ultra thick,-,color=red]

\begin{tikzpicture}

\draw[thin,draw=white] (1,-1) rectangle (2,0);

\node[vertex] (x) at (0,0) {};
\node[below] at (x.south) {$x$};

\node[vertex] (y) at (6,0) {};
\node[right] at (y.east) {$y$};

\node[vertez] (z) at (1,2) {};
\node[above] at (z.north) {$z$};

\path[shortest-path] (x) -- (y) 
	node [below,align=center,midway] {$2^p \leq \dist_G(x,y) \leq 2^{p+1}$};
\path[shortest-path-dash] (x) -- (z)
	node [above,align=center,midway,rotate=63.43] {$\dist_G(x,z) \leq \epsilon 2^p$};
\path[shortest-path-dash] (y) -- (z)
	node [above,align=center,midway,rotate=-21.8] {$\dist_G(y,z) \leq \dist_G(x,y) + \epsilon 2^p$};
	
\draw[->,red,thick] (0.4,0.2) .. controls (1,1.9) .. (5,0.1);
\node[red] at (2,0.5) {$\leq (1+2\epsilon)\dist_G(x,y)$};

\end{tikzpicture}}
\caption{}\label{fig:CenterCover-one}
\end{subfigure}
\hspace{0.25cm}
\begin{subfigure}[b]{0.45\textwidth}
\centering
\hspace{-1cm}%
\scalebox{0.88}{\pgfdeclarelayer{foreground}
\pgfdeclarelayer{background}    
\pgfsetlayers{background,main,foreground}  

\begin{tikzpicture}

\tikzstyle{vertex}=[circle,fill=black,minimum size=5pt,inner sep=0pt,outer sep=0pt]
\tikzstyle{vertez}=[draw=blue!50,fill=blue!20,minimum size=5pt,inner sep=0pt,outer sep=0pt]
\tikzstyle{shortest-path} = [draw,thick]
\tikzstyle{shortest-path-dash} = [draw,thick,dashed]
\tikzstyle{hop-path-small} = [draw,ultra thick,-,color=blue]
\tikzstyle{hop-path-big} = [draw,ultra thick,-,color=red]

\begin{pgfonlayer}{foreground} 
\node[vertex] (x) at (0,0) {};
\node[right] at (x.east) {$x$};

\node[vertex] (y) at (6,0) {};
\node[right] at (y.east) {$y$};

\node[vertez] (z) at (1,2) {};
\node[above] at (z.north) {$z$};

\draw[thick,draw=blue!50] (z) -- ++(30:2.5cm)
	node [above,align=center,midway,rotate=30] {$R^c$};

\draw[thick,draw=blue!50] (z) -- ++(-15:5.5cm)
	node [above,align=center,midway,rotate=-15] {$R^d$};
\end{pgfonlayer}

\filldraw[fill=white,draw=blue!50] (1,2) circle (2.5cm);

\begin{pgfonlayer}{background}    
	\clip (-2,-1) rectangle (6.5,5);	
	\filldraw[fill=blue!10,draw=blue!50] (1,2) circle (5.5cm);
\end{pgfonlayer}

\end{tikzpicture}}
\caption{}\label{fig:CenterCover-two}
\end{subfigure}

\caption{(\subref{fig:CenterCover-one}) depicts \Cref{eq:distance guess,eq:near center,eq:approximate using center}. (\subref{fig:CenterCover-two}) shows the cover range (small circle) and distance range (big circle) used by the center cover data structure (\Cref{def:CenterCover}).}\label{fig:CenterCover}
\end{figure}

\begin{definition}[Center cover]\label{def:center cover simple}
Let $ U $ be a set of nodes of a graph $ G $, and let $ \q $ be a positive integer denoting the \emph{cover range}.
We say that a node $ x $ is \emph{covered} by a node $ \cen \in U $ in $ G $ if $ \dist_G (x, \cen) \leq \q $.
We say that $ U $ is a \emph{center cover of $ G $ with parameter $ \q $} if every node $ x $ that is in a connected component of size at least $ \q $ is covered by some node $ \cen \in U $ in $ G $.
\end{definition}

One main component of Roditty and Zwick's framework as we describe it is the {\em center cover data structure}. This data structure maintains a center cover $U$ as above. Furthermore, for every center $ z \in U $, we will maintain the distance to every node $ y $ such that $ \dist_G(z, y) \leq 2^{p+2} $. This will allow us to compute $\dist_G(x, z)+\dist_G(z, y)$ as an approximate value of $\dist_G(x, y)$ (as in \Cref{eq:approximate using center}). In general, we treat the number $2^{p+2}$ as another parameter of the data structure denoted by $\Q$ (called distance range parameter). The values of $\q$ and $\Q$ are typically closely related; in particular, $\q\leq \Q = O(\q)$. The center cover data structure is defined formally as follows (also see \Cref{fig:CenterCover-two}). 

\begin{definition}[Center cover data structure]\label{def:CenterCover}
A \emph{center cover data structure} with \emph{cover range parameter $ \q $} and \emph{distance range parameter $ \Q $} for a decremental graph $ \cG = (G_i)_{0 \leq i \leq k} $ maintains, for every $ 0 \leq i \leq k $, a set of \emph{centers} $ \C_i = \{ 1, 2, \ldots, l \} $ and a set of nodes $ U_i = \{ \cen_i^1, \cen_i^2, \ldots, \cen_i^l \} $ such that $ U_i $ is a center cover of $ G_i $ with parameter $ \q $.
For every center $ j \in \C_i $ and every $ 0 \leq i \leq k $, we call $ \cen_i^j \in U_i $ the \emph{location} of center $ j $ in $ G_i $, and for every node $ x $ we say that $ x $ is covered by $ j $ if $ x $ is covered by $ \cen^j_i $ in $ G_i $.
After the $i$-th edge deletion (where $ 0 \leq i \leq k $), the data structure provides the following operations:
\begin{itemize}
\item \Delete{$u$, $v$}: Delete the edge $ (u, v) $ from $ G_i $.
\item \Distance{$j$, $x$}: Return the distance $ \dist_{G_i} (\cen_i^j, x) $ between the location $ \cen_i^j $ of center $ j $ and the node $ x $, provided that $ \dist_{G_i} (\cen_i^j, x) \leq \Q $.
If $ \dist_{G_i} (\cen_i^j, x) > \Q $, then return $ \infty $.
\item \FindCenter{$x$}: If $ x $ is in a connected component of size at least $ \q $ in $ G_i $, return a center $ j $ (with location $ \cen_i^j $) such that $ \dist_{G_i} (x, \cen_i^j) \leq \q $.
If $ x $ is in a connected component of size less than $ \q $ in $ G_i $, then either return $ \bot $ or return a center $ j $ (with location $ \cen_i^j $) such that $ \dist_{G_i} (x, \cen_i^j) \leq \q $.
\end{itemize}
The \emph{total update time} is the total time needed for performing all $ k $ delete operations and the initialization.  The \emph{query time} is the worst-case time needed to answer a single {\tt distance} or {\tt findCenter} query.
\end{definition}

As the update time of the data structure will depend on the number $ l $ of centers, the goal is to keep $ l $ as small as possible, preferably $ l = \tilde O (n / \q) $. As an example, consider the following {\em randomized} implementation of Roditty and Zwick \cite{RodittyZ12}: randomly pick a set $U$ of $((n/\q)\poly\log n)$ nodes as the set of centers. It can be shown that, with high probability, this set will remain a center cover during all deletions. The {\tt distance} and {\tt findCenter} queries can be answered in $O(1)$ time by maintaining an ES-tree of depth $m\Q$ for every center. The total time to maintain this data structure is thus $\tilde O(mn\Q/\q)$. We typically set $\q=\Omega(\Q)$. In this case, the total time becomes $\tilde O(mn)$.

Note that while the implementation of Roditty and Zwick always uses the same set of centers $U$, the center cover data structure that we define is flexible enough to allow this set to {\em change over time}: i.e., it is possible that $U_i\neq U_{i+1}$ for some $i$. In fact, our definition separates between the notion of {\em centers} (set $C_i$) and {\em locations} (set $U_i$) as it will allow one center to change its location over time. 
This is necessary when we want to maintain $ o(n) $ centers deterministically since if we fix the centers and their locations, then an adversary can delete all edges adjacent to the centers, making all noncenter nodes uncovered. (The randomized algorithm of Roditty and Zwick can avoid this by using randomness and assuming that the adversary is oblivious.)

\paragraph*{Using Center Cover Data Structure to Solve APSP.}
Given a center cover data structure, an approximate decremental APSP data structure is obtained as follows. We maintain $ \lceil \log n \rceil $ ``instances'' of the center cover data structure where the $ p $-th instance has parameters $ \q = \epsilon 2^p $ and $ \Q = 2^{p+2} $ and is responsible for the distance range from $ 2^p $ to $ 2^{p+1} $ (for all $ 0 \leq p \leq \lfloor \log{n} \rfloor $).  Suppose that after the $i$-th deletion we want to answer a query for the approximate distance between the nodes $ x $ and~$ y $. For every $ p $, we first query for a center covering $ x $ from the $p$-th instance of the center cover data structure. Denote the {\em location} of this center by $z_p$. 
The distance estimate provided by the $ p $-th instance is $\dist_{G_i}(z_p, x)+\dist_{G_i}(z_p, y)$. We will output $\min_p \dist_{G_i}(z_p, x)+\dist_{G_i}(z_p, y)$ as an estimate of $\dist_{G_i}(x, y)$. 
(Note that it is possible that $z_p=\bot$; i.e., there is no center covering $x$ in the $p$-th instance. This might happen if $x$ is in a connected component of size less than $\q$. In this case we set $\dist_{G_i}(z_p, x)+\dist_{G_i}(z_p, y)=\infty$.)

To see the approximation guarantee, let $p^*$ be such that $ 2^{p^*} \leq \dist_{G_i} (x, y) \leq 2^{p^*+1} $.
Observe that if $ p = p^* $, then $\dist_{G_i}(z_p, x)+\dist_{G_i}(z_p, y)$ is a $ (1+O(\epsilon), 0) $-approximate distance estimate (due to \Cref{eq:approximate using center}), and if $p\neq p^*$, then $\dist_{G_i}(z_p, x)+\dist_{G_i}(z_p, y) \geq \dist_{G_i}(x, y)$ (by the triangle inequality). Thus, $\min_p \dist_{G_i}(z_p, x)+\dist_{G_i}(z_p, y)$ is a $(1+O(\epsilon), 0)$-approximate value of $\dist_{G_i}(x, y)$. The query time, which is the time to compute  $\min_p \dist_{G_i}(z_p, x)+\dist_{G_i}(z_p, y)$, is $O(\log n)$.  

The query time can be reduced to $ O (\log{\log{n}}) $ as follows. Observe that for any $ p < p^* $, the distance  $\dist_{G_i}(z_p, x)+\dist_{G_i}(z_p, y)$ might be $ \infty $ if $ \dist_{G_i} (z_p, y) > \Q $; however, if it is finite, it will provide a $ (1+\epsilon, 0) $-approximation (since $\dist_{G_i}(z_p, x)\leq \epsilon p$).  
In other words, it suffices to find the smallest index $ p^{**} $ for which $\dist_{G_i}(z_{p^{**}}, x)+\dist_{G_i}(z_{p^{**}}, y)$ is finite; this value will be a $(1+O(\epsilon), 0)$-approximate value of $\dist_{G_i} (x, y)$. 
To find this index, observe further that for any $ p > p^* $, either $z_p=\bot$ or $\dist_{G_i}(z_p, x)+\dist_{G_i}(z_p, y)$ is finite. So, we can find $p^{**}$ by a binary search (since for any $p$, if  $z_p=\bot$ or $\dist_{G_i}(z_p, x)+\dist_{G_i}(z_p, y)$ is finite, then we know that $p^{**}\leq p$). 

\begin{theorem}[\cite{RodittyZ12}]\label{lem:centers_data_structure_to_APSP}
Assume that for all parameters $ \q $ and $ \Q $ such that $ \q \leq \Q $ there is a center cover data structure that has constant query time and a total update time of $ T (\q, \Q) $.
Then, for every $ \epsilon \leq 1 $, there is a $ (1 + \epsilon, 0) $-approximate decremental APSP data structure with $ O(\log{\log n}) $ query time and a total update time of $ \sum_p T (\qp, \Qp) $ where $ \qp = \epsilon 2^p $ and $ \Qp = 2^{p+2} $ (for $ 0 \leq p \leq \lfloor \log{n} \rfloor $).
\end{theorem}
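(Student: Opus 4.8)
The plan is to follow the reduction sketched before the statement and to supply rigorous arguments for its two nontrivial points: correctness of the combined multi-scale estimate and the replacement of the linear scan over scales by a binary search. First I would instantiate, for every scale $p$ with $0 \le p \le \lfloor \log n \rfloor$, one center cover data structure $\mathcal{D}_p$ with cover range $\qp = \epsilon 2^p$ and distance range $\Qp = 2^{p+2}$; this is admissible since $\qp \le 2^p \le 2^{p+2} = \Qp$. Each edge deletion is forwarded to all of the $\mathcal{D}_p$. Because the bound $T(\qp, \Qp)$ of each $\mathcal{D}_p$ already accounts for its initialization together with all $k$ deletions, the total update time of the combined data structure is $\sum_p T(\qp, \Qp)$, as claimed.

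On a distance query for $x$ and $y$ in $G_i$ (with $x \ne y$; if $x = y$ we answer $0$), for each scale $p$ I would query $\mathcal{D}_p$ via \FindCenter{$x$} for a center covering $x$. If it returns $\bot$, set $e_p = \infty$; otherwise let $z_p$ be the location of the returned center and set $e_p = \dist_{G_i}(z_p, x) + \dist_{G_i}(z_p, y)$, where the two distances are looked up in $\mathcal{D}_p$ (so $e_p = \infty$ whenever one of them exceeds $\Qp$). By the triangle inequality every finite $e_p$ satisfies $e_p \ge \dist_{G_i}(x, y)$. For the upper bound: if $x$ and $y$ are disconnected then every $e_p = \infty$ and the answer $\infty$ is correct; otherwise let $p^*$ be the scale with $2^{p^*} \le \dist_{G_i}(x, y) \le 2^{p^*+1}$, which lies in $\{0, \ldots, \lfloor \log n \rfloor\}$ since $\dist_{G_i}(x, y) \le n - 1$. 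The connected component of $x$ has more than $\dist_{G_i}(x, y)$ nodes, and $\dist_{G_i}(x, y) \ge 2^{p^*} \ge \q_{p^*}$, so \FindCenter{$x$} in $\mathcal{D}_{p^*}$ returns a genuine center $z_{p^*}$ with $\dist_{G_i}(x, z_{p^*}) \le \q_{p^*} = \epsilon 2^{p^*} \le \epsilon\, \dist_{G_i}(x, y)$, and then $\dist_{G_i}(z_{p^*}, y) \le \epsilon 2^{p^*} + 2^{p^*+1} < 2^{p^*+2} = \Q_{p^*}$; hence $e_{p^*}$ is finite and, by \eqref{eq:approximate using center}, $e_{p^*} \le (1 + 2\epsilon)\dist_{G_i}(x, y)$. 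Thus $\min_p e_p$ is a $(1 + O(\epsilon), 0)$-approximation of $\dist_{G_i}(x, y)$ (the constant being absorbed into the arbitrary parameter $\epsilon$), and the query time so far is $O(\log n)$.

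To reduce the query time to $O(\log\log n)$ I would replace the scan over scales by a binary search. Let $Q(p)$ be the predicate ``$z_p = \bot$ or $e_p < \infty$'' (evaluable with $O(1)$ queries to $\mathcal{D}_p$) and let $p^{**}$ be the smallest scale with $e_{p^{**}} < \infty$ (well defined by the previous paragraph, with $p^{**} \le p^*$). The crucial claim is that $Q$ is monotone, false on $\{0, \ldots, p^{**}-1\}$ and true on $\{p^{**}, \ldots, \lfloor \log n \rfloor\}$. It is false below $p^{**}$ because for $p < p^{**} \le p^*$ the component of $x$ has more than $\dist_{G_i}(x, y) \ge 2^{p^*} \ge \qp$ nodes, so \FindCenter{$x$} returns a genuine center while $e_p = \infty$ by minimality of $p^{**}$. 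It is true on $\{p^{**}, \ldots, p^*\}$ by induction on $p$: if $e_p < \infty$ then $\dist_{G_i}(x, y) \le \qp + \Qp = \epsilon 2^p + 2^{p+2}$, so for $p < p^*$ the (genuine) center $z_{p+1}$ satisfies $\dist_{G_i}(z_{p+1}, y) \le \q_{p+1} + \dist_{G_i}(x, y) \le 7 \cdot 2^p < 2^{p+3} = \Q_{p+1}$, giving $e_{p+1} < \infty$. It is true on $\{p^*+1, \ldots, \lfloor \log n \rfloor\}$ because there, whenever $z_p \ne \bot$, $\dist_{G_i}(z_p, y) \le \qp + \dist_{G_i}(x, y) \le \epsilon 2^p + 2^{p^*+1} < 2^{p+2} = \Qp$, so again $e_p < \infty$. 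Given monotonicity, a binary search over the at most $\lfloor \log n \rfloor + 1$ scales locates $p^{**}$ with $O(\log\log n)$ predicate evaluations; since $z_{p^{**}} \ne \bot$ and $p^{**} \le p^*$, the returned value $e_{p^{**}}$ is a $(1 + O(\epsilon), 0)$-approximation by the same triangle-inequality computation used for $e_{p^*}$. I expect the monotonicity claim to be the main obstacle: one must choose the predicate carefully (the bare predicate ``$e_p < \infty$'' is \emph{not} monotone, since for large $p$ the center may be $\bot$ in a small component) and make the inequalities among $\qp$, $\Qp$, $\dist_{G_i}(x, y)$, and the distance guess $2^{p^*}$ fit together; everything else is routine bookkeeping on top of \eqref{eq:approximate using center} and the hypothesis on center cover data structures.
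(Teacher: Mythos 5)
Your proposal is correct and follows essentially the same route as the paper: one center cover instance per scale $p$ with $\qp=\epsilon 2^p$, $\Qp=2^{p+2}$, the estimate $\dist_{G_i}(z_p,x)+\dist_{G_i}(z_p,y)$ analyzed via the triangle inequality at the scale $p^*$ with $2^{p^*}\le \dist_{G_i}(x,y)\le 2^{p^*+1}$, and a binary search on the predicate ``$z_p=\bot$ or the estimate is finite'' to locate the smallest scale $p^{**}$ with a finite estimate. Your explicit monotonicity verification (false below $p^{**}$, true at and above, split at $p^*$) is just a carefully worked-out version of the one-line justification the paper gives for the $O(\log\log n)$ query time, with the same rescaling of $\epsilon$ absorbing the factor $1+2\epsilon$.
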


As shown before, Roditty and Zwick~\cite{RodittyZ12} obtain a randomized center cover data structure with constant query time and a total update time of $ \tilde O (m n \Q / \q) $.
By \Cref{lem:centers_data_structure_to_APSP} they get a $ (1 + \epsilon, 0) $-approximate decremental APSP data structure with a total update time of $ \tilde O (m n / \epsilon) $ and a query time of $ O (\log{\log{n}}) $.
Note that the query time can sometimes be reduced further to $O(1)$, and this is the case for their algorithm as well as our randomized algorithm in \Cref{sec:faster}. This is essentially because there is a $(3, 0)$-approximation randomized algorithm for APSP, which can be used to approximate $p^*$ (we defer details to \Cref{lem:approximate_centers_data_structure_to_APSP}). 
To analyze the total update time of their data structure for $ k $ deletions, observe that
\begin{align*}
\sum_{p=0}^{\lfloor \log{n} \rfloor} \tilde  O (m n \Qp / \qp) = \sum_{p=0}^{\lfloor \log{n} \rfloor} \tilde O (m n 2^{p+1}/(2^{p} \epsilon)) &= \sum_{p=0}^{\lfloor \log{n} \rfloor} \tilde O (m n / \epsilon) \\
 &= \tilde O (m n \log{n} / \epsilon) = \tilde O (m n / \epsilon) \, . 
\end{align*}

In \Cref{sec:faster} we show that we can maintain an \emph{approximate} version of the center cover data structure in time $ \tilde O (n^{5/2} \Q / (\epsilon \q)) $.
Using this data structure, we will get a  $ (1 + \epsilon, 2) $-approximate decremental APSP data structure with a total update time of $ \tilde O (n^{5/2} / \epsilon) $ and constant query time.
In \Cref{sec:deterministic} we show how to maintain an exact \emph{deterministic} center cover data structure with a total update time of $ O (m n \Q / \q) $.
By \Cref{lem:centers_data_structure_to_APSP} this immediately implies a deterministic $ (1 + \epsilon, 0) $-approximate decremental APSP data structure with a total update time of $ O (m n \log{n}) $ and a query time of $ O (\log{\log{n}}) $.

\section{$ \tilde O (n^{5/2}) $-Total Time $(1+\epsilon, 2)$- and $(2+\epsilon, 0)$-Approximation Algorithms}\label{sec:faster}

In this section, we present a data structure for maintaining APSP under edge deletions with multiplicative error $ 1+\epsilon $ and additive error $ 2 $ that has a total update time of $\tilde O(n^{5/2} / \epsilon^2)$.
The data structure is correct with high probability.
We also show a variant of this data structure with multiplicative error $ 2 + \epsilon $ and no additive error.
In doing this, we introduce the notion of a {\em persevering path} (see \Cref{def:decremental path}) and a {\em locally persevering emulator} (\Cref{def:locally decremental emulator}). In \Cref{sec:emulator}, we then present the locally persevering emulator that we will use to obtain our result.
Then in \Cref{sec:monotone ES tree} we explain our main technique, called the {\em monotone Even--Shiloach tree}, where we maintain the distances from a single node to all other nodes, up to some distance $\Q$, in a locally persevering emulator. (Recall that $\Q$ is a parameter called the ``distance range.'')
In \Cref{sec:approximate_SSSP_to_approximate_APSP} we show how approximate decremental SSSP helps in solving approximate decremental APSP.
Finally, in \Cref{sec:putting together for faster algorithms}, we show how to put the results in \Cref{sec:emulator,sec:monotone ES tree,sec:approximate_SSSP_to_approximate_APSP} together to obtain the desired $(1+\epsilon, 2)$- and $(2+\epsilon, 0)$-approximate decremental APSP data structures. 

\begin{definition}[Persevering path]\label{def:decremental path}
Let $ \cH = (H_i, w_i)_{0 \leq i \leq k} $ be a dynamic weighted graph.
We say that a path $ \pi = \langle v_0, v_1, \ldots, v_\ell\rangle$ is {\em persevering up to time~$t$} (where $ t \leq k $) if 
for all $0\leq i\leq \ell-1$, 
\[
\forall 0\leq j\leq t:~  (v_i,v_{i+1})\in E(H_j) ~~~~~\mbox{and} ~~~~~ \forall 0\leq j<t:~w_j(v_i,v_{i+1})\leq w_{j+1}(v_i,v_{i+1}).
\]
In other words, edges in $\pi$ always exist in $ \cH $ up to time $t$ and their weights never decrease.
\end{definition}

We now introduce the notion of a {\em locally persevering emulator}.
An $ (\alpha, \beta) $-emulator of a dynamic graph $ \cG = (G_i)_{0 \leq i \leq k} $ is usually another dynamic weighted graph $ \cH = (H_i, w_i)_{0 \leq i \leq k} $ with the same set of nodes as $ \cG $ that preserves the distance of the original dynamic graph; i.e., for all $i \leq k$ and all nodes $x$ and $y$, there is a path $ \pi_{xy} $ in $ H_i $ such that $\dist_{G_i}(x, y)\leq w_i (\pi_{xy}) \leq \alpha \dist_{G_i}(x, y)+\beta.$
The notion of a {\em locally persevering} emulator has another parameter $\tau$. It requires the condition $\dist_{G_i}(x, y)\leq w_i (\pi_{xy}) \leq \alpha \dist_{G_i}(x, y)+\beta$ to hold only when $\dist_{G_i}(x, y)\leq \tau$. More importantly, it puts an additional restriction that the path $ \pi_{xy} $ must be either a shortest path in $G_i$ or a persevering path.

\begin{definition}[Locally persevering emulator]\label{def:locally decremental emulator}
Consider parameters $ \alpha \geq 1 $, $\beta \geq 0 $ and $ \tau \geq 1$, a dynamic graph $ \cG = (G_i)_{0 \leq i \leq k} $, and a dynamic weighted graph $ \cH = (H_i, w_i)_{0 \leq i \leq k} $. 
For every $ i \leq k $, we say that a path $ \pi $ in $ G_i $ is {\em contained} in $ (H_i, w_i) $ if every edge of $ \pi $ is contained in $ H_i $ and has weight $ 1 $.
We say that $\cH$ is an {\em ($\alpha$, $\beta$, $\tau$)-locally persevering emulator of $ \cG $} if for all nodes $ x $ and $ y $ we have 
\begin{enumerate}[label=(\arabic{*})]
\item \label{item:emulator one} $ \dist_{G_i} (x, y) \leq \dist_{H_i, w_i} (x, y) $ for all $ 0 \leq i \leq k $, and 
\item \label{item:emulator two} there are $ t_1 $ and $ t_2 $ with $ 0 \leq t_1 < t_2 \leq k+1 $ such that the following hold:
\begin{enumerate}
\item There is a path $ \pi $ from $ x $ to $ y $ in $ \cH $ that is persevering (at least) up to time $ t_1 $ and satisfies $ w_t (\pi) \leq \alpha \dist_{G_t} (x, y) + \beta $.\label{item:emulator a}
\item For every $ t_1 < i \leq t_2 $, a shortest path from $ x $ to $ y $ in $ G_i $ is contained in $ (H_i, w_i) $. \label{item:emulator b}
\item For every $ i \geq t_2 $, $ \dist_{G_i} (x, y) > \tau $.\label{item:emulator c}
\end{enumerate}
\end{enumerate}
\end{definition}

Condition \ref{item:emulator one} simply says that $\cH$ does not underestimate the distances in $\cG$. 
Condition \ref{item:emulator two} says that the distance between $x$ and $y$ must be preserved in $\cH$ in the following specific way: In the beginning (see \ref{item:emulator a}), it must be approximately preserved by a single path $\pi$ (thus $\pi$ is a persevering path). Whenever $\pi$ disappears, the shortest path between $x$ and $y$ must appear in $\cH$ (see \ref{item:emulator b}). However, we can remove all these conditions whenever the distance between $x$ and $y$ is greater than $\tau$  (see \ref{item:emulator c}).

\subsection{$(1, 2, \lceil 2 / \epsilon \rceil )$-Locally Persevering Emulator of Size $\tilde O (n^{3/2}) $}\label{sec:emulator}\label{sec:locally decremental emulator}

In the following we present the locally persevering emulator that we will use to achieve a total update time of $\tilde O(n^{5/2} / \epsilon^2)$  for decremental approximate APSP. Roughly speaking, we can replace the running time of $\tilde O(m n / \epsilon)$ by $\tilde O(n^{5/2} / \epsilon^2)$ because this emulator always has $\tilde O(n^{3/2})$ edges. However, to be technically correct, we have to use the stronger fact that the number of edges ever contained in the emulator is $\tilde O(n^{3/2})$, as in the following statement.

\begin{lemma}[\textrm{Existence of $(1, 2, \lceil 2 / \epsilon \rceil)$-locally persevering emulator with  $\tilde O (n^{3/2}) $ edges}]\label{thm:emulator}
For every $ 0 < \epsilon \leq 1 $ and every decremental graph $ \cG = (G_i)_{0 \leq i \leq k} $, there is data structure that maintains a dynamic weighted graph $ \cH = (H_i, w_i)_{0 \leq i \leq k} $ in $ O (m n^{1/2} \log{n} / \epsilon) $ total time such that $ \cH $ is a $ (1, 2, \lceil 2 / \epsilon \rceil) $-locally persevering emulator with high probability. 
Moreover, the number of edges ever contained in the emulator is $ | E_k (\cH) | = O (n^{3/2} \log{n}) $, and the total number of updates in $ \cH $ is $ \phi_k (\cH) = O (n^{3/2} \log{n} / \epsilon) $.
\end{lemma}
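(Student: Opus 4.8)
The plan is to build a lazily-maintained dynamic version of the static $(1,2)$-emulator of Dor, Halperin, and Zwick, with the center-balls truncated at radius $\Theta(1/\epsilon)$ so that it stays cheap to maintain. Concretely, I would fix a degree threshold $\theta := \lceil\sqrt n\rceil$ and a radius $\rho := \Theta(1/\epsilon)$ (a suitable constant multiple of $\lceil 2/\epsilon\rceil$), sample a set $S$ of $\Theta(\sqrt n\log n)$ \emph{centers} uniformly at random, and let $H_i$ be the union of: (i) every edge $(u,v)\in E(G_i)$ with $\min(\deg_{G_i}(u),\deg_{G_i}(v))\le\theta$, with weight $1$; and (ii) for every $c\in S$ and every $v$ with $\dist_{G_i}(c,v)\le\rho$, the edge $(c,v)$ with weight $\dist_{G_i}(c,v)$. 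To maintain part (i) I would watch the (monotonically decreasing) degrees and, the first time a node's degree drops to $\theta$, insert all of its then-present incident edges into $\cH$ --- one event per node, $O(m)$ time overall. To maintain part (ii) I would run an Even--Shiloach tree of depth $\rho$ rooted at each center in the decremental graph $\cG$ and, whenever a distance $\dist_{G_i}(c,v)$ increases, raise $w(c,v)$ accordingly, deleting $(c,v)$ from $\cH$ once this distance exceeds $\rho$.

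\textbf{The ``Moreover'' bounds.} These are bookkeeping. By \Cref{cor:ES_running_time_distance_increase} each ES-tree costs $O(m\rho)$, so all $|S|$ of them cost $O(m\rho|S|)=O(mn^{1/2}\log n/\epsilon)$, which subsumes the $O(m)$ for part (i) and the initialization and yields the claimed total update time. For $|E_k(\cH)|$: every part-(i) edge can be charged to the endpoint whose degree first drops to $\theta$, and each node is charged at most $\theta=O(\sqrt n)$ such edges, so part (i) ever contributes $O(n^{3/2})$ edges, while part (ii) ever contributes at most $|S|\cdot n=O(n^{3/2}\log n)$. For $\phi_k(\cH)$: the number of insertions is at most $|E_k(\cH)|=O(n^{3/2}\log n)$ and the number of deletions is no more; part-(i) edges never change weight, and each part-(ii) edge $(c,v)$ only ever has its weight raised, within the range $\{1,\dots,\rho\}$, hence at most $\rho$ times, for $O(|S|n\rho)=O(n^{3/2}\log n/\epsilon)$ weight increases in total. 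Summing gives $\phi_k(\cH)=O(n^{3/2}\log n/\epsilon)$.

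\textbf{The locally persevering property.} Condition~\ref{item:emulator one} is immediate, since every edge of $H_i$ is backed by a walk in $G_i$ of the same total length. For the rest I would first establish the standard \emph{domination} event: because the adversary is oblivious (the deletion sequence is fixed before $S$ is drawn), for a fixed time $i$ and node $v$ with $\deg_{G_i}(v)>\theta$ the chance that none of $v$'s more than $\theta$ neighbors lies in $S$ is at most $(1-|S|/n)^{\theta}\le e^{-|S|/\sqrt n}=n^{-\Omega(1)}$, and a union bound over the $O(mn)$ pairs $(i,v)$ shows that, with high probability, at \emph{every} time every node of degree more than $\theta$ has a neighbor in $S$. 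Conditioning on this, I would fix $x$ and $y$ and verify condition~\ref{item:emulator two} via a dichotomy at each time $i$ with $\dist_{G_i}(x,y)\le\tau$: either some shortest $x$--$y$ path in $G_i$ lies entirely inside $H_i$ with unit weights, which realises \ref{item:emulator b}; or every shortest $x$--$y$ path uses an edge $(v,v')$ absent from $H_i$, in which case both of its endpoints have degree more than $\theta$, so picking a center $c\in S$ adjacent to $v$, two applications of the triangle inequality give $\dist_{G_i}(c,x)+\dist_{G_i}(c,y)\le\dist_{G_i}(x,y)+2$, and --- since distances only grow and $\rho$ is chosen large enough relative to $\tau=\lceil 2/\epsilon\rceil$ --- the edges $(c,x)$ and $(c,y)$ are already present in $H_0$ and stay in $\cH$ with non-decreasing weights until one of these distances exceeds $\rho$, so $\pi=\langle x,c,y\rangle$ is a \emph{persevering} path witnessing \ref{item:emulator a}. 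I would then take $t_1$ to be the last time this two-hop certificate is simultaneously present in $\cH$ and within additive $2$ of the true distance, $t_2-1$ the last time $i$ with $\dist_{G_i}(x,y)\le\tau$, and check \ref{item:emulator a}--\ref{item:emulator c} against these choices, with the boundary case $\dist_{G_0}(x,y)>\tau$ handled by taking $t_2$ minimal so that \ref{item:emulator c} applies throughout.

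\textbf{The main obstacle.} I expect this last verification --- making the ``persevering two-hop path'' certificate and the ``shortest path lies in $H_i$'' certificate \emph{dovetail} --- to be the hard part. A single fixed center $c$ need not keep certifying a near-shortest $x$--$y$ path as edges are deleted, so one cannot simply declare one persevering path valid all the way up until the distance blows up; instead one has to pin down the interval structure precisely: that up to $t_1$ the two-hop path is still in $\cH$ \emph{and} still a $(1,2)$-estimate of $\dist_{G_t}(x,y)$, and that from $t_1+1$ until the distance passes $\tau$ a shortest path genuinely lies inside the emulator. This is where the monotonicity of distances in a decremental graph has to be exploited, together with a careful balance of the degree threshold $\theta$, the ball radius $\rho$, and the locality parameter $\tau$; the running-time and counting parts above are, by comparison, routine.
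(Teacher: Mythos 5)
Your proposal is correct and essentially reproduces the paper's own construction and proof: the same random set of $\Theta(\sqrt{n}\log n)$ centers dominating all high-degree nodes, the same two edge types (unit-weight edges incident to low-degree nodes, plus weighted center-to-node edges truncated at radius about $\lceil 2/\epsilon\rceil+1$ and maintained by depth-bounded ES-trees), the same charging arguments for $|E_k(\cH)|$, $\phi_k(\cH)$, and the total update time, and the same two-hop persevering path $\langle x,z,y\rangle$ through a sampled neighbor $z$ of a high-degree node on a shortest path. The ``dovetailing'' obstacle you flag is not actually an issue: the per-time dichotomy you already establish (at each time $i$ with $\dist_{G_i}(x,y)\le\tau$, either a shortest path is contained in $(H_i,w_i)$ or some path persevering up to time $i$ has $w_i$-weight at most $\dist_{G_i}(x,y)+2$) is precisely what the paper's proof of this lemma shows and precisely the form in which the locally persevering property is invoked in the monotone ES-tree analysis, so no explicit construction of the interval endpoints $t_1,t_2$ is needed.
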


We construct a dynamic weighted graph $ \cH = (H_i, w_i)_{0 \leq i \leq k} $ as follows.
Pick a set $ \dom $ of nodes by including every node independently with probability $ (a \ln n) / \sqrt{n} $ for a large enough constant $ a $.
Note that the size of $ \dom $ is $ O (\sqrt{n} \log n) $ in expectation.
It is well known that by this type of sampling every node with degree more than $ \sqrt{n} $ has a neighbor in $ \dom $ with high probability (see, e.g., \cite{UllmanY91, DorHZ00}); i.e., $ \dom $ \emph{dominates} all high-degree nodes.
This is even true for every version $ G_i $ of a decremental graph $ \cG = (G_i)_{0 \leq i \leq k} $.
For every $ 0 \leq i \leq k $, we define that the graph $ H_i $ contains the following two types of edges.
For every node $ x \in \dom $ and every node $ y $ such that $ \dist_{G_i} (x, y) \leq \lceil 2 / \epsilon \rceil + 1 $, $ H_i $ contains an edge $ (x, y) $ of weight $ \dist_{G_i} (x, y) $.
For every node~$ x $ such that $ \deg_{G_i} (x) \leq \sqrt{n} $, $ H_i $ contains every edge $ (x, y) $ of $ G_i $.\footnote{Our construction is very similar to the classic $ (1,2) $-emulator given by Dor, Halperin, and Zwick~\cite{DorHZ00}. The main difference is that we can only insert edges of limited weight into the emulator; in particular, we only have edges of weight $O(1/\epsilon)$ in the emulator. One reason for this choice is that it is not known whether the $ (1,2) $-emulator of Dor et al.~can be maintained in $\tilde O(mn)$ time under edge deletions.}
Note that as edges are deleted from $ \cG $, distances between nodes might increase, which in turn increases the weights of the corresponding edges in $ \cH $.
When the distance between $ x $ and $ y $ in $ \cG $ exceeds $ \lceil 2 / \epsilon \rceil + 1 $, the edge $ (x, y) $ is deleted from $ \cH $.

In the following we prove \Cref{thm:emulator} by arguing that the dynamic graph $ \cH $ described above has the following four desired properties:
\begin{itemize}
\item $ \cH $ is a $ (1, 2, \lceil 2 / \epsilon \rceil) $-locally persevering emulator of $ \cG $.
\item The expected number of edges ever contained in the emulator is $ | E_k (\cH) | = O (n^{3/2} \log{n}) $.
\item The expected total number of updates in $ \cH $ is $ \phi_k (\cH) = O (n^{3/2} \log{n} / \epsilon) $.
\item The edges of $ \cH $ can be maintained in expected total time $ O (m n^{1/2} \log{n} / \epsilon) $ for $ k $ deletions in $ \cG $.
\end{itemize}
The last item refers to the time needed to determine, after every deletion in $ \cG $, which edges are contained in $ \cH $ and what their weights are.

\begin{lemma}[Locally persevering]
The dynamic graph $ \cH $ described above is a $ (1, 2, \lceil 2 / \epsilon \rceil) $-locally persevering emulator of $ \cG $ with high probability.
\end{lemma}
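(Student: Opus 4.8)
The plan is to verify the two conditions of \Cref{def:locally decremental emulator} with $ \alpha = 1 $, $ \beta = 2 $, and $ \tau = \lceil 2 / \epsilon \rceil $. Condition \ref{item:emulator one} (no underestimation) is the easy part and uses no randomness: every edge of $ H_i $ has weight at least the $ G_i $-distance of its endpoints, since a type-1 edge $ (c, v) $ with $ c \in \dom $ has weight exactly $ \dist_{G_i}(c, v) $ and a type-2 edge $ (u, v) $ is an edge of $ G_i $ and has weight $ 1 = \dist_{G_i}(u, v) $. Summing weights along an arbitrary $ x $-$ y $ path of $ H_i $ and applying the triangle inequality in $ G_i $ repeatedly then gives $ \dist_{H_i, w_i}(x, y) \geq \dist_{G_i}(x, y) $.

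For condition \ref{item:emulator two}, fix nodes $ x $ and $ y $. First I would set $ t_2 $ to be the smallest index with $ \dist_{G_{t_2}}(x, y) > \tau $, taking $ t_2 = k + 1 $ if there is none; since $ \cG $ is decremental and distances are nondecreasing, this immediately yields \ref{item:emulator c}. For all $ i < t_2 $ we have $ \dist_{G_i}(x, y) \leq \tau $, and here the probabilistic ingredient enters: by a standard union bound over the at most $ n $ nodes and at most $ k+1 $ time steps (using that the sequence $ (G_i) $ is fixed in advance, and that a single sampled set $ \dom $ works for all $ G_i $ because deletions only remove edges), with high probability $ \dom $ dominates every node of degree more than $ \sqrt{n} $ in every $ G_i $; I would condition on this event.

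Next I would build an explicit persevering path from a shortest path of $ G_0 $. Take a shortest path $ P = \langle x = u_0, \ldots, u_{d_0} = y \rangle $ in $ G_0 $ with $ d_0 = \dist_{G_0}(x,y) \leq \tau $. If $ P $ passes through no high-degree node, then, because degrees only decrease, every edge of $ P $ is a type-2 edge of $ \cH $ for as long as it survives in $ \cG $, so $ P $ itself is a candidate persevering path of weight $ d_0 $. Otherwise let $ u_a $ and $ u_b $ be the first and last high-degree nodes of $ P $, pick $ c \in \dom $ adjacent to $ u_a $ in $ G_0 $, and set $ \pi = \langle u_0, \ldots, u_a, c, u_b, \ldots, u_{d_0} \rangle $: the prefix and suffix edges are type-2 with permanently low-degree endpoints, while $ (u_a, c) $ and $ (c, u_b) $ are type-1 edges of weights $ 1 $ and at most $ 1 + (b-a) $, so $ w_0(\pi) \leq d_0 + 2 $, and $ \pi $ is persevering because type-2 edges keep weight $ 1 $ while they survive and the two type-1 edges keep weight equal to a nondecreasing $ G $-distance and remain in $ \cH $ as long as that distance is at most $ \lceil 2/\epsilon\rceil + 1 $. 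The delicate step is the choice of $ t_1 $: one should take it to be the last time at which $ \pi $ is still present in $ \cH $ \emph{and} still satisfies $ w_{t_1}(\pi) \leq \dist_{G_{t_1}}(x,y) + 2 $. Because the weight of $ \pi $ can grow only through the two type-1 edges, and because in a decremental graph $ \dist(x, u_a) $ and $ \dist(u_b, y) $ stay equal to $ a $ and $ d_0 - b $, the weight bound on $ [0, t_1] $ is checkable via the triangle inequality; what genuinely requires work is to establish that for every $ i $ with $ t_1 < i \leq t_2 $ a shortest $ x $-$ y $ path of $ G_i $ is contained in $ (H_i, w_i) $, i.e.\ that once the witness structure around $ c $ has broken, every shortest path between $ x $ and $ y $ already consists only of edges with a low-degree endpoint.

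I expect this last dichotomy to be the technical heart of the proof: showing that the ``route through a dominating node'' regime on $ [0, t_1] $ and the ``shortest path already lies in the emulator'' regime on $ (t_1, t_2] $ together cover all of $ [0, t_2) $, with the handover occurring at exactly the right moment. I would organize it as a case analysis on the degree histories of $ x $, $ y $, $ u_a $, and $ u_b $ — in particular distinguishing the subcase where $ x $ or $ y $ is itself high-degree, in which $ \pi $ can be taken to be the short path $ \langle x, c, y\rangle $ through a single dominating neighbor — and I would lean throughout on monotonicity of both distances and degrees in the decremental setting.
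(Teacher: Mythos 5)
Your verification of condition \ref{item:emulator one} matches the paper, and your choice of $t_2$ and the use of a single sampled dominating set $\dom$ for all time steps are also as in the paper. The problem is in how you handle condition \ref{item:emulator two}: you construct a witness path at time $0$ and argue \emph{forward}, defining $t_1$ as the last moment your particular path $\pi$ survives with the right weight. This leaves exactly the gap you flag yourself, and it is not a technicality that a case analysis on the degree histories of $x$, $y$, $u_a$, $u_b$ can close: for $i > t_1$ the shortest $x$--$y$ path in $G_i$ may be a completely different path, unrelated to $P$, and it may pass through a high-degree node whose incident edges are absent from $H_i$. Nothing in your construction prevents this, so condition \ref{item:emulator b} on $(t_1, t_2]$ is simply not established.

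The paper's proof reverses the direction of the argument, and this is the idea you are missing. Fix the pair $(x,y)$ and consider any time $t$ with $\dist_{G_t}(x,y) \leq \tau$ at which \emph{no} shortest $x$--$y$ path of $G_t$ is contained in $(H_t, w_t)$; take $t_1$ to be the largest such $t$, so that containment on $(t_1, t_2]$ holds by definition of $t_1$ and nothing further needs to be proved there. At time $t_1$ itself, some edge $(u,v)$ of a shortest path is missing from $H_{t_1}$, which forces $\deg_{G_{t_1}}(u) > \sqrt{n}$, hence (w.h.p.) $u$ has a neighbor $z \in \dom$ in $G_{t_1}$. The persevering path is then just the two-edge path $\langle x, z, y\rangle$: since $\dist_{G_{t_1}}(x,z) \leq \tau + 1$ and $\dist_{G_{t_1}}(z,y) \leq \tau + 1$, and distances only increase over time, these two type-1 edges were present in $H_j$ with weight $\dist_{G_j}(\cdot,z) \leq \dist_{G_{t_1}}(\cdot,z)$ for \emph{every} $j \leq t_1$ — perseverance up to $t_1$ comes for free from monotonicity, with no need to track which edges of a fixed $G_0$-path survive. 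The weight bound $w_{t_1}(\pi') \leq \dist_{G_{t_1}}(x,y) + 2$ follows from the triangle inequality because $u$ lies on a shortest path. Your longer path through the prefix and suffix of $P$ is both unnecessary and fragile (its type-2 edges can be deleted from $\cG$ at any time); the two-hub-edge path through a dominating neighbor, constructed at the \emph{last} bad time rather than the first, is what makes the handover between the two regimes work.
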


\begin{proof}
Let $ t \leq k $, and let $ x $ and $ y $ be a pair of nodes.
We first argue that $ \dist_{G_t} (x, y) \leq \dist_{H_t, w_t} (x, y) $.
It is clear from the construction of $ (H_t, w_t) $ that every edge in $ (H_t, w_t) $ corresponds to an edge in $ G_t $ or to a path in $ G_t $.
Therefore no path in $ (H_t, w_t) $ from $ x $ to $ y $ can be shorter than the distance $ \dist_{G_t} (x, y) $ of $ x $ to $ y $ in $ G_t $.

We now argue that $ \cH $ fulfills the second part of the definition of a $ (1, 2, \lceil 2 / \epsilon \rceil) $-locally persevering emulator of $ \cG $.
Assume that $ \dist_{G_t} (x, y) \leq \lceil 2 / \epsilon \rceil $ and that no shortest path from $ x $ to $ y $ in $ G_t $ is also contained in $ (H_t, w_t) $.
Let $ \pi $ be an arbitrary shortest path from $ x $ to $ y $ in $ G_t $.
Since $ \pi $ is not contained in $ H_t $, there must be some edge $ (u, v) $ on $ \pi $ such that $ (u, v) \notin E (H_t) $.
This can only happen if $ u $ has degree more than $ \sqrt{n} $ in $ G_t $.
With high probability $ u $ has a neighbor $ z \in D $ in $ G_t $ (see, e.g., \cite{UllmanY91, DorHZ00}).
Now consider any $ i \leq t $.
Note that $ \dist_{G_t} (x, u) \leq \dist_{G_t} (x, y) \leq \lceil 2 / \epsilon \rceil  $ and that $ \dist_{G_i} (x, z) \leq \dist_{G_t} (x, z)  $ because distances never decrease in a decremental graph.
By the triangle inequality we get
\begin{equation*}
\dist_{G_i} (x, z) \leq \dist_{G_t} (x, z) \leq \dist_{G_t} (x, u) + \dist_{G_t} (u, z) \leq \lceil 2 / \epsilon \rceil + 1 \, .
\end{equation*}
Therefore, for every $ i \leq t $, $ H_i $ contains an edge $ (x, z) $ of weight $ w_i (x, z) = \dist_{G_i} (x, z) $, which means that the edge $ (x, z) $ is persevering up to time~$ t $.
The same argument shows that $ H_i $ also contains an edge $ (z, y) $ of weight $ w_i (z, y) = \dist_{G_i} (z, y) $ for every $ i \leq t $; i.e., $ (z, y) $ is also persevering up to time~$ t $.
Now consider the path $ \pi' $ in $ H_t $ consisting of the edges $ (x, z) $ and $ (z, y) $.
Since both edges are persevering up to time~$ t $, also the path $ \pi' $ is persevering up to time~$ t $.
Furthermore, $ \pi' $ guarantees the desired approximation:
\begin{align*}
w_t (\pi') = w_t (x, z) + w_t (z, y) &= \dist_{G_t} (x, z) + \dist_{G_t} (z, y) \\
 &\leq \dist_{G_t} (x, u) + \dist_{G_t} (u, z) + \dist_{G_t} (z, u) + \dist_{G_t} (u, y) \\
 &= \dist_{G_t} (x, u) + \dist_{G_t} (u, y) + 2 \\
 &= \dist_{G_t} (x, y) + 2 \, .
\end{align*}
To explain the last equation, remember that $ u $ lies on a shortest path from $ x $ to~$ y $ and therefore $ \dist_{G_t} (x, y) = \dist_{G_t} (x, u) + \dist_{G_t} (u, y) $.
Thus, $ \cH $ is a $ (1, 2, \lceil 2 / \epsilon \rceil) $-locally persevering emulator of $ \cG $.
\end{proof}

\begin{lemma}[Number of edges]\label{lem:emulator_number_of_edges}
The number of edges ever contained in the dynamic graph $ \cH $ is $ | E_k (\cH) | = O (n^{3/2} \log{n}) $ in expectation.
\end{lemma}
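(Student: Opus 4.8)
The plan is to partition the edges ever contained in $\cH$ according to the two rules of the construction --- the ``center edges'' incident to a node of $\dom$, and the ``low-degree edges'' incident to a node whose current degree is at most $\sqrt{n}$ --- and to bound the number of each kind separately, exploiting the monotonicity inherent in a decremental graph.

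For the center edges, I would first fix a center $x \in \dom$ and note that the first-type edges incident to $x$ present at time $i$ are exactly the edges $(x, y)$ with $\dist_{G_i}(x, y) \leq \lceil 2 / \epsilon \rceil + 1$. Since distances never decrease in a decremental graph, this set of edges is non-increasing in $i$, so the number of first-type edges ever incident to $x$ is at most $n - 1$. Summing over $\dom$ and taking expectations, using $\mathbb{E}[|\dom|] = O(\sqrt{n} \log n)$, bounds the expected number of first-type edges ever in $\cH$ by $O(n \cdot \sqrt{n} \log n) = O(n^{3/2} \log n)$.

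For the low-degree edges, the key point is again monotonicity: degrees in a decremental graph are non-increasing. For each node $x$, let $t_x$ be the first time at which $\deg_{G_{t_x}}(x) \leq \sqrt{n}$ (if no such time exists, $x$ never contributes a second-type edge). Then $\deg_{G_i}(x) \leq \sqrt{n}$ for every $i \geq t_x$, and any second-type edge that ever enters $\cH$ on account of $x$ is incident to $x$, lies in $E(G_i)$ for some $i \geq t_x$, and hence already lies in $E(G_{t_x})$; so it is one of the at most $\sqrt{n}$ edges incident to $x$ in $G_{t_x}$. Thus $x$ accounts for at most $\sqrt{n}$ second-type edges over the whole update sequence, and summing over all $n$ nodes (each such edge being charged to at most two endpoints) bounds the number of second-type edges ever in $\cH$ by $2 n^{3/2}$, deterministically. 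Adding the two contributions gives $\mathbb{E}[|E_k(\cH)|] = O(n^{3/2} \log n)$.

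The whole argument is short; the one place that needs care --- and the step I would flag as the main (mild) obstacle --- is carrying out the ``ever contained'' counting correctly. One must use that the first-type edge set around each center only shrinks over time (monotone distances), that a node stops reacquiring second-type edges once its degree first drops below the threshold (monotone degrees), and that a low-degree edge may legitimately be charged to either of its endpoints, so a factor of $2$ is unavoidable but harmless.
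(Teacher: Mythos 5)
Your proof is correct and uses essentially the same charging argument as the paper: $O(n)$ edges per node of $\dom$ and $O(\sqrt{n})$ low-degree edges per node, the latter because a node's degree drops to $\sqrt{n}$ only once in a decremental graph. The only cosmetic difference is that you partition $E_k(\cH)$ by edge type over all time (using monotonicity of distances and degrees), whereas the paper splits it into the edges of $H_0$ plus the edges inserted later; the counting is the same.
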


\begin{proof}
Every edge in $ H_i $ either was inserted at some time or is an edge that is also contained in $ H_0 $.
Thus, it is sufficient to bound the number of inserted edges and the number of edges in $ H_0 $ by $ O (n^{3/2} \log{n}) $.

We first show that the number of edges in $ H_0 $ is $ | E (H_0) | = O (n^{3/2} \log{n}) $.
We can charge each edge in $ H_0 $ either to a node in $ \dom $ or to a node with degree at most $ \sqrt{n} $.
For every node $ x \in \dom $ there might be $ O(n) $ edges adjacent to $ x $ in $ H_0 $.
Since there are $ O (\sqrt{n} \log{n}) $ many nodes in $ \dom $, the number of edges charged to these nodes is $ O (n^{3/2} \log{n}) $.
For nodes with degree at most $ \sqrt{n} $ there are $ O(\sqrt{n}) $ edges adjacent to $ x $ in $ H_0 $.
Since there are $ O (n) $ such nodes, the number of edges charged to these nodes is $ O(n^{3/2}) $.
In total, we get
\begin{equation*}
| E (H_0) | = O (n^{3/2} \log{n}) + O(n^{3/2}) = O (n^{3/2} \log{n}) \, .
\end{equation*}

We now show that the number of edges inserted into $ \cH $ over all deletions in $ \cG $ is $ O(n^{3/2}) $.
Every time the degree of a node $ x $ changes from $ \deg_{G_i} (x) > \sqrt{n} $ to $ \deg_{G_{i+1}} (x) = \sqrt{n} $ (for some $ 0 \leq i < k $) we insert all $ \sqrt{n} $ edges adjacent to $ x $ in $ G_{i+1} $ into $ H_{i+1} $.
In a decremental graph it can happen at most once for every node that the degree of a node drops to $ \sqrt{n} $.
Therefore at most $ n^{3/2} $ edges are inserted in total.
\end{proof}

\begin{lemma}[Number of updates]\label{lem:number_of_updates}
The total number of updates in the dynamic graph $ \cH $ described above is $ \phi_k (\cH) = O (n^{3/2} \log{n} / \epsilon) $ in expectation.
\end{lemma}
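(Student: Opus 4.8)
The plan is to classify every update occurring in $ \cH $ into one of three kinds — edge insertions, edge deletions, and edge weight increases — and to first observe that weight decreases never occur. Indeed, the only edges of $ \cH $ whose weight is ever different from $ 1 $ are the edges $ (x, z) $ with $ x \in \dom $, whose weight equals $ \dist_{G_i} (x, z) $; since distances never decrease in a decremental graph, such a weight can only increase (and when it would exceed $ \lceil 2 / \epsilon \rceil + 1 $ the edge is removed from $ \cH $ altogether). All remaining edges of $ \cH $ are edges inherited from $ G_i $ at a node of degree at most $ \sqrt n $, and these always have weight $ 1 $. Hence it suffices to bound the numbers of insertions, deletions, and weight increases separately and sum them.

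For insertions and deletions I would reuse the monotonicity observations already made in the proof of \Cref{lem:emulator_number_of_edges}: the degree of a node drops to $ \sqrt n $ at most once, and once $ \dist_{G_i} (x, z) > \lceil 2 / \epsilon \rceil + 1 $ for $ x \in \dom $ this never reverts. Consequently each edge of $ \cH $ is inserted at most once and deleted at most once, so the total number of insertions and deletions is at most $ 2 | E_k (\cH) | $, which by \Cref{lem:emulator_number_of_edges} is $ O (n^{3/2} \log n) $ in expectation.

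The only nontrivial part is bounding the weight increases, and this is precisely where the bounded weight range designed into the emulator pays off. A weight increase can affect only an edge $ (x, z) $ with $ x \in \dom $; the weight of such an edge is an integer that starts at a value $ \geq 1 $ and, as soon as it would exceed $ \lceil 2 / \epsilon \rceil + 1 $, the edge is deleted instead of having its weight raised further. Thus each such edge undergoes at most $ \lceil 2 / \epsilon \rceil = O (1 / \epsilon) $ weight increases over its entire lifetime. For a fixed $ x \in \dom $ there are at most $ n - 1 $ edges ever incident to $ x $ in $ \cH $, so the total number of weight increases is at most $ | \dom | \cdot (n-1) \cdot \lceil 2 / \epsilon \rceil $; since $ \mathbb{E}[| \dom |] = O (\sqrt n \log n) $, this is $ O (n^{3/2} \log n / \epsilon) $ in expectation by linearity of expectation.

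Adding the three contributions yields $ \phi_k (\cH) = O (n^{3/2} \log n) + O (n^{3/2} \log n / \epsilon) = O (n^{3/2} \log n / \epsilon) $ in expectation, as claimed. The step requiring the most care is the weight-increase count: one must use that every edge of nonconstant weight is a $ \dom $-incident edge and that its weight stays within the bounded set $ \{ 1, \ldots, \lceil 2/\epsilon \rceil + 1 \} $, which is exactly the reason the construction caps emulator distances at $ \lceil 2/\epsilon \rceil + 1 $ rather than maintaining the full $ (1,2) $-emulator of Dor, Halperin, and Zwick.
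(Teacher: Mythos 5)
Your proposal is correct and follows essentially the same argument as the paper: bound insertions and deletions by the number of edges ever in $\cH$ (via \Cref{lem:emulator_number_of_edges}), and bound weight increases by noting that every weighted edge is incident to a node of $\dom$ and has integer weight confined to $\{1,\ldots,\lceil 2/\epsilon\rceil+1\}$, giving $O(1/\epsilon)$ increases per edge and $O(n^{3/2}\log n/\epsilon)$ in total. Your added remarks on monotonicity (each edge inserted/deleted at most once, no weight decreases) only make explicit what the paper leaves implicit.
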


\begin{proof}
Only the following kinds of updates appear in $ \cH $: edge insertions, edge deletions, and edge weight increases.
Every edge that is inserted or deleted has to be contained in $ \cH $ at some time.
Thus, we can bound the number of insertions and deletions by $ | E_k (\cH) | $, which is $ O (n^{3/2} \log{n}) $ by \Cref{lem:emulator_number_of_edges}

It remains to bound the number of edge weight increases by $ O (n^{3/2} \log{n} / \epsilon) $.
All weighted edges are incident to at least one node in $ \dom $.
The maximum weight of these edges is $ \lceil 2 / \epsilon \rceil + 1 $ and the minimum weight is $ 1 $.
As all edge weights are integer, the weight of such an edge can increase at most $ \lceil 2 / \epsilon \rceil + 1 $ times.
As there are $ O (\sqrt{n} \log{n}) $ nodes in $ \dom $, each having $ O(n) $ weighted edges, the total number of edge weight increases is $ O (n^{3/2} \log{n} / \epsilon) $.
\end{proof}

\begin{lemma}[Running time]
The edges of the dynamic graph $ \cH $ described above can be maintained in expected total time $ O (m n^{1/2} \log{n} / \epsilon) $ over all $ k $ edge deletions in~$ \cG $.
\end{lemma}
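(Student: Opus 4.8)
The plan is to maintain the two kinds of emulator edges — the weighted edges incident to a node of the sampled set $ \dom $, and the unit-weight edges incident to a node of current degree at most $ \sqrt{n} $ — with two independent procedures, and to bound their costs separately.

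For the edges incident to $ \dom $, I would run, for every node $ c \in \dom $, an ES-tree rooted at $ c $ in the decremental graph $ \cG $ with distance range parameter $ \Q := \lceil 2/\epsilon \rceil + 1 $. By \Cref{cor:ES_running_time_distance_increase} — using $ t_{\mathrm{SP}} = O(m) $ for breadth-first search, and noting that $ \phi_k(\cG) \leq m $ since $ \cG $ is unweighted — each such tree has total update time $ O(m\Q) = O(m/\epsilon) $. Such a tree maintains, for every node $ v $, the value $ \ell_c(v) = \dist_{G_i}(c,v) $ whenever $ \dist_{G_i}(c,v) \leq \Q $ and $ \ell_c(v) = \infty $ otherwise, which is exactly what is needed for the $ \dom $-edges of $ \cH $: the graph $ H_i $ should contain the edge $ (c,v) $ of weight $ \dist_{G_i}(c,v) $ precisely when $ \dist_{G_i}(c,v) \leq \lceil 2/\epsilon \rceil + 1 $. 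I would instrument the ES-tree so that whenever a level $ \ell_c(v) $ changes, the edge $ (c,v) $ of $ \cH $ is updated accordingly (its weight increased, or, once $ \ell_c(v) $ becomes $ \infty $, the edge deleted from $ \cH $), and so that the initialization inserts the edges of $ H_0 $. The number of such edge updates per tree is at most its number of level changes, which is subsumed by the tree's total update time, so maintaining all $ \dom $-edges costs $ O(|\dom| \cdot m/\epsilon) $. Since each node is placed into $ \dom $ independently with probability $ (a\ln n)/\sqrt{n} $, we have $ \mathbb{E}[|\dom|] = O(\sqrt{n}\log{n}) $, so by linearity of expectation this part runs in expected time $ O(m n^{1/2}\log{n}/\epsilon) $.

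For the edges incident to low-degree nodes, I would keep a counter holding $ \deg_{G_i}(x) $ for every node $ x $ (updated in $ O(1) $ time per deletion) and, for every edge, a flag recording whether it currently lies in $ \cH $. On deletion of an edge $ (u,v) $ I would: remove $ (u,v) $ from $ \cH $ in $ O(1) $ time if it was flagged; decrement the counters of $ u $ and $ v $; and, for each $ w \in \{u,v\} $ whose counter drops from $ \sqrt{n}+1 $ to $ \sqrt{n} $, scan the at most $ \sqrt{n} $ edges currently incident to $ w $ and insert into $ \cH $ every one of them that is not already flagged. The last step is triggered at most once per node, and only a node $ w $ with $ \deg_{G_0}(w) > \sqrt{n} $ can ever trigger it; since $ \sum_{w : \deg_{G_0}(w) > \sqrt{n}} 1 < 2m/\sqrt{n} $, all executions of this step together cost $ O((2m/\sqrt{n})\cdot\sqrt{n}) = O(m) $, while the remaining work costs $ O(m) $ over all deletions. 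Because degrees and distances change monotonically in a decremental graph, no emulator edge is ever re-inserted, so no further bookkeeping is needed, and a weight-$1$ $ \dom $-edge $ (c,v) $ coincides, as an edge of the same weight, with a low-degree edge $ (c,v) $, so the two procedures never produce conflicting information. Adding the two parts gives the claimed expected total time $ O(m n^{1/2}\log{n}/\epsilon) $.

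The main obstacle is the first part: one must verify that instrumenting the ES-tree to emit the weight changes of the $ \dom $-edges incurs no asymptotic overhead — i.e., that the number of level changes is dominated by the $ O(m\Q) $ bound of \Cref{cor:ES_running_time_distance_increase}, which in particular forces the use of the heap-free implementation of the ES-tree (cf.\ \cite{King99}) so that the single surviving $ \log{n} $ factor in the final bound comes solely from $ |\dom| $. The low-degree part is routine once the degree-drop events are amortized against the edges of $ G_0 $.
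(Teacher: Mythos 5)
Your proposal is correct and follows essentially the same route as the paper: an ES-tree of depth $\lceil 2/\epsilon\rceil+1$ rooted at each node of $\dom$ (costing $O(m/\epsilon)$ each, hence $O(mn^{1/2}\log n/\epsilon)$ in expectation) to maintain the weighted edges, plus degree counters and a one-time $O(\sqrt{n})$-scan when a node's degree drops to $\sqrt{n}$ for the unit-weight edges. The only differences are bookkeeping-level (the paper charges the updates of the emulator's adjacency structure, stored via dynamic hashing, to the $O(n^{3/2}\log n/\epsilon)$ bound on $\phi_k(\cH)$, while you fold them into the ES-tree cost and use per-edge flags), and you should just make sure the two edge types are tracked separately so that deleting a graph edge $(c,v)$ with $c\in\dom$ removes only the unit-weight copy while the ES-tree keeps the distance-weighted edge alive.
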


\begin{proof}
We use the following data structures:
(A) For every node, we maintain its incident edges in $ \cH $ with a dynamic dictionary using dynamic perfect hashing~\cite{DietzfelbingerKMHRT94} or cuckoo hashing~\cite{PaghR04}.
This graph representation allows us to perform insertions and deletions of edges as well as edge weight increases.
(B) For every node $ x $, we maintain the degree of $ x $ in~$ \cG $.
(C) For every node $ x \in \dom $ we maintain a (classic) ES-tree (see \Cref{sec:ES tree}) rooted at $ x $ up to distance $ \lceil 2 / \epsilon \rceil + 1 $.

We now explain how to process the $i$-the edge deletion in $ \cG $ of, say, the edge $ (u, v) $.
First, we update (B) by decreasing the number that stores the degree of $ u $ in $ \cG $ and then do the same for $ v $.
If the degree of $ u $ (or $ v $) drops to $ \sqrt{n} $, we insert all edges incident to $ u $ (or $ v $) in $ G_i $ into $ H_i $ in (A).
After this procedure, for every node $ x \in D $, we do the following to update (C):
First of all, we report the deletion of $ (u, v) $ to the ES-tree rooted at $ x $.
Every node $ y $ has a level in this ES-tree.
If the level of $ y $ increases to $ \infty $, then $ \dist_{G_i} (x, y) > \lceil 2 / \epsilon \rceil + 1 $ and therefore we remove the edge $ (x, y) $ from $ \cH $ in (A).
If the level of $ y $ increases, but does not reach $ \infty $, then $ \dist_{G_i} (x, y) \leq \lceil 2 / \epsilon \rceil + 1 $ and we update the weight of the edge $ (x, y) $ in (A).

We can perform each deletion, insertion, and edge weight increase in expected amortized constant time.
As there are $ O (n^{3/2} \log{n} / \epsilon) $ updates in $ \cH $ in expectation by \Cref{lem:number_of_updates}, the expected total time for maintaining (A) is $ O (n^{3/2} \log{n} / \epsilon) $.
We need constant time per deletion in $ \cG $ to update (B) and thus time $ O(m) $ in total.
Maintaining the ES-tree takes total time $ O (m/\epsilon) $ for each node in $ \dom $ (see \Cref{sec:ES tree}).
Since in expectation there are $ O(n^{1/2} \log{n}) $ nodes in $ \dom $, the expected total time for maintaining (A) is $ O (m n^{1/2} \log{n} / \epsilon) $ in total.

Thus, the expected total update time for maintaining $ \cH $ under deletions in $ \cG $ is $ O(n^{3/2} \log{n} / \epsilon + m + m n^{1/2} \log{n} / \epsilon) $, which is $ O (m n^{1/2} \log{n} / \epsilon) $.
\end{proof}

\subsection{Maintaining Distances Using Monotone Even--Shiloach Tree}\label{sec:monotone ES tree}

In this section, we show how to use a locally persevering emulator to maintain the distances from a specific node $ \treeroot $ (called {\em root}) to all other nodes, up to distance $\Q$, for some parameter $\Q$. The hope of using an emulator is that the total update time will be smaller since an emulator has a smaller number of edges. In particular, recall that if we run an ES-tree on an input graph, the total update time is $\tilde O(m\Q)$. Now consider running an ES-tree on an emulator $\cH$ instead; we might hope to get a running time of $\tilde O(m'\Q)$, where $m'$ is the number of edges ever appearing in $\cH$. This is beneficial when $m'\ll m$ (for example, the emulator we construct in the previous section has $m'=\tilde O(n^{1.5})$, which is less than $m$ when the input graph is dense). The main result of this section is that we can achieve exactly this when $\cH$ is a locally-persevering emulator, and we run a variant of the ES-tree called the {\em monotone ES-tree} on $\cH$.

\begin{lemma}[Monotone ES-tree + Locally Persevering Emulator]\label{thm:monotone ES tree}
For every distance range parameter $ \Q $, every source node $ \treeroot $, and every decremental graph $ \cG = (G_i)_{0 \leq i \leq k} $ with an $ (\alpha, \beta, \tau) $-locally persevering emulator $ \cH = (H_i, w_i)_{0 \leq i \leq k} $, the monotone ES-tree on $ \cH $ is an $ (\alpha + \beta / \tau, \beta) $-approximate decremental SSSP data structure for $ \cG $.
It has constant query time and a total update time of 
\begin{equation*}
O(\phi_k(\cH) + |E_k (\cH) | \cdot ((\alpha + \beta/\tau) \Q + \beta) ) \, ,
\end{equation*}
where $ \phi_k(\cH) $ is the total number of updates in $ \cH $ up to time $ k $ and $ E_k (\cH) $ is the set of all edges ever contained in $ \cH $ up to time $ k $.
\end{lemma}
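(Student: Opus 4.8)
The plan is to prove the statement in three largely independent parts: the running-time bound, the ``no underestimate'' guarantee $\ell_i(v) \geq \dist_{G_i}(v, \treeroot)$, and the approximation guarantee $\ell_i(v) \leq (\alpha + \beta/\tau)\dist_{G_i}(v, \treeroot) + \beta$ whenever $\dist_{G_i}(v, \treeroot) \leq \Q$, where $\ell_i(v)$ denotes the level of $v$ after the monotone ES-tree has processed all updates of $\cH$ triggered by the first $i$ deletions in $\cG$. For the running time, the only way the monotone ES-tree differs from the classical ES-tree of \Cref{sec:ES tree} is that it additionally sees edge insertions and weight decreases in $\cH$ and reacts to them merely by updating the incidence heaps, without ever lowering a level; hence levels are nondecreasing in time, a level that would exceed $(\alpha+\beta/\tau)\Q + \beta$ is frozen to $\infty$, and the amortized-cost analysis underlying \Cref{thm:king} and \Cref{cor:ES_running_time_distance_increase} carries over verbatim with $m$ replaced by $|E_k(\cH)|$ and $\Q$ replaced by the maximal finite level $(\alpha+\beta/\tau)\Q + \beta$. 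The bookkeeping for an insertion or weight decrease is $O(1)$, contributing $O(\phi_k(\cH))$ in total, giving exactly the claimed bound.

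For the lower bound I would argue that the monotone ES-tree dominates, level by level, the classical ES-tree run on the same dynamic emulator $\cH$: the two agree at time $0$ (both compute the exact distances in $(H_0, w_0)$) and at every subsequent update, except that on an insertion or weight decrease the classical tree may lower some levels while the monotone tree does not; intuitively, \emph{all} of the monotone tree's deviations are overestimates, so it never underestimates the current emulator distance $\dist_{H_i, w_i}(v, \treeroot)$. Combining this with property~\ref{item:emulator one} of a locally persevering emulator, namely $\dist_{H_i, w_i}(v, \treeroot) \geq \dist_{G_i}(v, \treeroot)$, yields $\ell_i(v) \geq \dist_{G_i}(v, \treeroot)$; together with the approximation guarantee below, $\ell_i(v) = \infty$ then forces $\dist_{G_i}(v, \treeroot) > \Q$, which is precisely what the \Distance{} operation of an approximate SSSP data structure requires.

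The approximation guarantee is the crux. Following the terminology of \Cref{sec:techniques}, call an edge $(u, v) \in E(H_i)$ \emph{stretched at time $i$} if $\ell_i(u) > \ell_i(v) + w_i(u, v)$ or $\ell_i(v) > \ell_i(u) + w_i(u, v)$, and a node stretched if incident to a stretched edge. Two facts are used repeatedly: (O1) the level of a node does not change while the node is stretched, and (O2) a stretched edge must be one inserted into $\cH$ after time $0$, since \Initialize{} and \UpdateLevels{} never leave the level condition violated on a non-inserted edge. I would prove $\ell_i(v) \leq (\alpha + \beta/\tau)\dist_{G_i}(v, \treeroot) + \beta$ for all $v$ with $d := \dist_{G_i}(v, \treeroot) \leq \Q$ by induction on the pair $(i, d)$ in lexicographic order. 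If $v$ is stretched at time $i$, (O1) reduces the claim to the last earlier time at which $v$ was unstretched, where the time-induction (or, in the boundary case, the emulator's approximation at time $0$) applies, using that $\cG$-distances only increase. If $v$ is not stretched, fix a shortest path $\treeroot = u_0, u_1, \dots, u_d = v$ in $G_i$, put $d' = \max(0, d - \tau)$ and $u = u_{d'}$ so that $\dist_{G_i}(u, v) = \min(d, \tau) \leq \tau$, and apply \Cref{def:locally decremental emulator} to the pair $(u, v)$. Since $\dist_{G_i}(u, v) \leq \tau$ we have $i < t_2$, hence $i \leq t_1$ or $t_1 < i < t_2$. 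In the first case there is a path $\pi$ from $u$ to $v$ in $\cH$ persevering up to $t_1 \geq i$ with $w_i(\pi) \leq \alpha \dist_{G_i}(u, v) + \beta$; its edges exist since time $0$, hence are unstretched by (O2), and since $v$ is unstretched the level condition chains along $\pi$ to give $\ell_i(v) \leq \ell_i(u) + w_i(\pi) \leq \ell_i(u) + \alpha \dist_{G_i}(u, v) + \beta$. In the second case the subpath $u_{d'}, \dots, u_d$ is contained in $(H_i, w_i)$ with unit weights; if all its edges are unstretched we chain as before to get $\ell_i(v) \leq \ell_i(u) + \dist_{G_i}(u, v)$, and otherwise we chain only from $v$ back to the stretched node $u_e$ nearest to $v$ on this subpath (the edges in between being unstretched) and bound $\ell_i(u_e)$ by (O1) plus the time-induction, using $\dist_{G_i}(u_e, \treeroot) = e < d$. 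In every case one ``hop segment'' of length $\leq \tau$ contributes at most $\alpha$ times its length plus $\beta$; summing the $\lceil d/\tau \rceil$ segments of $\treeroot = u_0, \dots, u_d = v$ and bounding $\ell_i(u)$ via the distance-induction by $(\alpha + \beta/\tau) d' + \beta$, we obtain $\ell_i(v) \leq \alpha d + \beta \lceil d/\tau \rceil \leq (\alpha + \beta/\tau) d + \beta$.

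The main obstacle is making the unstretched-$v$ case watertight: when the shortest path must be traversed through several hops, a node on it may itself be stretched and carry a frozen level from an earlier time, and one must ensure this frozen value does not exceed the allowed bound. Working along a subpath of \emph{one fixed} $G_i$-shortest path from $\treeroot$, so that every $u_e$ on it is at $\cG$-distance exactly $e$ from $\treeroot$, is precisely what lets the time-induction close this gap, and cutting the path into segments of length $\tau$ rather than single edges is precisely what converts the per-segment additive loss $\beta$ into an \emph{amortized} multiplicative loss $\beta/\tau$ — which is why the emulator of \Cref{sec:emulator} is built with $\tau = \lceil 2/\epsilon \rceil$ large, and what makes this strictly better than naively running an ES-tree on an emulator and losing $\beta$ per edge.
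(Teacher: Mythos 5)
Your proposal is correct and uses essentially the same machinery as the paper's proof: the same observations that levels never decrease, that a stretched edge must be an inserted edge, and that a stretched node's level is frozen (the paper's \Cref{obs:simple_observations_monotone_ES} and \Cref{lem:monotone_ES_correctness_stretched_nodes}); the same chaining of the level inequality along persevering paths (\Cref{lem:level_difference_decremental_path}); the same double induction on time and distance to the root; and the same $\tau$-length segmentation that converts the per-segment additive loss $\beta$ into the $\beta/\tau$ multiplicative term. The one organizational difference is in the inductive step for an unstretched node $v$: the paper first asks whether \emph{some} shortest $v$--$\treeroot$ path has its first edge contained in $(H_i,w_i)$ and, if so, recurses by a single step to the neighbor at distance $d-1$ (\Cref{lem:approximation_ES_tree}, Case~1), reserving the $\tau$-step persevering-path argument for the complementary case; you instead always jump a full segment of length $\min(d,\tau)$ and, in the contained-shortest-path case, chain along the contained subpath, truncating at the stretched node nearest to $v$ and invoking the time induction there. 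Both work, and yours arguably mirrors the $t_1/t_2$ structure of \Cref{def:locally decremental emulator} more directly. Two details need patching for your variant to be watertight: (i) condition \ref{item:emulator b} only guarantees that \emph{some} shortest $u$--$v$ path is contained in $(H_i,w_i)$, not the tail of the particular $\treeroot$--$v$ shortest path you fixed, so you must splice the contained $u$--$v$ path onto the prefix $u_0,\dots,u_{d'}$ (the concatenation is again a shortest path from $\treeroot$, so its nodes still sit at distance equal to their index); and (ii) each chaining step needs the technical hypothesis of the paper's \Cref{lem:level_weight_inequality} — that the relevant level at time $i-1$ was finite and that the bound stays below the tree depth $(\alpha+\beta/\tau)\Q+\beta$ — to rule out levels that have already been set to $\infty$. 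Your lower-bound argument via domination of an exact tree is more informal than the paper's direct walk along tree edges using \Cref{obs:simple_observations_monotone_ES}\ref{item: observation four}, but reaches the same (correct) conclusion.
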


Note that we need to modify the ES-tree because although the input graph undergoes only edge deletions, the emulator might have to undergo some edge {\em insertions}. If we straightforwardly extend the ES-tree to handle insertions, we will have to keep the level of any node $y$ at $ \lev (y) = \min_{z} (\lev (z) + w (y, z)) $ as in Line~\ref{line: update ell(y)} of \Cref{alg:ES_tree}.
This might cause the level $\ell(y)$ of some node $y$ in the ES-tree to {\em decrease}. This {\em destroys the monotonicity} of levels of nodes, which is the key to guaranteeing the running time of the ES-tree, as shown in \Cref{sec:ES tree}. The monotone ES-tree is a variant that insists on keeping the nodes' levels monotone (thus the name); it never decreases the level of any node.

\paragraph*{Implementation of Monotone ES-Tree.}
Our monotone ES-tree data structure is a modification of the ES-tree, which always maintains the level $ \ell (x) $ of every node $ x $ in a shortest paths tree rooted at $ \treeroot $ up to depth $ \Q $, as presented in \Cref{sec:ES tree}. \Cref{alg:monotone_ES_tree} shows the pseudocode of the monotone ES-tree.
Our modification can deal with edge insertions, but does this in a \emph{monotone} manner: it will never decrease the level of any node.
In doing so, it will lose the property of providing a shortest paths tree of the underlying dynamic graph, which in our case is the emulator $ \cH $.
However, due to special properties of the emulator, we can still guarantee that the level provided by the monotone ES-tree is an approximation of the distance in the original decremental graph $ \cG $.
The distance estimate provided for a node $ x $ is the level of $ x $ in the monotone ES-tree.

The overall algorithm now is as follows (see \Cref{alg:monotone_ES_tree} for details).
We initialize the monotone ES-tree by computing a shortest paths tree in the emulator $ H_0 $ up to depth $ (\alpha+\beta/\tau) \Q + \beta $.
For every node $ x $ in this tree we set $ \ell (x) = \dist_{H_0} (x, \treeroot) $, and for every other node $ x $ we set $ \ell (x) = \infty $.
Starting with these levels, we maintain an ES-tree rooted at $ \treeroot $ up to depth $ (\alpha+\beta/\tau) \Q + \beta $ on the graph $ \cH $.
This ES-tree alone cannot deal with edge insertions and edge weight increases.
Our additional procedure that is called after the insertion of an edge $ (u, v) $ only updates the value of $ v $ in the heap $ N (u) $ of $ u $ to $ \ell (v) + w (u, v) $.
In particular, the level of $ u $ is not changed after such an insertion.

\begin{algorithm2e}
\caption{Monotone ES-tree}
\label{alg:monotone_ES_tree}

\tcp{\textrm{The algorithm is like the usual ES-tree (\Cref{alg:ES_tree}) with three modifications:
\begin{enumerate}
\item The algorithm runs on $ \cH = (H_0, H_1, \ldots)$ instead of $ \cG $.
\item The depth of the tree is $ (\alpha+\beta/\tau) \Q + \beta $ instead of $ \Q $
\item There are additional procedures for the insertion of edges and edge weight increases.
\end{enumerate}}}
\vspace{-3ex}
\tcp{\textrm{Procedures \Delete{} and \Increase{} are the same as before.}}
\tcp{\textrm{Line numbers in the form $i$* indicate lines that are different from \Cref{alg:ES_tree}. Blue color marks the changes.}} 

\BlankLine

\Procedure{\Initialize{}}{
	\SetNlSty{textbf}{}{*}
    Compute shortest paths tree from $ \treeroot $ in $ ({\color{blue!80} H_0}, w_0) $ up to depth ${\color{blue!80} (\alpha+\beta/\tau) \Q + \beta }$\; 
    \SetNlSty{textbf}{}{}
	\ForEach{node $ u $}{
		\SetNlSty{textbf}{}{*}
		Set $ \lev (u) = \dist_{{\color{blue!80}H_0}} (u, \treeroot) $\;\label{line:initialize_level}
		\SetNlSty{textbf}{}{}
		\lFor{every edge $ (u, v) $}{
			insert $ v $ into heap $ N(u) $ of $ u $ with key $ \lev(v) + w(u, v) $
		}
	}
}

\BlankLine

\Procedure{\Insert{$u$, $v$, $w(u, v)$}}{
	Insert $ v$ into heap $ N(u) $ with key $ \ell (v) + w (u, v) $ and $u$ into heap $N(v)$ with key $ \ell (u) + w (u, v) $\; 
}

\BlankLine

\Procedure{\UpdateLevels{}}{
	\While{heap $ Q $ is not empty}{
	    Take node $ y $ with minimum key $ \lev (y) $ from heap $ Q $ and remove it from $ Q $\;
		$ \lev'(y) \gets \min_{z} (\lev (z) + w (y, z)) $\;
		\label{line:retrieve_min_monotone_ES_tree} \label{line: update ell(y) modified}
		
		\tcp{\textrm{$\min_{z} (\lev (z) + w (y, z)) $ can be retrieved from the heap $ N(y) $. $\arg\min_{z} (\lev (z) + w (y, z)) $ is $y$'s parent in the ES-tree.}}
	
		\If{$ \lev'(y) > \lev (y) $}{\label{line:check_for_level_increase}
			$\lev(y)\gets \lev'(y)$\;\label{line:level_increase}
			\SetNlSty{textbf}{}{*}
			\lIf{$ \lev' (y) > {\color{blue!80}(\alpha+\beta/\tau) \Q + \beta } $}{
				$ \lev (y) \gets \infty $
			}
			\SetNlSty{textbf}{}{}

		\ForEach{neighbor $ x $ of $ y $}{
				update key of $ y $ in heap $ N(x) $ to $ \lev(y) + w(x, y) $\;
				insert $ x $ into heap $ Q $ with key $ \lev(x) $ if $Q$ does not already contain~$ x $
			}
		}
	}
}
\end{algorithm2e}

\paragraph*{Order of Updates.}
Before we start analyzing our algorithm we clarify a crucial detail about the order of updates in the locally persevering emulator $ \cH $.
Consider an edge deletion in the graph $ G_i $ that results in the graph $ G_{i+1} $.
In the emulator $ \cH $, it might be the case that several updates are necessary to obtain $ (H_{i+1}, w_{i+1}) $ from $ (H_i, w_i) $.
There could be several insertions, edge weight increases, and edge deletions at once.\footnote{We could also allow edge weight decreases and handle them in exactly the same way as edge insertions. For simplicity, we omit this case from our description.}
Our algorithm will process these updates (using the monotone ES-tree) in a specific order: First, it processes the edge insertions, one after the other. Afterward, we process the edge deletions and edge weight increases (also one after the other).
This order is crucial for the correctness of our algorithm.

\paragraph*{Analysis.}
We first argue about the correctness of the monotone ES-tree and afterward argue about its running time.
In the following we let $ \ell_i (u) $ be the level of $ u $ in the monotone ES-tree after it has processed the $i$-th edge deletion in $ \cG $ (which could mean that it has processed a whole series of insertions, weight increases, and deletions of the emulator $ \cH $). Remember that $ (H_i, w_i) $ denotes the emulator after all updates caused by the $i$-th deletion in $ \cG $.
We say that an edge $ (u, v) $ is \emph{stretched} if $ \ell_i (u) \neq \infty $ and $ \ell_i (u) > \ell_i (v) + w_i (u, v) $.
We say that a node $ u $ is \emph{stretched} if it is incident to an edge $ (u, v) $ that is stretched.
Note that for a node $ u $ that is not stretched we either have $ \ell_i (u) = \infty $ or $ \ell_i (u) \leq \ell_i (v) + w_i (u, v) $ for every edge $ (u, v) \in E (H_i) $.
Our analysis uses four simple observations about the algorithm. (Recall that a tree edge is an edge between any node $y$ and its parent as in Line~\ref{line: update ell(y) modified} of \Cref{alg:monotone_ES_tree}; i.e., it is an edge $(y, z')$ for some node $z'=\arg\min_z (\ell(z)+w(y, z))$.)
\begin{observation}\label{obs:simple_observations_monotone_ES}
The following holds for the monotone ES-tree:
\begin{enumerate}[label=(\arabic{*})]
\item \label{item: observation one} The level of a node never decreases.
\item  \label{item: observation two} An edge can only become stretched when it is inserted.
\item  \label{item: observation three} As long as a node $ x $ is stretched, its level does not change.
\item  \label{item: observation four} For every tree edge $ (u, v) $ (where $ v $ is the parent of $ u $), $ \ell (u) \geq \ell (v) + w (u, v) $.
\end{enumerate}
\end{observation}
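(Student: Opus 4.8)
The plan is to verify each item by inspecting Algorithm~\ref{alg:monotone_ES_tree}, using an induction over the sequence of elementary updates (edge insertions, deletions, and weight increases) applied to $ \cH $ only for the two items that are not immediate. Item~\ref{item: observation one} needs no induction: $ \lev(\cdot) $ is written only in \Initialize{}, which sets the starting value, and on Line~\ref{line:level_increase} of \UpdateLevels{}, which is guarded by the test $ \lev'(y) > \lev(y) $ on Line~\ref{line:check_for_level_increase} (the subsequent overwrite to $ \infty $ only increases the value), so levels are monotonically nondecreasing. It is convenient to also record the base case here: immediately after \Initialize{} the levels form a shortest-paths tree of $ H_0 $, so the triangle inequality gives $ \lev(u) \le \lev(v) + w_0(u,v) $ for every edge $ (u,v) \in E(H_0) $ with $ \lev(u) \neq \infty $, hence no edge is stretched, and the choice of parent as $ \arg\min_z(\lev(z)+w(u,z)) $ makes item~\ref{item: observation four} hold with equality on all tree edges.

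Item~\ref{item: observation three} is essentially the update rule read off together with item~\ref{item: observation one}. If $ x $ is stretched there is an edge $ (x,z') $ with $ \lev(x) \neq \infty $ and $ \lev(x) > \lev(z') + w(x,z') $; so whenever \UpdateLevels{} processes $ x $ while $ x $ is stretched we get $ \lev'(x) = \min_z(\lev(z)+w(x,z)) \le \lev(z') + w(x,z') < \lev(x) $, the test on Line~\ref{line:check_for_level_increase} fails, and $ \lev(x) $ is not modified. Since $ \lev(x) $ is never written outside \UpdateLevels{}, it stays constant for as long as $ x $ remains stretched, even if the particular over-stretched edge witnessing this changes over time.

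Items~\ref{item: observation two} and \ref{item: observation four} I would prove together by induction over the elementary updates of $ \cH $, processed in the algorithm's order (insertions first). An \Insert{} operation changes no level and no heap key other than the two recording the inserted edge itself, so it can only make that one edge stretched and cannot affect the parent relation or item~\ref{item: observation four} for an already-present tree edge. The substantial case is a deletion or weight increase, which calls \UpdateLevels{}, and here I would use a bookkeeping fact about one invocation: levels are nondecreasing and edge weights are constant during the call, and for any node $ u $ that gets processed, letting $ t $ be the \emph{last} time $ u $ is processed, after $ t $ neither $ \lev(u) $ nor the level of the parent $ z $ that $ u $ picks at $ t $ ever changes again --- because a later increase of $ \lev(z) $ would, via the for-loop over $ z $'s neighbors that runs after every level increase, re-insert $ u $ into $ Q $ and contradict the choice of $ t $. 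At time $ t $ we have $ \lev'(u) = \min_{z'}(\lev(z')+w(u,z')) = \lev(z) + w(u,z) $, and $ \lev(u) $ is either raised to this value (or to $ \infty $, in which case $ u $ is off the tree and not stretched) or is already $ \ge \lev'(u) $; so at the end of the call $ \lev(u) \ge \lev(z) + w(u,z) $ with both sides at their final values, which is item~\ref{item: observation four} (for a node not processed in the call the hypothesis carries over, since its parent's level could not have risen without re-processing it). For item~\ref{item: observation two}: if $ \lev(u) $ did change during the call, then right after its last change $ \lev(u) = \min_{z'}(\lev(z')+w(u,z')) \le \lev(v) + w(u,v) $ at that moment, and since levels and weights only grow afterward this still holds at the end; if $ \lev(u) $ did not change, then at the end $ \lev(u) $ equals its pre-call value, which by the inductive hypothesis was $ \le \lev(v)+w(u,v) $ unless $ (u,v) $ was already stretched (with $ u $ the higher end), and $ \lev(v),w(u,v) $ did not decrease. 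Thus \UpdateLevels{} never newly stretches a surviving edge, and together with the \Insert{} case an edge can only become stretched at the step at which it is inserted.

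The step I expect to be the main obstacle is exactly this \UpdateLevels{} bookkeeping: one must argue carefully that a node whose level is raised and then revisited without a further change still obeys the desired inequalities against its \emph{final} parent and its \emph{final} neighbor levels, and the only thing making this work is that the for-loop executed after a level increase re-inserts all affected neighbors into $ Q $, forcing a revisit whenever a relevant neighbor level rises. Everything else is routine: item~\ref{item: observation one} is code inspection, item~\ref{item: observation three} is the update rule applied to the definition of ``stretched,'' and the handling of \Insert{} is trivial since it touches neither levels nor parents.
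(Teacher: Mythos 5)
Your proposal is correct and follows essentially the same route as the paper: everything rests on the observation that a level increase on Line~\ref{line:level_increase} sets $\lev(y)$ to $\min_z(\lev(z)+w(y,z))$, which immediately renders $y$ nonstretched against all neighbors, so stretching can arise only from a weight decrease, i.e.\ an insertion. Your per-invocation bookkeeping for items~\ref{item: observation two} and~\ref{item: observation four} is a somewhat more explicit rendering of the paper's local case analysis (and your direct ``the test on Line~\ref{line:check_for_level_increase} fails while stretched'' argument for item~\ref{item: observation three} is the contrapositive of the paper's), but these are presentational differences, not a different proof.
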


\begin{proof}
The only places in the algorithm where the level of a node is modified are in Line~\ref{line:initialize_level} during the initialization and in Line~\ref{line:level_increase}.
The if-condition in Line~\ref{line:check_for_level_increase} guarantees that the level of a node never decreases and thus~\ref{item: observation one} holds.
Furthermore, whenever the level of a node $ y $ increases in Line~\ref{line:level_increase} we have $ \lev (y) = \min_{z} (\lev (z) + w (y, z)) \leq \lev (z') + w (y, z') $ for every neighbor $ z' $ of $ y $.
Thus, after such a level increase the edge $ (y, z') $ is nonstretched for every neighbor $ z' $ of $ y $, and so is the node $ y $.

To prove~\ref{item: observation two}, consider an edge $ (x, y) $ that becomes stretched.
This can only happen if the edge $ (x, y) $ was not contained in the graph before and is inserted or if the edge changes from nonstretched to stretched.
When $ (x, y) $ is nonstretched we have $ \lev (x) \leq \lev (y) + w (x, y) $.
For $ (x, y) $ to become stretched (i.e., for $ \lev (x) > \lev (y) + w (x, y) $ to hold) either the left-hand side of this inequality has to increase or the right-hand side has to decrease.
When the left-hand side increases, the level of $ x $ changes and, as argued above, this implies that $ (x, y) $ will be nonstretched.
As the level of $ y $ is nondecreasing, the right-hand side can only decrease when the weight of the edge $ (x, y) $ decreases.
This can only happen after inserting this edge with a smaller weight.

We now prove~\ref{item: observation three}.
Consider a node $ x $ that is stretched.
As long as it is stretched, the level of $ x $ does not increase because, as argued above, each level increase immediately makes $ x $ nonstretched.

Finally, we prove~\ref{item: observation four}.
Consider an edge $ (u, v) $ such that $ v $ is the parent of $ u $.
It is easy to see that $ \lev (u) \geq \lev (v) + w (u, v) $ as long as $ v $ stays the parent of $ u $ because the level of $ u $ increases if and only if the level of $ v $ or the weight of the edge $ (u, v) $ increases.
In such a case we have $ \lev (u) = \lev (v) + w (u, v) $.
The only other possibility for the right-hand side of the inequality to change is when the weight of the edge $ (u, v) $ decreases, which can happen after an insertion.
But decreasing this value does not invalidate the inequality.
\end{proof}

We now prove that the monotone ES-tree provides an $ (\alpha + \beta / \tau, \beta) $-approximation of the true distance if it runs on an $ (\alpha, \beta, \tau) $-locally persevering emulator.
We use an inductive argument to show that, after having processed the $i$-th deletion of an edge in~$ \cG $, the level of every node $ x $ is a $ (\alpha + \beta / \tau, \beta) $-approximation of the distance of $ x $ to the root, i.e., $\dist_{G_i}(x, r)\leq \ell_i(x)\leq (\alpha + \beta / \tau) \dist_{G_i}(x, r) + \beta$.
The intuition why this should be correct is as follows:
If the monotone ES-tree gives the desired approximation before a deletion in $ \cG $ and the deletion does not cause an edge in $ \cH $ to become stretched, then the structure of the monotone ES-tree is similar to the ES-tree, and the same argument that we use for the ES-tree should show that the monotone ES-tree still gives the desired approximation.
If, however, an edge becomes stretched in $ \cH $, then the level of the affected node does not change anymore and, thus, as distances in decremental graphs never decrease, we should still obtain the desired approximation.
This intuition is basically correct, but the correctness proof also requires the emulator to be persevering, as a persevering path does not contain any inserted edge and, thus, no stretched edges.

Remember that processing an edge deletion in $ \cG $ might mean processing a series of updates in $ \cH $.
We will first show that the approximation guarantee holds for every node that is \emph{stretched} after the monotone ES-tree has processed the $i$-th deletion.
Afterward we will show that it holds for \emph{every} node.

\begin{lemma}\label{lem:monotone_ES_correctness_stretched_nodes}
Let $ 0 < i \leq k $ and assume that $ \ell_{i-1} (x') \leq (\alpha + \beta / \tau) \cdot \dist_{G_{i-1}} (x', \treeroot) + \beta $ for every node $ x' $ with $ \ell_{i-1} (x') \neq \infty $.
Then $ \ell_i (x) \leq (\alpha + \beta / \tau) \cdot \dist_{G_i} (x, \treeroot) + \beta $ for every stretched node $ x $.
\end{lemma}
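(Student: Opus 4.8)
The plan is to reduce the statement to the fact that a node $x$ which is stretched after the $i$-th deletion cannot have changed its level while that deletion was being processed, i.e.\ $\ell_i(x)=\ell_{i-1}(x)$. Granting this, the lemma is immediate: since $\ell_{i-1}(x)=\ell_i(x)\neq\infty$ (the latter because $x$ is stretched, so the hypothesis applies to $x$), we get $\ell_{i-1}(x)\le(\alpha+\beta/\tau)\dist_{G_{i-1}}(x,\treeroot)+\beta$, and because distances in a decremental graph never decrease we have $\dist_{G_{i-1}}(x,\treeroot)\le\dist_{G_i}(x,\treeroot)$, hence $\ell_i(x)=\ell_{i-1}(x)\le(\alpha+\beta/\tau)\dist_{G_i}(x,\treeroot)+\beta$.

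So all the work is in proving $\ell_i(x)=\ell_{i-1}(x)$ for a stretched $x$. I would fix $x$ together with an incident edge $(x,y)$ witnessing that $x$ is stretched at time $i$, so $\ell_i(x)\neq\infty$ and $\ell_i(x)>\ell_i(y)+w_i(x,y)$. By Observation~\ref{obs:simple_observations_monotone_ES}\ref{item: observation one} levels never decrease, so it suffices to rule out any level increase of $x$ while the monotone ES-tree processes the updates to $\cH$ triggered by the $i$-th deletion. I would argue by contradiction, leaning on two structural facts: level increases happen only inside \UpdateLevels{}, hence only when an edge deletion or weight increase in $\cH$ is processed, never when an insertion is processed; and, by the agreed update order, all insertions in $\cH$ triggered by the $i$-th deletion are processed before all the deletions and weight increases. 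Consider the \emph{last} level increase of $x$ during the processing of the $i$-th deletion. Right afterwards, Lines~\ref{line: update ell(y) modified} and~\ref{line:level_increase} have set $\ell(x)=\min_z(\ell(z)+w(x,z))$ (a finite value, since otherwise $\ell(x)$ would have become $\infty$ and stayed so, contradicting that $x$ is stretched at time $i$), so at that instant $x$ is incident to no stretched edge. From that instant until time $i$, $\ell(x)$ stays fixed, every other level can only increase (Observation~\ref{obs:simple_observations_monotone_ES}\ref{item: observation one}), every edge weight in $\cH$ can only increase (no further insertions, and there are no weight decreases), and no new edge incident to $x$ appears; therefore every edge $(x,z)$ present both then and at time $i$ satisfies $\ell_i(x)\le\ell_i(z)+w_i(x,z)$. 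The edge $(x,y)$ is present at time $i$, and it must also have been present right after $x$'s last level increase — otherwise it would have been inserted in the intervening window, which lies entirely within the deletion/weight-increase phase and so contains no insertions. Hence $\ell_i(x)\le\ell_i(y)+w_i(x,y)$, contradicting that $(x,y)$ is stretched at time $i$. This establishes $\ell_i(x)=\ell_{i-1}(x)$ and completes the proof.

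The step I expect to require the most care is exactly this bookkeeping of the update order: the contradiction hinges on the claim that after $x$'s final level increase no incident edge of $x$ can turn stretched, which combines Observation~\ref{obs:simple_observations_monotone_ES}\ref{item: observation two} (an edge becomes stretched only when it is inserted) with the fact that all insertions precede all deletions and weight increases within the processing of a single $\cG$-deletion. A tempting alternative is to locate the last time the witnessing edge $(x,y)$ was inserted, note via Observation~\ref{obs:simple_observations_monotone_ES}\ref{item: observation two} that it has been stretched ever since, and freeze $\ell(x)$ via Observation~\ref{obs:simple_observations_monotone_ES}\ref{item: observation three}; this reaches the same conclusion but additionally forces one to exclude an edge oscillating between stretched and non-stretched states, so I would favor the ``last level increase'' formulation above.
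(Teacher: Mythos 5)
Your proof is correct and follows essentially the same route as the paper's: both reduce the lemma to the claim that $\ell_i(x)=\ell_{i-1}(x)$ for a stretched node $x$, relying on the convention that all insertions in $\cH$ triggered by a deletion in $\cG$ are processed before the deletions and weight increases, and then conclude via the monotonicity of decremental distances. The only difference is cosmetic: the paper freezes $\ell(x)$ starting from the insertion that made the witnessing edge stretched (via Observations~\ref{obs:simple_observations_monotone_ES}\ref{item: observation two} and \ref{item: observation three}), whereas you argue the contrapositive from the last level increase of $x$; both hinge on the same two facts.
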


\begin{proof}
Here we need the assumption that the monotone ES-tree sees the updates in the emulator caused by a single edge deletion in a specific order, namely such that all edge insertions can be processed before the edge weight increases and edge deletions.
Since $ x $ is stretched, there must have been a previous insertion of an edge $ (x, y) $ incident to $ x $ such that $ x $ is stretched since the time this edge was inserted (see \Cref{obs:simple_observations_monotone_ES}\ref{item: observation two}).
Let $ \ell'(x) $ denote the level of $ x $ after the insertion of $ (x, y) $ has been processed.
By \Cref{obs:simple_observations_monotone_ES}, nodes do not change their level as long as they are stretched, and therefore $ \ell_i (x) = \ell' (x) $.

We now show that $\ell_i(x)=\ell'(x)=\ell_{i-1}(x)$.
The insertion of $(x, y)$ could either happen at time $i$ or at some earlier time  (i.e., either it was caused by the $i$-th edge deletion or by a previous edge deletion). 
If the insertion was caused by a previous edge deletion, we clearly have $ \ell_{i-1} (x) = \ell'(x) $ because the level of $ x $ has not changed since this insertion.
Consider now the case that the insertion was caused by the $i$-th edge deletion.
Recall that all insertions caused by the $i$-th deletion are processed \emph{before} any other updates of the emulator are processed.
Since edge insertions do not change the level of any node, we have $ \ell' (x) = \ell_{i-1} (x) $.
In both cases we have $ \ell' (x) = \ell_{i-1} (x) $ and thus $ \ell_i (x) = \ell_{i-1} (x) $.
Since $ \ell_i (x) \neq \infty $, we have $ \ell_{i-1} (x) \neq \infty $.
It follows that 
\begin{equation*}
\ell_i (x) = \ell_{i-1} (x) \leq (\alpha + \beta / \tau) \cdot \dist_{G_{i-1}} (x, \treeroot) + \beta \leq (\alpha + \beta / \tau) \cdot \dist_{G_{i}} (x, \treeroot) + \beta 
\end{equation*}
as desired. The first inequality above follows from the assumptions of the lemma, and the second one is because $ \cG $ is a decremental graph in which distances never decrease.
\end{proof}

In order to prove the approximation guarantee for nonstretched nodes, we have to exploit the properties of the $ (\alpha, \beta, \tau) $-locally persevering emulator $ \cH $.
In the classic ES-tree the level of two nodes differs by at most the weight of a path connecting them---modulo some technical conditions that arise for ES-trees of limited depth.
In the monotone ES-tree this is only true for persevering paths (see \Cref{lem:level_difference_decremental_path}).
Before we can show this we need an even simpler property of the monotone ES-tree:
If two nodes are connected by an edge that is not stretched, then their levels differ by at most the weight of the edge connecting them.
Again, in the classic ES-tree this holds for any edge.

\begin{lemma}\label{lem:level_weight_inequality}
Consider any $ 0 \leq i \leq k $ and any $ (x, y) \in E (H_{i}) $. We have
\begin{equation*}
\ell_i (x) \leq \ell_i (y) + w_i (x, y)
\end{equation*}
if $ \ell_i (y) + w_i (x, y) \leq (\alpha + \beta / \tau) \Q + \beta $ and either (a) $ i = 0 $ or (b) $ i \geq 1 $, $ \ell_{i-1} (x) \neq \infty $ and $ (x, y) $ is not stretched.
\end{lemma}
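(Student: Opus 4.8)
The plan is to split into the two cases (a) $ i = 0 $ and (b) $ i \geq 1 $, handling the base case by a short triangle-inequality argument and spending the real effort on case (b).

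For case (a) I would use that \Initialize{} sets $ \ell_0 (u) = \dist_{H_0, w_0} (u, \treeroot) $ for every node within distance $ (\alpha + \beta / \tau) \Q + \beta $ of $ \treeroot $ in $ (H_0, w_0) $ and $ \ell_0 (u) = \infty $ for all other nodes. The hypothesis $ \ell_0 (y) + w_0 (x, y) \leq (\alpha + \beta / \tau) \Q + \beta $ forces $ \ell_0 (y) + w_0 (x, y) $ to be finite, so $ \ell_0 (y) = \dist_{H_0, w_0} (y, \treeroot) $; since $ (x, y) \in E (H_0) $, the triangle inequality gives $ \dist_{H_0, w_0} (x, \treeroot) \leq \ell_0 (y) + w_0 (x, y) \leq (\alpha + \beta / \tau) \Q + \beta $, so $ x $ lies within the computed tree and $ \ell_0 (x) = \dist_{H_0, w_0} (x, \treeroot) \leq \ell_0 (y) + w_0 (x, y) $.

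For case (b), the key observation is that it suffices to prove $ \ell_i (x) \neq \infty $: once this is known, the hypothesis that $ (x, y) $ is not stretched together with $ \ell_i (x) \neq \infty $ immediately yields $ \ell_i (x) \leq \ell_i (y) + w_i (x, y) $ by the definition of a stretched edge. To prove $ \ell_i (x) \neq \infty $ I would argue by contradiction. Since $ \ell_{i-1} (x) \neq \infty $ and the level of a node is modified only inside \UpdateLevels{}, if $ \ell_i (x) = \infty $ then at some point while processing the $i$-th deletion, $ x $ was taken out of $ Q $, its recomputed value $ \ell'(x) = \min_z (\ell (z) + w (x, z)) $ exceeded the tree depth $ (\alpha + \beta / \tau) \Q + \beta $, and $ x $ was set to $ \infty $. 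Now I would invoke the ordering convention: all emulator edge insertions triggered by the $i$-th deletion are processed first, and \UpdateLevels{} (the only procedure that enqueues nodes and recomputes levels) runs only during the subsequent phase of edge deletions and weight increases. Hence at the moment $ x $ is dequeued we are already in that phase; moreover $ (x, y) $ is present at that moment, because $ (x, y) \in E (H_i) $ and no edge is re-inserted during the deletion phase, so $ (x, y) $ survives the entire $i$-th round. Consequently $ y $ is one of the neighbors $ z $ in the minimization defining $ \ell'(x) $, so $ \ell'(x) \leq \ell^{\mathrm{cur}} (y) + w^{\mathrm{cur}} (x, y) $, where the superscript denotes the value at that moment. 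During the deletion/weight-increase phase both levels (by \Cref{obs:simple_observations_monotone_ES}\ref{item: observation one}) and edge weights are nondecreasing, so $ \ell^{\mathrm{cur}} (y) \leq \ell_i (y) $ and $ w^{\mathrm{cur}} (x, y) \leq w_i (x, y) $, giving $ \ell'(x) \leq \ell_i (y) + w_i (x, y) \leq (\alpha + \beta / \tau) \Q + \beta $, which contradicts $ \ell'(x) $ exceeding the depth. Hence $ \ell_i (x) \neq \infty $, finishing case (b).

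The main obstacle is the bookkeeping in case (b): a single deletion in $ \cG $ can trigger many updates in $ \cH $, so ``the $i$-th round'' is itself a sequence of insertions, weight increases, and deletions, and the argument hinges on combining the insertions-first ordering convention with the monotonicity of levels and weights during the deletion/weight-increase subphase in order to compare the transient value $ \ell'(x) $ against the final quantity $ \ell_i (y) + w_i (x, y) $. The triangle-inequality step in case (a) and the closing appeal to the definition of ``stretched'' are routine.
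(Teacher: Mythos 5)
Your proof is correct and follows essentially the same route as the paper's: reduce the claim to showing $\ell_i(x)\neq\infty$, then derive a contradiction at the moment $x$'s level is set to $\infty$ using the insertions-first processing order together with the monotonicity of levels and edge weights, comparing the transient value of $\min_z(\ell(z)+w(x,z))$ against $\ell_i(y)+w_i(x,y)$. Your separate treatment of $i=0$ via the initialization and your explicit check that $(x,y)$ is present when $x$ is dequeued are just more detailed versions of steps the paper leaves implicit.
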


\begin{proof}
Note that no edge in $ H_0 $ is stretched.
Thus, $ (x, y) $ is not stretched for $ i \geq 0 $.
Hence, we either have $ \ell_i (x) \leq \ell_i (y) + w_i (x, y) $, as desired, or $ \ell_i (x) = \infty $.
Thus, we only have to argue that $ \ell_i (x) \neq \infty $.

Assume by contradiction that $ \ell_i (x) = \infty $.
As $ \ell_{i-1} (x) \neq \infty $, the level of $ x $ is not changed while the monotone ES-tree processes the insertions in $ \cH $ caused by the $i$-th deletion in $ \cG $.
Thus, the only possibility for the level to be increased to $ \infty $ is when the monotone ES-tree processes the edge deletions and edge weight increases.
For every node $ v $, let $ \ell' (v) $ and $ w' (u, v) $ denote the level of every node $ v $ and the weight of every edge $ (u, v) $ directly after the level of $ x $ has been increased to $ \infty $.
Since $ \ell' (x) = \infty $ it must be the case that $ \min_z (\ell' (z) + w'(x, z)) > (\alpha + \beta / \tau) \Q + \beta $, and therefore also $ \ell' (y) + w'(x, y) > (\alpha + \beta / \tau) \Q + \beta $.
But since levels and edge weights never decrease, we also have $ \ell' (y) + w' (x, y) \leq \ell_i (y) + w_i (x, y) \leq (\alpha + \beta / \tau) \Q + \beta $, which contradicts the inequality we just derived.
Therefore it cannot be the case that $ \ell' (x) = \infty $.
\end{proof}

\begin{lemma}\label{lem:level_difference_decremental_path}
For every path $ \pi $ from a node $ x $ to a node $ z $ that (1) is persevering up to time~$ i $ and (2) has the property that $ \ell_i (z) + w_i (\pi) \leq (\alpha + \beta / \tau) \Q + \beta $, we have $ \ell_i (x) \leq \ell_i (z) + w_i (\pi) $.
\end{lemma}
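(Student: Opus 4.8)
The plan is to prove \Cref{lem:level_difference_decremental_path} by a nested induction: an outer induction on the deletion index $ i $ and, for each fixed $ i $, an inner induction on the number of edges of $ \pi $. The inner base case is a zero-edge path, where $ x = z $ and $ w_i (\pi) = 0 $, so $ \ell_i (x) \le \ell_i (z) + w_i (\pi) $ is immediate. For the inner step I would split $ \pi $ into its first edge $ (x, v_1) $ and the remaining subpath $ \pi' $ from $ v_1 $ to $ z $, which has one fewer edge and is still persevering up to time $ i $. Since edge weights are nonnegative, $ \ell_i (z) + w_i (\pi') \le \ell_i (z) + w_i (\pi) \le (\alpha + \beta/\tau) \Q + \beta $, so the inner hypothesis applies to $ \pi' $ and gives $ \ell_i (v_1) \le \ell_i (z) + w_i (\pi') $. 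Adding $ w_i (x, v_1) $ to both sides yields $ \ell_i (v_1) + w_i (x, v_1) \le \ell_i (z) + w_i (\pi) \le (\alpha + \beta/\tau) \Q + \beta $, which is exactly the hypothesis of \Cref{lem:level_weight_inequality} for the edge $ (x, v_1) $; applying that lemma to $ (x, v_1) $ gives $ \ell_i (x) \le \ell_i (v_1) + w_i (x, v_1) \le \ell_i (z) + w_i (\pi) $, completing the step.

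The real work is in checking the side conditions of \Cref{lem:level_weight_inequality} for the edge $ (x, v_1) $. First, $ (x, v_1) \in E (H_i) $ because $ \pi $ is persevering up to time $ i $. If $ i = 0 $, alternative~(a) of \Cref{lem:level_weight_inequality} applies and the step is done. If $ i \ge 1 $, I would verify alternative~(b): that $ (x, v_1) $ is not stretched and that $ \ell_{i-1} (x) \ne \infty $. The first point is where the persevering hypothesis is essential: because $ \pi $ is persevering up to time $ i $, the edge $ (x, v_1) $ lies in every $ H_j $ with $ 0 \le j \le i $, so it already belongs to $ H_0 $ and is never inserted up to time $ i $; since no edge of $ H_0 $ is stretched and, by \Cref{obs:simple_observations_monotone_ES}\ref{item: observation two}, an edge can become stretched only when it is inserted, $ (x, v_1) $ is not stretched at time $ i $. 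For the second point I would use that $ \pi $ is also persevering up to time $ i - 1 $, that levels never decrease (\Cref{obs:simple_observations_monotone_ES}\ref{item: observation one}), and that the weights along $ \pi $ never decrease, so $ \ell_{i-1} (z) + w_{i-1} (\pi) \le \ell_i (z) + w_i (\pi) \le (\alpha + \beta/\tau) \Q + \beta $; then the outer induction hypothesis---the lemma at time $ i - 1 $ applied to the same path $ \pi $---gives $ \ell_{i-1} (x) \le \ell_{i-1} (z) + w_{i-1} (\pi) < \infty $.

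I expect the only real obstacle to be organizational rather than mathematical: one must make sure the nested induction is well founded (the outer base case $ i = 0 $ must fall under alternative~(a) of \Cref{lem:level_weight_inequality}, which requires nothing about time $ -1 $), that the bound $ \le (\alpha + \beta/\tau) \Q + \beta $ is threaded down to every subpath and every edge to which \Cref{lem:level_weight_inequality} is applied, and that the ``not stretched'' claim is argued cleanly from the persevering hypothesis together with \Cref{obs:simple_observations_monotone_ES}. The conceptual content---and the reason the emulator's locally persevering structure is needed---is precisely that a persevering path contains no inserted edge, hence no stretched edge, so along such a path the monotone ES-tree obeys the same triangle-type inequalities as the classical ES-tree.
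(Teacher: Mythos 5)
Your proposal is correct and follows essentially the same argument as the paper: a double induction on the deletion index $i$ and the length of $\pi$, peeling off the first edge, threading the bound $(\alpha+\beta/\tau)\Q+\beta$ down to the subpath, using \Cref{obs:simple_observations_monotone_ES}\ref{item: observation two} together with the persevering property to see the first edge was never inserted (hence not stretched), invoking the induction hypothesis at time $i-1$ on the same path to get $\ell_{i-1}(x)\neq\infty$, and then applying \Cref{lem:level_weight_inequality}. The only cosmetic difference is that you route the $i=0$ case through alternative~(a) of \Cref{lem:level_weight_inequality} inside the inner induction, while the paper dismisses it directly; both are fine.
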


\begin{proof}
The proof is by induction on $ i $ and the length of the path $ \pi $.
The claim is clearly true if $ i = 0 $ or the path has length $ 0 $.
Consider now the induction step.
Let $ (x, y) $ denote the first edge on the path.
Let $ \pi' $ denote the subpath of $ \pi $ from $ y $ to $ z $.
Note that $ \ell_i (z) + w_i (\pi') \leq \ell_i (z) + w_i (\pi) \leq (\alpha + \beta / \tau) \Q + \beta $.
Therefore we may apply the induction hypothesis on $ y $ and get that $ \ell_i (y) \leq \ell_i (z) + w_i (\pi') $.
Thus, we get
\begin{equation*}
\ell_i (y) + w_i (x, y) \leq \ell_i (z) + w_i (\pi') + w (x, y) = \ell_i (z) + w_i (\pi) \leq (\alpha + \beta / \tau) \Q + \beta \, .
\end{equation*}
By the definition of persevering paths, every edge $ (u, v) $ on $ \pi $ has always existed in $ \cH $ since the beginning.
Therefore the edge $ (x, y) $ has never been inserted which means that $ (x, y) $ is not stretched by \Cref{obs:simple_observations_monotone_ES}\ref{item: observation two}.
Since levels and edge weights are nondecreasing we have $ \ell_{i-1} (z) + w_{i-1} (\pi) \leq \ell_i (z) + w_i (\pi) \leq (\alpha + \beta / \tau) \Q + \beta $.
By the induction hypothesis for $ i - 1 $ this implies that $ \ell_{i-1} (x) \leq \ell_{i-1} (z) + w_{i-1} (\pi) \neq \infty $.
We therefore may apply \Cref{lem:level_weight_inequality} and get that $ \ell_i (x) \leq \ell_i (y) + w_i (x, y) \leq \ell_i (z) + w_i (\pi) $.
\end{proof}

Using the property above, we would ideally like to do the following:
Split a shortest path from $ x $ to the root $ \treeroot $ into subpaths of length $ \leq \tau $ and replace each subpath by a persevering path such that the length of each subpath and the persevering path by which it is replaced are approximately the same.
Repeated applications of the inequality of \Cref{lem:level_difference_decremental_path} would then allow us to bound the level of $ x $.
However, this approach alone does not work because the definition of a locally persevering emulator does not always guarantee the existence of a persevering path.
Instead of a persevering path, the locally persevering emulator might also provide us with a shortest path of $ G_i $ that is contained in the current emulator $ H_i $.
In principle this is a nice property because a shortest path is even better than an approximate shortest path.
But the problem now is that nodes on this path could be stretched and only for nonstretched nodes can the difference in levels of two nodes be bounded by the weight of the edge between them.
We can resolve this issue by induction on the distance to $ \treeroot $, which allows us to use the contained path only partially.

\begin{lemma}[Correctness]\label{lem:approximation_ES_tree}
For every node $ x $ and every $ 0 \leq i \leq k $, $ \ell_i (x) \geq \dist_{G_i} (x, \treeroot) $, and if $ \dist_{G_i} (x, \treeroot) \leq \Q $, then $ \ell_i (x) \leq (\alpha + \beta / \tau) \cdot \dist_{G_i} (x, \treeroot) + \beta $.
\end{lemma}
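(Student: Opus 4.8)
The plan is to prove the two inequalities separately. For the lower bound $\ell_i(x) \ge \dist_{G_i}(x,\treeroot)$ I would argue that the parent pointers of the monotone ES-tree always trace a path from $x$ back to $\treeroot$ (by \Cref{obs:simple_observations_monotone_ES}\ref{item: observation four} the level strictly decreases along each parent edge, whose weight is a positive integer, and $\treeroot$ is the unique node of level $0$), and that each such edge lies in $H_i$. Summing \Cref{obs:simple_observations_monotone_ES}\ref{item: observation four} along this path gives $\ell_i(x) \ge \dist_{H_i,w_i}(x,\treeroot)$ whenever $\ell_i(x) \neq \infty$ (the case $\ell_i(x) = \infty$ being trivial), and the claim then follows from property~\ref{item:emulator one} of a locally persevering emulator, $\dist_{H_i,w_i}(x,\treeroot) \ge \dist_{G_i}(x,\treeroot)$.

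For the upper bound I would run a nested induction: the outer induction on $i$, and for each fixed $i$ an inner induction on $d := \dist_{G_i}(x,\treeroot)$, considering only $x$ with $d \le \Q$. If $x = \treeroot$ then $\ell_i(\treeroot) = 0 \le \beta$. If $x$ is \emph{stretched}, \Cref{lem:monotone_ES_correctness_stretched_nodes} applies with the outer induction hypothesis at time $i-1$ as its input; here one uses the observation that for a node $x'$ with $\ell_{i-1}(x') \neq \infty$ but $\dist_{G_{i-1}}(x',\treeroot) > \Q$ the required bound is automatic, since $\ell_{i-1}(x') \le (\alpha + \beta/\tau)\Q + \beta \le (\alpha + \beta/\tau)\dist_{G_{i-1}}(x',\treeroot) + \beta$ (the left-hand side is the tree depth). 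The substantial case is $x$ \emph{not stretched} with $d \ge 1$. There I would pick a node $y$ on a shortest $x$--$\treeroot$ path in $G_i$ with $\dist_{G_i}(x,y) = \min(d,\tau)$, so $\dist_{G_i}(y,\treeroot) = d - \min(d,\tau) < d$, and invoke \Cref{def:locally decremental emulator} for the pair $(x,y)$: since $\dist_{G_i}(x,y) \le \tau$, condition~\ref{item:emulator c} forces $i < t_2$ for this pair, so either (a) $i \le t_1$ or (b) $t_1 < i \le t_2$ holds.

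In case (a) the emulator supplies a persevering path $\pi$ from $x$ to $y$ with $w_i(\pi) \le \alpha\dist_{G_i}(x,y) + \beta$; combining the inner hypothesis bound on $\ell_i(y)$ (using $\ell_i(y) = 0$ if $y = \treeroot$) with \Cref{lem:level_difference_decremental_path} gives $\ell_i(x) \le \ell_i(y) + w_i(\pi) \le (\alpha + \beta/\tau)d + \beta$, where a short calculation using $\dist_{G_i}(x,y) = \min(d,\tau)$ shows both that this bound holds and that the depth precondition $\ell_i(y) + w_i(\pi) \le (\alpha+\beta/\tau)\Q + \beta$ is met. In case (b) the emulator supplies a shortest $x$--$y$ path $\pi$ in $G_i$ whose edges all lie in $H_i$ with weight $1$; I would let $y'$ be the node on $\pi$ closest to $x$ that is stretched, or $y' = y$ if none, note that every node of the $x$--$y'$ subpath except $y'$ is nonstretched and hence (by \Cref{obs:simple_observations_monotone_ES}\ref{item: observation two}) every edge of that subpath is nonstretched, chain \Cref{lem:level_weight_inequality} along it to obtain $\ell_i(x) \le \ell_i(y') + \dist_{G_i}(x,y')$, and finally bound $\ell_i(y')$ via the inner hypothesis (or by $0$ if $y' = \treeroot$), using $\dist_{G_i}(y',\treeroot) \le d - \dist_{G_i}(x,y') < d$. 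The hypotheses $\ell_{i-1}(\cdot) \neq \infty$ needed by \Cref{lem:level_weight_inequality} come from the outer hypothesis applied to the subpath nodes, all of which are at $G_{i-1}$-distance at most $d \le \Q$ from $\treeroot$, and the depth bound required at each step is propagated backward from the final estimate.

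The main obstacle is case (b): the shortest path the emulator hands us may pass through stretched nodes, for which \Cref{lem:level_weight_inequality} gives nothing, so one cannot chain the inequality along the whole path — the remedy is to use the path only up to the first stretched node and delegate the remainder to the inner induction on distance to $\treeroot$, while verifying that the partial chain never exceeds the tree depth so that \Cref{lem:level_weight_inequality} stays applicable at every edge. The other delicate point is the bookkeeping that makes the $\min(d,\tau)$ split collapse exactly to $(\alpha + \beta/\tau)d + \beta$: one absorbs an extra additive $\beta$ into the multiplicative part precisely in the subcase $\dist_{G_i}(x,y) = \tau$ (where $\alpha\tau + \beta = (\alpha + \beta/\tau)\tau$), and uses $\ell_i(y) = 0$ to avoid it in the subcase $y = \treeroot$.
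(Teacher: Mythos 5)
Your proof is correct, and its skeleton is the paper's: the parent-path argument (via \Cref{obs:simple_observations_monotone_ES}\ref{item: observation four} and property~\ref{item:emulator one}) for the lower bound, the nested induction on $i$ and on $\dist_{G_i}(x,\treeroot)$, the use of \Cref{lem:monotone_ES_correctness_stretched_nodes} for stretched nodes, and the same $\min(d,\tau)$ split with the same absorption of the extra $\beta$ into the multiplicative term. Where you diverge is the organization of the main (non-stretched) case. The paper branches on whether some shortest $x$--$\treeroot$ path has its \emph{first edge} in $(H_i,w_i)$: if so, it takes a single step of \Cref{lem:level_weight_inequality} and recurses on that neighbor; if not, it argues that no shortest $x$--$z$ path can be contained in $(H_i,w_i)$, so clause~\ref{item:emulator a} of \Cref{def:locally decremental emulator} must be active and \Cref{lem:level_difference_decremental_path} applies. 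You instead invoke the definition directly for the pair $(x,y)$ with $\dist_{G_i}(x,y)=\min(d,\tau)$ and treat both clauses; the price is that in the contained-path clause the interior nodes may be stretched, which you handle by chaining \Cref{lem:level_weight_inequality} only up to the first stretched node and delegating the remainder to the inner induction, propagating the depth precondition backward along the chain --- exactly the complication the paper sidesteps by using the contained path ``only partially,'' i.e., only its first edge. Both routes are sound; the paper's is leaner (one edge per step, no chain bookkeeping), yours is more literal about the emulator definition. One genuinely useful addition in your write-up: you notice that \Cref{lem:monotone_ES_correctness_stretched_nodes} nominally requires the time-$(i-1)$ bound for \emph{all} nodes with finite level, whereas the induction hypothesis only covers nodes within distance $\Q$, and you close this mismatch by observing that any finite level is capped by the tree depth $(\alpha+\beta/\tau)\Q+\beta$; the paper applies that lemma without commenting on this point.
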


\begin{proof}
We start with a proof of the first inequality, $ \ell_i (x) \geq \dist_{G_i} (x, \treeroot) $.
Consider the (weighted) path $ \pi $ from $ x $ to the root $ \treeroot $ in the monotone ES-tree. Recall that the parent of node $v$ is a node $u=\arg\min_{z} (\lev (z) + w (y, z)) $ as in Line~\ref{line: update ell(y) modified} in \Cref{alg:monotone_ES_tree}.
For every edge $ (u, v) $ on this path, where $ v $ is the parent of $ u $, we have $ \ell_i (u) \geq \ell_i (v) + w_i (u, v) $ by \Cref{obs:simple_observations_monotone_ES}\ref{item: observation four}.
By repeated applications of this inequality for every edge on $ \pi $ we get $ \ell_i (x) \geq w_i (\pi) + \ell_i (\treeroot) = w_i (\pi) $ (since the level of the root $ \treeroot $ is always $ 0 $).
Since $ \pi $ is a path in $ H_i $ we have $ w_i (\pi) \geq \dist_{G_i} (x, \treeroot) $ because a locally persevering emulator never underestimates the true distance by definition.

We now prove the second inequality, $ \ell_i (x) \leq (\alpha + \beta / \tau) \cdot \dist_{G_i} (x, \treeroot) + \beta $ if $ \dist_{G_i} (x) \leq \Q $.
The proof is by induction on $ i $ and the distance of $ x $ to $ \treeroot $ in $ G_i $.

The claim is clearly true if $ x $ is the root node $ \treeroot $ itself.
If $ i \geq 1 $, then note that  $ \dist_{G_{i-1}} (x, \treeroot) \leq \dist_{G_{i}} (x, \treeroot) \leq \Q $, and therefore, by the induction hypothesis for $ i-1 $, we have $ \ell_{i-1} (x) \neq \infty $.
Therefore we may apply \Cref{lem:monotone_ES_correctness_stretched_nodes}, which means that the desired inequality holds if $ x $ is stretched.
Thus, from now on we assume that $ x \neq \treeroot $ and that $ x $ is not stretched.
We distinguish two cases.

\textit{Case 1:} Consider first the case that there is a shortest path from $ x $ to $ \treeroot $ in $ G_i $ such that its first edge $ (x, y) $ is contained in $ (H_i, w_i) $.
Note that $ \dist_{G_i} (y, \treeroot) < \dist_{G_i} (x, \treeroot) $.
Therefore we may apply the induction hypothesis, and by doing so we get $ {\ell_i (y) \leq (\alpha + \beta / \tau) \dist_{G_i} (y, \treeroot) + \beta} $.
We now want to argue that $ \ell_i (x) \leq \ell_i (y) + w_i (x, y) $ by applying \Cref{lem:level_weight_inequality}.
The edge $ (x, y) $ is contained in $ (H_i, w_i) $ with weight $ w_i (x, y) = \dist_{G_i} (x, y) $, and thus
\begin{align*}
\ell_i (y) + w_i (x, y) = \ell_i (y) + \dist_{G_i} (x, y) &\leq (\alpha + \beta / \tau) \cdot \dist_{G_i} (y, \treeroot) + \beta + \dist_{G_i} (x, y) \\
 &\leq (\alpha + \beta / \tau) \cdot (\dist_{G_i} (x, y) + \dist_{G_i} (y, \treeroot)) + \beta \\
 &= (\alpha + \beta / \tau) \cdot \dist_{G_i} (x, \treeroot) + \beta && \text{(1)} \\
 &\leq (\alpha + \beta / \tau) \cdot \Q + \beta  \, . && \text{(2)}
\end{align*}
Remember that $ (x, y) $ is not stretched and if $ i \geq 1 $, then $ \ell_{i-1} (x) \neq \infty $ (as argued above).
Using (2) we may now apply \Cref{lem:level_weight_inequality} and, together with (1), get that
\begin{equation*}
\ell_i (x) \leq \ell_i (y) + w_i (x, y) \leq (\alpha + \beta / \tau) \cdot \dist_{G_i} (x, \treeroot) + \beta
\end{equation*}
as desired.

\textit{Case 2:} Consider now the case that for every shortest path from $ x $ to $ \treeroot $ in $ G_i $ its first edge is not contained in $ (H_i, w_i) $.
Define the node $ z $ as follows.
If $ \dist_{G_i} (x, \treeroot) < \tau $, then $ z = \treeroot $.
If $ \dist_{G_i} (x, \treeroot) \geq \tau $, then $ z $ is a node on a shortest path from $ x $ to $ \treeroot $ in $ G_i $ whose distance to $ x $ is $ \tau $, i.e., $ \dist_{G_i} (x, z) = \tau $ and $ \dist_{G_i} (x, \treeroot) = \dist_{G_i} (x, z) + \dist_{G_i} (z, \treeroot) $.
In both cases there is no shortest path from $ x $ to $ z $ in $ G_i $ that is also contained in $ (H_i, w_i) $ because every shortest path from $ x $ to $ z $ can be extended to a shortest path from $ x $ to $ \treeroot $ in $ G_i $ and $ (H_i, w_i) $ does not contain the first edge of such a path.
Since $ \cH $ is an $ (\alpha, \beta, \tau) $-locally persevering emulator, we know that there is a path $ \pi $ from $ x $ to $ z $ in $ (H_i, w_i) $ that is persevering up to time~$ i $ such that $ w_i (\pi) \leq \alpha \dist_{G_i} (x, z) + \beta $.

If $ z = \treeroot $, we have $ \ell_i (z) = 0 $, and therefore we get
\begin{align*}
\ell_i (z) + w_i (\pi) = w_i (\pi) \leq \alpha \dist_{G_i} (x, z) + \beta &\leq (\alpha + \beta / \tau) \cdot \dist_{G_i} (x, \treeroot) + \beta \\
&\leq (\alpha + \beta / \tau) \cdot \Q + \beta
\end{align*}
as desired.
Consider now the case that $ z \neq \treeroot $.
Since $ \dist_{G_i} (z, \treeroot) < \dist_{G_i} (x, \treeroot) $, we may apply the induction hypothesis on $ z $ and get that $ \ell_i (z) \leq (\alpha + \beta / \tau) \cdot \dist_{G_i} (z, \treeroot) + \beta $.
Together with $ \dist_{G_i} (x, z) = \tau $, we get
\begin{align*}
\ell_i (z) + w_i (\pi) &\leq (\alpha + \beta / \tau) \cdot \dist_{G_i} (z, \treeroot) + \beta + \alpha \dist_{G_i} (x, z) + \beta \\
 &= (\alpha + \beta / \tau) \cdot \dist_{G_i} (z, \treeroot) + \beta + \alpha \dist_{G_i} (x, z) + \beta \cdot \dist_{G_i} (x, z) / \tau \\
 &= (\alpha + \beta / \tau) \cdot \dist_{G_i} (z, \treeroot) + \beta + (\alpha + \beta / \tau) \cdot \dist_{G_i} (x, z) \\
 &= (\alpha + \beta / \tau) \cdot (\dist_{G_i} (x, z) + \dist_{G_i} (z, \treeroot)) + \beta \\
 &= (\alpha + \beta / \tau) \cdot \dist_{G_i} (x, \treeroot) + \beta \\
 &\leq (\alpha + \beta / \tau) \cdot \Q + \beta \, .
\end{align*}
The last equation follows from the definition of $ z $.

In both cases we have $ \ell_i (z) + w_i (\pi) \leq (\alpha + \beta / \tau) \cdot \Q + \beta $.
Since $ \pi $ is persevering up to time~$ i $, we may apply \Cref{lem:level_difference_decremental_path} and get the following approximation guarantee:
\begin{equation*}
\ell_i (x) \leq \ell_i (z) + w_i (\pi) \leq (\alpha + \beta / \tau) \cdot \dist_{G_i} (x, \treeroot) + \beta \, . \qedhere
\end{equation*}
\end{proof}

Finally, we provide the running time analysis.
In principle we use the same charging argument as for the classic ES-tree.
We only have to deal with the fact that the degree of a node might change over time in the dynamic emulator.

\begin{lemma}[Running Time]\label{lem:running_time_monotone_ES_tree_with_log}
For $ k $ deletions in $ \cG $, the monotone ES-tree has a total update time of $ O(\phi_k(\cH) \log{n} + | E_k (\cH) | \cdot ((\alpha + \beta/\tau) \Q + \beta) \log{n} ) $, where $ E_k (\cH) $ is the set of all edges ever contained in $ \cH $ up to time $ k $.
\end{lemma}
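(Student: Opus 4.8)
The plan is to reuse the charging argument that establishes \Cref{thm:king} for the classic ES-tree, modified to handle the fact that the emulator $\cH$ is itself dynamic: node degrees change over time and edges may be inserted. First I would split the total work into (i) initialization, (ii) processing edge insertions in $\cH$, and (iii) processing edge deletions and weight increases in $\cH$, i.e., the calls to \UpdateLevels{}. For (i), computing the initial shortest-paths tree and filling the heaps $N(u)$ costs $\tilde O(|E_k(\cH)|)$ via Dijkstra restricted to depth $(\alpha+\beta/\tau)\Q+\beta$ (nodes beyond that depth, or with no incident edges, receive level $\infty$ and are never touched again), which is absorbed into the claimed bound since $(\alpha+\beta/\tau)\Q+\beta \geq 1$. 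For (ii), by inspection of procedure \Insert{} in \Cref{alg:monotone_ES_tree}, an insertion only updates two heap keys and crucially does \emph{not} call \UpdateLevels{}, so it costs $O(\log n)$; as $\cH$ undergoes at most $\phi_k(\cH)$ updates in total, this contributes $O(\phi_k(\cH)\log n)$.

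The bulk of the argument is (iii). Each deletion or weight increase first inserts its (at most two) endpoints into $Q$ and updates $O(1)$ heap keys, costing $O(\log n)$, which sums to $O(\phi_k(\cH)\log n)$. Inside \UpdateLevels{}, popping a node $y$ costs $O(\log n)$ to read $\ell'(y)=\min_z(\ell(z)+w(y,z))$ from $N(y)$; if this does not exceed $\ell(y)$ we are done with $y$, and otherwise we perform a \emph{level increase} (by \Cref{obs:simple_observations_monotone_ES}\ref{item: observation one}, levels only ever increase), paying an extra $O(\deg_{H_i}(y)\log n)$ to update the neighbors' heaps $N(x)$ and push those neighbors onto $Q$. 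I would charge the $O(\deg_{H_i}(y)\log n)$ cost of a level increase of $y$ as one $O(\log n)$-unit per currently incident edge $(x,y)$; likewise, each $O(\log n)$ spent popping a node $x$ whose level does \emph{not} increase is charged either to the edge update that placed $x$ into $Q$ (only $O(\phi_k(\cH))$ such events) or to the neighbor's level increase that placed $x$ into $Q$, i.e.\ again to an edge incident to that neighbor. The key quantitative fact is that, since edge weights in $\cH$ are positive integers and the tree has depth $(\alpha+\beta/\tau)\Q+\beta$, every node's level attains at most $O((\alpha+\beta/\tau)\Q+\beta)$ distinct finite values and, once it becomes $\infty$, stays $\infty$; hence every node undergoes at most $O((\alpha+\beta/\tau)\Q+\beta)$ level increases over the whole run. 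Therefore each edge of $E_k(\cH)$ collects at most $O((\alpha+\beta/\tau)\Q+\beta)$ charges, each of size $O(\log n)$, for a total of $O(|E_k(\cH)|\cdot((\alpha+\beta/\tau)\Q+\beta)\log n)$. Summing (i)--(iii) yields the claimed bound.

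The main obstacle, and the only genuine deviation from the classic ES-tree analysis, is that degrees in $\cH$ are not fixed, so one cannot argue ``$\deg_{H_0}(y)$ times the number of level increases of $y$'' as in \Cref{cor:ES_running_time_distance_increase}; switching from a per-node to a per-edge charge, summed over all edges \emph{ever} present in $\cH$, circumvents this at the cost of replacing $m$ by $|E_k(\cH)|$. A secondary delicate point is that insertions must be processed without any call to \UpdateLevels{} — so that an inserted (hence possibly stretched) edge never triggers a cascade of level changes — which is precisely what procedure \Insert{} does and is the running-time incarnation of the ``ignore distance decreases'' idea behind the monotone ES-tree.
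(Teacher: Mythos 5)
Your proof is correct and follows essentially the same route as the paper's: initialization via Dijkstra, $O(\log n)$ per update in $\cH$, and the standard level-increase charging capped at $(\alpha+\beta/\tau)\Q+\beta$ increases per node. Your per-edge accounting (each edge of $E_k(\cH)$ absorbs at most $O((\alpha+\beta/\tau)\Q+\beta)$ charges) is just a reorganization of the paper's per-node charge of $O(\deg_{\cH}(x)\log n)$ with the \emph{dynamic degree} $\deg_{\cH}(x)=|\{(x,y)\in E_k(\cH)\}|$, summed via $\sum_x \deg_{\cH}(x)\leq 2|E_k(\cH)|$ — the two bookkeeping schemes count exactly the same quantity.
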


\begin{proof}
We first bound the time needed for the initialization.
Using Dijkstra's algorithm, the shortest paths tree can be computed in time $ O ( | E (H_0) | + n \log{n}) $, which is $ O ( | E_k (\cH) | \log{n}) $.

We now bound the time for processing all edge deletions in $ \cG $.
Remember that the monotone ES-tree runs on the emulator $ \cH $.
An edge deletion in $ \cG $ could result in several updates in the emulator $ \cH $.
All of these updates have to be processed by the monotone ES-tree with time $ O (\log{n}) $ per update plus the time needed for running the procedure \UpdateLevels.
Therefore the total update time is $ O (\phi_k (\cH) \log{n}) $, where $ \phi_k (\cH) $ is the total number of updates in $ \cH $, plus the cumulated time for updating the levels in procedure \UpdateLevels.

We now bound the running time of the procedure \UpdateLevels.
Here, the well-known level-increase argument works.
We define the \emph{dynamic degree} of a node $ x $ by $ \deg_{\cH} (x) = | \{ (x, y) \mid (x, y) \in E_k (\cH) \} | $.
Clearly, the dynamic degree never underestimates the current degree of a node in the emulator.
We charge time $ O (\deg_{\cH} (x) \log (x)) $ to every level increase of a node $ x $ and time $ O(\log{n}) $ to every update in $ \cH $.

We now argue that this charging covers all costs in the procedure \UpdateLevels.
Consider a node $ x $ that is processed in the while-loop of the procedure \UpdateLevels after some update in the emulator.
Now the following holds:
If the level of $ x $ increases, the monotone ES-tree has to spend time $ O (\deg_{\cH} (x) \log (x)) $ because $ \deg_{\cH} (x) $ bounds the current degree of $ x $ in the emulator.
If the level of $ x $ does not increase, the monotone ES-tree has to spend time $ O(\log{n}) $.
We now only have to argue that the cost of $ O(\log{n}) $ in the second case is already covered by our charging scheme.

There are two possibilities explaining why $ x $ is in the heap.
The first one is that $ x $ is processed directly after the deletion or weight increase of an edge $ (x, y) $.
The second one is that it was put there by one of its neighbors.
In the first situation we can charge the running time of $ O(\log{n}) $ to the weight increase (or delete) operation.
Consider now the second situation: the level of a node $ y $ increases and its neighbor $ x $ is put into the heap for later processing.
Later on $ x $ is processed but its level does not increase.
Then we can charge the running time of $ O(\log{n}) $ to the time $ O (\deg_{\cH} (x) \log{n}) $ that we already charge to $ y $.

Since the monotone ES-tree is only maintained up to depth $ (\alpha + \beta/\tau) \Q + \beta $, at most $ (\alpha + \beta/\tau) \Q + \beta $ level increases are possible for every node.
Thus, the total update time of the monotone ES-tree is $ O (\phi_k (\cH) \log{n} + \sum_{x \in U} \deg_{\cH} (x) ((\alpha + \beta/\tau) \Q + \beta) \log{n}) $.
Since $ \sum_{x \in U} \deg_{\cH} (x) \leq 2 |E_\cH (U)| $, the total update time is $ O (\phi_k (\cH) + E_\cH (U) ((\alpha + \beta/\tau) \Q + \beta) \log{n}) $.
\end{proof}

\paragraph*{Eliminating the $ \log{n} $-factor.}
The factor $ \log{n} $ in the running time of \Cref{lem:running_time_monotone_ES_tree_with_log} comes from using a heap $ Q $ and, for every node $ u $, a heap $ N (u) $.
We now want to avoid using these heaps and only charge $ O (\deg_{\cH} (u)) $ to every level increase of a node $ u $ and time $ O(1) $ to every update in $ \cH $.
King~\cite{King99} explained how to eliminate the $ \log{n} $-factor for the classic ES-tree.
However, we cannot use the same modified data structures as King because of the possibility of insertions and edge weight increases.

First we explain how to avoid the heap $ Q $.
Observe that every time we increase the level of a node, it suffices to increase the level by only $ 1 $.
Thus, instead of a heap for $ Q $ we can also use a simple queue, implemented with a list that allows us to retrieve and remove its first element and to append an element at its end.

Now we explain how to avoid the heap $ N (u) $ of every node $ u $.
Remember that we only want to increase the level of a node $ u $ if there is no neighbor $ v $ of $ u $ such that
\begin{equation}
\ell (v) + w (u, v) \leq \ell (u) \, . \label{eq:parent_inequality}
\end{equation}
Therefore we maintain a counter $ c (u) $ for every node $ u $ such that $ c (u) = | \{ v \mid \ell (v) + w (u, v) \leq \ell (u) \} | $.\footnote{The idea of maintaining this kind of counter has previously been used by Brim et al.~\cite{Brim11} in the context of mean-payoff games.}
If the counters are correctly maintained, we can simply check whether $ c (u) $ is $ 0 $ to determine whether the level of $ u $ has to increase (which replaces Lines~\ref{line:retrieve_min_monotone_ES_tree} and~\ref{line:check_for_level_increase} of \Cref{alg:monotone_ES_tree}).
For a node $ u $ and its neighbor $ v $ the status of Inequality~\eqref{eq:parent_inequality} only changes (i.e., the inequality starts or stops being satisfied) in the following cases:
\begin{itemize}
\item The level of $ u $ or the level of $ v $ increases.
\item The weight of the edge $ (u, v) $ increases.
\item The edge $ (u, v) $ is inserted (thus $ v $ becomes a neighbor of $ u $).
\item The edge $ (u, v) $ is deleted (thus $ v $ stops being a neighbor of $ v $).
\end{itemize}
Note that for two nodes $ u $ and $ v $ we can check whether they satisfy Inequality~\eqref{eq:parent_inequality} in constant time.
Thus, we can efficiently maintain the counters as follows:
\begin{itemize}
\item Every time we update an edge $ (u, v) $ (by an insertion, deletion, or weight increase), we check in constant time whether Inequality~\eqref{eq:parent_inequality} holds before the update and whether it holds after the update.
Then we increase or decrease $ c(v) $ and $ c(u) $ if necessary.
These operations take constant time, which we charge to the update in $ \cH $.
\item Every time $ \ell (u) $ increases, we recompute $ c(u) $.
This takes time $ O (\deg_{\cH} (u)) $.
Furthermore, for every neighbor $ v $ of $ u $, we check in constant time whether Inequality~\eqref{eq:parent_inequality} holds before the update and whether it holds after the update.
Then we increase or decrease $ c(v) $ if necessary.
This takes constant time for every neighbor of $ u $ and thus time $ O (\deg_{\cH} (u)) $ for all of them.
We can charge the running time $ O (\deg_{\cH} (u)) $ to the level increase of $ u $.
\end{itemize}

Having explained how to maintain the counters, the remaining running time analysis is the same as in \Cref{lem:running_time_monotone_ES_tree_with_log}.
The improved running time can therefore be stated as follows.
\begin{lemma}[Improved Running Time]\label{lem:running_time_monotone_ES_tree_without_log}
For $ k $ deletions in $ \cG $, the monotone ES-tree can be implemented with a total update time of $ O(\phi_k(\cH) + | E_k (\cH) | \cdot ((\alpha + \beta/\tau) \Q + \beta) ) $, where $ E_k (\cH) $ is the set of all edges ever contained in $ \cH $ up to time $ k $.
\end{lemma}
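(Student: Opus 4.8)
The plan is to keep the monotone ES-tree \emph{algorithm} (and hence its correctness and approximation guarantees from \Cref{lem:approximation_ES_tree}) completely unchanged, and only re-implement its two auxiliary heap structures --- the global heap $Q$ and the per-node heaps $N(u)$ --- by data structures supporting all needed operations in $O(1)$ amortized time instead of $O(\log n)$. Once this is done, the running-time analysis is word-for-word that of \Cref{lem:running_time_monotone_ES_tree_with_log} with every ``$\log n$'' deleted, so the bulk of the work is purely data-structural.

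First I would replace the heap $Q$. The key observation is that in the monotone ES-tree every level increase can be performed in unit increments: whenever $\ell(y)$ must rise, it suffices to raise it by $1$ and re-enqueue $y$ for later re-examination. Hence the pending nodes can be stored in a plain FIFO queue (a linked list supporting pop-front and append-at-back) rather than a priority queue; processing order within a level is irrelevant for a monotone algorithm, so correctness is preserved.

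The main step is to replace the heaps $N(u)$. Their only purpose is to test, in Lines~\ref{line:retrieve_min_monotone_ES_tree}--\ref{line:check_for_level_increase} of \Cref{alg:monotone_ES_tree}, whether $\ell(u)$ must increase, i.e.\ whether some neighbor $v$ satisfies $\ell(v)+w(u,v)\le\ell(u)$. Instead I would maintain, for each node $u$, a counter $c(u)=|\{v:\ell(v)+w(u,v)\le\ell(u)\}|$, so that $\ell(u)$ needs to increase precisely when $c(u)=0$. For a fixed pair $(u,v)$ the truth value of $\ell(v)+w(u,v)\le\ell(u)$ can change only upon a level increase of $u$, a level increase of $v$, a weight increase of $(u,v)$, or the insertion or deletion of $(u,v)$; in each case a constant-time re-evaluation of the inequality before and after the event tells us whether to adjust $c(u)$ and/or $c(v)$. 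An update of an edge $(u,v)$ in $\cH$ affects only $c(u)$ and $c(v)$ and costs $O(1)$, which I charge to that update; a level increase of $u$ requires recomputing $c(u)$ from scratch and one constant-time re-check per neighbor (adjusting $c(v)$ when needed), costing $O(\deg_\cH(u))$, which I charge to that level increase.

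With both heaps removed, I would conclude exactly as in \Cref{lem:running_time_monotone_ES_tree_with_log}: initialization costs $O(|E_k(\cH)|)$ plus a linear-time shortest-paths computation in $H_0$; the $\phi_k(\cH)$ updates in $\cH$ are charged $O(1)$ each; every node undergoes at most $(\alpha+\beta/\tau)\Q+\beta$ level increases before it is frozen at $\infty$, each charged $O(\deg_\cH(u))$; and $\sum_u\deg_\cH(u)\le 2|E_k(\cH)|$. Summing these contributions yields the claimed bound $O(\phi_k(\cH)+|E_k(\cH)|\cdot((\alpha+\beta/\tau)\Q+\beta))$. I expect the only genuinely delicate point to be verifying that the counter-maintenance rules stay correct under edge insertions and weight increases --- events absent from King's original setting --- in particular that after an insertion of $(u,v)$ in which $N(u)$ is updated but $\ell(u)$ is deliberately left unchanged (so $(u,v)$ may be stretched), the counters still faithfully track the set of neighbors witnessing Inequality~\eqref{eq:parent_inequality}.
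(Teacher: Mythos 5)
Your proposal is correct and follows essentially the same route as the paper: replace $Q$ by a FIFO queue with unit-step level increases, and replace each $N(u)$ by a counter $c(u)=|\{v \mid \ell(v)+w(u,v)\le\ell(u)\}|$ updated in $O(1)$ per edge update and $O(\deg_{\cH}(u))$ per level increase, then repeat the charging argument of \Cref{lem:running_time_monotone_ES_tree_with_log} without the $\log n$ factors. The "delicate point" you flag about insertions is handled exactly as you suggest (re-evaluating Inequality~\eqref{eq:parent_inequality} before and after each update), which is precisely the paper's argument.
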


Note that the solution proposed above does not allow us to retrieve the parent of every node in the tree in constant time.
This would be desirable because then, for every node $ v $, we could not only get the approximate distance of $ v $ to the root in constant time, but also a path of corresponding or smaller length in time proportional to the length of this path.

We can achieve this property as follows.
For every node $ u $ we maintain a list $ L (u) $ of nodes.
\emph{Every time} a node $ u $ and one of its neighbors $ v $ start to satisfy Inequality~\eqref{eq:parent_inequality}, $ v $ is appended to $ L (u) $.
Note that it is \emph{not} always the case that $ u $ and all nodes $ v $ in the list $ L (u) $ satisfy Inequality~\eqref{eq:parent_inequality}.
We just have the guarantee that they satisfied it at some previous point in time.
However, the converse is true: If $ u $ and its neighbor $ v $ currently satisfy Inequality~\eqref{eq:parent_inequality}, then $ v $ is contained in $ L (u) $.
Using the same argument as above for maintaining the counters, the running time for appending nodes to the lists is paid for by charging $ O(1) $ to every update in $ \cH $ and $ O (\deg_{\cH} (u)) $ to every level increase of a node $ u $.

We can now decide whether the level of a node $ u $ has to increase as follows (this replaces Lines~\ref{line:retrieve_min_monotone_ES_tree} and~\ref{line:check_for_level_increase} of \Cref{alg:monotone_ES_tree}).
Look at the first node $ v $ in the list $ L (u) $.
If $ u $ and $ v $ \emph{still} satisfy Inequality~\eqref{eq:parent_inequality}, the level of $ u $ does not have to increase.
Otherwise, we retrieve and remove the first element from the list until we find a node $ v $ such that $ u $ and $ v $ satisfy Inequality~\eqref{eq:parent_inequality}.
If no such node $ v $ can be found in the list, then the list will be empty after this process and we know that the level of $ u $ has to increase.
Otherwise, the first node in the list $ L (u) $ serves as the parent of $ u $ in the tree.
The constant running time for reading and removing the first node can be charged to the previous appending of this node to $ L (u) $.

Note that the list $ L (u) $ of each node $ u $ might require a lot of space because some nodes might appear several times.
If we want to save space, we can do the following.
For every node $ u $ we maintain a set $ S (u) $ that stores for every neighbor of $ u $ whether it is contained in $ L(u) $.
Every time we add or remove a node from $ L (u) $ we also add or remove it from $ S (u) $.
Before adding a node to $ L (u) $ we additionally check whether it is already contained in $ S(u) $ and thus also in $ L(u) $.
We implement $ S(u) $ with a dynamic dictionary using dynamic perfect hashing~\cite{DietzfelbingerKMHRT94} or cuckoo hashing~\cite{PaghR04}.
This data structure needs time $ O(1) $ for look-ups and expected amortized time $ O(1) $ for insertions and deletions.
Thus, the running time bound of \Cref{lem:running_time_monotone_ES_tree_without_log} will still hold in expectation.
Furthermore, for every node $ u $, the space needed for $ L (u) $ and $ S (u) $ is bounded by $ O(\deg_{\cH} (u)) $.
However, this solution is no longer deterministic.

\subsection{From Approximate SSSP to Approximate APSP}\label{sec:approximate_SSSP_to_approximate_APSP}

In the following, we show how a combination of approximate decremental SSSP data structures can be turned into an approximate decremental APSP data structure.
We follow the ideas of Roditty and Zwick~\cite{RodittyZ12}, who showed how to obtain approximate APSP from \emph{exact} SSSP.
We remark that one can obtain an efficient APSP data structure from this reduction, if the running time of the (approximate) SSSP data structure depends on the distance range that it covers in a specific way.

We first define an approximate version of the center cover data structure and show how such a data structure can be obtained from an approximate decremental SSSP data structure by marginally worsening the approximation guarantee.
We slightly modify the notions of a center cover and a center cover data structure we gave in \Cref{sec:Roditty_Zwick_framework}, where we reviewed the algorithmic framework of Roditty and Zwick~\cite{RodittyZ12}.
The main idea behind their APSP data structure is to maintain $ \log{n} $ instances of center cover data structures such that the instance $ p $ can answer queries for the approximate distance of two nodes $ x $ and $ y $ if the distance between them is in the range from $ 2^p $ to $ 2^{p+1} $.
Arbitrary distance queries can then be answered by performing a binary search over the instances to determine $ p $.
We will follow this approach using approximate instead of exact data structures.

\begin{definition}[Approximate center cover]
Let $ U $ be a set of nodes in a graph $ G $, let $ \q $ be a positive integer, the \emph{cover range}, and let $ \alpha \geq 1 $ and $ \beta \geq 0 $.
We say that a node $ x $ is \emph{$ (\alpha, \beta) $-covered} by a node $ \cen \in U $ in $ G $ if $ \dist_G (x, \cen) \leq \alpha \q + \beta $.
We say that $ U $ is an \emph{$ (\alpha, \beta) $-approximate center cover of $ G $ with parameter $ \q $} if every node $ x $ that is in a connected component of size at least $ \q $ is $ (\alpha, \beta) $-covered by some node $ \cen \in U $ in~$ G $.
\end{definition}

\begin{definition}\label{def:approximate_center_cover}
An \emph{$ (\alpha, \beta) $-approximate center cover data structure} with \emph{cover range parameter $ \q $} and \emph{distance range parameter $ \Q $} for a decremental graph $ \cG = (G_i)_{0 \leq i \leq k} $ maintains, for every $ 0 \leq i \leq k $, a set of \emph{centers} $ \C_i = \{ 1, 2, \ldots, l \} $ and a set of nodes $ U_i = \{ \cen_i^1, \cen_i^2, \ldots, \cen_i^l \} $ such that $ U_i $ is an $ (\alpha, \beta) $-approximate center cover of $ G_i $ with parameter $ \q $.
For every center $ j \in \C_i $ and every $ 0 \leq i \leq k $, we call $ \cen_i^j $ the \emph{location} of center $ j $ in $ G_i $ and for every node $ x $ we say that $ x $ is $ (\alpha, \beta) $-covered by $ j $ in $ G_i $ if $ x $ is $ (\alpha, \beta) $-covered by $ \cen_i^j $ in $ G_i $.
After the $i$-th edge deletion (where $ 0 \leq i \leq k $), the data structure provides the following operations:
\begin{itemize}
\item \Delete{$u$, $v$}: Delete the edge $ (u, v) $ from $ G_i $.
\item \Distance{$j$, $x$}: Return an estimate $ \delta_i (\cen_i^j, x) $ of the distance between the location $ \cen_i^j $ of center $ j $ and the node $ x $ such that $ \delta_i (\cen_i^j, x) \leq \alpha \dist_{G_i} (\cen_i^j, x) + \beta $, provided that $ \dist_{G_i} (\cen_i^j, x) \leq \Q $.
If $ \dist_{G_i} (\cen_i^j, x) > \Q $, then either return $ \delta_i (\cen_i^j, x) = \infty $ or return $ \delta_i (\cen_i^j, x) \leq \alpha \dist_{G_i} (\cen_i^j, x) + \beta $.
\item \FindCenter{$x$}: If $ x $ is in a connected component of size at least $ \q $, then return a center $ j $ (with current location $ \cen_i^j $) such that $ \dist_{G_i} (x, \cen_i^j) \leq \alpha \q + \beta $.
If $ x $ is in a connected component of size less than $ \q $, then either return $ \bot $ or return a center $ j $ such that $ \dist_{G_i} (x, \cen_i^j) \leq \alpha \q + \beta $.
\end{itemize}
The \emph{total update time} is the total time needed to perform all $ k $ delete operations and the initialization, and the \emph{query time} is the worst-case time needed to answer a single distance or find center query.
\end{definition}

We now show how to obtain an approximate center cover data structure that is correct with high probability, which means that, with small probability, the operation \FindCenter{$x$} might return $ \bot $ although $ x $ is in a connected component of size at least $ \q $.

\begin{lemma}[Approximate SSSP implies approximate center cover]\label{lem:approximate_SSSP_to_approximate_center_cover}
Let $ \q $ and $ \Q $ be parameters such that $ \q \leq \Q $.
If there are $ (\alpha, \beta) $-approximate decremental SSSP data structures with distance range parameters $ \q $ and $ \Q $ for some $ \alpha \geq 1 $ and $ \beta \geq 0 $ that have constant query times and total update times of $ T (\q) $ and $ T (\Q) $, respectively (where $ T (\Q) $ is $ \Omega (n) $), then there is an $ (\alpha, \beta) $-approximate center cover data structure that is correct with high probability and has constant query time and an expected total update time of $ O( (T (\Q) n \log{n}) / \q) $.
\end{lemma}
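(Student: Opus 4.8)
The plan is to follow the randomized reduction of Roditty and Zwick~\cite{RodittyZ12}, replacing their exact ES-trees by the two given approximate decremental SSSP data structures. I would sample the center set once, at initialization, including every node of $ V $ independently with probability $ (a \ln n)/\q $ for a sufficiently large constant $ a $; call the result $ U $, so that $ \mathbb{E}[|U|] = O((n \log n)/\q) $. All center locations stay fixed to the sampled nodes for the entire run, so $ U_i = U $ for every $ i $ and no center ever moves (this is the point at which the oblivious adversary is exploited; against an offline adversary it would fail, which is why the deterministic construction of \Cref{sec:deterministic} is different). For each center $ j $ with location $ \cen^j \in U $ I would maintain two $ (\alpha,\beta) $-approximate decremental SSSP data structures rooted at $ \cen^j $: one with distance range parameter $ \q $ (used for \FindCenter) and one with distance range parameter $ \Q $ (used for \Distance). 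Finally, for every node $ x $ I would maintain a list $ L(x) $ of all centers $ j $ whose range-$ \q $ SSSP structure currently has $ x $ within range, i.e.\ reports a finite estimate for $ x $. This list is initialized by one query per (center, node) pair and then updated using the set of ``nodes that just left the range'' returned by the \Delete operation of each range-$ \q $ SSSP structure: whenever $ x $ leaves the range of center $ j $, delete $ j $ from $ L(x) $. Here I use that the estimates of these structures are monotone nondecreasing in time, which holds for all the concrete SSSP data structures we plug in (in particular the monotone ES-tree of \Cref{sec:monotone ES tree}), so a center is never erroneously removed from a list and never has to be re-inserted. The operations are then: \Delete forwards the update to all $ 2|U| $ SSSP structures and updates the affected lists; \FindCenter{$ x $} returns the first element of $ L(x) $, or $ \bot $ if $ L(x) $ is empty; \Distance{$ j $, $ x $} returns the estimate of the range-$ \Q $ structure rooted at $ \cen^j $, which satisfies the required guarantee by definition of an $ (\alpha,\beta) $-approximate SSSP data structure.

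For correctness I would argue soundness and completeness of \FindCenter separately. Soundness: any $ j \in L(x) $ after the $ i $-th deletion has a finite estimate, so $ \dist_{G_i}(x, \cen^j) \le \delta_i(x, \cen^j) \le \alpha\q + \beta $, and the returned center $ (\alpha,\beta) $-covers $ x $. Completeness: if $ x $ lies in a connected component of $ G_i $ of size at least $ \q $, then w.h.p.\ $ L(x) \neq \emptyset $. This is where the sampling enters: in the unweighted graph $ G_i $ the ball of radius $ \q - 1 $ around $ x $ contains at least $ \min(\q, |\comp_{G_i}(x)|) = \q $ nodes (by the usual BFS-layering bound), so the probability that $ U $ misses all of these $ \ge \q $ nodes is at most $ (1 - (a \ln n)/\q)^{\q} \le n^{-a} $. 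A union bound over all $ i \in \{0,\dots,k\} $ (at most $ m+1 \le n^2 $ values) and all $ n $ nodes $ x $ shows that w.h.p., for every such $ i $ and $ x $, some center $ \cen^j $ satisfies $ \dist_{G_i}(x, \cen^j) \le \q - 1 \le \q $; hence its range-$ \q $ SSSP structure has $ \delta_i(x, \cen^j) \le \alpha\q + \beta < \infty $, so $ j $ is in $ L(x) $ (it was never removed, again by monotonicity). The same statement simultaneously shows that $ U_i = U $ is an $ (\alpha,\beta) $-approximate center cover of $ G_i $ for all $ i $ w.h.p., the structural requirement of \Cref{def:approximate_center_cover}. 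The only failure event is thus \FindCenter returning $ \bot $ for a node in a large component, exactly as allowed.

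For the running time, by linearity of expectation it suffices to bound the cost conditioned on $ U $ and multiply by $ \mathbb{E}[|U|] = O((n\log n)/\q) $. Conditioned on $ U $, the cost is (i) the total update time of the $ 2|U| $ SSSP structures, i.e.\ $ O(|U|\,(T(\q) + T(\Q))) $, plus (ii) the list maintenance, which performs $ O(n) $ initial insertions and at most $ O(n) $ removals per center---each node leaves the range of a given center at most once---for a total of $ O(n|U|) $. Since $ T(\Q) = \Omega(n) $ and $ T(\q) \le T(\Q) $ (the SSSP data structures we instantiate have running time nondecreasing in the range parameter), both the $ O(n) $ term and $ T(\q) $ are absorbed, yielding an expected total update time of $ \mathbb{E}[|U|] \cdot O(T(\Q)) = O((T(\Q)\, n \log n)/\q) $. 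Every query inspects only $ O(1) $ words (the head of a list, or a single constant-time SSSP query), so the query time is constant.

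The step I expect to be the main obstacle is verifying that the approximate ``leaving the range'' reports keep every list $ L(x) $ exactly equal to the set of centers currently covering $ x $ in the data-structure sense: one has to check that monotonicity of the approximate estimates genuinely prevents spurious removals and re-insertions, and that the threshold $ \alpha\q + \beta $ of the range-$ \q $ SSSP structures lines up with the true-distance cover $ \dist_{G_i}(x,\cen^j) \le \q $ guaranteed by the sampling argument, so that a node which should be covered is always witnessed by a non-empty list. Once this interface issue is settled, the rest is a direct adaptation of the Roditty--Zwick reduction together with the standard random-centers argument.
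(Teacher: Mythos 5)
Your construction is essentially the paper's own proof: the same one-shot random sampling of fixed centers, two approximate SSSP structures per center (range $\q$ for finding a covering center, range $\Q$ for distance queries), a per-node set of covering centers pruned via the nodes reported by the range-$\q$ structures' delete operation, and the same accounting that uses $T(\Q) = \Omega(n)$ to absorb the bookkeeping, giving $O(T(\Q) n \log n / \q)$ expected total update time. The one place you diverge is the ``interface issue'' you flag: you lean on monotonicity of the estimates, which is not part of the hypotheses (the lemma is stated for arbitrary $(\alpha,\beta)$-approximate SSSP structures as in \Cref{def:approximate_SSSP}). The paper closes exactly this gap without monotonicity: once $\delta_t(x,\cen^j) > \alpha\q+\beta$, the approximation guarantee's contrapositive gives $\dist_{G_t}(x,\cen^j) > \q$, and since distances never decrease in a decremental graph the true distance stays above $\q$ forever, so the removal is permanently safe; symmetrically, for a center at true distance at most $\q$ the estimate is at most $\alpha\q+\beta$ at every earlier time, so it is never reported as leaving and remains in the set. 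With that substitution your argument is complete and coincides with the paper's.
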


\begin{proof}
Let $ \cG = (G_i)_{0 \leq i \leq k} $ be a decremental graph.
It is well known (see, for example, \cite{UllmanY91} and~\cite{RodittyZ12}) that, by random sampling, we can obtain a set $ U = \{ \cen^1, \cen^2, \ldots, \cen^l \} $ of expected size $ O (n \log{n} / \q) $ that is a center cover of $ G_i $ for every $ i \leq k $ with high probability.
Clearly, every center cover is also an $ (\alpha, \beta) $-approximate center cover.
Thus, $ U $ is an $ (\alpha, \beta) $-approximate center cover of $ G_i $ for every  $ 0 \leq i \leq k $.
Throughout all deletions, the set $ \C = \{ 1, 2, \ldots, l \} $ will serve as the set of centers and each center~$ j $ will always be located at the same node $ \cen^j $.

We use the following data structures:
For every center $ j $, we maintain two $ (\alpha, \beta) $-approximate decremental SSSP data structures with source $ \cen^j $: For the first one we use the parameter $ \q $, and for the second one we use the parameter $ \Q $.
As there are $ O (n \log{n} / \q) $ centers, the total update time for all these SSSP data structures is $ O( T (\Q) (n \log{n}) / \q) $.
For every node $ x $ and every center $ j $, let $ \delta_i (x, \cen^j) $ denote the estimate of the distance between $ x $ and the location of center $ j $ returned by the second SSSP data structure with source $ \cen^j $ after the $i$-th edge deletion.
For every node $ x $ we maintain a set $ S_x $ of centers that cover $ x $ such that (a) if $ \dist_{G_i} (x, \cen^j) \leq \q $, then $ j \in S_x $ and (b) for all $ j \in S_x $, $ \delta_i (x, \cen^j) \leq \alpha \q + \beta $.

The set $ S_x $ can be implemented by using an array of size $ |C| = O((n \log{n}) / \q) $ for every node $ x $.
We initialize $ S_x $ in time $ O((n \log{n}) / \q) $ as follows:
For every center~$ j $, we query $ \delta_0 (x, \cen^j) $ and insert $ j $ into $ S_x $ if $ \delta_0 (x, \cen^j) \leq \alpha \q + \beta $.
Since $ \delta_0 (x, \cen^j) \leq \alpha \dist_{G_0} (x, \cen^j) + \beta $, this includes every center $ j $ such that $ \dist_{G_0} (x, \cen^j) \leq \q $.
To maintain the sets of centers, we do the following after every deletion.
Remember that for every center $ j $, the first SSSP data structure with source $ \cen^j $ returns every node $ x $ such that $ \delta_i (x, \cen^j) \leq \alpha \q + \beta $ and $ \delta_{i+1} (x, \cen^j) > \alpha \q + \beta $.
For every such node $ x $ we remove~$ j $ from $ S_x $.
Note that every center $ j $ with $ \delta_t (x, \cen^j) > \alpha \q + \beta $ (for $ 0 \leq t \leq k $) can safely be removed from $ S_x $ because $ \delta_t (x, \cen^j) > \alpha \q + \beta $ implies $ \dist_{G_t} (x, \cen^j) > \q $ and $ \dist_{G_i} (x, \cen^j) \geq \dist_{G_t} (x, \cen^j) $ for all $ i \geq t $.
We can charge the running time for maintaining the sets of centers to the delete operations in the SSSP data structures.
Thus, this running time is already included in the total update time stated above.
For every node $ x $, no center is ever added to $ S_x $ after the initialization.
Thus, in the array representing $ S_x $, we can maintain a pointer to the leftmost center time proportional to the size of the array, which is $ |C| = O((n \log{n}) / \q) $.

We now show how to perform the operations of an approximate center cover data structure, as specified in \Cref{def:approximate_center_cover}, in constant time.
Let $ i $ be the index of the last deletion.
Given a center $ j $ and a node $ x $, we answer a query for the distance of $ x $ to $ \cen^j $ by returning $ \delta_i (x, \cen^j) $ from the second SSSP data structure of $ \cen^j $, which gives an $ (\alpha, \beta) $-approximation of the true distance.
Given a node $ x $, we answer a query for finding a nearby center by returning any center $ j $ in the set of centers $ S_x $ of $ x $.
If $ S_x $ is empty, we return $ \bot $.
Note that for every center $ j $ in $ S_x $ we know that $ \dist_{G_i} (x, \cen^j) \leq \alpha \q + \beta $, as required, because $ \dist_{G_i} (x, \cen^j) \leq \delta_i (x, \cen^j) $.
If $ x $ is in a connected component of size at least $ \q $, we can ensure that we find a center $ j $ in $ S_x $ because, by our random choice of centers, we have $ \dist_{G_i} (x, \cen^j) \leq \q $ for some center $ j $ with high probability.
If $ \dist_{G_i} (x, \cen^j) \leq \q $, then $ S_x $ contains $ j $.
\end{proof}

We now show why the approximate center cover data structure is useful.
If one can obtain an approximate center cover data structure, then one also obtains an approximate decremental APSP data structure with slightly worse approximation guarantee.
The proof of this observation follows Roditty and Zwick~\cite{RodittyZ12}.
In their algorithm, Roditty and Zwick keep a set of nodes $ U $ (which we call centers) such that every node (that is in a sufficiently large connected component) is ``close'' to some node in $ U $.
To be able to efficiently find a close center for every node, they maintain, for every node, the nearest node in the set of centers.
However, it is sufficient to return \emph{any} center that is close.

\begin{lemma}[Approximate center cover implies approximate APSP]\label{lem:approximate_centers_data_structure_to_APSP}
Assume that for all parameters $ \q $ and $ \Q $ such that $ \q \leq \Q $ there is an $ (\alpha, \beta) $-approximate center cover data structure that has constant query time and a total update time of $ T (\q, \Q) $.
Then, for every $ 0 < \epsilon \leq 1 $, there is an $ (\alpha + 2 \epsilon \alpha^2, 2 \beta + 2 \alpha \beta) $-approximate decremental APSP data structure with $ O(\log{\log n}) $ query time and a total update time of $ \hat{T} = \sum_{p=0}^{\lfloor \log{n} \rfloor} T (\qp, \Qp) $, where $ \qp = \epsilon 2^p $ and $ \Qp = \alpha \epsilon 2^p + \beta + 2^{p+1} $ (for $ 0 \leq p \leq \lfloor \log{n} \rfloor $).

The query time can be reduced to $ O (1) $ if there is an $ (\alpha', \beta') $-approximate decremental APSP data structure for some constants $ \alpha' $ and $ \beta' $ with constant query time and a total update time of $ \hat{T} $.
\end{lemma}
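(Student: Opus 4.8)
The plan is to follow the Roditty--Zwick reduction reviewed in \Cref{sec:Roditty_Zwick_framework}, replacing the exact center cover data structure by the given $(\alpha,\beta)$-approximate one and bookkeeping how the two sources of error compound. I would maintain $\lfloor\log n\rfloor+1$ instances of the $(\alpha,\beta)$-approximate center cover data structure, where the $p$-th instance has cover range $\qp=\epsilon 2^p$ and distance range $\Qp=\alpha\epsilon 2^p+\beta+2^{p+1}$; maintaining all of them costs exactly $\hat T=\sum_{p=0}^{\lfloor\log n\rfloor}T(\qp,\Qp)$ by hypothesis. To answer \Distance{$x$, $y$} after the $i$-th deletion, I would first dispose of the degenerate cases ($x=y$, or $x$ and $y$ in different components) with a decremental connectivity structure, and otherwise, for each $p$, call \FindCenter{$x$} on the $p$-th instance to get a center $j_p$ at location $z_p$ (or $\bot$) and form the candidate estimate $\hat\delta^p$ as the sum of \Distance{$j_p$, $x$} and \Distance{$j_p$, $y$}, treating a $\bot$ or an infinite return as $\hat\delta^p=\infty$. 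The answer returned will be $\hat\delta^{p^{**}}$, where $p^{**}$ is the smallest index with $\hat\delta^{p^{**}}<\infty$.

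The correctness argument splits into a lower bound and an upper bound. Since every value returned by \Distance{} is at least the true distance to the center, the triangle inequality gives $\hat\delta^p\ge\dist_{G_i}(z_p,x)+\dist_{G_i}(z_p,y)\ge\dist_{G_i}(x,y)$ whenever $\hat\delta^p$ is finite, so no underestimate is ever output. For the upper bound, set $p^*=\lfloor\log\dist_{G_i}(x,y)\rfloor$, so $2^{p^*}\le\dist_{G_i}(x,y)\le 2^{p^*+1}$ and $p^*\le\lfloor\log n\rfloor$. The component of $x$ has more than $2^{p^*}\ge\epsilon 2^{p^*}$ nodes, so \FindCenter{$x$} on the $p^*$-th instance returns a genuine center $z_{p^*}$ with $\dist_{G_i}(x,z_{p^*})\le\alpha\epsilon 2^{p^*}+\beta$, and two triangle inequalities give $\dist_{G_i}(z_{p^*},y)\le\alpha\epsilon 2^{p^*}+\beta+\dist_{G_i}(x,y)\le\alpha\epsilon 2^{p^*}+\beta+2^{p^*+1}$, which is exactly the distance range $\Qp$ of that instance; hence both \Distance{} calls return valid $(\alpha,\beta)$-approximations. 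Plugging these into $\hat\delta^{p^*}\le\alpha\dist_{G_i}(z_{p^*},x)+\beta+\alpha\dist_{G_i}(z_{p^*},y)+\beta$ and using $2^{p^*}\le\dist_{G_i}(x,y)$ yields
\begin{equation*}
\hat\delta^{p^*}\le 2\alpha^2\epsilon\dist_{G_i}(x,y)+\alpha\dist_{G_i}(x,y)+2\alpha\beta+2\beta=(\alpha+2\epsilon\alpha^2)\dist_{G_i}(x,y)+(2\beta+2\alpha\beta),
\end{equation*}
the claimed guarantee. The identical computation shows that a \emph{finite} $\hat\delta^p$ already satisfies this bound for \emph{every} $p\le p^*$ (only $2^p\le\dist_{G_i}(x,y)$ is used). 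Since $\hat\delta^{p^*}<\infty$ we have $p^{**}\le p^*$, so returning $\hat\delta^{p^{**}}$ is both an overestimate and within the desired factor.

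To get the $O(\log\log n)$ query time, $p^{**}$ is located by a binary search rather than by scanning, exactly as in \Cref{sec:Roditty_Zwick_framework}. The predicate I would binary-search on is ``$z_p=\bot$ or $\hat\delta^p<\infty$'': one checks that, for a connected pair at finite distance, this predicate is monotone in $p$ (if $z_p=\bot$ then $\dist_{G_i}(x,y)<\epsilon 2^p\le 2^p$, which by the previous paragraph forces $\hat\delta^{p'}<\infty$ for the instance $p'=p^*<p$ and, by the choice of $\Qp$, for every larger $p'$ with $z_{p'}\ne\bot$; and if $\hat\delta^p<\infty$, the analogous comparison of $\dist_{G_i}(z_{p'},y)$ against $\Qp$ keeps it finite), and that whenever the predicate holds at $p$ one has $p^{**}\le p$. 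The binary search then returns the least such $p$, from which $p^{**}$ and $\hat\delta^{p^{**}}$ follow in $O(1)$ additional work, for a total query time of $O(\log\log n)$. Finally, if an $(\alpha',\beta')$-approximate decremental APSP data structure with constant query time and total update time $\hat T$ is available, I would run it in parallel; querying it gives $\tilde\delta$ with $\dist_{G_i}(x,y)\le\tilde\delta\le\alpha'\dist_{G_i}(x,y)+\beta'$, so $p^*$ lies within an interval of $O(\log\alpha'+\log(1+\beta'))=O(1)$ indices of $\log\tilde\delta$ (with the $O(1)$ smallest instances inspected unconditionally for tiny distances); inspecting only that constant-size window of instances and taking the minimum finite value answers the query in $O(1)$.

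The arithmetic of compounding the two errors is routine --- the value of $\Qp$ has been reverse-engineered precisely so that the $y$-side \Distance{} query at the relevant instances lands inside its valid distance range --- so the main obstacle is the bookkeeping around the binary search: checking that the ``certificate'' predicate ``$z_p=\bot$ or $\hat\delta^p<\infty$'' is genuinely monotone and really does pin down $p^{**}$, confirming that a finite estimate at an index $p\le p^*$ always meets the claimed bound (so that outputting $\hat\delta^{p^{**}}$ is safe even though $p^{**}$ may be strictly smaller than $p^*$), and making sure the degenerate cases ($x=y$, disconnected pairs, or distances below the granularity $\epsilon$) are routed through the separate connectivity check rather than through the binary search.
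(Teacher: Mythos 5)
Your overall construction is the paper's own: the same $\lfloor\log n\rfloor+1$ instances with $\qp=\epsilon 2^p$, $\Qp=\alpha\epsilon 2^p+\beta+2^{p+1}$, the same three properties of $\hat\delta^p$ (no underestimate; any finite estimate at an index with $2^p\le\dist_{G_i}(x,y)$ meets the $(\alpha+2\epsilon\alpha^2,\,2\beta+2\alpha\beta)$ bound; finiteness at $p^*$), the same error arithmetic, and the same $O(1)$-time reduction via the auxiliary $(\alpha',\beta')$ structure. The one genuine flaw is in your justification of the $O(\log\log n)$ binary search: the predicate ``$z_p=\bot$ or $\hat\delta^p<\infty$'' is \emph{not} monotone in $p$ in the approximate setting. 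Your argument for the second branch (``if $\hat\delta^p<\infty$ then it stays finite at larger indices'') tacitly assumes that a finite $\hat\delta^p$ forces $\dist_{G_i}(x,y)$ to be small (of order $2^{p+1}$), but \Cref{def:approximate_center_cover} explicitly allows the distance operation to return a \emph{finite} estimate even when the true distance exceeds $\Qp$. So one can have $\hat\delta^p<\infty$ at a small $p$, yet $z_{p+1}\ne\bot$ and $\hat\delta^{p+1}=\infty$; the predicate can flip back and forth below $p^*$, the search need not return ``the least $p$ satisfying the predicate,'' and it need not locate $p^{**}$ at all, so the sentence ``from which $p^{**}$ and $\hat\delta^{p^{**}}$ follow in $O(1)$ additional work'' is unsupported.

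The binary search is nevertheless sound, for reasons you state only in part and should make explicit. (i) The predicate holds at every $p\ge p^*$ (your $\Qp$ computation), so the upper endpoint of the search interval always satisfies it. (ii) If the predicate fails at $p$, then $z_p\ne\bot$ and some distance call returned $\infty$; since $\dist_{G_i}(z_p,x)\le\alpha\qp+\beta\le\Qp$, the $\infty$ must come from the $y$-side, so $\dist_{G_i}(z_p,y)>\Qp$ and hence $\dist_{G_i}(x,y)>\Qp-(\alpha\qp+\beta)=2^{p+1}$, i.e.\ $p^*>p$; thus discarding all indices $\le p$ is safe. Consequently the search terminates at an index $p^\dagger\le p^*$ at which the predicate holds. (iii) At such a $p^\dagger$, since $x,y$ are connected and $\dist_{G_i}(x,y)\ge 2^{p^*}\ge\epsilon 2^{p^\dagger}$, the component of $x$ has more than $\qp[p^\dagger]$ nodes --- more precisely, more than $\epsilon 2^{p^\dagger}$ nodes --- so \textsc{FindCenter} cannot return $\bot$ there, forcing $\hat\delta^{p^\dagger}<\infty$; by your own observation, any finite estimate at an index $\le p^*$ meets the claimed bound, so returning $\hat\delta^{p^\dagger}$ (rather than $\hat\delta^{p^{**}}$) is all the reduction needs. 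With this repair your argument coincides with the paper's proof; the extra connectivity structure for degenerate cases is harmless but unnecessary, since disconnected pairs simply yield all-infinite estimates.
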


\begin{proof}
The data structure uses $ \lceil \log n \rceil $ many instances where the $ p $-th instance is responsible for the distance range from $ 2^p $ to $ 2^{p+1} $. 
For the $ p $-th instance we maintain a center cover data structure using the parameters $ \qp = \epsilon 2^p $ and $ \Qp = 2^{p+1} + \alpha \epsilon 2^p + \beta $.
For every center $ j $ and every node $ x $, let $ \delta_i^p (\cen^j, x) $ denote the estimate of the distance between $ \cen^j $ and $ x $ provided by the $p$-th center cover data structure.
Let $ \cG = (G_i)_{0 \leq i \leq k} $ be a decremental graph and let $ i $ be the index of the last deletion.

For every instance $ p $, we can compute a distance estimate $ \hat{\delta}_i^p (x, y) $ for all nodes $ x $ and $ y $ as follows.
Using the center cover data structure, we first check whether there is some center $ j $ with location $ \cen^j $ that $ (\alpha, \beta) $-covers $ x $, i.e., $ \dist_{G_i} (x, \cen^j) \leq \alpha \qp + \beta $.
If $ x $ is not $ (\alpha, \beta) $-covered by any center, we set $ \hat{\delta}_i^p (x, y) = \infty $.
Otherwise we query the center cover data structure to get estimates $ \delta_i^p (\cen^j, x) $ and $ \delta_i^p (\cen^j, y) $ of the distances between $ \cen^j $ and $ x $ and between $ \cen^j $ and $ y $, respectively.
(Remember that these distance estimates might be $ \infty $.)
We now set $ \hat{\delta}_i^p (x, y) = \delta_i^p (\cen^j, x) + \delta_i^p (\cen^j, y) $.
Note that, given~$ p $, we can compute $ \hat{\delta}_i^p (x, y) $ in constant time.
The query procedure will rely on three properties of the distance estimate $ \hat{\delta}_i^p (x, y) $.
\begin{enumerate}
\item The distance estimate never underestimates the true distance, i.e., $ \hat{\delta}_i^p (x, y) \geq \dist_{G_i} (x, y) $.
\item If $ \dist_{G_i} (x, y) \geq 2^p $ and $ \hat{\delta}_i^p (x, y) \neq \infty $, then $ \hat{\delta}_i^p (x, y) \leq (\alpha + 2 \alpha^2) \dist_{G_i} (x, y) + 2 \beta + 2 \alpha \beta $.
\item If $ x $ is in a connected component of size at least $ \qp $ and $ \dist_{G_i} (x, y) \leq 2^{p+1} $, then $ \hat{\delta}_i^p (x, y) \neq \infty $.
\end{enumerate}

The first property is clearly true if $ \hat{\delta}_i^p (x, y) = \infty $ and otherwise follows by applying the triangle inequality (note that $ \dist_{G_i} (\cen^j, y) \leq \delta_i^p (\cen^j, y) $ in any case):
\begin{equation*}
\dist_{G_i} (x, y) \leq \dist_{G_i} (\cen^j, x) + \dist_{G_i} (\cen^j, y) \leq \delta_i^p (\cen^j, x) + \delta_i^p (\cen^j, y) = \hat{\delta}_i^p (x, y) \, .
\end{equation*}
Thus, $ \hat{\delta}_i^p (x, y) $ never underestimates the true distance.
For the second property we remark that if $ \hat{\delta}_i^p (x, y) \neq \infty $, it must be the case that we have found a center $ j $ with location $ \cen^j $ that $ (\alpha, \beta) $-covers $ x $.
Therefore $ \dist_{G_i} (x, \cen^j) \leq \alpha \qp + \beta $.
Furthermore it must be the case that $ \delta_i^p (x, \cen^j) \neq \infty $ and $ \delta_i^p (\cen^j, y) \neq \infty $, and therefore $ \delta_i^p (x, \cen^j) \leq \alpha \dist_{G_i} (x, \cen^j) + \beta $ and $ \delta_i^p (\cen^j, y) \leq \alpha \dist_{G_i} (\cen^j, y) + \beta $.
Now simply consider the following chain of inequalities:
\begin{align*}
\hat{\delta}_i^p (x, y) = \delta_i^p (x, \cen^j) + \delta_i^p (\cen^j, y) &\leq \alpha (\dist_{G_i} (\cen^j, x) + \dist_{G_i} (\cen^j, y)) + 2 \beta \\
 &\leq \alpha (\dist_{G_i} (\cen^j, x) + \dist_{G_i} (\cen^j, x) + \dist_{G_i} (x, y)) + 2 \beta \\
 &= \alpha (2 \dist_{G_i} (\cen^j, x) + \dist_{G_i} (x, y)) + 2 \beta \\
 &\leq \alpha (2 \alpha \qp + 2\beta + \dist_{G_i} (x, y)) + 2 \beta \\
 &= \alpha (2 \alpha \epsilon 2^p + 2\beta + \dist_{G_i} (x, y)) + 2 \beta \\
 &\leq \alpha (2 \alpha \epsilon \dist_{G_i} (x, y) + 2\beta + \dist_{G_i} (x, y)) + 2 \beta \\
 &= (\alpha + 2 \epsilon \alpha^2) \dist_{G_i} (x, y) + 2 \beta + 2 \alpha \beta
\end{align*}
We now prove the third property.
If $ x $ is in a component of size at least $ \qp $, then, with high probability, it is covered by some center $ j $ with location $ \cen^j $, and we have 
\begin{equation*}
\dist_{G_i} (\cen^j, x) \leq \alpha \qp + \beta = \alpha \epsilon 2^p + \beta \leq \Qp \, .
\end{equation*}
Therefore we get $ \delta_i^p (\cen^j, x) \leq \alpha \dist_{G_i} (\cen^j, x) + \beta < \infty $.
Furthermore, we have
\begin{equation*}
\dist_{G_i} (\cen^j, y) \leq \dist_{G_i} (\cen^j, x) + \dist_{G_i} (x, y) \leq \alpha \epsilon 2^p + \beta + 2^{p+1} = \Qp \, ,
\end{equation*}
which gives $ \delta_i^p (\cen^j, y) \leq \alpha \dist_{G_i} (\cen^j, y) + \beta < \infty $.
As both of its components are not $ \infty $, the sum $ \hat{\delta}_i^p (x, y) = \delta_i^p (\cen^j, x) + \delta_i^p (\cen^j, y) $ is also not $ \infty $, as desired.

A query time of $ O (\log{n}) $ is immediate as we can simply return the minimum of all distance estimates $ \hat{\delta}_i^p (x, y) $.
A query time of $ O (\log {\log{n}}) $ is possible because of the following idea:
If $ \dist_{G_i} (x, y) \neq \infty $, it is sufficient to find the minimum index $ p $ such that $ \hat{\delta}_i^p (x, y) \neq \infty $.
This minimum index can be found by performing binary search over all $ \log{n} $ possible indices.
Furthermore, the query time can be reduced to $ O(1) $ if there is a second $ (\alpha', \beta') $-approximate decremental APSP data structure with constant query time for some constants $ \alpha' $ and~$ \beta' $.
We first compute the distance estimate $ \delta_i' (x, y) $ of the second data structure for which we know that $ \dist_{G_i} (x, y) \in [ \delta_i' (x, y) / \alpha' - \beta', \delta_i' (x, y) ] $.
Now there is only a constant number of indices $ p $ such that $ \{ 2^p, \ldots, 2^{p+1} \} \cap [ \delta_i' (x, y) / \alpha' - \beta', \delta_i' (x, y) ] \neq \emptyset $.
For every such index we compute $ \hat{\delta}_i^p (x, y) $ and return the minimum distance estimate obtained by this process.
\end{proof}

Finally, we show how to obtain an approximate decremental APSP data structure from an approximate decremental SSSP data structure if the approximation guarantee is of the form $ (\alpha + \epsilon, \beta) $.
In that case we can avoid the worsening of the approximation guarantee of \Cref{lem:approximate_SSSP_to_approximate_center_cover}.

\begin{lemma}\label{lem:approximate_APSP_with_additive_error}
Assume that for some $ \alpha \geq 1 $ and $ \beta \geq 0 $, every $ 0 < \epsilon \leq 1 $, and all $ 0 \leq \Q $ there is an $ (\alpha+\epsilon, \beta) $-approximate decremental SSSP data structure with distance range parameter $ \Q $ that has constant query time and a total update time of $ T' (\Q, \epsilon) $.
Then there is an $ (\alpha+\epsilon, \beta) $-approximate decremental APSP data structure with a query time of $ O (\log{\log{n}}) $ and a total update time of
\begin{equation*}
\hat{T} = \sum_{p=0}^{\lfloor \log{n} \rfloor} (T' (\Qp, \hat{\epsilon}) n \log{n}) / \qp + n T' (\Qhat, \hat{\epsilon})
\end{equation*}
where $ \hat{\epsilon} = \epsilon / (18 \alpha^2) $, $ \Qhat = (4 \alpha + 8 \beta) / \hat{\epsilon} $, $ \qp = \hat{\epsilon} 2^p $, and $ \Qp = \alpha \hat{\epsilon} 2^p + \beta + 2^{p+1} $ (for $ 0 \leq p \leq \lfloor \log{n} \rfloor $).

The query time can be reduced to $ O (1) $ if there is an $ (\alpha', \beta') $-approximate decremental APSP data structure for some constants $ \alpha' $ and~$ \beta' $ with constant query time and a total update time of $ \hat{T} $.
\end{lemma}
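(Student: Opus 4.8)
The plan is to compose the two black boxes established earlier---\Cref{lem:approximate_SSSP_to_approximate_center_cover} (approximate SSSP $\Rightarrow$ approximate center cover) and \Cref{lem:approximate_centers_data_structure_to_APSP} (approximate center cover $\Rightarrow$ approximate APSP)---and then repair the single defect of that composition: \Cref{lem:approximate_centers_data_structure_to_APSP} inflates the additive error from $\beta$ to $2\beta+2\alpha\beta$ and introduces an $O(\hat{\epsilon}\alpha^2)$ multiplicative overhead. The key observation is that this inflation only hurts when $\dist_{G_i}(x,y)$ is small: once $\dist_{G_i}(x,y)>\Qhat=(4\alpha+8\beta)/\hat{\epsilon}$ we have $\beta<\hat{\epsilon}\dist_{G_i}(x,y)/8$, so the entire excess of the composed estimate over $\alpha\dist_{G_i}(x,y)+\beta$ is bounded by $O(\hat{\epsilon}\alpha^2)\cdot\dist_{G_i}(x,y)$, which is at most $\epsilon\dist_{G_i}(x,y)$ for the choice $\hat{\epsilon}=\epsilon/(18\alpha^2)$. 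For the remaining, now bounded, range of distances we fall back on range-limited single-source structures: run, from \emph{every} node $v$, the hypothesised $(\alpha+\hat{\epsilon},\beta)$-approximate decremental SSSP data structure (instantiating the SSSP hypothesis with $\epsilon=\hat{\epsilon}\le 1$) with distance range parameter $\Qhat$, at total cost $n\,T'(\Qhat,\hat{\epsilon})$; this already gives the desired guarantee for all pairs at distance $\le\Qhat$.

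Concretely, I would: (i) feed the $(\alpha+\hat{\epsilon},\beta)$-approximate SSSP data structures into \Cref{lem:approximate_SSSP_to_approximate_center_cover} to obtain, for each $0\le p\le\lfloor\log n\rfloor$, an $(\alpha+\hat{\epsilon},\beta)$-approximate center cover data structure with cover range $\qp=\hat{\epsilon}2^p$ and distance range $\Qp$ (of the stated form, up to the harmless replacement of $\alpha$ by $\alpha+\hat{\epsilon}$), of total update time $O(T'(\Qp,\hat{\epsilon})\,n\log n/\qp)$ (the mild requirement $T'(\Qp,\hat{\epsilon})=\Omega(n)$ may be assumed without loss of generality); (ii) feed these into \Cref{lem:approximate_centers_data_structure_to_APSP} with its parameter also set to $\hat{\epsilon}$, obtaining an APSP data structure $\mathcal{D}_{\mathrm{big}}$ of total update time $\sum_p O(T'(\Qp,\hat{\epsilon})\,n\log n/\qp)$ and query time $O(\log\log n)$ whose estimate satisfies $\dist_{G_i}(x,y)\le\delta^{\mathrm{big}}_i(x,y)\le(\alpha+\hat{\epsilon}+2\hat{\epsilon}(\alpha+\hat{\epsilon})^2)\dist_{G_i}(x,y)+2\beta+2(\alpha+\hat{\epsilon})\beta$; and (iii) additionally maintain the per-node range-$\Qhat$ SSSP data structures $\mathcal{D}_{\mathrm{small}}$ from above. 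A distance query $(x,y)$ is answered by returning $\min(\delta^{\mathrm{small}}_i(x,y),\delta^{\mathrm{big}}_i(x,y))$, where $\delta^{\mathrm{small}}_i(x,y)$ is read off the SSSP structure rooted at $x$.

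For correctness, neither estimate ever underestimates $\dist_{G_i}(x,y)$ (by \Cref{def:approximate_SSSP} and \Cref{lem:approximate_centers_data_structure_to_APSP}, respectively), so their minimum does not either. If $\dist_{G_i}(x,y)\le\Qhat$ then $\delta^{\mathrm{small}}_i(x,y)\le(\alpha+\hat{\epsilon})\dist_{G_i}(x,y)+\beta\le(\alpha+\epsilon)\dist_{G_i}(x,y)+\beta$ since $\hat{\epsilon}\le\epsilon$; if $\dist_{G_i}(x,y)>\Qhat$ the absorption computation sketched above turns the bound on $\delta^{\mathrm{big}}_i(x,y)$ into $(\alpha+\epsilon)\dist_{G_i}(x,y)+\beta$; and a finite distance yields a finite estimate in either case. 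The total update time is $\hat T$ by construction (inheriting the ``correct with high probability'' and ``in expectation'' qualifiers from \Cref{lem:approximate_SSSP_to_approximate_center_cover}), and the query time is $O(1)+O(\log\log n)=O(\log\log n)$; the reduction to $O(1)$ query time is exactly the one already given inside \Cref{lem:approximate_centers_data_structure_to_APSP}---use the auxiliary $(\alpha',\beta')$-approximate APSP data structure to localise the binary search inside $\mathcal{D}_{\mathrm{big}}$---since $\mathcal{D}_{\mathrm{small}}$ answers in constant time.

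I expect the only genuinely delicate point to be the threshold calibration: choosing $\Qhat=\Theta((\alpha+\beta)/\hat{\epsilon})$ and $\hat{\epsilon}=\Theta(\epsilon/\alpha^2)$ so that simultaneously (a) the inflated additive term $2\beta+2\alpha\beta$ is dominated by $\epsilon\dist_{G_i}(x,y)$ once $\dist_{G_i}(x,y)>\Qhat$, and (b) the multiplicative overhead $2\hat{\epsilon}(\alpha+\hat{\epsilon})^2$, which is quadratic in $\alpha$, stays below $\epsilon$. Everything else is routine composition and bookkeeping, and the precise constants $18$, $4$, and $8$ are simply those for which both inequalities hold with the slack used here.
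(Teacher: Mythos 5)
Your proposal is correct and follows essentially the same route as the paper: compose \Cref{lem:approximate_SSSP_to_approximate_center_cover} and \Cref{lem:approximate_centers_data_structure_to_APSP} with parameter $\hat{\epsilon}=\epsilon/(18\alpha^2)$ so that the inflated guarantee $(\alpha+\hat{\epsilon}+2\hat{\epsilon}(\alpha+\hat{\epsilon})^2,\,2\beta+2(\alpha+\hat{\epsilon})\beta)$ is absorbed into $(\alpha+\epsilon,\beta)$ for distances above the threshold, and cover the small-distance regime by per-node range-$\Qhat$ SSSP structures, answering queries by the minimum of the two non-underestimating estimates. The calibration details and the $O(1)$ query-time reduction also match the paper's argument.
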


\begin{proof}
By combining \Cref{lem:approximate_SSSP_to_approximate_center_cover} with \Cref{lem:approximate_centers_data_structure_to_APSP} the approximate decremental SSSP data structure implies that there is an $ (\hat{\alpha}, \hat{\beta}) $-approximate decremental APSP data structure where $ \hat{\alpha} = (\alpha + \hat{\epsilon}) + 2 \hat{\epsilon} (\alpha + \hat{\epsilon})^2  $ and $ \hat{\beta} = 2 \beta + 2 (\alpha + \hat{\epsilon}) \beta $.
This APSP data structure has a query time of $ O (\log{\log{n}}) $ and a total update time of
\begin{equation*}
\sum_{p=0}^{\lfloor \log{n} \rfloor} ( T' (\Qp, \hat{\epsilon}) (n \log{n}) / \qp \, .
\end{equation*}
By \Cref{lem:approximate_centers_data_structure_to_APSP} the query time can be reduced to $ O(1) $ if, for some constants $ \alpha' $ and~$ \beta' $, there is an $ (\alpha', \beta') $-approximate decremental APSP data structure with constant query time and a total update time of $ \hat{T} $.

The data structure above provides, for every decremental graph $ \cG = (G_i)_{0 \leq i \leq k} $ and all nodes $ x $ and $ y $, a distance estimate $ \delta_i (x, y) $ such that $ \dist_{G_i} (x, y) \leq \delta_i (x, y) \leq \hat{\alpha} \dist_{G_i} (x, y) + \hat{\beta} $ after the $i$-th deletion.
By our choice of $ \hat{\epsilon} = \epsilon / (18 \alpha^2) $ we get
\begin{equation*}
\hat{\alpha} = (\alpha + \hat{\epsilon}) + 2 \hat{\epsilon} (\alpha + \hat{\epsilon})^2  \leq \alpha + \hat{\epsilon} \alpha^2 + 2 \hat{\epsilon} (\alpha + \alpha)^2 = \alpha + 9 \hat{\epsilon} \alpha^2 = \alpha + \epsilon / 2
\end{equation*}
and
\begin{equation*}
\hat{\beta} = 2 \beta + 2 (\alpha + \hat{\epsilon}) \beta \leq 2 \beta + 2 (\alpha + 1) \beta = (2 \alpha + 4) \beta
\end{equation*}
Thus, if $ \dist_{G_i} (x, y) \geq (4 \alpha + 8 \beta) / \epsilon $, then
\begin{align*}
\delta_i (x, y) \leq (\alpha + \epsilon / 2) \dist_{G_i} (x, y) + (2 \alpha + 4) \beta &\leq (\alpha + \epsilon / 2) \dist_{G_i} (x, y) + \epsilon \dist_{G_i} (x, y) / 2 \\
 &= (\alpha + \epsilon) \dist_{G_i} (x, y) \, .
\end{align*}

Additionally, we use a second approximate decremental APSP data structure to deal with distances that are smaller than $ (4 \alpha + 8 \beta) / \epsilon $ (which is less than $ (4 \alpha + 8 \beta) / \hat{\epsilon} $).
For this data structure we simply maintain an $ (\alpha + \hat{\epsilon}, \beta) $-approximate decremental SSSP data structure for every node with distance range parameter $ \Qhat = (4 \alpha + 8 \beta) / \hat{\epsilon} $.
We answer distance queries by returning the minimum of the distance estimates provided by both APSP data structures.
As both APSP data structures never underestimate the true distance, the minimum of both distance estimates gives the desired $ (\alpha + \epsilon, \beta) $-approximation.
\end{proof}

\subsection{Putting Everything Together: $ \tilde O(n^{5/2})$-Total Time Algorithm for $(1+\epsilon, 2)$- and $(2+\epsilon, 0)$-Approximate APSP}\label{sec:putting together for faster algorithms}

In the following we show how the monotone ES-tree of \Cref{thm:monotone ES tree} together with the locally persevering emulator of \Cref{thm:emulator} can be used to obtain $ (1+\epsilon, 2) $- and $ (2+\epsilon, 0) $-approximate decremental APSP data structures with $ \tilde O (n^{5/2} / \epsilon^2) $ total update time.
These results are direct consequences of the previous parts of this section.
We first show how to obtain a $ (1+\epsilon, 2) $-approximate decremental SSSP data structure.
Using \Cref{lem:approximate_APSP_with_additive_error} we then immediately obtain a $ (1+\epsilon, 2) $-approximate decremental APSP data structure.

\begin{corollary}[$ (1 + \epsilon, 2) $-approximate monotone ES-tree]\label{cor:concrete_monotone_ES_tree}
Given the $ (1, 2, \lceil 2 / \epsilon \rceil) $-locally persevering emulator $ \cH $ of \Cref{thm:emulator}, there is a $ (1 + \epsilon, 2) $-approximate decremental SSSP data structure for every distance range parameter $ \Q $ that is correct with high probability, and has constant query time and an expected total update time of $ O ( n^{3/2} \log{n} / \epsilon + n^{3/2} \Q \log{n}) $, where the time for maintaining $ \cH $ is not included.
\end{corollary}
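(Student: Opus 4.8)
The plan is to instantiate \Cref{thm:monotone ES tree} with the emulator from \Cref{thm:emulator} and then simplify the resulting parameters and bounds. Concretely, we run the monotone ES-tree of \Cref{alg:monotone_ES_tree} on the $ (1, 2, \lceil 2 / \epsilon \rceil) $-locally persevering emulator $ \cH $, feeding it the updates of $ \cH $ in the order prescribed before the analysis of \Cref{thm:monotone ES tree} (all edge insertions caused by a deletion in $ \cG $ first, then the weight increases and deletions). We invoke \Cref{thm:monotone ES tree} with $ \alpha = 1 $, $ \beta = 2 $, and $ \tau = \lceil 2 / \epsilon \rceil $.

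First I would pin down the approximation guarantee. By \Cref{thm:monotone ES tree} the resulting data structure is an $ (\alpha + \beta / \tau, \beta) $-approximate decremental SSSP data structure for $ \cG $, i.e.\ a $ (1 + 2 / \lceil 2 / \epsilon \rceil, 2) $-approximate one, with constant query time. Since $ \lceil 2 / \epsilon \rceil \geq 2 / \epsilon $ we have $ 2 / \lceil 2 / \epsilon \rceil \leq \epsilon $, so this is in fact a $ (1 + \epsilon, 2) $-approximation, as claimed. The data structure is correct exactly when $ \cH $ really is a $ (1, 2, \lceil 2 / \epsilon \rceil) $-locally persevering emulator of $ \cG $, which by \Cref{thm:emulator} holds with high probability; hence the SSSP data structure is correct with high probability. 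I would also note that the distance range parameter $ \Q $ of the SSSP data structure is the same $ \Q $ that appears in \Cref{thm:monotone ES tree}: the monotone ES-tree internally maintains levels only up to depth $ (\alpha + \beta/\tau)\Q + \beta $, but \Cref{thm:monotone ES tree} already guarantees that this suffices to answer distance queries correctly for distances up to $ \Q $ (and to report nodes leaving the range, as required by \Cref{def:approximate_SSSP}).

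Next I would derive the running time by substitution. Using the log-free implementation via \Cref{lem:running_time_monotone_ES_tree_without_log}, \Cref{thm:monotone ES tree} gives a total update time of $ O(\phi_k(\cH) + | E_k (\cH) | \cdot ((\alpha + \beta/\tau) \Q + \beta)) $. Plugging in $ \alpha + \beta/\tau \leq 1 + \epsilon \leq 2 $ and $ \beta = 2 $, the second summand becomes $ O(| E_k (\cH) | \cdot \Q) $ (absorbing the additive $ +\beta $ into $ \Q \geq 1 $). By \Cref{thm:emulator} we have $ \phi_k (\cH) = O(n^{3/2} \log{n} / \epsilon) $ and $ | E_k (\cH) | = O(n^{3/2} \log{n}) $ in expectation, so the total update time is $ O(n^{3/2} \log{n} / \epsilon + n^{3/2} \Q \log{n}) $ in expectation. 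Since the monotone ES-tree only consumes the stream of updates to $ \cH $ and never (re)constructs $ \cH $ itself, the cost of maintaining $ \cH $ from \Cref{thm:emulator} is not included in this bound, matching the statement.

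There is no deep obstacle here: the corollary is a routine composition of the two main technical results of this section, \Cref{thm:monotone ES tree} and \Cref{thm:emulator}. The only points that require care are (i) the elementary inequality $ 2 / \lceil 2 / \epsilon \rceil \leq \epsilon $, which is what makes the multiplicative error collapse from $ 1 + \beta/\tau $ to $ 1 + \epsilon $; (ii) bookkeeping the parameters so that $ (\alpha + \beta/\tau)\Q + \beta = O(\Q) $ and the approximation factor stays bounded by a constant when estimating the $ | E_k(\cH) | $ term; and (iii) tracking which quantities are worst-case and which are only in expectation — here the ``expected'' qualifier is inherited entirely from the expected bounds on $ |E_k(\cH)| $ and $ \phi_k(\cH) $ in \Cref{thm:emulator}, so the final guarantee must be stated as an expected total update time.
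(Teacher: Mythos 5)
Your proposal is correct and matches the paper's own proof essentially step for step: both instantiate \Cref{thm:monotone ES tree} with $\alpha=1$, $\beta=2$, $\tau=\lceil 2/\epsilon\rceil$, use $2/\lceil 2/\epsilon\rceil\leq\epsilon$ to obtain the $(1+\epsilon,2)$ guarantee, and substitute the bounds $\phi_k(\cH)=O(n^{3/2}\log n/\epsilon)$ and $|E_k(\cH)|=O(n^{3/2}\log n)$ from \Cref{thm:emulator} into the update-time bound of \Cref{thm:monotone ES tree}. Nothing further is needed.
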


\begin{proof}
Let $ \cG = (G_i)_{0 \leq i \leq k} $ be a decremental graph and let $ \cH $ be the $ (1, 2, \lceil 2 / \epsilon \rceil) $-locally persevering emulator of \Cref{thm:emulator}.
By \Cref{thm:monotone ES tree} there is an approximate decremental SSSP data structure for every source node $ \treeroot $ and every distance range parameter $ \Q $.
Let $ \delta_i (x, \treeroot) $ denote the estimate of the distance between $ x $ and $ \treeroot $ provided after the $i$-th edge deletion in $ \cG $.
By \Cref{thm:monotone ES tree} we have $ \dist_{G_i} (x, c) \leq \delta_i (x, c) $, and furthermore, if $ \dist_{G_i} (x, c) \leq \Q $, then
\begin{equation*}
\delta_i (x, c) \leq (1 + 2 / (\lceil 2 / \epsilon \rceil)) \dist_{G_i} (x, c) + 2 \leq (1 + \epsilon) \dist_{G_i} (x, c) + 2 \, .
\end{equation*}
By \Cref{thm:emulator}, the number of edges ever contained in the emulator is $ | E_k (\cH) | = O (n^{3/2} \log{n}) $ and the total number of updates in $ \cH $ is $ \phi_k (\cH) = O (n^{3/2} \log{n} / \epsilon) $.
Therefore, by \Cref{thm:monotone ES tree}, the total update time of the approximate decremental SSSP data structure is
\begin{multline*}
O ( \phi_k(\cH) + |E_k (\cH) | \cdot ((\alpha + \beta/\tau) \Q + \beta) ) \\
= O ( (n^{3/2} \log{n}) / \epsilon + (n^{3/2} \log{n}) \cdot ((1 + \epsilon) \Q + 2) ) \\
= O ( (n^{3/2} \log{n}) / \epsilon + n^{3/2} \Q \log{n} ) \, . \qedhere
\end{multline*}
\end{proof}

\begin{theorem}[Main result of \Cref{sec:faster}: Randomized $(1+\epsilon, 2)$-approximation with truly-subcubic total update time]\label{thm:main_APSP_approximation}
For every $ 0 < \epsilon \leq 1 $, there is a $ {(1 + \epsilon, 2)} $-approximate decremental APSP data structure with constant query time and an expected total update time of $ O ( (n^{5/2} \log^3{n}) / \epsilon ) $ that is correct with high probability.
\end{theorem}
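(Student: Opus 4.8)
The result is an assembly of the three ingredients developed in this section, with the only new work being to pick the parameters and account for the running time. Fix $0 < \epsilon \leq 1$ and set $\hat{\epsilon} = \epsilon / 18$ (the value that \Cref{lem:approximate_APSP_with_additive_error} dictates for $\alpha = 1$, $\beta = 2$). The first step is to maintain \emph{a single copy} of the $(1, 2, \lceil 2 / \hat{\epsilon} \rceil)$-locally persevering emulator $\cH$ of \Cref{thm:emulator} for the decremental input graph $\cG$; this costs $O(m n^{1/2} \log{n} / \hat{\epsilon})$ total time, has $|E_k(\cH)| = O(n^{3/2} \log{n})$ edges ever present, and undergoes $\phi_k(\cH) = O(n^{3/2} \log{n} / \hat{\epsilon})$ updates. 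Since any decremental graph has $m = O(n^2)$, the emulator maintenance fits in $\tilde O(n^{5/2} / \epsilon)$ time, all of it paid only once.

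The second step is to use the monotone ES-tree of \Cref{thm:monotone ES tree} on $\cH$ as the SSSP primitive. With $\alpha = 1$, $\beta = 2$, $\tau = \lceil 2 / \hat{\epsilon} \rceil$ its multiplicative error is $\alpha + \beta / \tau = 1 + 2 / \lceil 2 / \hat{\epsilon} \rceil \leq 1 + \hat{\epsilon}$, so it is a $(1 + \hat{\epsilon}, 2)$-approximate decremental SSSP data structure for every source $\treeroot$ and every distance range parameter $\Q$, with constant query time and total update time $O(\phi_k(\cH) + |E_k(\cH)| \cdot ((1 + \hat{\epsilon})\Q + 2)) = O(n^{3/2} \log{n} / \epsilon + n^{3/2} \Q \log{n})$ \emph{not counting} the shared emulator — this is exactly \Cref{cor:concrete_monotone_ES_tree}. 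The third step feeds this into \Cref{lem:approximate_APSP_with_additive_error} with $\alpha = 1$, $\beta = 2$ and $T'(\Q, \hat{\epsilon}) = O(n^{3/2} \log{n} / \epsilon + n^{3/2} \Q \log{n})$; through \Cref{lem:approximate_SSSP_to_approximate_center_cover,lem:approximate_centers_data_structure_to_APSP} this produces, at each of the $O(\log n)$ geometric distance scales $p$, a collection of $\tilde O(n / \qp)$ monotone ES-trees — all running on the one emulator $\cH$ already maintained — and combines them into a $(1 + \epsilon, 2)$-approximate decremental APSP data structure with $O(\log\log n)$ query time, correct with high probability (the random center sets and the emulator) and with bounds in expectation (the hashing in the emulator).

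For the running time I would expand $\hat{T} = \sum_{p=0}^{\lfloor \log n \rfloor} (T'(\Qp, \hat{\epsilon})\, n \log{n}) / \qp + n\, T'(\Qhat, \hat{\epsilon})$ from \Cref{lem:approximate_APSP_with_additive_error} using $\qp = \hat{\epsilon} 2^p$, $\Qp = O(2^p)$, and $\Qhat = O(1/\epsilon)$: the $n^{3/2}\Q\log n$ part of $T'$ gives a geometric series in $2^p$ contributing $\tilde O(n^{5/2}/\epsilon)$, the $n^{3/2}\log n/\epsilon$ part gives $O(\log n)$ equal terms, the small-distance term $n\, T'(\Qhat, \hat{\epsilon})$ is $\tilde O(n^{5/2}/\epsilon)$, and adding the one-time emulator cost $\tilde O(n^{5/2}/\epsilon)$ yields the claimed total update time $O(n^{5/2}\log^3 n / \epsilon)$ (tracking the exact powers of $\log n$ and $1/\epsilon$ is the only remaining bookkeeping). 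To improve the query time from $O(\log\log n)$ to $O(1)$, as \Cref{lem:approximate_APSP_with_additive_error} permits, I would additionally run in the background a constant-factor, constant-query-time APSP data structure within the same time budget — e.g. a $(3, 0)$-approximation obtained from the $\tau = 1$ instance of the same emulator with monotone ES-trees rooted at every node, a $\tilde O(n^{5/2})$-time construction — to localize the correct scale $p$ in constant time.

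Since all the substance lives in the already-established \Cref{thm:emulator}, \Cref{thm:monotone ES tree}/\Cref{cor:concrete_monotone_ES_tree}, and \Cref{lem:approximate_APSP_with_additive_error}, the main obstacle is purely the plumbing: (i) lining up the approximation parameters so that the choice $\tau = \lceil 2/\hat{\epsilon}\rceil$ yields multiplicative error $1 + \hat{\epsilon}$ per SSSP instance and, after \Cref{lem:approximate_APSP_with_additive_error}'s rescaling $\hat{\epsilon} = \epsilon/18$, an overall $1 + \epsilon$ with additive $2$; and (ii) ensuring the emulator $\cH$ is built \emph{once} and reused by all $\tilde O(n/\qp)$ monotone ES-trees across all $O(\log n)$ scales, rather than rebuilt per center or per scale — a per-instance emulator would multiply the $\tilde O(n^{5/2}/\epsilon)$ maintenance cost by $\tilde\Theta(n/\epsilon)$ and destroy the bound. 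With these handled, the theorem follows by direct invocation of the cited results.
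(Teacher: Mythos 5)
Your assembly is the paper's own proof almost verbatim: the same $\hat{\epsilon}=\epsilon/18$, the same single shared emulator from \Cref{thm:emulator} charged once at $\tilde O(mn^{1/2}/\epsilon)=\tilde O(n^{5/2}/\epsilon)$, the same invocation of \Cref{cor:concrete_monotone_ES_tree} to get $T(\Q)=O(n^{3/2}\log n/\epsilon + n^{3/2}\Q\log n)$, and the same plug into \Cref{lem:approximate_APSP_with_additive_error} with $\alpha=1$, $\beta=2$, using $\Qp/\qp=O(1/\epsilon)$ and $\Qhat=O(1/\epsilon)$ to collapse the sum to $O(n^{5/2}\log^3 n/\epsilon)$. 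All of that is correct and matches the paper.

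The one genuine flaw is your proposed bootstrap for reducing the query time from $O(\log\log n)$ to $O(1)$. Running monotone ES-trees rooted at \emph{every} node on a $\tau=1$ emulator is not a $\tilde O(n^{5/2})$-time construction: each tree on an emulator with $\tilde O(n^{3/2})$ edges and unbounded depth costs $\tilde O(n^{3/2}\cdot n)$, so $n$ of them cost $\tilde O(n^{7/2})$. If you instead cap the depth and route the $(3,0)$-approximate SSSP through the center-cover framework of \Cref{lem:approximate_centers_data_structure_to_APSP}, you land back at $O(\log\log n)$ query time, which is circular. The helper structure must have \emph{constant} query time on its own; the paper obtains it externally, citing the $(5+\epsilon',0)$-approximate decremental APSP data structure of Bernstein and Roditty, whose total update time $\tilde O(n^{2+1/3+O(1/\sqrt{\log n})})$ is comfortably within $O(n^{5/2})$. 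With that substitution your argument is complete.
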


\begin{proof}
We set $ \hat{\epsilon} = \epsilon / 18 $.
Let $ \cH $ denote the $ (1, 2, 2/\hat{\epsilon}) $-locally persevering emulator of \Cref{thm:emulator}.
The total update time for maintaining $ \cH $ is $ O (m n^{1/2} \log{n} / \epsilon) $.
Since $ m \leq n^2 $ this is within the claimed total update time.
By \Cref{cor:concrete_monotone_ES_tree} we can use $ \cH $ to maintain, for every distance range parameter $ \Q $, a $ (1+\hat{\epsilon}, 2) $-approximate decremental SSSP data structure that has constant query time and a total update time of $ T (\Q) = O ( (n^{3/2} \log{n}) / \epsilon + n^{3/2} \Q \log{n}) $.

Using $ \alpha = 1 $ and $ \beta = 2 $, it follows from \Cref{lem:approximate_APSP_with_additive_error} that there is a $ (1+\epsilon, 2) $-approximate decremental APSP data structure whose total update time is proportional to
\begin{gather*}
\sum_{p=0}^{\lfloor \log{n} \rfloor} (T(\Qp) n \log{n}) / \qp + T (\Qhat) n = \\
\sum_{p=0}^{\lfloor \log{n} \rfloor} ( (n^{3/2} \log{n}) / \hat{\epsilon} + n^{3/2} \Qp \log{n}) (n \log{n}) / \qp + ( (n^{3/2} \log{n}) / \hat{\epsilon} + n^{3/2} \Qhat \log{n}) n
\end{gather*}
where $ \hat{\epsilon} = \epsilon / 18 $, $ \Qhat = 12 / \hat{\epsilon} $, $ \qp = \hat{\epsilon} 2^p $, and $ \Qp = \alpha \hat{\epsilon} 2^p + 2^{p+1} + 2 $ (for $ 0 \leq p \leq \lfloor \log{n} \rfloor $).
Note that $ 1/\hat{\epsilon} = O (1/\epsilon) $, $ \Qhat = O (1/\epsilon) $, and $ \Qp / \qp = O (1/\epsilon) $.
Therefore the total update time is $ O ( (n^{5/2} \log^3{n}) / \epsilon ) $.

The query time of the APSP data structure provided by \Cref{lem:approximate_APSP_with_additive_error} can be reduced to $ O (1) $.
The reason is that Bernstein and Roditty~\cite{BernsteinR11} provide, for example, a $ (5 + \epsilon', 0) $-approximate decremental APSP data structure for some constant $ \epsilon' $.
The total update time of this data structure is
\begin{equation*}
\tilde O \left( n^{2 + 1/3 + O (1/\sqrt{\log{n}})} \right)
\end{equation*}
which is well within $ O (n^{5/2}) $.
\end{proof}

The $ (2 + \epsilon, 0) $-approximate decremental APSP data structure now follows as a corollary.
We simply need the following observation:
If the distance between two nodes is $ 1 $, then we can answer queries for their distance exactly by checking whether they are connected by an edge.

\begin{corollary}[Randomized $(2+\epsilon, 0)$-approximation with truly-subcubic total update time]\label{thm:2-approx}
For every $ 0 < \epsilon \leq 1 $, there is a $ (2 + \epsilon, 0) $-approximate decremental APSP data structure with constant query time and an expected total update time of $ O ( (n^{5/2} \log^3{n}) / \epsilon ) $ that is correct with high probability.
\end{corollary}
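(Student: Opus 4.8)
The plan is to derive the $(2+\epsilon,0)$-approximation from the $(1+\epsilon,2)$-approximate data structure of \Cref{thm:main_APSP_approximation} by observing that the additive error of $2$ is harmless except when the true distance is smaller than $2$, i.e.\ equal to $0$ or $1$, and that both of these cases can be detected and answered exactly with negligible overhead.

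Concretely, I would run the $(1+\epsilon,2)$-approximate decremental APSP data structure of \Cref{thm:main_APSP_approximation} and, in parallel, maintain a dynamic dictionary storing the current edge set $E(G_i)$, implemented via dynamic perfect hashing~\cite{DietzfelbingerKMHRT94} or cuckoo hashing~\cite{PaghR04}. Each deletion updates this dictionary in expected amortized constant time, so the additional total update time is $O(m) = O(n^2)$, which is dominated by $O((n^{5/2}\log^3 n)/\epsilon)$; a dictionary look-up costs worst-case $O(1)$ time.

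To answer a distance query for nodes $x$ and $y$ after the $i$-th deletion: return $0$ if $x = y$; return $1$ if $(x,y) \in E(G_i)$ (detected by a single dictionary look-up); otherwise return the estimate $\delta_i(x,y)$ supplied by the $(1+\epsilon,2)$-approximate data structure. In the first two cases the returned value equals $\dist_{G_i}(x,y)$ exactly. In the remaining case we necessarily have $\dist_{G_i}(x,y) \geq 2$, so \Cref{thm:main_APSP_approximation} yields
\begin{align*}
\dist_{G_i}(x,y) \leq \delta_i(x,y) &\leq (1+\epsilon)\dist_{G_i}(x,y) + 2 \\
 &\leq (1+\epsilon)\dist_{G_i}(x,y) + \dist_{G_i}(x,y) = (2+\epsilon)\dist_{G_i}(x,y) \, ,
\end{align*}
which is precisely a $(2+\epsilon,0)$-approximation. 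Hence the overall query time is $O(1)$, the expected total update time remains $O((n^{5/2}\log^3 n)/\epsilon)$, and correctness with high probability is inherited verbatim from \Cref{thm:main_APSP_approximation}.

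There is essentially no technical obstacle here — this is a clean reduction. The only points requiring (routine) care are that the edge dictionary must be refreshed on every deletion before any subsequent query is served, and that the additive error of the underlying data structure can be absorbed into a multiplicative factor of $2+\epsilon$ only once the true distance is at least $2$, which is exactly why distances $0$ and $1$ must be singled out and answered directly.
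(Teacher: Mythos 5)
Your proposal is correct and follows essentially the same route as the paper: run the $(1+\epsilon,2)$-approximate data structure of \Cref{thm:main_APSP_approximation}, answer distances $0$ and $1$ exactly via an adjacency check, and absorb the additive $2$ into the multiplicative factor once $\dist_{G_i}(x,y)\geq 2$. The only (immaterial) difference is that the paper checks adjacency via the already-maintained adjacency matrix rather than a hash-based dictionary.
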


\begin{proof}
By using the data structure of \Cref{thm:main_APSP_approximation} we can, after the $i$-th edge deletion in a decremental graph $ \cG = (G_i)_{0 \leq i \leq k} $ and for all nodes $ x $ and $ y $, query for a distance estimate $ \delta_i (x, y) $ in constant time that satisfies
\begin{equation*}
\dist_{G_i} (x, y) \leq \delta_i (x, y) \leq (1 + \epsilon) \dist_{G_i} (x, y) + 2 \, .
\end{equation*}
Note that if $ \dist_{G_i} (x, y) \geq 2 $, then
\begin{equation*}
\delta_i (x, y) \leq (1 + \epsilon) \dist_{G_i} (x, y) + 2 \leq (1 + \epsilon) \dist_{G_i} (x, y) + \dist_{G_i} (x, y) = (2 + \epsilon) \dist_{G_i} (x, y) \, .
\end{equation*}
If $ \dist_{G_i} (x, y) < 2 $, then we actually have $ \dist_{G_i} (x, y) \leq 1 $ because $ G_i $ is an unweighted graph.
A distance of $ 1 $ simply means that there is an edge connecting $ x $ and $ y $ in $ G_i $.
Since the adjacency matrix of $ \cG $ is maintained anyway, we can find out in constant time whether $ \dist_{G_i}(x, y) = 1 $.
By setting, for all nodes $ x $ and $ y $,
\begin{equation*}
\delta_i' (x, y) = \begin{cases}
0 & \text{if $ x = y $,} \\
1 & \text{if $ (x, y) \in E (G_i) $,} \\
\delta_i (x, y) & \text{otherwise,}
\end{cases}
\end{equation*}
we get $ \dist_{G_i} (x, y) \leq \delta_i' (x, y) \leq (2 + \epsilon) \dist_{G_i} (x, y) $.
Clearly, this data structure can answer queries in constant time by returning the distance estimate $ \delta_i' (x, y) $ and has the same total update time as the $ (1+\epsilon, 2) $-approximate decremental APSP data structure, namely, $ O ( (n^{5/2} \log^3{n}) / \epsilon ) $.
\end{proof}

\section{Deterministic Decremental $ (1 + \epsilon) $-Approximate APSP with Total Update Time $ O(mn\log n) $}\label{sec:deterministic}

In this section, we present a deterministic decremental $ (1 + \epsilon) $-approximate APSP algorithm with $ O(mn\log n / \epsilon) $ total update time.

\begin{theorem}[Main result of \Cref{sec:deterministic}: Deterministic $O((mn\log n)/\epsilon)$ total update time]\label{thm:deterministic}
For every $ 0 < \epsilon \leq 1 $, there is a deterministic $ (1 + \epsilon, 0) $-approximate decremental APSP data structure with a total update time of $ O ((m n \log n) / \epsilon) $ and a query time of $ O (\log \log n) $.
\end{theorem}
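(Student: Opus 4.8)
The plan is to reduce \Cref{thm:deterministic} to the construction of a \emph{deterministic} center cover data structure in the sense of \Cref{def:CenterCover} and then invoke the Roditty--Zwick reduction of \Cref{lem:centers_data_structure_to_APSP}. Concretely, I will build, for every pair of parameters $\q \leq \Q = O(\q)$, a deterministic center cover data structure with constant query time and total update time $O(m n \Q / \q)$. Plugging this into \Cref{lem:centers_data_structure_to_APSP} with $\qp = \epsilon 2^p$ and $\Qp = 2^{p+2}$ gives, for the $p$-th instance, cost $O(m n \Qp / \qp) = O(m n / \epsilon)$, and summing over the $O(\log n)$ instances yields a $(1+\epsilon,0)$-approximate decremental APSP data structure with total update time $O(mn\log n / \epsilon)$ and query time $O(\log\log n)$, exactly as claimed. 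So the whole proof lives inside the center cover data structure.

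The first ingredient is the \emph{moving ES-tree}: an ES-tree (\Cref{sec:ES tree}) whose root is a movable token, supporting an operation that relocates the root to another node. A relocation to a node $v$ is implemented by rebuilding a breadth-first shortest-paths tree from $v$ up to depth $\Theta(\Q)$ from scratch in $O(m)$ time (the \emph{relocation cost}), after which the ordinary ES-tree update procedure is resumed. The subtle point is the \emph{additional maintenance cost}: a relocation destroys the monotonicity of levels on which the $O(m\Q)$ bound of \Cref{cor:ES_running_time_distance_increase} rests. I will control this by viewing a relocation along a path $P$ as a sequence of $|P|$ \emph{elementary moves} along single edges; since moving the root one step along an edge changes $\dist(\cdot,\mathrm{root})$ by at most $1$ for every node, the standard level-change charging argument shows that an elementary move incurs only $O(m)$ extra work beyond what the purely decremental (monotone) part of the tree already pays. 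Hence a moving ES-tree that over its lifetime is relocated a total \emph{moving distance} $\moves$ via $\nummoves$ relocations has total update time $O(m\Q + m\nummoves + m\moves)$. With $O(n/\q)$ such trees this is $O(mn\Q/\q)$ as soon as $\nummoves$ and $\moves$, summed over all trees, are $O(n)$.

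The center cover data structure then follows \Cref{algo:deterministic intuition 3}. Set $h = \Theta(n/\q)$ and maintain at most $O(h)$ moving ES-trees (the \emph{centers}) of depth $\Theta(\Q)$, a deterministic dynamic connectivity structure~\cite{HolmLT01} to track component sizes, and for each center $j$ an auxiliary set $\T^j$, initially empty. Upon deletion of an edge $(u,v)$: (i) if some center $j$ lies in a connected component $X$ with $|X| < (n/(2h)) - |\T^j|$ --- I will show there is at most one such $j$, and it lies in the component of $u$ or of $v$ --- relocate $j$ out of $X$ along a shortest path inside $X$ to the unique node of $\{u,v\}\setminus X$ (moving distance at most $|X|$), and add all of $X$ to $\T^j$; (ii) while some node in a component of size $\geq \q$ is uncovered, open a fresh center there, exactly as in \Cref{algo:deterministic intuition 1}. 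Correctness (that a center cover with parameter $\q$ is maintained) is immediate: step (ii) terminates with every node in a big component covered, nodes in small components need not be covered, and moving a center \emph{out} of a small component never uncovers a node that still needs covering. Distance and find-center queries are answered from the moving ES-trees in $O(1)$ time.

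The heart of the analysis is the charging argument bounding the number of centers and the total moving distance, using the sets $\T^j$ and the \emph{shrinking balls} $\ball^j$ of radius $(n/(2h)) - |\T^j|$ around center $j$. I will prove: (a) the moving distance accumulated by center $j$ is at most $|\T^j|$, since each relocation of $j$ in step (i) moves it a distance $\leq |X|$ while simultaneously enlarging $\T^j$ by $|X|$; (b) the sets $\T^j$ are pairwise disjoint, so $\sum_j |\T^j| \leq n$, giving both $\moves = O(n)$ and $\nummoves = O(n)$ over all centers; and (c) the sets $\ball^j \cup \T^j$ are pairwise disjoint and each has size $\geq n/(2h)$, so there are at most $2h = O(n/\q)$ centers at all times. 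Establishing (a)--(c) \emph{simultaneously} --- in particular choosing the small-component threshold $(n/(2h))-|\T^j|$ and the ball radius so that at most one center becomes stuck per deletion and so that the disjointness and size invariants survive both a relocation and the subsequent greedy openings --- is the main obstacle, and it is where the bulk of the deterministic section's effort goes; everything else (the moving ES-tree bookkeeping, the dynamic connectivity structure, the final substitution into \Cref{lem:centers_data_structure_to_APSP}) is routine. Once (a)--(c) hold, the moving ES-tree bound gives total update time $O(h \cdot m\Q) + O(m\nummoves) + O(m\moves) = O(mn\Q/\q)$ for the center cover data structure, and the theorem follows.
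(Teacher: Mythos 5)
Your proposal follows essentially the same route as the paper: reduce to a deterministic center cover data structure and plug it into the Roditty--Zwick framework, implement the centers as moving ES-trees whose relocation and extra maintenance costs are charged to the total moving distance, and bound the number of open operations and the moving distance via the disjointness, largeness, and confinement invariants for the sets $\ball^j \cup \T^j$ with $\radius^j = \q/2 - |\T^j|$. The claims (a)--(c) you defer are exactly the lemmas proved in \Cref{sec:algorithm_maintaining_centers} (initial disjointness, the shrinking property, uniqueness of the center to move, $|\ball^j| \geq \radius^j$, and confinement), and they do go through with your parameter choices, so the plan is sound and matches the paper's proof.
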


Using known reductions, we show in \Cref{sec:fully_dynamic} that this decremental algorithm implies a deterministic fully dynamic algorithm with an amortized running time of $ \tilde O (m n / (\epsilon t) $ per update and a query time of $ \tilde O (t) $ for every $ t \leq n $.

The main task in proving \Cref{thm:deterministic} is to design a deterministic version of the {\em center cover data structure} (see \Cref{sec:Roditty_Zwick_framework}) with a total deterministic update time of $ O (m n \Q / \q) $ and constant query time.
Once we have this data structure, \Cref{thm:deterministic} directly follows as a corollary from \Cref{lem:centers_data_structure_to_APSP}.
Note that we cannot use the same idea as in~\cite{RodittyZ12} to reduce the query time from $ O (\log \log n) $ to $ O (1) $.
This would require a \emph{deterministic} $ (\alpha, \beta) $-approximate decremental APSP data structure for some constants $ \alpha $ and $ \beta $ with \emph{constant} query time and a total update time of $ O ((m n \log n) / \epsilon) $.
To the best of our knowledge such a data structure has not yet been developed.

Recall that $ \q $ and $ \Q $ are the {\em coverage range} and {\em distance range} parameters where (a) we want every node (in a connected component of size at least $ \q $) to be within distance of at most $\q$ from some center, and (b) we want to maintain the distance from each center to every node within distance at most $\Q$. 
Roditty and Zwick~\cite{RodittyZ12}, following an argument of Ullman and Yannakakis~\cite{UllmanY91}, observed that making each node a center independently at random with probability $ (a \ln n)/\q$, where $a$ is a constant and $1 \le \q \le n$, gives a set $\C$ of centers such that with probability at least $1 - n^{-(a-1)}$ the conditions of a center cover with parameter $ \q $ are fulfilled by $\C$ in the initial graph and the expected size of $\C$ is $O((n \log{n} )/\q)$. The randomized decremental APSP algorithm of~\cite{RodittyZ12} simply chooses a large enough value of $a$ so that with high probability $\C$ not only fulfills the center cover properties with parameter $ \q $ in the initial graph but continues to fulfill them in all the $O(n^2)$ graphs generated during $O(n^2)$ edge deletions. 
This is only possible because it is assumed that the ``adversary'' that generates the deletions is oblivious, i.e., does not know the location of the centers. The main challenge for the {\em deterministic} algorithm is to {\em dynamically adapt} the location and number of the centers so that (i) the center cover properties with size $ \q $ continue to hold, while the graph is modified, and (ii) the total cost incurred is $ O(m n)$.
Once we have such a data structure, we can use the approach of Roditty and Zwick, as discussed in \Cref{sec:Roditty_Zwick_framework}, to obtain an algorithm for maintaining decremental approximate shortest paths. 

The {\em new feature} of our deterministic center cover data structure is that it sometimes {\em moves} centers to avoid opening too many centers (which are expensive to maintain). As we described in \Cref{sec:intro}, the key technique behind the new data structure is what we call a {\em moving Even--Shiloach tree}. We note that the moving ES-tree is actually a concept rather than a new implementation: we implement it in a straightforward way by building a new ES-tree every time we have to move it. However, analyzing the total update time needs new insights and a careful charging argument. To separate the analysis of the moving ES-tree from the charging argument, we describe the data structure in two parts: 
\begin{enumerate}[label=(\arabic{*})]
\item First, in \Cref{sec:deterministic_dynamic_centers}, we give the {\em moving centers} data structure that can answer \Distance and \FindCenter queries, but needs to be told {\em where} to move a center, when a center has to be moved. This data structure is basically an implementation of {\em several} moving ES-trees.\footnote{Later on we want to use the moving centers data structure, and not directly the moving ES-trees, because we will need an additional operation which is not directly provided by the ES-trees (in particular, the \FindCenter operation defined in \Cref{sec:deterministic_dynamic_centers}).}

\item Then, in \Cref{sec:algorithm_maintaining_centers}, we show how to determine when a center (with a moving ES-tree rooted at it) has to be moved and, if so, where it has to move. 
\end{enumerate}
 
Combining these two parts gives the center cover data structure.
 


\subsection{A Deterministic Moving Centers Data Structure (\texttt{MovingCenter})}\label{sec:deterministic_dynamic_centers}\label{sec:moving centers data structure}

In the following, we design a deterministic data structure, called {\em moving centers data structure}, and analyze its cost in terms of the number of centers opened (\opens) and the {\em moving distance} (\moves).
 When a center is created, it is given a unique identifier $j$. The data structure can handle the following operations. 
\begin{definition}[Moving centers data structure]\label{def:moving center}
A \emph{moving centers data structure} with \emph{cover range parameter $ \q $} and \emph{distance range parameter $ \Q $} for a decremental graph $ \cG = (G_i)_{0 \leq i \leq k} $ maintains, for every $ 0 \leq i \leq k $, a set of \emph{centers} $ \C_i = \{ 1, 2, \ldots, l \} $ and a set of nodes $ U_i = \{ \cen_i^1, \cen_i^2, \ldots, \cen_i^l \} $.
For every center $ j \in \C $ and every $ 0 \leq i \leq k $, we call $ \cen_i^j \in U_i $ the \emph{location} of center $ j $ in $ G_i $.
After the $i$-th edge deletion (where $ 0 \leq i \leq k $), the data structure provides the following operations:
\begin{itemize}
\item \Delete{$u$, $v$}: Delete the edge $ (u, v) $ from $ G_i $.
\item \Open{$x$}: Open a new center at node $ x $ and return the ID of the opened center for later use. 
\item \Move{$j$, $x$}: Move the center $ j $ from its current location $ \cen_i^j $ to node $ x $. 
\item \Distance{$j$, $x$}: Return the distance $ \dist_{G_i} (\cen_i^j, x) $ between the location $ \cen_i^j $ of center $ j $ and the node $ x $, provided that $ \dist_{G_i} (\cen_i^j, x) \leq \Q $.
If $ \dist_{G_i} (\cen_i^j, x) > \Q $, then return $ \infty $.
\item \FindCenter{$x$}: Return a center $ j $ (with location $ \cen_i^j $) such that $ \dist_{G_i} (x, \cen_i^j) \leq \q $.
If no such center exists, return $ \bot $.
\end{itemize}
The \emph{total update time} is the total time needed for performing all the delete, open, and move operations and the initialization.
The \emph{query time} is the worst-case time needed to answer a single distance or find center query.
\end{definition}

The moving centers data structure is a first step toward implementing the center cover data structure:
It  can answer all query operations that are posed to the center cover data structure, but, unlike the center cover data structure, it needs to be told where to place the centers and where to open new centers. This information is determined by the data structure in the next section.

In the rest of \Cref{sec:moving centers data structure} we use the following notation:
The decremental graph $ \cG $ undergoes a sequence of $ k $ edge deletions.
By $G_i$ we denote the graph after the $i$-th deletion (for $ 0 \leq i \leq k $).
Each deletion in the graph is reported to the moving centers data structure by a delete operation.
By $\C_i$ we denote the set of centers at the time of the $i$-th delete operation, and by $\cen_i^j$ we denote the location of center $j \in \C_i$ at the time of the $i$-th delete operation.

\begin{definition}[Moving distance (\moves)]\label{def:moving_distance}
The \emph{total moving distance}, denoted by \moves, is defined as $ \moves = \sum_{0 \leq i < k } \sum_{j \in \C_i} \dist_{G_i} (\cen^j_i, \cen^j_{i+1}) $.
\end{definition}

The main result of this section is that we can maintain a moving centers data structure in $ O (m(\opens  \Q +  \moves)) $ time, as in \Cref{lem:existence_dynamic_centers_ds} below.
The data structure is actually very simple: We maintain an ES-tree of depth at most~$\Q$ at every node for which we open a center and for every node to which we move a center. Note that our algorithm treats an ES-tree at each such node as a new tree, regardless of whether we open a center or move a center there. 
While the algorithm can naively treat each ES-tree as a new one, the analysis cannot: If we do so, we will get a total update time of $ O ((\opens + \nummoves) m \Q) $, where $ \nummoves $ is the total number of move-operations (since maintaining each ES-tree takes $O(m\Q)$ total update time).
%
Instead, we bound the cost incurred by the move-operation based on how far a center is moved, i.e., the moving distance \moves. This argument allows us to replace the unfavorable term $ \nummoves m \Q $ by $ \moves m $.
The deterministic center cover data structure of \Cref{sec:algorithm_maintaining_centers} will generate a sequence of center open and move requests so that \moves = $O(n)$.
For simplifying the analysis, we state the following result under a technical assumption, which will always be fulfilled by the intended use of the moving centers data structure in \Cref{sec:algorithm_maintaining_centers}.\footnote{Without this assumption the total update time will be $ O ((\opens  \Q + \nummoves + \moves) m) $, where $ \opens $ is the number of open-operations, $ \nummoves $ is the number of move operations, and $ \moves $ is the total moving distance.}

\begin{proposition}[Main result of \Cref{sec:deterministic_dynamic_centers}: deterministic moving centers data structure]\label{lem:existence_dynamic_centers_ds}
Let $ \q $ and $ \Q $ be parameters such that $ \q \leq \Q $.
Under the assumption that between two consecutive delete operations there can be at most one open or delete operation for each center, there is a moving centers data structure with a total deterministic update time of $ O ((\opens  \Q + \moves) m) $, where $ \opens $ is the number of open-operations and $ \moves $ is the total moving distance.\footnote{Note that the total moving distance might be $ \infty $ if, after some deletion $ i $, a center $ j $ is moved from $ \cen_i^j $ to $ \cen_{i+1}^j $ such that there is no path between $ \cen_i^j $ and $ \cen_{i+1}^j $ in $ G_i $. In this case our analysis cannot bound the total update time of the moving centers data structure.}
The data structure can answer each query in constant time.
\end{proposition}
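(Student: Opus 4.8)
The plan is to implement the moving centers data structure by keeping, for the current location $ \cen_i^j $ of each center $ j $, an ordinary ES-tree (see \Cref{sec:ES tree}) of depth $ \Q $ rooted at $ \cen_i^j $, rebuilt from scratch by a single breadth-first search each time the center is opened or moved. A distance query for a center $ j $ and a node $ x $ is answered in $ O(1) $ by reading off the level $ \ell^j (x) $ of $ x $ in center $ j $'s tree (returning $ \infty $ if it exceeds $ \Q $). For find-center queries I would additionally maintain, for every node $ x $, a doubly linked list $ S_x $ of the centers currently within distance $ \q $ of $ x $; this is well defined since $ \q \leq \Q $, so a tree of depth $ \Q $ records exactly whether $ \dist_{G_i} (x, \cen_i^j) \leq \q $. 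A find-center query at $ x $ returns the head of $ S_x $, or $ \bot $ if it is empty, so every query takes $ O(1) $ time. The list is kept current by inserting $ j $ into $ S_x $ whenever the (re)build BFS of center $ j $ reaches $ x $ at depth at most $ \q $, and by deleting $ j $ from $ S_x $ whenever the level of $ x $ in center $ j $'s tree rises strictly above $ \q $; a back-pointer from each active pair $ (j, x) $ to its list cell makes the deletion $ O(1) $, and on a move we first strip $ j $ from all (at most $ n $) lists it occupies before running the new BFS. Thus each open and each move operation costs $ O(m) $ beyond the ES-tree work itself.

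For the running time I would apply the ES-tree analysis (\Cref{cor:ES_running_time_distance_increase}) to each \emph{phase} of a center, i.e., each maximal interval during which it is not moved. While center $ j $ stays fixed, the work its tree does over such a phase is $ O\bigl(\phi + \sum_{y} \deg_{G_0} (y) \cdot \Delta_y\bigr) $, where $ \Delta_y $ is the increase of the capped level $ \min(\Q, \ell^j (y)) $ over the phase, and $ \phi $ --- the number of deletions in the phase --- sums to at most $ m $ per center. Since $ \sum_y \deg_{G_0} (y) = 2m $, everything reduces to bounding, for each center $ j $, the total increase of $ \min(\Q, \ell^j (y)) $ accumulated over \emph{all} phases of $ j $ together. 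I claim this total is at most $ \Q + D_j $, where $ D_j = \sum_{0 \leq i < k} \dist_{G_i} (\cen_i^j, \cen_{i+1}^j) $ is the contribution of $ j $ to $ \moves $ (\Cref{def:moving_distance}).

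The claim is the crux, and it is what I expect to be the main obstacle. Within a phase the center is fixed and $ \cG $ only loses edges, so $ \min(\Q, \ell^j (y)) $ is nondecreasing. When center $ j $ moves from $ u $ to $ v $ at time $ i $ (put $ d := \dist_{G_i} (u, v) $), the rebuilt tree sets $ \ell^j (y) = \dist_{G_i} (v, y) $, and because $ \dist_{G_i} (v, y) \geq \dist_{G_i} (u, y) - d $ a short case check --- also covering the cases where one of the two distances exceeds $ \Q $ --- shows that $ \min(\Q, \ell^j (y)) $ drops by at most $ d $ across the move. Writing $ a_p, b_p $ for the value of $ \min(\Q, \ell^j (y)) $ at the start and at the end of the $ p $-th phase, the accumulated increase over all $ P $ phases of $ j $ is
\begin{equation*}
\sum_{p} (b_p - a_p) = b_P - a_1 + \sum_{p} (b_p - a_{p+1}) \leq \Q - 0 + \sum_{\text{moves of } j} d \leq \Q + D_j \, ,
\end{equation*}
where we used $ b_P \leq \Q $, $ a_1 \geq 0 $, and $ b_p - a_{p+1} \leq d_p $ for the move of distance $ d_p $ separating phases $ p $ and $ p+1 $. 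Multiplying by $ \deg_{G_0} (y) $ and summing over $ y $ gives $ O(m(\Q + D_j)) $ per center; summing over the $ \opens $ centers and using $ \sum_j D_j = \moves $ bounds all ES-tree maintenance by $ O(m(\opens \Q + \moves)) $. The BFS rebuilds together with the list maintenance cost $ O(m(\opens + \nummoves)) $, and since every move that does any work has $ d \geq 1 $ we get $ \nummoves \leq \moves $; here the hypothesis that each center is opened or moved at most once between two consecutive deletions is what makes this per-phase accounting apply without overlapping blocks. The grand total is therefore $ O((\opens \Q + \moves) m) $. (If a move ever targets a node in a different connected component then $ D_j = \infty $ and the bound is vacuous, but the algorithm of \Cref{sec:algorithm_maintaining_centers} never issues such a move.)

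Apart from the claim above, the remaining verifications are routine: the distance query is correct because the ES-tree is exact up to depth $ \Q $ and, since it is rebuilt exactly when the center's location changes, always reflects distances from the \emph{current} location $ \cen_i^j $; the find-center query is correct because $ S_x $ is maintained to contain precisely the centers whose current distance to $ x $ is at most $ \q $; and the constant query time holds by construction.
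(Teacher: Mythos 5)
Your proposal is correct and follows essentially the same route as the paper: the same data structure (one depth-$\Q$ ES-tree per center rebuilt by BFS on each open/move, plus per-node lists of covering centers with back-pointers for $O(1)$ queries), and the same charging argument, bounding the drop of the capped level $\min(\Q,\ell^j(y))$ across a move by $\dist_{G_i}(\cen^j_i,\cen^j_{i+1})$ via the triangle inequality and telescoping to $O(m(\Q + D_j))$ per center. The paper organizes the telescoping per deletion with a $2m\cdot\dist_{G_i}(\cen^j_i,\cen^j_{i+1})$ correction term per move rather than per phase, but this is only a cosmetic difference.
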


\begin{proof}
Our data structure maintains (1) an ES-tree of depth $ \Q $ rooted at every node that currently hosts a center; and
(2) for every node a doubly linked {\em center list} of centers by which it is covered.
Recall that a node is covered by a center if and only if the node is contained in the ES-tree of depth $ \q $ of the center. For every center $ j $ and node $x$ we keep a pointer of the node representing $x$ in the ES-tree of $ j $ to the list element representing $ j $ in the center list of $x$.

The data structure is updated as follows:
Every time we open a center $ j $ at some node $ x $ we build an ES-tree of depth $ \Q $ rooted at $ x $. Additionally we add $ j $ to the center list of all nodes covered by $ j $ and set the pointers from the ES-tree to the center lists. 

When we move a center $j$ from a node $ x $ to another node $ y $ we build an ES-tree of depth $ \Q $ rooted at $ y $ and stop maintaining the ES-tree rooted at $ x $. Additionally we use the pointers from the ES-tree rooted at $ x $ into the center lists to remove $ j $ from all the center lists of the nodes
in the ES-tree rooted $ x $. Then  we add $ j $ to the suitable center lists for all nodes in the ES-tree of $ y $
and add pointers into these lists from the ES-tree of $ y $. 

After deleting an edge we update all ES-trees of depth $ \Q $. If a node $x$ reaches a level larger than $ \Q $ in the ES-tree of $ j $, it is removed from the ES-tree of $ j $ and we use its pointer in the ES-tree to remove $ j $ from $x$'s center list. The total work of this operation is proportional to the amount of time spent updating all the ES-trees.

To answer a distance query for center $j$ and node $x$ we return the distance of $x$ to the root of  the ES-tree of $j$.
To answer a find center query for node $u$ we simply return the first element of the center list of node $u$.
Both query operations take constant worst-case time.

We now bound the running time for maintaining the ES-trees of the centers.
First, we bound the initialization costs.
For each open-operation and each move operation of a center $ j $ we spend time $ O(m) $ for (re)initializing the ES-tree of center $ j $.
This leads to a total running time of $ O (\opens m + \nummoves m) $ for all initializations, where $ \nummoves $ is the total number of move operations.
Note that we can ignore every move operation that does not change the location of any center.
Every other move operation increases the total moving distance by at least $ 1 $.
Therefore we can charge the initialization cost of $ O (m) $ for moving a center to the moving distance, which means that the quantity $ O (\opens m + \nummoves m) $ will be absorbed by $ O ((\opens  \Q + \moves) m) $, the projected total update time.

We are left to bound the time spent for processing the deletions in the ES-trees of centers.
For every center $ j $, we denote by $ T (i, j) $ the running time for processing the $i$-th edge deletion in the ES-tree of center~$ j $.
Furthermore, we denote by $ o_j $ the index of the delete operation before which the center $ j $ has been opened, i.e., center $ j $ was opened before the $ o_j $-th and after the $ (o_j - 1) $-th delete operation.
Remember that the set of centers never shrinks, i.e., $ \C_i \subseteq \C_k $ for every $ 0 \leq i \leq k $.
We will show that $ \sum_{0 < i \leq k} \sum_{j \in \C_i} T (i, j) = O ((\opens \Q + \moves) m) $.

The basic idea is that the time spent up to deletion $ i $ for node $ x $ in the ES-tree of center $ j $ is $ O (\deg_{G_0} (x) \cdot \dist_{G_i} (x, \cen_i^j)) $.
After a move operation the distance of $ x $ to the new root $ \cen_{i+1}^j $ is at most $ \dist_{G_i} (\cen_{i+1}^j, \cen_i^j) $ smaller than the previous distance and, thus, at most $ \sum_{0 \leq i < k} \deg_{G_0} (x) \cdot \dist_{G_i} (\cen_i^j, \cen_{i+1}^j)  $ additional time will be spent updating $ x $ in the ES-tree of center $ j $.

Consider the $ (i+1) $-th edge deletion and let $ j \in \C_{i+1} $ be a center.
By \Cref{cor:ES_running_time_distance_increase}, the total time for processing this deletion in the ES-tree of center $ j $ is
\begin{equation}
T (i+1, j) = \sum_{x \in V} \deg_{G_0} (x) \cdot \left( \min \left( \dist_{G_{i+1}} (x, \cen_{i+1}^j), \Q \right) - \min \left(\dist_{G_i} (x, \cen_{i+1}^j), \Q \right) \right) \label{eq:running_time_center_deletion}
\end{equation}

If $ j $ has already been opened before the $i$-th edge deletion, then, by the triangle inequality, we get $
\dist_{G_i} (x, \cen_i^j) \leq \dist_{G_i} (x, c_{i+1}^j) + \dist_{G_i} (\cen_i^j, c_{i+1}^j) $, which is equivalent to $ \dist_{G_i} (x, c_{i+1}^j) \geq \dist_{G_i} (x, \cen_i^j) - \dist_{G_i} (\cen_i^j, c_{i+1}^j) $ .
It follows that
\begin{align*}
\min \left(\dist_{G_i} (x, \cen_{i+1}^j), \Q \right) &\geq \min \left(\dist_{G_i} (x, \cen_i^j) - \dist_{G_i} (c_{i+1}^j, \cen_i^j), \Q \right) \\
&\geq \min \left(\dist_{G_i} (x, \cen_i^j), \Q \right) - \dist_{G_i} (\cen_i^j, c_{i+1}^j) \, .
\end{align*}
Therefore we get
\begin{align*}
T (i+1, j) &\leq \sum_{x \in V} \deg_{G_0} (x) \cdot \left( \min \left( \dist_{G_{i+1}} (x, \cen_{i+1}^j), \Q \right) - \min \left(\dist_{G_i} (x, \cen_i^j), \Q \right) \right. \\
 &~~~~+ \left. \dist_{G_i} (\cen_i^j, c_{i+1}^j) \right) \\
 &= \sum_{x \in V} \deg_{G_0} (x) \cdot \left( \min \left( \dist_{G_{i+1}} (x, \cen_{i+1}^j), \Q \right) - \min \left(\dist_{G_i} (x, \cen_i^j), \Q \right) \right) \\
 &~~~~+ \sum_{x \in V} \deg_{G_0} (x) \cdot \dist_{G_i} (\cen_i^j, c_{i+1}^j) \\
 &\leq \sum_{x \in V} \deg_{G_0} (x) \cdot \left( \min \left( \dist_{G_{i+1}} (x, \cen_{i+1}^j), \Q \right) - \min \left(\dist_{G_i} (x, \cen_i^j), \Q \right) \right) \\
 &~~~~+ 2 m \cdot \dist_{G_i} (\cen_i^j, c_{i+1}^j) \, .
\end{align*}
Summing up all $ T (i, j) $ for every deletion $ i > o_j $ gives a telescoping sum that results in the following term:
\begin{multline*}
\sum_{o_j < i \leq k} T (i, j) = \sum_{x \in V} \deg_{G_0} (x) \cdot \min \left( \dist_{G_k} (x, \cen_k^j), \Q \right) \\ - \sum_{x \in V} \deg_{G_0} (x) \cdot \min \left(\dist_{G_{o_j}} (x, \cen_{o_j}), \Q \right)
+ \sum_{o_j \leq i < k} 2 m \cdot \dist_{G_i} (\cen_i^j, c_{i+1}^j) \, .
\end{multline*}

Consider now a center $ j $ and the $ o_j $-th edge deletion.
By \eqref{eq:running_time_center_deletion} we can bound the running time $ T (o_j, j) $ as follows:
\begin{equation*}
T (o_j, j) \leq \sum_{x \in V} \deg_{G_0} (x) \cdot \min \left( \dist_{G_{o_j}} (x, \cen_{o_j}^j), \Q \right) \, .
\end{equation*}
Therefore the total time for maintaining the moving ES-tree of center $ j $ over all deletions is
\begin{align*}
\sum_{o_j \leq i \leq k} T (i, j) &= T (o_j, j)  + \sum_{o_j < i \leq k} T (i, j) \\
 & \leq \sum_{x \in V} \deg_{G_0} (x) \cdot \min \left( \dist_{G_k} (x, \cen_k^j), \Q \right) + \sum_{o_j \leq i < k} 2 m \cdot \dist_{G_i} (\cen_i^j, c_{i+1}^j) \\
 & \leq \sum_{x \in V} \deg_{G_0} (x) \cdot \Q + \sum_{o_j \leq i < k} 2 m \cdot \dist_{G_i} (\cen_i^j, c_{i+1}^j) \\
 & \leq 2 m \Q + \sum_{o_j \leq i < k} 2 m \cdot \dist_{G_i} (\cen_i^j, c_{i+1}^j) \, .
\end{align*}
By summing up this quantity over all centers and switching the order of the double sum, we arrive at the following total time:
\begin{align*}
\sum_{0 < i \leq k} \sum_{j \in \C_i} T (i, j) &= \sum_{j \in \C_k} \sum_{o_j \leq i \leq k} T (i, j) \\
 &\leq \sum_{j \in \C_k} 2 m \Q + \sum_{j \in \C_k} \sum_{o_j \leq i < k} 2m \cdot \dist_{G_i} (\cen_i^j, \cen_{i+1}^j) \\
 &= \sum_{j \in \C_k} 2 m \Q + 2 m \cdot \sum_{0 \leq i < k} \sum_{j \in \C_i} \dist_{G_i} (\cen_i^j, \cen_{i+1}^j) \\
 &= 2 \opens m \Q + 2 m \moves
\end{align*}
Therefore the total update time for maintaining the moving centers data structure over all operations is $ O ((\opens \Q + \moves) m) $.
\end{proof}

\subsection{A Deterministic Center Cover Data Structure (\texttt{CenterCover})}\label{sec:algorithm_maintaining_centers}

In this section, we present a deterministic algorithm for maintaining the center cover data structure \CenterCover, as defined in \Cref{def:CenterCover}. That is, for parameters $\q$ and~$\Q$, we show that we can maintain a set of centers with the following two properties.
First, all nodes in a connected component of size at least $\q$ are {\em covered} by some center; i.e., each of them is in distance at most $\q$ to some center.
Second, for every center, the distance to every node up to distance $ \Q $ is maintained.
This section is devoted to proving the following.

\begin{proposition}[Main result of \Cref{sec:algorithm_maintaining_centers}]\label{thm:deterministic RodittyZwick}
For every cover range parameter~$ \q $ and every distance range parameter~$ \Q $ such that $ \q \leq \Q $, there is a center cover data structure with a total \emph{deterministic} update time of $ O (m n \Q / \q) $ and constant query time.
\end{proposition}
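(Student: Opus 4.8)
The plan is to obtain the center cover data structure by feeding a deterministic \emph{controller} into the moving centers data structure of \Cref{lem:existence_dynamic_centers_ds}: the controller decides, after each edge deletion, which center (if any) to \Move and where, and where to \Open new centers, while the moving centers data structure maintains the underlying ES-trees and answers \Distance and \FindCenter queries in constant time. To track component sizes the controller also keeps a deterministic dynamic connectivity structure~\cite{HenzingerK01,HolmLT01}, whose cost is negligible. Set $ \rho = \lfloor \q / 2 \rfloor $. For every center $ j $ we maintain a set $ \T^j \subseteq V $ that starts empty and only grows; let $ \ball^j $ denote the set of nodes within distance $ \rho - |\T^j| $ of the current location $ \cen^j $ of $ j $, and call $ j $ \emph{small} if $ |\comp_{G_i}(\cen^j)| < \rho - |\T^j| $. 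Upon the $(i{+}1)$-st deletion of an edge $ (u,v) $ the controller does: (a) \emph{move phase:} if there is a (as we will show, unique) small center $ j $ with component $ X $, call \Move{$j$, $x$} for the endpoint $ x \in \{u,v\} \setminus X $ and set $ \T^j \leftarrow \T^j \cup X $; (b) \emph{open phase:} as in \Cref{algo:deterministic intuition 1}, repeatedly \Open a center at some node that lies in a component of size $ \geq \q $ and is not yet covered, until no such node remains. This is exactly \Cref{algo:deterministic intuition 3}.

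Three things have to be proved. First, \emph{correctness:} after the open phase every node in a component of size $ \geq \q $ is within distance $ \q $ of a center, so $ U_i $ is a center cover with parameter $ \q $ (the open phase terminates because the set of uncovered nodes shrinks monotonically), and \Distance and \FindCenter are answered verbatim by the moving centers data structure; hence \Cref{def:CenterCover} is met with constant query time. Second, \emph{few centers:} $ \opens = O(n/\q) $. Here I would maintain the invariant that, after every deletion, the sets $ \ball^j \cup \T^j $ are pairwise disjoint and no center is small. The ``no small center'' part gives $ |\ball^j| \geq \rho - |\T^j| $ for free, since a ball of radius $ r $ in a component of size at least $ r $ has at least $ r $ nodes, and $ \ball^j \subseteq \comp_{G_i}(\cen^j) $ is disjoint from $ \T^j $ (which consists of components permanently split off from $ \cen^j $), so $ |\ball^j \cup \T^j| = |\ball^j| + |\T^j| \geq \rho $. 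Summing over centers, $ \opens \cdot \rho \leq \sum_j |\ball^j \cup \T^j| \leq n $, i.e. $ \opens \leq 2n/\q $. Third, \emph{small moving distance:} $ \moves = O(n) $. A \Move of $ j $ out of $ X $ can be realized along a path from $ \cen^j $ to $ x $ of length at most $ |X| $ in $ G_i $ (a path inside $ X $ to $ u $, then the edge $ (u,x) $), so it adds at most $ |X| = |\T^j_{\mathrm{new}}| - |\T^j_{\mathrm{old}}| $ to $ \moves $; since $ X $ is, by decrementality, permanently separated from $ \cen^j $ afterward, no node is ever put into a given $ \T^j $ twice, so the moving distance of center $ j $ is at most $ |\T^j| $, and by the disjointness part of the invariant $ \moves \leq \sum_j |\T^j| \leq n $. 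One also checks that each center receives at most one \Open or \Move between consecutive deletions (only the unique small center moves, and one move makes it non-small again), which is precisely the technical precondition of \Cref{lem:existence_dynamic_centers_ds}.

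Granting these, \Cref{lem:existence_dynamic_centers_ds} gives a total update time of $ O((\opens\,\Q + \moves)\,m) = O\big((n\Q/\q + n)\,m\big) = O(m n \Q / \q) $, using $ \q \leq \Q $; the controller's own overhead — one breadth-first search of cost $ O(m) $ per opened center, i.e. $ O(\opens\, m) = O(nm/\q) $ in total, together with the dynamic connectivity structure and the bookkeeping of the sets $ \T^j $ and the ``covered/uncovered'' status — is absorbed into this bound, and queries take constant time. This is exactly the claim of the proposition (and \Cref{thm:deterministic} then follows from \Cref{lem:centers_data_structure_to_APSP}).

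The main obstacle is the induction behind the second and third points, and in particular the analysis of the move phase. Because we deliberately move a center \emph{as little as possible} and give up the pleasant property (used in \Cref{algo:deterministic intuition 1}) that distinct centers are far apart, one cannot argue directly with disjoint metric balls of radius $ \q/2 $. Instead one has to use the invariant that holds \emph{before} the deletion to establish, in the right order: that after the edge is removed at most one center becomes small (otherwise the pairwise disjoint sets $ \ball^{j} $ of two small centers, all contained in the old component $ C' $, would together exceed $ |C'| $); that this small center's component contains no other center, so that absorbing it into $ \T^j $ cannot collide with another $ \ball^{j'} \cup \T^{j'} $; and that the target node $ x $ ends up in a component large enough that $ |\ball^j \cup \T^j| \geq \rho $ is restored and $ j $ is no longer small. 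Pushing this bootstrap through so that the invariant is re-established after \emph{both} the move and the open phase is the technical heart of the proof.
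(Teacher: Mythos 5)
Your proposal is correct and takes essentially the same route as the paper's proof: the same move/open rules (\Cref{algo:deterministic intuition 3}, implemented as \Cref{alg:deterministic_algorithm}), the same sets $\T^j$ and balls $\ball^j$ with the disjointness, largeness, uniqueness-of-the-small-center and confinement invariants yielding $\opens = O(n/\q)$ and $\moves = O(n)$, all plugged into the $O((\opens\,\Q + \moves)m)$ bound of \Cref{lem:existence_dynamic_centers_ds} together with a deterministic dynamic connectivity structure for component sizes. The one imprecise point is your justification for why absorbing the split-off component $X$ into $\T^j$ preserves disjointness: in the paper this follows from $X \subseteq \ball^j$ and the new ball being contained in the old one (the shrinking property, \Cref{lem:inter_disjointness_subset_property}) combined with the induction hypothesis, not merely from $X$ containing no other center's location.
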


\begin{figure}
\centering
\begin{subfigure}[t]{0.49\linewidth}
\hspace{-1.05cm}%
\scalebox{0.85}{
\pgfdeclarelayer{bbackground}
\pgfdeclarelayer{background}
\pgfdeclarelayer{foreground}     
\pgfsetlayers{bbackground,background,main,foreground}  

\tikzstyle{vertex}=[circle,fill=black,minimum size=5pt,inner sep=0pt,outer sep=0pt]
\tikzstyle{vertexb}=[circle,fill=blue,minimum size=5pt,inner sep=0pt,outer sep=0pt]
\tikzstyle{shortest-path} = [draw,thick]
\tikzstyle{shortest-path-omitted} = [draw,thick,dotted]
\tikzstyle{hop-path-small} = [draw,ultra thick,-,color=blue]
\tikzstyle{hop-path-big} = [draw,ultra thick,-,color=red]

\begin{tikzpicture}[node distance=0.1mm and 5mm, skip loop/.style={to path={-- ++(0,#1)|- (\tikztotarget)}}]

\node[vertexb] (v0) at (0,0) {};
\node[vertex] (v1) [right= of v0] {};

\node[vertex] (vq4) [right=1cm of v1] {};
\node[vertex] (vq4+1) [right= of vq4] {};

\node[vertex] (vq2-1) [right=1cm of vq4+1] {};
\node[vertex] (vq2) [right= of vq2-1] {};
\node[vertex] (vq2+1) [right= of vq2] {};

\node[vertex] (vq) [right=1cm of vq2+1] {};
\node[vertex] (vq+1) [right=of vq] {};

\node (c1) [above= of v0] {\small \textcolor{blue}{$c^1$}};
\node [right=1cm of c1] {\small \textcolor{blue}{$B^1$}};

\node [below= of v0] {\small $v_0$};
\node [below= of v1] {\small $v_1$};
\node [below= of vq4] {\small $v_{\frac{q}{4}}$};
\node [below= of vq4+1] {\small $v_{\frac{q}{4}+1}$};
\node [below= of vq2-1] {\small $v_{\frac{q}{2}-1}$};
\node [below= of vq2] {\small $v_{\frac{q}{2}}$};
\node [below= of vq2+1] {\small $v_{\frac{q}{2}+1}$};
\node [below= of vq] {\small $v_q$};
\node [below= of vq+1] {\small $v_{q+1}$};

\path[shortest-path] (v0) -- ($(v1) + (0.3,0)$);
\path[shortest-path-omitted] ($(v1) + (0.4,0)$) -- ($(vq4) + (-0.4,0)$);
\path[shortest-path] ($(vq4) + (-0.3,0)$) -- ($(vq4+1) + (0.3,0)$);
\path[shortest-path-omitted] ($(vq4+1) + (0.4,0)$) -- ($(vq2-1) + (-0.4,0)$);
\path[shortest-path] ($(vq2-1) + (-0.3,0)$) -- ($(vq2+1) + (0.3,0)$);
\path[shortest-path] ($(vq4) + (-0.3,0)$) -- ($(vq4+1) + (0.3,0)$);
\path[shortest-path-omitted] ($(vq2+1) + (0.4,0)$) -- ($(vq) + (-0.4,0)$);
\path[shortest-path] ($(vq) + (-0.3,0)$) -- (vq+1);
\path[shortest-path] (vq2-1) -- ($(vq2-1) + (0,0.55)$) -- ($(vq+1) + (0,0.55)$) -- (vq+1);

\begin{pgfonlayer}{foreground}
	\draw [very thick, draw=blue] ($(v0) + (-0.6,-0.8)$) rectangle ($(vq+1) + (0.45,0.8)$);
\end{pgfonlayer}

\begin{pgfonlayer}{background}    
	\draw [draw=blue!10,fill=blue!10] ($(v0) + (-0.45,-0.65)$) rectangle ($(vq2) + (0.3,0.65)$);
	\draw [draw=blue!10,fill=blue!10] ($(vq2) + (0.3,0.4)$) rectangle ($(vq+1) + (0.3,0.65)$);
	\draw [draw=blue!10,fill=blue!10] ($(vq+1) + (-0.3,-0.65)$) rectangle ($(vq+1) + (0.3,0.65)$);
\end{pgfonlayer}

\begin{pgfonlayer}{bbackground}    
	\draw [draw=white] (-0.75,-1) rectangle (7.5,1);
\end{pgfonlayer}

\end{tikzpicture}
}
\caption{\footnotesize Initial center locations}
\label{fig:deterministic1}
\end{subfigure}
\begin{subfigure}[t]{0.49\linewidth}
\hspace{-1.05cm}%
\scalebox{0.85}{
\pgfdeclarelayer{bbackground}
\pgfdeclarelayer{background}
\pgfdeclarelayer{foreground}     
\pgfsetlayers{bbackground,background,main,foreground}  

\tikzstyle{vertex}=[circle,fill=black,minimum size=5pt,inner sep=0pt,outer sep=0pt]
\tikzstyle{vertexb}=[circle,fill=blue,minimum size=5pt,inner sep=0pt,outer sep=0pt]
\tikzstyle{vertexr}=[circle,fill=red,minimum size=5pt,inner sep=0pt,outer sep=0pt]
\tikzstyle{shortest-path} = [draw,thick]
\tikzstyle{shortest-path-omitted} = [draw,thick,dotted]
\tikzstyle{hop-path-small} = [draw,ultra thick,-,color=blue]
\tikzstyle{hop-path-big} = [draw,ultra thick,-,color=red]

\tikzset{cross/.style={cross out, draw=black, ultra thick, minimum size=5*(#1-\pgflinewidth), inner sep=0pt, outer sep=0pt}, 
cross/.default={3pt}}

\begin{tikzpicture}[node distance=0.1mm and 5mm, skip loop/.style={to path={-- ++(0,#1)|- (\tikztotarget)}}]

\node[vertexb] (v0) at (0,0) {};
\node[vertex] (v1) [right= of v0] {};

\node[vertex] (vq4) [right=1cm of v1] {};
\node[vertex] (vq4+1) [right= of vq4] {};

\node[vertex] (vq2-1) [right=1cm of vq4+1] {};
\node[vertex] (vq2) [right= of vq2-1] {};
\node[vertex] (vq2+1) [right= of vq2] {};

\node[vertex] (vq) [right=1cm of vq2+1] {};
\node[vertexr] (vq+1) [right=of vq] {};

\node (c1) [above= of v0] {\small \textcolor{blue}{$c^1$}};
\node [right=1cm of c1] {\small \textcolor{blue}{$B^1$}};

\node (c2) [right=6.55cm of c1] {\small \textcolor{red}{$c^2$}};
\node [right=5.2cm of c1] {\small \textcolor{red}{$B^2$}};

\node [below= of v0] {\small $v_0$};
\node [below= of v1] {\small $v_1$};
\node [below= of vq4] {\small $v_{\frac{q}{4}}$};
\node [below= of vq4+1] {\small $v_{\frac{q}{4}+1}$};
\node [below= of vq2-1] {\small $v_{\frac{q}{2}-1}$};
\node [below= of vq2] {\small $v_{\frac{q}{2}}$};
\node [below= of vq2+1] {\small $v_{\frac{q}{2}+1}$};
\node [below= of vq] {\small $v_q$};
\node [below= of vq+1] {\small $v_{q+1}$};

\path[shortest-path] (v0) -- ($(v1) + (0.3,0)$);
\path[shortest-path-omitted] ($(v1) + (0.4,0)$) -- ($(vq4) + (-0.4,0)$);
\path[shortest-path] ($(vq4) + (-0.3,0)$) -- ($(vq4+1) + (0.3,0)$);
\path[shortest-path-omitted] ($(vq4+1) + (0.4,0)$) -- ($(vq2-1) + (-0.4,0)$);
\path[shortest-path] ($(vq2-1) + (-0.3,0)$) -- ($(vq2+1) + (0.3,0)$);
\path[shortest-path] ($(vq4) + (-0.3,0)$) -- ($(vq4+1) + (0.3,0)$);
\path[shortest-path-omitted] ($(vq2+1) + (0.4,0)$) -- ($(vq) + (-0.4,0)$);
\path[shortest-path] ($(vq) + (-0.3,0)$) -- (vq+1);
\path[shortest-path,dashed] (vq2-1) -- ($(vq2-1) + (0,0.55)$) -- ($(vq+1) + (0,0.55)$) node [align=center,midway,cross,red,solid] {} -- (vq+1);

\begin{pgfonlayer}{background}    
	\draw [draw=blue!10,fill=blue!10] ($(v0) + (-0.45,-0.65)$) rectangle ($(vq2) + (0.3,0.65)$);
	\draw [draw=red!20,fill=red!20] ($(vq2) + (0.4,-0.65)$) rectangle ($(vq+1) + (0.3,0.65)$);
\end{pgfonlayer}

\begin{pgfonlayer}{foreground}
	\draw [very thick, draw=blue] ($(v0) + (-0.6,-0.8)$) rectangle ($(vq) + (0.2,0.8)$);
	\draw [very thick, draw=red] ($(v0) + (0.4,-0.9)$) rectangle ($(vq+1) + (0.45,0.9)$);
\end{pgfonlayer}

\begin{pgfonlayer}{bbackground}    
	\draw [draw=white] (-0.75,-1) rectangle (7.5,1);
\end{pgfonlayer}

\end{tikzpicture}
}
\caption{\footnotesize After edge $(v_{\frac{q}{2}-1}, v_{q+1})$ is deleted}
\label{fig:deterministic2}
\end{subfigure}
\begin{subfigure}[t]{0.49\linewidth}
\hspace{-1.2cm}%
\scalebox{0.85}{
\pgfdeclarelayer{bbackground}
\pgfdeclarelayer{background}
\pgfdeclarelayer{foreground}     
\pgfsetlayers{bbackground,background,main,foreground}  

\tikzstyle{vertex}=[circle,fill=black,minimum size=5pt,inner sep=0pt,outer sep=0pt]
\tikzstyle{vertexb}=[circle,fill=blue,minimum size=5pt,inner sep=0pt,outer sep=0pt]
\tikzstyle{vertexr}=[circle,fill=red,minimum size=5pt,inner sep=0pt,outer sep=0pt]
\tikzstyle{shortest-path} = [draw,thick]
\tikzstyle{shortest-path-omitted} = [draw,thick,dotted]
\tikzstyle{hop-path-small} = [draw,ultra thick,-,color=blue]
\tikzstyle{hop-path-big} = [draw,ultra thick,-,color=red]

\tikzset{cross/.style={cross out, draw=black, ultra thick, minimum size=5*(#1-\pgflinewidth), inner sep=0pt, outer sep=0pt}, 
cross/.default={3pt}}

\begin{tikzpicture}[node distance=0.1mm and 5mm, skip loop/.style={to path={-- ++(0,#1)|- (\tikztotarget)}}]

\node[vertexb] (v0) at (0,0) {};
\node[vertex] (v1) [right= of v0] {};

\node[vertex] (vq4) [right=1cm of v1] {};
\node[vertex] (vq4+1) [right= of vq4] {};

\node[vertex] (vq2-1) [right=1cm of vq4+1] {};
\node[vertex] (vq2) [right= of vq2-1] {};
\node[vertex] (vq2+1) [right= of vq2] {};

\node[vertex] (vq) [right=1cm of vq2+1] {};
\node[vertexr] (vq+1) [right=of vq] {};

\node (c1) [above= of v0] {\small \textcolor{blue}{$c^1$}};
\node [right=1cm of c1] {\small \textcolor{blue}{$B^1$}};

\node (c2) [right=6.55cm of c1] {\small \textcolor{red}{$c^2$}};
\node [right=5.2cm of c1] {\small \textcolor{red}{$B^2$}};

\node [below= of v0] {\small $v_0$};
\node [below= of v1] {\small $v_1$};
\node [below= of vq4] {\small $v_{\frac{q}{4}}$};
\node [below= of vq4+1] {\small $v_{\frac{q}{4}+1}$};
\node [below= of vq2-1] {\small $v_{\frac{q}{2}-1}$};
\node [below= of vq2] {\small $v_{\frac{q}{2}}$};
\node [below= of vq2+1] {\small $v_{\frac{q}{2}+1}$};
\node [below= of vq] {\small $v_q$};
\node [below= of vq+1] {\small $v_{q+1}$};

\path[shortest-path] (v0) -- ($(v1) + (0.3,0)$);
\path[shortest-path-omitted] ($(v1) + (0.4,0)$) -- ($(vq4) + (-0.4,0)$);
\path[shortest-path] ($(vq4) + (-0.3,0)$) -- ($(vq4+1) + (0.3,0)$);
\path[shortest-path-omitted] ($(vq4+1) + (0.4,0)$) -- ($(vq2-1) + (-0.4,0)$);
\path[shortest-path] ($(vq2-1) + (-0.3,0)$) -- ($(vq2+1) + (0.3,0)$);
\path[shortest-path] ($(vq4) + (-0.3,0)$) -- ($(vq4+1) + (0.3,0)$);
\path[shortest-path-omitted] ($(vq2+1) + (0.4,0)$) -- ($(vq) + (-0.4,0)$);
\path[shortest-path] ($(vq) + (-0.3,0)$) -- (vq+1);
\path[shortest-path,dashed] (vq2-1) -- ($(vq2-1) + (0,0.55)$) -- ($(vq+1) + (0,0.55)$) node [align=center,midway,cross,red,solid] {} -- (vq+1);

\begin{pgfonlayer}{background}    
	\draw [draw=blue!10,fill=blue!10] ($(v0) + (-0.45,-0.65)$) rectangle ($(vq4) + (0.3,0.65)$);
	\draw [draw=red!20,fill=red!20] ($(vq2) + (0.4,-0.65)$) rectangle ($(vq+1) + (0.3,0.65)$);
\end{pgfonlayer}

\begin{pgfonlayer}{foreground}
	\draw [very thick, draw=blue] ($(v0) + (-0.6,-0.8)$) rectangle ($(vq) + (0.2,0.8)$);
	\draw [very thick, draw=red] ($(v0) + (0.4,-0.9)$) rectangle ($(vq+1) + (0.45,0.9)$);
	
	\node [cross,red] at ($(vq4) + (0.35,0)$) {};
\end{pgfonlayer}

\begin{pgfonlayer}{bbackground}    
	\draw [draw=white] (-0.75,-1) rectangle (7.5,1);
\end{pgfonlayer}

\end{tikzpicture}
}
\caption{\footnotesize Deleting edge $(v_{\frac{q}{4}}, v_{\frac{q}{4}+1})$ without moving $\cen^1$}
\label{fig:deterministic3}
\end{subfigure}
\begin{subfigure}[t]{0.49\linewidth}
\hspace{-1.125cm}%
\scalebox{0.85}{
\colorlet{darkgreen}{green!70!black}

\pgfdeclarelayer{bbackground}
\pgfdeclarelayer{background}
\pgfdeclarelayer{foreground}     
\pgfsetlayers{bbackground,background,main,foreground}  

\tikzstyle{vertex}=[circle,fill=black,minimum size=5pt,inner sep=0pt,outer sep=0pt]
\tikzstyle{vertexb}=[circle,fill=blue,minimum size=5pt,inner sep=0pt,outer sep=0pt]
\tikzstyle{vertexr}=[circle,fill=red,minimum size=5pt,inner sep=0pt,outer sep=0pt]
\tikzstyle{shortest-path} = [draw,thick]
\tikzstyle{shortest-path-omitted} = [draw,thick,dotted]
\tikzstyle{hop-path-small} = [draw,ultra thick,-,color=blue]
\tikzstyle{hop-path-big} = [draw,ultra thick,-,color=red]

\tikzset{cross/.style={cross out, draw=black, ultra thick, minimum size=5*(#1-\pgflinewidth), inner sep=0pt, outer sep=0pt}, 
cross/.default={3pt}}

\begin{tikzpicture}[node distance=0.1mm and 5mm, skip loop/.style={to path={-- ++(0,#1)|- (\tikztotarget)}}]

\node[vertex] (v0) at (0,0) {};
\node[vertex] (v1) [right= of v0] {};

\node[vertex] (vq4) [right=1cm of v1] {};
\node[vertexb] (vq4+1) [right= of vq4] {};

\node[vertex] (vq2-1) [right=1cm of vq4+1] {};
\node[vertex] (vq2) [right= of vq2-1] {};
\node[vertex] (vq2+1) [right= of vq2] {};

\node[vertex] (vq) [right=1cm of vq2+1] {};
\node[vertexr] (vq+1) [right=of vq] {};

\node [above= of v1] {\small \textcolor{darkgreen}{$C^1$}};

\node (c1) [above= of vq4+1] {\small \textcolor{blue}{$c^1$}};
\node [right=0.25cm of c1] {\small \textcolor{blue}{$B^1$}};

\node (c2) [right=4.02cm of c1] {\small \textcolor{red}{$c^2$}};
\node [right=2.7cm of c1] {\small \textcolor{red}{$B^2$}};

\node [below= of v0] {\small $v_0$};
\node [below= of v1] {\small $v_1$};
\node [below= of vq4] {\small $v_{\frac{q}{4}}$};
\node [below= of vq4+1] {\small $v_{\frac{q}{4}+1}$};
\node [below= of vq2-1] {\small $v_{\frac{q}{2}-1}$};
\node [below= of vq2] {\small $v_{\frac{q}{2}}$};
\node [below= of vq2+1] {\small $v_{\frac{q}{2}+1}$};
\node [below= of vq] {\small $v_q$};
\node [below= of vq+1] {\small $v_{q+1}$};

\path[shortest-path] (v0) -- ($(v1) + (0.3,0)$);
\path[shortest-path-omitted] ($(v1) + (0.4,0)$) -- ($(vq4) + (-0.4,0)$);
\path[shortest-path] ($(vq4) + (-0.3,0)$) -- ($(vq4+1) + (0.3,0)$);
\path[shortest-path-omitted] ($(vq4+1) + (0.4,0)$) -- ($(vq2-1) + (-0.4,0)$);
\path[shortest-path] ($(vq2-1) + (-0.3,0)$) -- ($(vq2+1) + (0.3,0)$);
\path[shortest-path] ($(vq4) + (-0.3,0)$) -- ($(vq4+1) + (0.3,0)$);
\path[shortest-path-omitted] ($(vq2+1) + (0.4,0)$) -- ($(vq) + (-0.4,0)$);
\path[shortest-path] ($(vq) + (-0.3,0)$) -- (vq+1);
\path[shortest-path,dashed] (vq2-1) -- ($(vq2-1) + (0,0.55)$) -- ($(vq+1) + (0,0.55)$) node [align=center,midway,cross,red,solid] {} -- (vq+1);

\begin{pgfonlayer}{background}    
	\draw [draw=green!20,fill=green!20] ($(v0) + (-0.45,-0.65)$) rectangle ($(vq4) + (0.3,0.65)$);
	\draw [draw=blue!10,fill=blue!10] ($(vq4) + (0.4,0.65)$) rectangle ($(vq2) + (0.3,-0.65)$);
	\draw [draw=red!20,fill=red!20] ($(vq2) + (0.4,-0.65)$) rectangle ($(vq+1) + (0.3,0.65)$);
\end{pgfonlayer}

\begin{pgfonlayer}{foreground}
	\draw [very thick, draw=blue] ($(v0) + (-0.6,-0.8)$) rectangle ($(vq) + (0.2,0.8)$);
	\draw [very thick, draw=red] ($(v0) + (0.4,-0.9)$) rectangle ($(vq+1) + (0.45,0.9)$);
	
	\node [cross,red] at ($(vq4) + (0.35,0)$) {};
\end{pgfonlayer}

\begin{pgfonlayer}{bbackground}    
	\draw [draw=white] (-0.75,-1) rectangle (7.5,1);
\end{pgfonlayer}

\end{tikzpicture}
}
\caption{\footnotesize After moving $\cen^1$}
\label{fig:deterministic4}
\end{subfigure}
\caption{\small Example of our algorithm for maintaining the center cover data structure using the moving centers data structure, as in \Cref{thm:deterministic RodittyZwick}. We use $q=\q$ and, for any $j$, we let $\cen^j$ denote the location of center $ j $. Boxes filled with colors show sets $\ball^j$ and $\T^j$.
(\subref{fig:deterministic1}) shows a possible initial location of center $\cen^1$. This makes $\ball^1=\{v_0, \ldots, v_{q/2}\} \cup \{v_{q+1}\}$ and $\T^1=\emptyset$. All nodes are covered by center $\cen^1$.
(\subref{fig:deterministic2}) shows what our algorithm does when edge $(v_{q/2-1}, v_{q+1})$ is deleted. In this case, $v_{q+1}$ is not covered by $\cen^1$ anymore, so we open a center $\cen^2$ at $v_{q+1}$.
(\subref{fig:deterministic3}) shows what $\ball^1$ will look like after edge $(v_{q/4}, v_{q/4+1})$ is deleted, if we do not move center $\cen^1$. In particular, $|\ball^1\cup \T^1|<q/2$.
(\subref{fig:deterministic4}) shows what our algorithm will do after edge $(v_{q/4}, v_{q/4+1})$ is deleted  to maintain the largeness property (i.e., to make sure that $|\ball^1 \cup \T^1|\geq q/2$): it moves nodes $v_0, \ldots, v_{q/4}$ from $\ball^1$ to $\T^1$ and moves the first center from $v_0$ to $v_{q/4+1}$.}
\label{fig:deterministic}
\end{figure}

\subsubsection{High-Level Ideas}\label{sec:deterministic_algorithm_ideas}

Our algorithm will internally use the moving centers data structure from \Cref{sec:moving centers data structure} (called  \MovingCenter). It has to determine how to open and move centers in a way that ensures that at any time every node in a connected component of size at least $ \q $ is covered by some center,;i.e., its distance to the nearest center is at most $\q$.
At a high level, our algorithm is very simple (see \Cref{fig:deterministic} for an example; note that $q=\q$): For each center $j$, it maintains two sets $\ball^j$ and $\T^j$, where $\ball^j$ is always defined to be the set of nodes whose distance to center $j$ is at most $ \q-|\T^j| $. Initially, the algorithm sets $\T^j=\emptyset$ and chooses a set of centers such that all sets $\ball^j$ are disjoint (see \Cref{fig:deterministic1}). The sets $\T^j$ will never decrease during the algorithm. After an edge deletion, if some node in a large connected component (size $ \geq \q $) that is no longer covered by any center (e.g., $v_{q+1}$ in \Cref{fig:deterministic2}), then the algorithm simply opens a new center at that node. However, before doing so it has to check whether $|\ball^j\cup \T^j|<\q/2$ for some existing center $j$. (For example, after edge $(v_{q/4}, v_{q/4+1})$ is deleted as in \Cref{fig:deterministic3}, $|\ball^1\cup \T^1| = (q/4+1) < q/2 = \q/2$.) If this is the case for center $j$, it will add all nodes of $\ball^j$ to $\T^j$ and move the center $j$ to the end-node of the deleted edge that is in a different connected component than the old location of $j$. As we will show, the nodes in $\ball^j$ at the new location are {\em not} contained in $\ball^{j'}$ for any center $j' \ne j$; i.e., the invariant that all sets $\ball^j$ are disjoint remains valid. For example, in \Cref{fig:deterministic4}, the algorithm puts nodes $v_0, \ldots, v_{q/4}$ to $\T^1$ and moves $\cen^1$ to node $v_{q/4+1}$, which is the end-node of the deleted edge $(v_{q/4}, v_{q/4+1})$ that is in a connected component different from center $\cen^1$.

We now give the intuition behind this algorithm and its analysis before going into detail. Recall from \Cref{lem:existence_dynamic_centers_ds} that opening and maintaining a center together costs $O(m\q)$ time in total, and a move-operation incurs a total time of $O(m)$ per one unit moving distance. So, to get the desired $O(mn\Q / \q)$ total time bound, we will make sure that our algorithm uses a limited number of open-operations and a limited moving distance; in particular, we will make sure that
\begin{align*}
\opens = O(n/\q) ~~~\mbox{and}~~~ \moves = O(n).
\end{align*}

To guarantee that we open at most $O(n/\q)$ centers, we imagine that each node holds a coin at the beginning of the algorithm, which it can give to at most one center during the algorithm, and we require that each center must receive at least $\q/2$ coins from some nodes in the end. Clearly, this will automatically ensure that at most $2n/\q$ centers will be opened. Since the graph keeps changing, it is hard to say which node should give a coin to which center at the beginning. Instead, our algorithm will maintain two sets for each center $j$: the set $\ball^j$ of {\em borrowed} nodes from which center $j$ has borrowed coins that it might have to return, and the set $\T^j$ of {\em collected} nodes from which center $j$ has collected coins that it will never return. After all edge deletions, $j$ will hold the coins of all  nodes in $\ball^j \cup \T^j$. Our algorithm will maintain $\ball^j\cup \T^j$ with two properties:
\begin{enumerate}
\item (Largeness.) $|\ball^j\cup \T^j|\geq \q/2$ at any time (so that $j$ gets enough coins in the end).
\item (Disjointness.) $\ball^j\cup \T^j$ is disjoint from $\ball^{j'}\cup \T^{j'}$ for all centers $j\neq j'$ (so that no node gives a coin to more than one center).
\end{enumerate}

These two properties easily imply that every center will get at least $\q/2$ coins in the end---center $j$ simply collects coins from the nodes in $\ball^j\cup \T^j$; consequently, they guarantee that $\opens=O(n/\q)$, as desired.
Note that $ \ball^j $ and $\T^j$ are only introduced for the analysis; our algorithm does {\em not} need to maintain them explicitly.
If the location of center $j$ is moved from $x$ to $y$, then we say for every node $u$ on a shortest path between $x$ to $y$ that the center has been {\em moved through} $u$.
To guarantee that the total moving distance is $O(n)$, we need one more property: 
\begin{enumerate}[resume]
\item (Confinement.) The location of center $j$ is moved {\em only through nodes that are added to $\T^j$}.
\end{enumerate}
By the disjointness property, no two centers are moved along the same node if the confinement property is satisfied. So, the total moving distance will be $\moves=O(n)$, as desired.  

It is left to check whether the algorithm we have sketched earlier satisfies all three properties above. The largeness property can be guaranteed using the fact that after every edge deletion, the algorithm will move every center $j$ such that $|\ball^j\cup \T^j|< \q/2$ to a new node; the only nonobvious property we have to prove is that $\ball^j$ will be large enough after the move, and the key to this proof is the fact that the connected component containing the new location of center $j$ has size at least $\q/2-|\T^j|$.
For the disjointness property, we will show two further properties. 
\begin{enumerate}[label=(P\arabic{*})]
\item (Initial-disjointness) When we open a center $j$, $\ball^j$ is disjoint from $\ball^{j'}\cup \T^{j'}$ for all other centers $j'$. \label{inv:disj inv 2}
\item (Shrinking) We never add any node to $\ball^j\cup \T^j$. (For example, $\ball^1\cup \T^1$ in \Cref{fig:deterministic1} is a subset of $\ball^1\cup \T^1$ in \Cref{fig:deterministic4}.) \label{inv:disj inv 1}
\end{enumerate}
These two properties are sufficient to guarantee the disjointness property because if two sets $\ball^j\cup \T^j$ and $\ball^{j'}\cup \T^{j'}$ are disjoint at the beginning (by Property \ref{inv:disj inv 2}), they will remain disjoint if we never add a node to them (by Property \ref{inv:disj inv 1}). The shrinking property (Property \ref{inv:disj inv 1}) can be checked simply by observing the behavior of the algorithm (see \Cref{lem:inter_disjointness_subset_property} for details). To show the initial-disjointness property (Property~\ref{inv:disj inv 2}), we use the fact that $j$ is of distance at least $\q$ from other centers when $j$ is opened, which implies that $\ball^j\cap \ball^{j'}=\emptyset$. Additionally, we will prove that $\T^{j'}$ contains only nodes in connected components of size less than $\q$, whereas any new center $ j $ is opened in a connected component of size at least $\q$. This implies that $\ball^j\cap \T^{j'}=\emptyset$ when $j$ is opened.

Finally, for the confinement property, just observe that before the algorithm moves a center $j$, it puts all nodes in the connected component containing the center $j$ to $\T^j$ and moves $j$ to a node outside of this connected component.
For example, in \Cref{fig:deterministic4} the algorithm puts nodes $v_1, \ldots, v_{q/4}$ to $\T^1$ before moving the first center through  $v_1, \ldots, v_{q/4}$  to $v_{q/4+1}$.

\subsubsection{Algorithm Description}

\begin{algorithm2e}
\thisfloatpagestyle{empty} 
\caption{\texttt{CenterCover} (Deterministic Center Cover Data Structure)}
\label{alg:deterministic_algorithm}

\tcp{\textrm{Given a decremental graph $\cG = (G_i)_{0 \leq i \leq k} $ and integers $\q$ and $\Q$, this data structure maintains a set of centers such that every node (that is in a connected component of size at least $ \q $) has distance at most $\q$ to at least one center and we can query the distance between a center and a node if their distance is at most $\Q$ (otherwise, we will get $\infty$ in return). It allows four operations: \Initialize, \Delete, \FindCenter and \Distance, as defined in \Cref{def:CenterCover}.}}

\BlankLine

\Procedure{\CenterCoverGreedyOpen{}}{
	Let $ G_i $ denote the current graph\;

	\ForEach{node $ x $}{
	    	\tcp{\textrm{The if-statement checks if $x$ is \emph{not} covered by a center and the size of the connected component containing it is larger than $\q$. See \Cref{rem:detail of initialization} for the implementation detail.}}
		\If{\FindCenter{$x$} $ = \bot $ \KwAnd $ |\comp_{G_{i}} (x)| \geq \q $}{\label{line:component size}
			\tcp{\textrm{tell moving centers data structure to open new center at $ x $. Let $j$ be the index of this center.}}
			$ j \gets $ \MCopen{$x$}\;
			Set $ \T^j \gets \emptyset $, $ \radius^j \gets \q / 2 $, and $ \cen^j \gets x $\; \label{line:initial T r c}
		}
	}
}


\tcp{\textrm{Parameters: Initial version $G_0$ of decremental graph $\cG$, integers $\q$ and $\Q$.}}
\Procedure{\CenterCoverInitialize{$G_0$, $\q$, $\Q$}}{ 
	\tcp{\textrm{Initialize the moving centers data structure (see \Cref{def:moving center})}}
	\MCinitialize{$G_0$, $\q$, $\Q$}\;
	\GreedyOpen{}\;
}


\Procedure(\tcp*[f]{\textrm{Parameter: Node $v$.}}){\CenterCoverFindCenter{$v$}}{
	\Return \MCfindCenter{$v$}\;
}


\tcp{\textrm{Parameters: Center index $j$ and node $v$.}}
\Procedure{\CenterCoverDistance{$j$, $v$}}{
	\Return \MCdistance{$j$, $v$}\;
}


\tcp{\textrm{Parameter: $ (i+1) $-th deleted edge $ (u, v) $.}}
\Procedure{\CenterCoverDelete{$ u, v $}}{
	Let $ G_i $ denote the graph before deleting $ (u, v) $ and let $ G_{i+1} $ denote the graph afterwards.\;
	\tcp{\textrm{Find a center $j$ for which the connected component containing it becomes smaller than $\radius^j$. See \Cref{rem:detail of finding bad component} for how to find such a center $j$. (Actually, there will be at most one such center, see \Cref{lem:at_most_one_center_moved}.)}}
	Find a center $ j $ such that $ | \comp_{G_{i+1}} (\cen^j) | < \radius^j $.\; \label{line:find bad component}

	\If{such a center $j$ exists}{
		\tcp{\textrm{Move $j$ to either $u$ or $v$ depending on who is in a different connected component than $ \cen^j $.}} 
		\leIf{$ u $ and $ \cen^j $ are {\em not} connected in $ G_{i+1} $}{$ y \gets u $}{$ y \gets v $} \label{line:define y} 
		Set $ \T^j \gets \T^j \cup \comp_{G_{i+1}} (\cen^j) $, $ \radius^j \gets \radius^j - | \comp_{G_{i+1}} (\cen^j) | $, and $ \cen^j \gets y $\; \label{line:update variables after move}
		\MCmove{$j$, $y$}\; \label{line:notify moving centers move}
	}
	\tcp{\textrm{Report edge deletion to moving centers data structure (\Cref{def:moving center}).}}
	\MCdelete{$u$, $v$}\; \label{line:notify moving centers delete}
	\GreedyOpen{}\;
}

\end{algorithm2e}

Our algorithm is outlined in \Cref{alg:deterministic_algorithm}.
For each center $j$, the algorithm maintains its location $\cen^j$, which could change over time since centers can be moved. In addition, it also maintains the set~$\T^j$ and the number~$\radius^j$, which are set to $\emptyset$ and $\q/2$, respectively, when center $j$ is opened.
The intended value of $\radius^j$ is $\radius^j=\q/2-|\T^j|$, and the algorithm always updates $ \radius^j $ in a way that this is ensured.
The algorithm also uses the moving centers data structure (denoted by \MovingCenter and explained in \Cref{sec:moving centers data structure}) to maintain the distance between each center $j$ to other nodes in the graph, up to distance $\Q$. This helps us to implement \CenterCoverFindCenter and \CenterCoverDistance queries in a straightforward way: the algorithm just invokes the same queries from the moving centers data structure. 

Initially, on $G_0$ (i.e., before the graph changes), our algorithm initializes the moving centers data structure by opening centers in a greedy manner: as long as there is a node $x$ that is not covered by any center, it opens a center at $x$.
This process will also be used every time an edge is deleted, to make sure that every node remains covered by a center. Procedure \CenterCoverGreedyOpen proceeds as follows. For every node $x$, it checks whether $x$ is {\em not covered}; this is the case if \CenterCoverFindCenter{$x$} returns $\bot$ and the size of the connected component containing $x$ is at least $\q$ (we refer the reader to \Cref{rem:detail of initialization} for how to compute the size of this component). If $x$ is not covered, the algorithm opens a center at $x$, stores the index $ j $ of this new center, and initializes the values of $\T^j$, $\radius^j$, and $\cen^j$, as in Line~\ref{line:initial T r c}. This completes the \GreedyOpen procedure. 

The main work of \Cref{alg:deterministic_algorithm} lies in the \Delete operation, since it has to make sure that all nodes are still covered by some centers after the deletion. Procedure \CenterCoverDelete proceeds as follows. Let us assume that the $ (i+1) $-th edge $(u, v)$ is deleted from $G_i$, and let $G_{i+1}$ denote the resulting graph.  First, the procedure checks whether there is any center $j$ that is in a large component in $G_i$ and in a small connected component in $G_{i+1}$; i.e., the size of the connected component of $ \cen^j $ is at least $\radius^j$ in $G_i$ and less than $\radius^j$ in $G_{i+1}$ (see Line~\ref{line:find bad component} of \Cref{alg:deterministic_algorithm}).
Next, if such a center $j$ in a small connected component exists (in fact, we will show that there exists at most one such $j$; see Lemma~\ref{lem:at_most_one_center_moved}), we will {\em move} $j$ to a different component and update the values of $\T^j$, $\radius^j$, and $\cen^j$. It is crucial in our analysis that $j$ must be moved carefully. In particular, we will move $j$ to either $u$ or $v$, depending on which node is in a {\em different} component from $\cen^j$, the current location of $j$. (Note that one of $u$ and $v$ will be in the same connected component as $j$ and the other will be in a different component.) We use a variable $y\in \{u, v\}$ to refer to the new location to which we move center $j$ (see Line~\ref{line:define y}). We then update the values of $\T^j$, $\radius^j$, and $\cen^j$. In particular, we put {\em all} nodes in the connected component that previously contained center $j$ (before we move it to $y$) into $\T^j$ and update $\radius^j$ to $\q/2-|\T^j|$ and $\cen^j$ to $y$. 
Then we report the move of center $ j $ to $ y $ to the moving centers data structure.
Afterward, we report the deletion of the edge $ (u, v) $ to the moving centers data structure so that it updates the distances between centers and nodes to the new distances in $G_{i+1}$.
Finally, we execute the \CenterCoverGreedyOpen procedure to make sure that every node remains covered: if there is a node $ x $ that is not covered, we open a center at $ x $. This completes the deletion operation.

\subsubsection{Analysis}

The correctness of \Cref{alg:deterministic_algorithm} is immediate.
As the procedure \GreedyOpen is called after every edge deletion, every node in a connected component of size at least $ \q $ will always be covered.
In the following we analyze the running time of Algorithm~\ref{alg:deterministic_algorithm}.

Our main task is to bound the running time of the moving centers data structure internally used by the algorithm.
In particular we want to use the running time bound stated in \Cref{lem:existence_dynamic_centers_ds} which requires us to bound the number $ \opens $ of open-operations performed by the algorithm and the total moving distance $ \moves $.
As outlined in \Cref{sec:deterministic_algorithm_ideas} we assign to each center $ j $ the set $ \ball^j \cup \T^j $.
The set $ \T^j $ contains all nodes of connected components in which the center $ j $ once was located, as shown in \Cref{alg:deterministic_algorithm}.
The set $ \ball^j $ is the set of all nodes that are at distance at most $ \radius^j $ from the center $ j $ in the current graph.
We first show that the sets $ \ball^j \cup \T^j $ fulfill two properties: disjointness and largeness.
Disjointness says that for all centers $ j \neq j' $ the sets $ \ball^j \cup \T^j $ and $ \ball^{j'} \cup \T^{j'} $ are disjoint.
Largeness says that the set $ \ball^j \cup \T^j $ has size at least $ \q / 2 $ for each center $ j $.
Using these two properties, we will prove that there are at most $ \opens = O(n / \q) $ open-operations and that the total moving distance is $ \moves = O(n) $.
These bounds will then allow us to obtain a total update time of $ O (m n \Q / \q) $ for the moving centers data structure used by \Cref{alg:deterministic_algorithm}.
Afterward we will show that all other operations of the algorithm can also be carried out within this total update time.
To make our arguments precise we will use the following notation.
\begin{definition}
Let $ \cG = (G_i)_{0 \leq i \leq k} $ be a decremental graph for which Algorithm~\ref{alg:deterministic_algorithm} maintains a center cover data structure.
The graph undergoes a sequence of $ k $ deletions, and by $ G_i $ we denote the graph after the $i$-th deletion.
We use the following notation:
\begin{itemize}
\item For all nodes $ x $ and $ y $, we denote by $ \dist_i (x, y) = \dist_{G_i} (x, y) $ the distance between $ x $ and $ y $ in the graph $ G_i $.
\item For every node $ x $, we denote by $ \comp_i (x) = \comp_{G_i} (x) $ the nodes in the connected component of $ x $ in the graph $ G_i $.
\item For every center $ j $, we denote by $ \cen_i^j $, $ \radius_i^j $, and $ \T_i^j $ the values of $ \cen^j $, $ \radius^j $, and $ \T^j $ after the algorithm has processed the $i$-th deletion, respectively (equivalently: the values before the $ (i+1) $-th deletion).
\item For every center $ j $, we define the set $ \ball_i^j $ by $ \ball_i^j = \{ x \in V \mid \dist_i (\cen_i^j, x) \leq \radius_i^j \} $; i.e., $ \ball_i^j $ is the set of all nodes that are within distance $ \radius_i^j $ to the location $ \cen_i^j $ of center $ j $ in the graph $ G_i $.
\end{itemize}
\end{definition}

\paragraph*{Preliminary Observations.}
We first state some simple observations that will be helpful later on.
\begin{observation}\label{lem:size_component_ball} 
Let $ x $ be a node, let $ i \leq k $, and let $ B' $ be the set $ B' = \{ y \in V \mid \dist_i (x, y) \leq r \} $ for some integer $ r $.
If $ | \comp_i (x) | < r $, then $ \comp_i (x) = B' $.
Furthermore, $ | \comp_i (x) | < r $ if and only if $ | B' | < r $.
\end{observation}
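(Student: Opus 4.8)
The plan is elementary. First I would record the unconditional inclusion $B' \subseteq \comp_i(x)$: if $y \in B'$ then $\dist_i(x,y) \leq r < \infty$, so $y$ is connected to $x$ in $G_i$ and hence $y \in \comp_i(x)$. (The degenerate cases $r \leq 0$ are trivial: then both $|\comp_i(x)| < r$ and $|B'| < r$ fail and the first assertion is vacuous, so from now on one may assume $r \geq 1$.)

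Next I would prove the first assertion. Suppose $|\comp_i(x)| < r$, and let $y \in \comp_i(x)$ be arbitrary. A shortest path from $x$ to $y$ in $G_i$ uses only nodes of $\comp_i(x)$, and these are pairwise distinct, so the path has at most $|\comp_i(x)|$ nodes and therefore length at most $|\comp_i(x)| - 1 < r$. Hence $\dist_i(x,y) < r$, so $y \in B'$. This shows $\comp_i(x) \subseteq B'$, which together with the inclusion from the first paragraph yields $\comp_i(x) = B'$.

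Finally, for the equivalence: the forward direction is immediate, since if $|\comp_i(x)| < r$ then the first assertion gives $B' = \comp_i(x)$, so $|B'| = |\comp_i(x)| < r$. For the converse I would prove the contrapositive: assume $|\comp_i(x)| \geq r$. If every node of $\comp_i(x)$ lies within distance $r$ of $x$, then $B' = \comp_i(x)$ and $|B'| \geq r$. Otherwise there is a node $z \in \comp_i(x)$ with $\dist_i(x,z) \geq r+1$; a shortest path from $x$ to $z$ contains, for each $0 \leq d \leq r$, a distinct node at distance exactly $d$ from $x$, and all $r+1$ of these nodes lie in $B'$, so $|B'| \geq r+1 \geq r$. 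In either case $|B'| \geq r$, which is the contrapositive of the remaining implication.

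Since every step is a one-line argument, there is no real obstacle here; the only point that takes a moment's thought is the reverse direction of the equivalence, where one splits on whether $\comp_i(x)$ is entirely contained in $B'$ and, if not, extracts $r+1$ nodes along a sufficiently long shortest path emanating from $x$.
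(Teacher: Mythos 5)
Your proposal is correct and uses essentially the same argument as the paper: the crucial step in both is that a node of $\comp_i(x)$ at distance greater than $r$ from $x$ forces $r+1$ distinct nodes of a shortest path into $B'$. The only differences are cosmetic — you argue the reverse implication by contrapositive rather than contradiction, and you prove the first assertion directly via $\dist_i(x,y)\leq|\comp_i(x)|-1 < r$ instead of deriving it from the equivalence, which is a harmless reorganization of the same idea.
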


\begin{proof}
Clearly $ B' \subseteq \comp_i (x) $, and thus $ | B' | \leq | \comp_i (x) | $.
Therefore, if $ | \comp_i (x) | < r $, also $ | B' | < r $.
Now assume that $ | B' | < r $.
We first show that $ \comp_i (x) \subseteq B' $.
Let $ y $ be a node in $ \comp_i (x) $ and assume by contradiction that $ \dist_i (x, y) > r $.
Since $ y \in \comp_i (x) $, $ x $ and $ y $ are connected, and therefore the shortest path from $ x $ to $ y $ has to contain some node $ z $ such that $ \dist_i (x, z) = r $.
The shortest path $ \pi $ from $ x $ to $ z $ contains $ \dist_i (x, z) = r $ edges and $ r + 1 $ nodes.
For every node~$ z' $ on~$ \pi $ we have $ \dist_i (x, z') \leq r $, and thus $ \pi \subseteq B' $.
Since $ | \pi | = r + 1 $, we get $ | B' | \geq r+1 $, which contradicts our assumption.
Therefore $ \dist_i (x, y) \leq r $, which means that $ \comp_i (x) \subseteq B' $.
Now $ | \comp_i (x) | \leq | B' | < r $, as desired.
\end{proof}

\begin{observation}\label{lem:radius_formula}
For every center $ j $ and every $ i \leq k $, $ \radius_i^j = \q/2 - | \T_i^j | $.
\end{observation}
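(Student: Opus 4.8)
The plan is to prove the identity by induction on the number $i$ of processed edge deletions, tracking the two program variables $\T^j$ and $\radius^j$ of a fixed center $j$ through the only places in \Cref{alg:deterministic_algorithm} where they are assigned. The base case is immediate: center $j$ comes into existence through a call to \MCopen inside \CenterCoverGreedyOpen, immediately followed by the line setting $\T^j \gets \emptyset$ and $\radius^j \gets \q/2$, so at that moment $\radius^j = \q/2 = \q/2 - |\T^j|$. Thereafter \GreedyOpen never touches $\T^j$ or $\radius^j$ again (it only initialises freshly opened centers), and neither do \CenterCoverFindCenter or \CenterCoverDistance; the single remaining assignment is the update line inside \CenterCoverDelete, executed when center $j$ is moved.

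For the inductive step, consider processing the $(i+1)$-st deletion of an edge $(u,v)$ and assume $\radius^j_i = \q/2 - |\T^j_i|$. If $j$ is not moved, $\T^j$ and $\radius^j$ are unchanged and we are done. Otherwise, writing $C = \comp_{G_{i+1}}(\cen^j_i)$ for the component folded into $\T^j$, the algorithm sets $\T^j_{i+1} = \T^j_i \cup C$ and $\radius^j_{i+1} = \radius^j_i - |C|$. Hence it suffices to prove the disjointness $C \cap \T^j_i = \emptyset$: granting it, $|\T^j_{i+1}| = |\T^j_i| + |C|$, so $\radius^j_{i+1} = (\q/2 - |\T^j_i|) - |C| = \q/2 - |\T^j_{i+1}|$.

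The disjointness $C \cap \T^j_i = \emptyset$ is the crux, and I expect it to be the main obstacle. I would establish it through an auxiliary invariant carried along the same induction: \emph{after every deletion $i$, the location $\cen^j_i$ of center $j$ lies in a connected component of $G_i$ that is disjoint from $\T^j_i$} — intuitively, a center never returns to a component it has already abandoned, a consequence of decrementality. Granting this invariant, $C = \comp_{G_{i+1}}(\cen^j_i) \subseteq \comp_{G_i}(\cen^j_i)$ since the graph is decremental, and $\comp_{G_i}(\cen^j_i)$ is disjoint from $\T^j_i$, so $C \cap \T^j_i = \emptyset$. To maintain the invariant across a move: the new location $y$ is, by the line defining $y$, the endpoint of $(u,v)$ \emph{not} connected to $\cen^j_i$ in $G_{i+1}$, so $\comp_{G_{i+1}}(y)$ is disjoint from $C = \comp_{G_{i+1}}(\cen^j_i)$; and, because a move of $j$ is triggered only when $\cen^j_i$'s component actually shrinks, the deleted edge $(u,v)$ lies inside $\comp_{G_i}(\cen^j_i)$, hence $y \in \comp_{G_i}(\cen^j_i)$ and $\comp_{G_{i+1}}(y) \subseteq \comp_{G_i}(\cen^j_i)$, which is disjoint from $\T^j_i$ by the invariant. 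Thus $\comp_{G_{i+1}}(y)$ is disjoint from $\T^j_{i+1} = \T^j_i \cup C$. When $j$ is not moved, the invariant survives trivially, since $\T^j$ is unchanged and deleting an edge cannot connect $\cen^j_i$ to any node of $\T^j_i$.

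The one delicate point, and the reason the induction must be set up carefully, is the clause ``a move of $j$ is triggered only when $\cen^j_i$'s component shrinks'': this uses a companion ``largeness''-type invariant, namely $|\comp_{G_i}(\cen^j_i)| \geq \radius^j_i$ between deletions (via \Cref{lem:size_component_ball}), which is itself most cleanly proved together with \Cref{lem:radius_formula} and the disjointness invariant rather than before them. So the real proposal is to run a single combined induction maintaining (a) the identity $\radius^j_i = \q/2 - |\T^j_i|$, (b) the auxiliary invariant that $\cen^j_i$'s component is disjoint from $\T^j_i$, and (c) the bound $|\comp_{G_i}(\cen^j_i)| \geq \radius^j_i$ (whose inductive step after a move rests on the fact that the component into which $j$ is moved is large enough, as anticipated in \Cref{sec:deterministic_algorithm_ideas}); getting the interdependence of (a)–(c) right is the main obstacle, while the arithmetic reducing the observation to the disjointness is routine.
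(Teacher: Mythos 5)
Your reduction of the observation to a disjointness statement is sound, and it is worth noting how it relates to the paper's own proof: the paper proves \Cref{lem:radius_formula} by pure bookkeeping (at opening $\radius^j = \q/2$ and $\T^j = \emptyset$; the only later modification is the move step, where $\radius^j$ is decreased by $|\comp_{G_{i+1}}(\cen_i^j)|$ and exactly those nodes are added to $\T^j$), silently using the one point you isolate as the crux, namely that the folded-in component is disjoint from the current $\T_i^j$ — the same ``a center never returns to an abandoned component'' reasoning that the paper records separately as \Cref{lem:intra_disjointness}. Your invariant (b) and the containment $\comp_{G_{i+1}}(y) \subseteq \comp_{G_i}(\cen_i^j)$ after a move are the right way to make that point rigorous, and the arithmetic deriving (a) from the disjointness is correct.

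The gap is in your claim that a combined induction on (a)--(c) alone closes. To maintain (c) you need that \emph{every} center whose component has dropped below its radius is actually moved; but \Cref{alg:deterministic_algorithm} moves only the single center found in Line~\ref{line:find bad component}. If two centers simultaneously satisfied $|\comp_{G_{i+1}}(\cen^j)| < \radius^j$, the un-moved one would violate (c) immediately, and at a later deletion that center could then be moved even though the deleted edge lies outside its (already small) component; in that scenario Line~\ref{line:define y} can place $y$ in a previously abandoned component, breaking (b) and hence (a). Ruling this out is exactly \Cref{lem:at_most_one_center_moved}, whose proof rests on the \emph{cross-center} disjointness of the sets $\ball^j \cup \T^j$ (\Cref{lem:disjointness_property}, built from \Cref{lem:inter_disjointness_open,lem:inter_disjointness_subset_property}, which in turn use that new centers are opened only at uncovered nodes together with \Cref{lem:special_set_small_component}). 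This cross-center invariant is not among your (a)--(c) and is not derivable from them, so your simultaneous induction must carry it (and its supporting facts) as well; once it is added, your argument goes through, essentially reproducing \Cref{lem:center_too_small_consequence,lem:at_most_one_center_moved,lem:size_balls_invariant}.
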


\begin{proof}
When the center $ j $ is opened the algorithm sets $ \radius_i^j = \q/2 $ and $ \T_i^j = \emptyset $.
Therefore $ \radius_i^j = \q/2 - | \T_i^j | $ trivially holds.
Afterward the algorithm only modifies $ \radius^j $ and $ \T^j $ when a center is moved.
Since $ \radius^j $ is increased by exactly the amount by which $ | \T^j | $ is decreased, the equation remains true.
\end{proof}

\begin{observation}\label{lem:special_set_small_component}
For every center $ j $ and every $ i \leq k $, $ | \comp_i (x) | < \q $ for every node $ x \in \T_i^j $.
\end{observation}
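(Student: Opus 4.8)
The plan is to trace exactly when and how a node can enter the set $\T^j$ and then exploit the monotonicity of connected components in a decremental graph. First I would observe that $\T^j$ is only ever modified in two places of \Cref{alg:deterministic_algorithm}: it is set to $\emptyset$ when center $j$ is opened (Line~\ref{line:initial T r c}), and it is enlarged by the assignment $\T^j \gets \T^j \cup \comp_{G_{i+1}}(\cen^j)$ inside \CenterCoverDelete (Line~\ref{line:update variables after move}). In particular $\T^j$ never loses a node once center $j$ exists, so it suffices to show that every node that is \emph{added} to $\T^j$ at some step lies, at that step, in a connected component of size less than $\q$; the claim for all later indices $i$ will then follow from decrementality.

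Next I would pin down the size of the component that gets absorbed into $\T^j$. Suppose a node $x$ is added to $\T^j$ while the algorithm processes the deletion that turns $G_i$ into $G_{i+1}$. This addition happens only if the test in Line~\ref{line:find bad component} succeeds for center $j$, i.e.\ $|\comp_{G_{i+1}}(\cen^j)| < \radius^j$, where $\radius^j$ here is its value before the $(i+1)$-th deletion, namely $\radius_i^j$. By \Cref{lem:radius_formula} we have $\radius_i^j = \q/2 - |\T_i^j| \le \q/2 < \q$. Since $x \in \comp_{G_{i+1}}(\cen^j)$, connected components being a partition gives $\comp_{i+1}(x) = \comp_{G_{i+1}}(\cen^j)$, and therefore $|\comp_{i+1}(x)| = |\comp_{G_{i+1}}(\cen^j)| < \q$.

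Finally I would push this forward in time. Fix any $i$ with $x \in \T_i^j$; since $\T^j$ only grows, $x$ was added at some step $i' \le i$, and by the previous paragraph $|\comp_{i'}(x)| < \q$. Because $\cG$ is a decremental graph, deleting edges can only disconnect nodes, so $\comp_i(x) \subseteq \comp_{i'}(x)$ and hence $|\comp_i(x)| \le |\comp_{i'}(x)| < \q$, which is exactly the desired statement. I do not anticipate a real obstacle here; the only points requiring care are bookkeeping ones — making sure that the $\radius^j$ appearing in the Line~\ref{line:find bad component} test is the ``old'' value $\radius_i^j$ so that \Cref{lem:radius_formula} applies, and noting that the base case (just after $j$ is opened, $\T_i^j = \emptyset$) is vacuous.
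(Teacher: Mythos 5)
Your proposal is correct and follows essentially the same route as the paper's proof: a node enters $\T^j$ only when its component (which is exactly the component of $\cen^j$ being absorbed) has size less than $\radius^j \leq \q$ by \Cref{lem:radius_formula}, and decrementality ensures components only shrink afterward. The only cosmetic difference is that you use the sharper bound $\radius_i^j \leq \q/2$ where the paper is content with $\radius_i^j \leq \q$.
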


\begin{proof}
For every node $ x $ that is put into $ \T_i^j $ after the $i$-th edge deletion, we have $ | \comp_i (x) | < \radius_i^j $.
Since the size of the connected component of $ x $ never increases in a decremental graph and $ \radius_i^j \leq \q $ for all $ i \leq k $ by \Cref{lem:radius_formula}, the claim is true.
\end{proof}

\begin{observation}\label{lem:intra_disjointness}
For every center $ j $ and every $ i \leq k $, the sets $ \ball_i^j $ and $ \T_i^j  $ are disjoint.
\end{observation}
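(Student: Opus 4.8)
The plan is to reduce the claim to a purely combinatorial statement about connected components. First I would note that $ \ball_i^j \subseteq \comp_i (\cen_i^j) $, since any node $ x $ with $ \dist_i (\cen_i^j, x) \leq \radius_i^j $ is at finite distance from $ \cen_i^j $ in $ G_i $ and hence lies in its component; so it suffices to prove the stronger fact $ \comp_i (\cen_i^j) \cap \T_i^j = \emptyset $. Next I would record that $ \T_i^j $ is always a union of connected components of $ G_i $: over the run of the algorithm $ \T^j $ is assembled as a union of sets $ \comp_{G_{i'}} (\cen_{i'-1}^j) $ over deletion times $ i' \leq i $ (this is the only place \Cref{alg:deterministic_algorithm} modifies $ \T^j $, in Line~\ref{line:update variables after move}), and each such set is a connected component of $ G_{i'} $, hence — because $ G_i $ is obtained from $ G_{i'} $ by further deletions — a union of connected components of $ G_i $. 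Because $ \comp_i (\cen_i^j) $ is a single component of $ G_i $, it either lies entirely inside the union $ \T_i^j $ or is disjoint from it; therefore the whole observation is equivalent to the single statement $ \cen_i^j \notin \T_i^j $, i.e., the current location of a center is never one of its collected nodes.

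I would prove $ \cen_i^j \notin \T_i^j $ by induction on the number of deletions $ i $. The base case is immediate since $ \T^j = \emptyset $ when $ j $ is opened. For the step, consider the $ (i+1) $-th deletion of an edge $ (u, v) $. If $ j $ is not moved, neither $ \cen^j $ nor $ \T^j $ changes and the claim is inherited. If $ j $ is moved, it is moved to the endpoint $ y \in \{ u, v \} $ lying in a different component of $ G_{i+1} $ than $ \cen_i^j $, and $ \T_{i+1}^j = \T_i^j \cup \comp_{G_{i+1}} (\cen_i^j) $; I must show $ y \notin \comp_{G_{i+1}} (\cen_i^j) $ and $ y \notin \T_i^j $. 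The first is exactly how $ y $ is chosen in Line~\ref{line:define y}. For the second, the key point is that the deleted edge $ (u, v) $ lies inside $ \comp_i (\cen_i^j) $ (equivalently, that deleting it genuinely shrinks $ j $'s component); granting this, $ y \in \{ u, v \} \subseteq \comp_i (\cen_i^j) $, and the induction hypothesis gives $ \comp_i (\cen_i^j) \cap \T_i^j = \emptyset $, so $ y \notin \T_i^j $, completing the step.

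The main obstacle is precisely that "key point", which is not local to the move: it rests on the size invariant $ | \comp_i (\cen_i^j) | \geq \radius_i^j $ holding for every center $ j $ at every time $ i $. Indeed, $ j $ is moved only because $ | \comp_{G_{i+1}} (\cen_i^j) | < \radius_i^j $, and with the invariant this gives $ | \comp_{G_{i+1}} (\cen_i^j) | < | \comp_i (\cen_i^j) | $, so $ \comp_i (\cen_i^j) $ strictly shrinks under the deletion, forcing $ (u, v) $ to be a bridge inside $ \comp_i (\cen_i^j) $; in fact $ \comp_i (\cen_i^j) $ then splits into exactly the two pieces $ \comp_{G_{i+1}} (\cen_i^j) $ and $ \comp_{G_{i+1}} (y) $, which is the clean picture used above. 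Establishing the size invariant requires its own short induction: it holds at creation time (the greedy-open guard ensures $ | \comp (x) | \geq \q \geq \radius^j = \q/2 $), it is preserved for unmoved centers by the very definition of when a move is triggered together with \Cref{lem:at_most_one_center_moved} (so the one center that would violate it is exactly the one actually moved), and it is preserved for the moved center by the two-piece split, since $ | \comp_{G_{i+1}} (y) | = | \comp_i (\cen_i^j) | - | \comp_{G_{i+1}} (\cen_i^j) | \geq \radius_i^j - | \comp_{G_{i+1}} (\cen_i^j) | = \radius_{i+1}^j $ using \Cref{lem:radius_formula}. Thus the honest route is to prove (or carry along) the size invariant, after which \Cref{lem:intra_disjointness} follows from the component-structure argument above with essentially no further computation.
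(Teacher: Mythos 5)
Your proof is correct, but it takes a noticeably heavier route than the paper's. The paper disposes of \Cref{lem:intra_disjointness} in three lines: $ \T_i^j $ consists only of nodes of components the center was moved away from, $ \ball_i^j \subseteq \comp_i(\cen_i^j) $, and a center is never moved back into an abandoned component --- the last point being immediate from the move rule as described in the text (a move is triggered only when the component of $ \cen^j $ drops from size $ \geq \radius^j $ in $ G_i $ to size $ < \radius^j $ in $ G_{i+1} $, so the deleted edge lies inside that component and the new location $ y $ is the split-off endpoint, hence a node of the old component; since components only split in a decremental graph, that old component is disjoint from everything previously abandoned). You prove the same structural fact, but you formalize the step ``the deleted edge lies inside $ \comp_i(\cen_i^j) $'' via the size invariant $ |\ball_i^j| \geq \radius_i^j $ (the paper's \Cref{lem:size_balls_invariant}) together with \Cref{lem:at_most_one_center_moved}, and you add a clean reduction the paper leaves implicit: since $ \T_i^j $ is a union of connected components of $ G_i $ and $ \comp_i(\cen_i^j) $ is a single component, the whole observation collapses to $ \cen_i^j \notin \T_i^j $. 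What your route buys is robustness to a literal reading of Line~\ref{line:find bad component} (which only tests the size in $ G_{i+1} $): with the invariant in hand you do not need the ``was large before'' half of the trigger as an assumption. What it costs is dependency order: in the paper this observation is a lightweight preliminary, proved before and independently of \Cref{lem:at_most_one_center_moved,lem:size_balls_invariant}, and later feeds into \Cref{lem:largeness_property,lem:confinement}; your version makes it depend on those heavier lemmas. This is legitimate --- \Cref{lem:disjointness_property}, \Cref{lem:at_most_one_center_moved}, and \Cref{lem:size_balls_invariant} nowhere use \Cref{lem:intra_disjointness}, so no circularity arises (a point you should state explicitly if you keep this ordering) --- but it is more machinery than the statement needs given the algorithm's stated move trigger.
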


\begin{proof}
The set $ \T_i^j $ only contains nodes in connected components from which the center $ j $ has been moved away, i.e., that do not contain $ \cen_i^j $.
No center will ever be moved back into such a connected component.
Since $ \ball_i^j \subseteq \comp_i (\cen_i^j) $, we conclude that $ \ball_i^j $ and $ \T_i^j  $ are disjoint.
\end{proof}

\paragraph*{Disjointness Property.}
We now want to prove the disjointness property.
We will proceed as follows:
First we show that, for every center $ j $ that is opened, the set $ \ball^j \cup \T^j $ is disjoint from the set $ \ball^{j'} \cup \T^{j'} $ of every other existing center $ j' $.
Afterward we show that the algorithm never adds any nodes to $ \ball^j \cup \T^j $.
These two facts will imply that all the sets $ \ball^j \cup \T^j $ are disjoint.

\begin{lemma}[Initial disjointness]\label{lem:inter_disjointness_open}
When the algorithm opens a center $ j $ after the $i$-th edge deletion, the set $ \ball_i^j \cup \T_i^j $ is disjoint from the set $ \ball_i^{j'} \cup \T_i^{j'} $ for every other center $ j \neq j' $.
\end{lemma}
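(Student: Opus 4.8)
The plan is to reduce the statement to two separate disjointness facts after first recording what holds of a freshly opened center. When the algorithm opens center $j$ at a node $x$ (either during initialization or while processing the $i$-th deletion), it immediately sets $\T^j \gets \emptyset$, $\radius^j \gets \q/2$, and $\cen^j \gets x$. Hence right after this step $\T_i^j = \emptyset$ and $\ball_i^j \cup \T_i^j = \ball_i^j = \{ w \in V : \dist_i(x,w) \leq \q/2 \}$, so it suffices to show that $\ball_i^j$ is disjoint from $\ball_i^{j'}$ and from $\T_i^{j'}$ for every other center $j'$. For a pair of centers we may rename the one that is opened later as $j$, so we may assume $j'$ already exists at the moment $j$ is opened; then the data $\cen_i^{j'},\radius_i^{j'},\T_i^{j'}$ of $j'$ is the value in force from the opening of $j$ until the end of processing the $i$-th deletion, since inspection of \Cref{alg:deterministic_algorithm} shows that the greedy opening step never modifies existing centers.

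For $\ball_i^j \cap \ball_i^{j'} = \emptyset$, I would use that $j$ is opened at $x$ only when the find-center query at $x$ returns $\bot$, which by \Cref{def:moving center} means that no center currently present is within distance $\q$ of $x$; in particular $\dist_i(x,\cen_i^{j'}) > \q$. If some node $z$ were in $\ball_i^j \cap \ball_i^{j'}$, then $\dist_i(x,z) \leq \radius_i^j = \q/2$ and $\dist_i(z,\cen_i^{j'}) \leq \radius_i^{j'} \leq \q/2$, where the last step uses $\radius_i^{j'} = \q/2 - |\T_i^{j'}| \leq \q/2$ from \Cref{lem:radius_formula}. The triangle inequality would then give $\dist_i(x,\cen_i^{j'}) \leq \dist_i(x,z)+\dist_i(z,\cen_i^{j'}) \leq \q$, a contradiction.

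For $\ball_i^j \cap \T_i^{j'} = \emptyset$, I would invoke the other half of the opening condition: $j$ is opened at $x$ only when $|\comp_{G_i}(x)| \geq \q$. Since $\ball_i^j \subseteq \comp_i(\cen_i^j) = \comp_i(x)$, every node of $\ball_i^j$ lies in a connected component of $G_i$ of size at least $\q$, whereas by \Cref{lem:special_set_small_component} every node of $\T_i^{j'}$ lies in a connected component of size strictly less than $\q$; these are incompatible. Combining the two disjointness facts with $\T_i^j = \emptyset$ then yields $(\ball_i^j \cup \T_i^j) \cap (\ball_i^{j'} \cup \T_i^{j'}) = \emptyset$, as desired. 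I do not expect a real obstacle here; the only point requiring care is the timing bookkeeping in the first paragraph, namely that the find-center test is evaluated against exactly the centers present just before $j$ is created (so that both pre-existing centers and centers opened earlier in the same greedy sweep are covered, and no later update to $j'$ interferes), which is settled by a direct reading of \Cref{alg:deterministic_algorithm}.
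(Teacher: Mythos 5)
Your proof is correct and follows essentially the same route as the paper: reduce to showing $\ball_i^j$ is disjoint from $\ball_i^{j'}$ and from $\T_i^{j'}$ (using $\T_i^j=\emptyset$), rule out a common node of the two balls via the triangle inequality together with $\radius_i^{j}, \radius_i^{j'} \leq \q/2$ and the fact that the opening node was uncovered, and rule out intersection with $\T_i^{j'}$ via the component-size dichotomy from \Cref{lem:special_set_small_component}. The extra timing bookkeeping you include (centers opened earlier in the same greedy sweep) is a fine, slightly more explicit reading of what the paper folds into the phrase ``existing center.''
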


\begin{proof}
Let $ j $ be the center that is opened, and let $ j' \neq j $ be an existing center.
The algorithm sets $ \T_i^j = \emptyset $, and therefore we only have to argue that $ \ball_i^j $ and $ \ball_i^{j'} \cup \T_i^{j'} $ are disjoint.
Note that $ \cen_i^j $ is in a connected component of size at least $ \q $, because otherwise the algorithm would not have opened a center at $ \cen_i^j $.
Observe that the set $ \ball_i^j $ is contained in the connected component of $ \cen_i^j $.
By \Cref{lem:special_set_small_component} all nodes of $ \T_i^{j'} $ are in a connected component of size less than $ \q $, and therefore  $ \ball_i^j \cap \T_i^{j'} = \emptyset $.
We now argue that $ \ball_i^j \cap \ball_i^{j'} = \emptyset $.
Suppose that there is some node $ x $ contained in both $ \ball_i^j $ and $ \ball_i^{j'} $.
By the definition of $ \ball_i^j $ and $ \ball_i^{j'} $ we get $ \dist_i (\cen_i^j, x) \leq \radius_i^j = \q/2 - | \T_i^j | \leq \q/2 $ as well as  $ \dist_i (\cen_i^{j'}, x) \leq \q/2 $.
By the triangle inequality we get
\begin{equation*}
\dist_i(\cen_i^j, \cen_i^{j'}) \leq \dist_i (\cen_i^j, x) + \dist_i (x, \cen_i^{j'}) \leq \q/2 + \q/2 = \q \, .
\end{equation*}
But then $ \cen_i^j $ is covered by $ \cen_i^{j'} $.
This means that the algorithm would not have opened a new center at $ \cen_i^j $, which contradicts our assumption.
\end{proof}

\begin{lemma}[Shrinking property]\label{lem:inter_disjointness_subset_property}
For every center $ j $ and every $ i < k $, we have $ \ball_i^j \cup \T_i^j \subseteq \ball_{i+1}^j \cup \T_{i+1}^j $.
\end{lemma}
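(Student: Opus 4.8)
The plan is to prove the stated inclusion $\ball_i^j \cup \T_i^j \subseteq \ball_{i+1}^j \cup \T_{i+1}^j$ by fixing a center $j$ and analyzing what the algorithm does during the $(i+1)$-th deletion, say of the edge $(u,v)$. First I would dispose of the trivial half: \Cref{alg:deterministic_algorithm} modifies $\T^j$ only through the assignment in Line~\ref{line:update variables after move}, which replaces $\T^j$ by $\T^j \cup \comp_{G_{i+1}}(\cen^j)$, so $\T^j$ is non-decreasing and $\T_i^j \subseteq \T_{i+1}^j$ unconditionally. Hence it suffices to show that every node $x \in \ball_i^j$ lies in $\ball_{i+1}^j \cup \T_{i+1}^j$. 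I would then split into the two cases given by whether center $j$ is moved during the $(i+1)$-th deletion; by \Cref{lem:at_most_one_center_moved} at most one center is moved per deletion, so for the fixed $j$ these two cases are exhaustive.

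If $j$ is not moved, the algorithm leaves $\cen^j$, $\radius^j$, and $\T^j$ untouched, so $\cen_{i+1}^j = \cen_i^j =: c$, $\radius_{i+1}^j = \radius_i^j =: r$, and $\T_{i+1}^j = \T_i^j$, and moreover, since $j$ was not selected in Line~\ref{line:find bad component}, we have $|\comp_{i+1}(c)| \geq r$. For $x \in \ball_i^j$ we have $\dist_i(c,x) \leq r$, hence $x \in \comp_i(c)$; the goal is $\dist_{i+1}(c,x) \leq r$ (so $x \in \ball_{i+1}^j$) unless $x$ has been absorbed into $\T_{i+1}^j$. Here I would invoke \Cref{lem:size_component_ball} with center $c$, radius $r$, and the ball $B' = \{ y \mid \dist_{i+1}(c,y) \leq r \} = \ball_{i+1}^j$, combined with $|\comp_{i+1}(c)| \geq r$, to control how much the distance from $c$ can have grown for a node of the old ball; if $x$ became disconnected from $c$ in $G_{i+1}$, a component-counting argument in the same spirit as \Cref{lem:size_component_ball} places $x$ either back inside $\ball_{i+1}^j$ or inside $\T_{i+1}^j$.

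If $j$ is moved, the algorithm puts $\T_{i+1}^j = \T_i^j \cup \comp_{i+1}(\cen_i^j)$, sets $\radius_{i+1}^j = \radius_i^j - |\comp_{i+1}(\cen_i^j)|$ (consistent with \Cref{lem:radius_formula}), and relocates the center to the endpoint $y \in \{u,v\}$ lying on the far side of the deleted edge. For $x \in \ball_i^j$ I would split once more: if $x$ still lies in $\comp_{i+1}(\cen_i^j)$ then $x \in \T_{i+1}^j$ and we are done; otherwise $x$ sits on the $y$-side of the cut, and I would take the old shortest path $c \leadsto u \to v \leadsto x$ witnessing $\dist_i(c,x) \leq \radius_i^j$, note that its $c$-to-$u$ prefix remains inside $\comp_{i+1}(\cen_i^j)$ while its $v$-to-$x$ suffix remains inside $\comp_{i+1}(y)$, and use the length bookkeeping between these two pieces against the updated radius $\radius_{i+1}^j = \radius_i^j - |\comp_{i+1}(\cen_i^j)|$ to conclude $\dist_{i+1}(y,x) \leq \radius_{i+1}^j$, i.e.\ $x \in \ball_{i+1}^j$. \Cref{lem:intra_disjointness} is available to keep $\ball^j$ and $\T^j$ from interfering throughout.

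The hard part will be exactly this distance bookkeeping in both cases: a node that was within the old ball but whose distance to the center increases---because the deletion lengthened its shortest path, or because the center was relocated and the radius was simultaneously shrunk---must be shown to be \emph{precisely} compensated, landing either in the shrunken ball at the new location or in the connected component just absorbed into $\T^j$. Making this tight is where the specific design of the move rule is essential (the center moves only to an endpoint of the deleted edge, and the \emph{entire} old component is transferred to $\T^j$ while $\radius^j$ drops by its size), and I expect the component-size lemma \Cref{lem:size_component_ball} to be needed in a somewhat delicate way; this is the one step I would not treat as routine.
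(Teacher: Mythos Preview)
The lemma statement contains a typo: the inclusion is reversed. The paper actually proves (and later uses, e.g.\ in \Cref{lem:disjointness_property}) the statement $\ball_{i+1}^j \cup \T_{i+1}^j \subseteq \ball_i^j \cup \T_i^j$, which is why it is called the \emph{shrinking} property. Your proposal takes the stated direction at face value and tries to show $\ball_i^j \subseteq \ball_{i+1}^j \cup \T_{i+1}^j$, but this is false in general. When $j$ is not moved you have $\T_{i+1}^j = \T_i^j$, $\cen_{i+1}^j = \cen_i^j$, and $\radius_{i+1}^j = \radius_i^j$; since distances only increase under a deletion we get $\ball_{i+1}^j \subseteq \ball_i^j$, not the reverse. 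Concretely, a node $x$ with $\dist_i(\cen_i^j,x) = \radius_i^j$ may well have $\dist_{i+1}(\cen_i^j,x) > \radius_i^j$ after the deletion (the component of $\cen_i^j$ can remain huge), so $x$ drops out of $\ball_{i+1}^j$ without entering $\T_{i+1}^j$. This is exactly where your non-move case stays vague (``a component-counting argument in the same spirit\ldots''): no such argument exists, because the claim fails.

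For the correct direction the paper's structure matches yours (split on whether $j$ is moved). The non-move case is then immediate from $\ball_{i+1}^j \subseteq \ball_i^j$. In the move case one shows $\ball_{i+1}^j \cup A \subseteq \ball_i^j$ where $A = \comp_{i+1}(\cen_i^j)$: the set $A$ sits inside $\ball_i^j$ by \Cref{lem:size_component_ball} (since $|A| < \radius_i^j$), and for $z \in \ball_{i+1}^j$ one bounds $\dist_i(\cen_i^j, z) \le \dist_i(\cen_i^j, \cen_{i+1}^j) + \dist_{i+1}(\cen_{i+1}^j, z) \le |A| + \radius_{i+1}^j = \radius_i^j$, using that all but the last edge of a shortest $\cen_i^j$--$\cen_{i+1}^j$ path in $G_i$ lies in $A$. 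Finally, note that your appeal to \Cref{lem:at_most_one_center_moved} (to conclude $|\comp_{i+1}(c)| \ge r$ when $j$ is not moved) would be circular: that lemma is proved via the disjointness property, which in turn relies on this shrinking lemma.
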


\begin{proof}
Let $ (u, v) $ be the $ (i+1) $-th deleted edge.
We only have to argue that the claim holds for centers that the algorithm has already opened before this deletion.
If the algorithm does not move $ j $, then the values of $ \T^j $, $ \radius^j $, and $ \cen^j $ are not changed at all, and thus $ \T_{i+1}^j = \T_i^j $, $ \radius_{i+1}^j = \radius_i^j $, and $ \cen_{i+1}^j = \cen_i^j $.
Furthermore, since distances never decrease in a decremental graph we also have $ \ball_{i+1}^j \subseteq \ball_i^j $ and the claim follows.

Now consider the case that the algorithm moves the center $ j $ from $ x = \cen_i^j $ to $ \cen_{i+1}^j $, where either $ \cen_{i+1}^j = u $ or $ \cen_{i+1}^j = v $.
Assume without loss of generality that $ \cen_{i+1}^j = v $.
To simplify notation, let $ A $ denote the set $ A = \comp_{i+1} (x) $.
The fact that the algorithm moves the center implies that $ | A | < \radius_i^j $.
Note that the algorithm sets $ \T_{i+1}^j = \T_i^j \cup A $ and $ \radius_{i+1}^j = \radius_i^j - |A| $.

The observation needed for proving the shrinking property is $ \ball_{i+1}^j \cup A \subseteq \ball_i^j $.
From this observation we get $ \ball_{i+1}^j \cup \T_{i+1}^j = \ball_{i+1}^j \cup \T_i^j \cup A \subseteq \ball_i^j \cup \T_i^j $, as desired.
We first prove  $ A \subseteq \ball_i^j $ and then $ \ball_{i+1}^j \subseteq \ball_i^j $.
Let $ B' $ be the set $ B' = \{ z \in V \mid \dist_{i+1} (x, z) \leq \radius_i^j \} $.
Since $ | A | < \radius_i^j $ we get $  A \subseteq B' $ by \Cref{lem:size_component_ball}, and since $ \dist_i (x, z) \leq \dist_{i+1} (x, z) $ for every node $ z $ we have $ B' \subseteq \ball_i^j $.
Now $ A \subseteq B' $ and $ B' \subseteq \ball_i^j $, and we may conclude that $ A \subseteq \ball_i^j $.

Finally, we prove that $ \ball_{i+1}^j \subseteq \ball_i^j $.
Since we move the center $ j $ from $ x $ to $ v $ it must be the case, by the way the algorithm works, that $ | A | = | \comp_{i+1} (x) | < | \comp_i (x) | $ and that $ v $ is not connected to $ x $ in $ G_{i+1} $.
This can only happen if $ v $ is connected to $ x $ in $ G_i $.
Let $ z $ be a node in $ \ball_{i+1}^j $, which means that $ \dist_{i+1} (v, z) \leq \radius_{i+1}^j $.
Consider a shortest path $ \pi $ from $ x $ to $ v $ in $ G_i $ consisting of $ \dist_i (x, v) $ many edges.
Every edge on $ \pi $ except for the last one (which is $ (u, v) $) is also contained in $ G_{i+1} $, and therefore all nodes on $ \pi $ except for $ v $ are contained in $ A $.
Therefore we get $ | A | \geq | \pi \setminus \{v\} | = \dist_i (x, v) $.
We now get $ z \in \ball_i^j $ by observing that $ \dist_i (x, z) \leq \radius_i^j $, which can be seen from the following chain of inequalities:
\begin{equation*}
\dist_i (x, z) \leq \dist_i (x, v) + \dist_i (v, z) \leq \dist_i (x, v) + \dist_{i+1} (v, z) \leq | A | + \radius_{i+1}^j = \radius_i^j \, . \qedhere
\end{equation*}
\end{proof}

\begin{lemma}[Disjointness]\label{lem:disjointness_property}
Algorithm~\ref{alg:deterministic_algorithm} maintains the following invariant: For all centers $ j \neq j' $ and every $ i \leq k $, $ \ball_i^j \cup \T_i^j $ is disjoint from $ \ball_i^{j'} \cup \T_i^{j'} $.
\end{lemma}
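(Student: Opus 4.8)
The plan is to prove the invariant by induction on $i$, the index of the last deletion, by assembling the two structural facts already established: \emph{initial disjointness} (\Cref{lem:inter_disjointness_open}), which handles the moment a brand-new center is created, and the \emph{shrinking property} (\Cref{lem:inter_disjointness_subset_property}), which says that a set $\ball^j \cup \T^j$ never gains a node and thus handles everything that happens while a deletion is processed. The crucial preliminary remark I would isolate first is that the set $\ball^{j'}_i \cup \T^{j'}_i$ of an \emph{already existing} center $j'$ depends only on the current graph together with $\cen^{j'}$, $\radius^{j'}$, and $\T^{j'}$, none of which is modified when the algorithm merely opens a new center; hence $\ball^{j'} \cup \T^{j'}$ stays frozen throughout a single call to \GreedyOpen{}, and also stays frozen once reported deletions have been passed on to the moving centers data structure.

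For the base case $i = 0$, the state is the one produced by \CenterCoverInitialize{}, which just runs \GreedyOpen{} on $G_0$. I would order the centers opened during that call by creation time and run an inner induction: when a center $j$ is opened, all other centers present at that instant have their sets frozen (by the remark above), so \Cref{lem:inter_disjointness_open} gives that $\ball^j_0 \cup \T^j_0$ is disjoint from each of them, and together with the inner induction hypothesis all the sets are pairwise disjoint. Hence the invariant holds at $i = 0$.

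For the inductive step, assume the sets $\ball^j_i \cup \T^j_i$ for $j \in \C_i$ are pairwise disjoint, and consider \CenterCoverDelete{} processing the $(i{+}1)$-th deletion of an edge $(u,v)$ (recall $\C_i \subseteq \C_{i+1}$). This procedure performs, in order, at most one move, then the edge deletion, then a call to \GreedyOpen{}. Take a pair $j \neq j'$ in $\C_{i+1}$ and split into two cases. If both $j, j' \in \C_i$: by the shrinking property neither set ever gains a node, so $\ball^j_{i+1} \cup \T^j_{i+1} \subseteq \ball^j_i \cup \T^j_i$ and likewise for $j'$; since the two larger sets are disjoint by the induction hypothesis, so are the two subsets. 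If instead (say) $j \in \C_{i+1} \setminus \C_i$, i.e.\ $j$ is opened during the final \GreedyOpen{} of this step: by the preliminary remark the set $\ball^{j'}_{i+1} \cup \T^{j'}_{i+1}$ already has its final value at the instant $j$ is opened (whether $j'$ is old or was opened earlier in the same \GreedyOpen{} call), and the same is true of $\ball^j_{i+1} \cup \T^j_{i+1}$, so \Cref{lem:inter_disjointness_open} applies at that instant and yields disjointness of the two. The two cases together give the invariant after the $(i{+}1)$-th deletion, completing the induction.

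The only delicate point, and the part I expect to require the most care, is bookkeeping rather than new ideas: one must verify that the snapshot at which \Cref{lem:inter_disjointness_open} is invoked genuinely coincides with the end-of-step state $(\cdot)_{i+1}$ — which is exactly the content of the preliminary remark that \GreedyOpen{} never touches the sets of already-opened centers and that reporting the deletion finalizes all $\ball^{j'}_{i+1}$ before any opens occur — and that the shrinking property, as it was proved, already folds in the combined effect of the move, the deletion, and the subsequent opens. All the genuinely geometric reasoning (balls can only shrink as edges are deleted; a newly opened center sits at distance $> \q$ from every other center and the $\T$-sets of other centers live in components of size $< \q$) is packed into \Cref{lem:inter_disjointness_open,lem:inter_disjointness_subset_property}, so this lemma amounts to a short, clean assembly of them.
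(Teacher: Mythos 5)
Your proof is correct and follows essentially the same route as the paper: induction on the deletion index, with \Cref{lem:inter_disjointness_open} handling newly opened centers and \Cref{lem:inter_disjointness_subset_property} combined with the induction hypothesis handling pairs of already-existing centers. The extra bookkeeping you add (the ``frozen sets'' remark and the inner induction over opens within one \GreedyOpen{} call) just makes explicit details the paper leaves implicit.
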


\begin{proof}
By \Cref{lem:inter_disjointness_open} the invariant holds after the initialization.
Now consider the $ (i+1) $-th edge deletion.
Let $ j \neq j' $ be two different existing centers.
By the induction hypothesis $ \ball_i^j \cup \T_i^j $ and $ \ball_i^{j'} \cup \T_i^{j'} $ are disjoint.
Since $ \ball_{i+1}^j \cup \T_{i+1}^j \subseteq \ball_i^j \cup \T_i^j $ and $ \ball_{i+1}^{j'} \cup \T_{i+1}^{j'} \subseteq \ball_i^{j'} \cup \T_i^{j'} $ by \Cref{lem:inter_disjointness_subset_property}, also $ \ball_{i+1}^j \cup \T_{i+1}^j $ and $ \ball_{i+1}^{j'} \cup \T_{i+1}^{j'} $ are disjoint.
Now let $ j $ be an existing center and let $ j' $ be a center that is opened in the procedure \CenterCoverGreedyOpen (called at the end of the procedure \CenterCoverDelete).
By \Cref{lem:inter_disjointness_open} we also have that $ \ball_{i+1}^j \cup \T_{i+1}^j $  and $ \ball_{i+1}^{j'} \cup \T_{i+1}^{j'} $ are disjoint.
This shows that, for \emph{all} centers $ j $ and $ j' $ such that $ j \neq j' $, $ \ball_{i+1}^j \cup \T_{i+1}^j $ and $ \ball_{i+1}^{j'} \cup \T_{i+1}^{j'} $ are disjoint.
\end{proof}

\paragraph*{Largeness Property.}
We now want to prove the largeness property which states that for every center $ j $ the size of the set $ \ball^j \cup \T^j $ is always at least $ \q / 2 $.
The largeness property will follow from the invariant $ | \ball^j | \geq \radius^j $.
Before we can prove this invariant we have to argue that our algorithm really moves every center $ j $ that fulfills the ``moving condition'' $ | \comp_i (\cen^j) | \geq \radius^j $ and $ | \comp_{i+1} (\cen^j) | < \radius^j $.
Remember that the algorithm only moves \emph{one} such center after each deletion.
We show that there actually is at most one center fulfilling the moving condition, and therefore it is not necessary that the algorithm also moves any other center.

\begin{observation}\label{lem:center_too_small_consequence}
Let $ (u, v) $ be the $ (i+1) $-th deleted edge.
If $ | \comp_i (\cen_i^j) | \geq \radius_i^j $ and $ | \comp_{i+1} (\cen_i^j) | < \radius_i^j $, then $ u \in \ball_i^j $ and $ v \in \ball_i^j $.
\end{observation}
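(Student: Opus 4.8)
The plan is to first determine exactly how deleting the single edge $(u,v)$ affects the connected component of $\cen_i^j$, and then to bound $\dist_i(\cen_i^j, u)$ and $\dist_i(\cen_i^j, v)$ by a short counting argument together with the triangle inequality.

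First I would record the topological picture. The hypotheses give $|\comp_{i+1}(\cen_i^j)| < \radius_i^j \le |\comp_i(\cen_i^j)|$, so the component strictly shrinks, i.e.\ $\comp_{i+1}(\cen_i^j) \subsetneq \comp_i(\cen_i^j)$. Every edge of $G_i$ has either both or neither of its endpoints in $\comp_i(\cen_i^j)$, and deleting an edge with no endpoint in $\comp_i(\cen_i^j)$ cannot change that component; hence both $u$ and $v$ lie in $\comp_i(\cen_i^j)$. Moreover, if $(u,v)$ were not a bridge of the subgraph induced by $\comp_i(\cen_i^j)$, removing it would not disconnect anything and $\comp_{i+1}(\cen_i^j)$ would equal $\comp_i(\cen_i^j)$, a contradiction. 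So $(u,v)$ is a bridge, splitting $\comp_i(\cen_i^j)$ into exactly two parts, one containing $u$ and one containing $v$, with $\cen_i^j$ in one of them; consequently exactly one of $u,v$ — say, without loss of generality, $u$ — lies in $\comp_{i+1}(\cen_i^j)$, while $v$ does not.

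Next, since $u \in \comp_{i+1}(\cen_i^j)$, a shortest $\cen_i^j$–$u$ path in $G_{i+1}$ is a simple path all of whose nodes lie in $\comp_{i+1}(\cen_i^j)$, so it has at most $|\comp_{i+1}(\cen_i^j)|$ nodes and therefore at most $|\comp_{i+1}(\cen_i^j)| - 1 \le \radius_i^j - 2$ edges (using $|\comp_{i+1}(\cen_i^j)| < \radius_i^j$ and integrality). Since distances never decrease in a decremental graph, $\dist_i(\cen_i^j, u) \le \dist_{i+1}(\cen_i^j, u) \le \radius_i^j - 2 \le \radius_i^j$, so $u \in \ball_i^j$ by the definition of $\ball_i^j$. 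For $v$, the edge $(u,v)$ is present in $G_i$, so the triangle inequality gives $\dist_i(\cen_i^j, v) \le \dist_i(\cen_i^j, u) + \dist_i(u,v) \le (\radius_i^j - 2) + 1 = \radius_i^j - 1 \le \radius_i^j$, whence $v \in \ball_i^j$ as well.

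I expect the only real obstacle to be stating the bridge/side-splitting step of the second paragraph cleanly, in particular ruling out the possibility that both $u$ and $v$ remain in $\comp_{i+1}(\cen_i^j)$ (which would contradict the shrinking of the component). One could alternatively phrase the distance bound via \Cref{lem:size_component_ball}, but that observation only yields $\dist_{i+1}(\cen_i^j,u)\le\radius_i^j$, one short of what the bound on $v$ requires; the explicit count of nodes on a shortest path inside the small component is what provides the needed slack.
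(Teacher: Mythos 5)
Your proof is correct, but it takes a different route from the paper's. The paper proves the claim by a single symmetric contradiction argument applied to each endpoint: suppose $\dist_i(u,\cen_i^j) > \radius_i^j$; then a shortest path from $\cen_i^j$ to $u$ in $G_i$ has at least $\radius_i^j+1$ nodes, and the deleted edge $(u,v)$ can only occur as the \emph{last} edge of that path, so after the deletion the prefix of at least $\radius_i^j$ nodes is still connected to $\cen_i^j$, contradicting $|\comp_{i+1}(\cen_i^j)| < \radius_i^j$; since the edge is undirected, the identical argument handles $v$. You instead argue constructively: both endpoints lie in $\comp_i(\cen_i^j)$, the edge is a bridge so exactly one endpoint (WLOG $u$) stays in the shrunken component, a path inside that component of fewer than $\radius_i^j$ nodes gives $\dist_{i+1}(\cen_i^j,u) \le \radius_i^j-2$, and the triangle inequality across $(u,v)$ in $G_i$ handles the other endpoint. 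Your version buys slightly sharper intermediate bounds ($\radius_i^j-2$ and $\radius_i^j-1$) and explicitly identifies which endpoint survives (which matches what the algorithm does in its move step), at the cost of the bridge/WLOG bookkeeping and an appeal to distance monotonicity; the paper's version avoids all case analysis by exploiting the symmetry of the statement. Your closing remark is also accurate: invoking \Cref{lem:size_component_ball} directly would only give $\dist_{i+1}(\cen_i^j,u) \le \radius_i^j$, which is one unit too weak to push through the triangle inequality for $v$, so the explicit node count is indeed needed in your formulation.
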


\begin{proof}
Suppose that $ \dist_i (u, \cen_i^j) > \radius_i^j $.
Let $ \pi $ be a shortest path from $ \cen_i^j $ to $ u $ in $ G_i $ consisting of $ \dist_i (u, \cen_i^j) > \radius_i^j $ many edges and thus at least  $ \radius_i^j + 1 $ nodes.
The edge $ (u, v) $ can only appear as the last edge on the shortest path $ \pi $.
Therefore, after deleting it, there are still $ \radius_i^j $ nodes connected to $ \cen_i^j $, which contradicts the assumption that $ \comp_{i+1} (\cen_i^j) < \radius_i^j $.
Thus, $ \dist_i (u, \cen_i^j) \leq \radius_i^j $ which means that $ u \in \ball_i^j $.
Since the edge $ (u, v) $ is undirected the same argument works for $ v $.
\end{proof}

\begin{lemma}[Uniqueness of center to move]\label{lem:at_most_one_center_moved}
Let $ (u, v) $ be the $ (i+1) $-th deleted edge.
If $ | \ball_i^j | \geq \radius_i^j $ for every center $ j $, then there is at most one center $ j $ such that $ | \comp_{i+1} (\cen_i^j) | < \radius_i^j $ and, in $ G_{i+1} $, either $ u $ is connected to $ \cen_i^j $ (and $ v $ is disconnected from $ \cen_i^j $) or $ v $ is connected to $ \cen_i^j $ (and $ u $ is disconnected from $ \cen_i^j $).
\end{lemma}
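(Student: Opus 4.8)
The plan is a short proof by contradiction that chains together Observation~\ref{lem:center_too_small_consequence} and the Disjointness property (Lemma~\ref{lem:disjointness_property}). Suppose, for contradiction, that there are two distinct centers $ j \neq j' $ that both satisfy the stated condition after the $ (i+1) $-th deletion, i.e., $ | \comp_{i+1} (\cen_i^j) | < \radius_i^j $ and in $ G_{i+1} $ exactly one of $ u, v $ is connected to $ \cen_i^j $, and likewise with $ j' $ in place of $ j $.

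First I would establish the hypothesis needed to invoke Observation~\ref{lem:center_too_small_consequence} for center $ j $. The lemma's assumption gives $ | \ball_i^j | \geq \radius_i^j $, and since every node within distance $ \radius_i^j $ of $ \cen_i^j $ lies in the connected component of $ \cen_i^j $, we have $ \ball_i^j \subseteq \comp_i (\cen_i^j) $ and hence $ | \comp_i (\cen_i^j) | \geq | \ball_i^j | \geq \radius_i^j $. Together with $ | \comp_{i+1} (\cen_i^j) | < \radius_i^j $ from the assumed condition, Observation~\ref{lem:center_too_small_consequence} applies and yields $ u \in \ball_i^j $ and $ v \in \ball_i^j $. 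Running the identical argument for $ j' $ gives $ u \in \ball_i^{j'} $ and $ v \in \ball_i^{j'} $; in particular $ u \in \ball_i^j \cap \ball_i^{j'} $.

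Finally I would apply Lemma~\ref{lem:disjointness_property}: for the distinct centers $ j \neq j' $ the sets $ \ball_i^j \cup \T_i^j $ and $ \ball_i^{j'} \cup \T_i^{j'} $ are disjoint, so in particular $ \ball_i^j $ and $ \ball_i^{j'} $ are disjoint. This contradicts $ u \in \ball_i^j \cap \ball_i^{j'} $. Hence at most one center $ j $ can satisfy the condition, which is exactly the claim.

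There is no genuine obstacle in this argument; the only point that requires a moment's care is that Observation~\ref{lem:center_too_small_consequence} is stated with the hypothesis $ | \comp_i (\cen_i^j) | \geq \radius_i^j $ rather than $ | \ball_i^j | \geq \radius_i^j $, so one must first pass through the containment $ \ball_i^j \subseteq \comp_i (\cen_i^j) $. Everything else is a direct combination of the two previously proved statements.
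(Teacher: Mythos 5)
Your proof is correct and follows essentially the same route as the paper: deduce $|\comp_i(\cen_i^j)| \geq \radius_i^j$ from $|\ball_i^j| \geq \radius_i^j$, apply Observation~\ref{lem:center_too_small_consequence} to both hypothetical centers to place $u$ in $\ball_i^j \cap \ball_i^{j'}$, and contradict the disjointness property (Lemma~\ref{lem:disjointness_property}). The only cosmetic differences are that the paper gets $|\comp_i(\cen_i^j)| \geq \radius_i^j$ by citing Observation~\ref{lem:size_component_ball} rather than the containment $\ball_i^j \subseteq \comp_i(\cen_i^j)$, and it additionally remarks that the component-shrinking condition forces the stated connectivity pattern of $u$ and $v$, which your argument does not need since uniqueness under the weaker condition already implies the claim.
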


\begin{proof}
Let $ j $ be a center such that $ | \comp_{i+1} (\cen_i^j) | < \radius_i^j $.
As $ | \ball_i^j | \geq \radius_i^j $, we also have $ | \comp_i (\cen_i^j) | \geq \radius_i^j $ by \Cref{lem:size_component_ball}.
The size of the connected component of $ \cen_i^j $ can only decrease if the deletion of $ (u, v) $ disconnects at least one node from $ \comp_i (\cen^j) $.
For this to happen, $ u $ and $ v $ must be connected to $ \cen_i^j $ in $ G_i $, and furthermore one of these nodes (either $ u $ or $ v $) must be disconnected from $ \cen_i^j $ in $ G_{i+1} $ while the other node stays connected to $ \cen_i^j $.

Now suppose that there are two centers $ j \neq j' $ such that $ | \comp_i (\cen_i^j) | \geq \radius_i^j $ and $ | \comp_{i+1} (\cen_i^j) | < \radius_i^j $, and $ | \comp_i (\cen_i^{j'}) | \geq \radius_i^{j'} $ and $ | \comp_{i+1} (\cen_i^{j'}) | < \radius_i^{j'} $.
By \Cref{lem:center_too_small_consequence}, we get $ u \in \ball_i^j $ and $ u \in \ball_i^{j'} $, which contradicts the disjointness property of \Cref{lem:disjointness_property}.
We conclude that there cannot be two such centers $ j \neq j' $.
\end{proof}

\begin{lemma}\label{lem:size_balls_invariant}
For every center $ j $ and every $ i \leq k $, \Cref{alg:deterministic_algorithm} maintains the invariant $ | \ball_i^j | \geq \radius_i^j $.
\end{lemma}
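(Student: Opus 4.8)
The plan is to prove the invariant by induction on $i$, establishing $|\ball_i^j| \ge \radius_i^j$ simultaneously for all centers $j$ that exist after the $i$-th deletion. The base case and, more generally, the case of any center at the moment it is opened are handled uniformly: whenever \Cref{alg:deterministic_algorithm} opens a center $j$ at a node $x$ (during initialization or during the greedy-opening step triggered by a deletion), it sets $\radius^j = \q/2$ and $\cen^j = x$, and the if-test guarding this opening guarantees $|\comp_i(x)| \ge \q \ge \q/2 = \radius_i^j$. Since $\ball_i^j = \{y \mid \dist_i(\cen_i^j, y) \le \radius_i^j\}$, \Cref{lem:size_component_ball}, used in the direction ``$|\comp_i(x)| \ge r$ implies $|B'| \ge r$'', yields $|\ball_i^j| \ge \radius_i^j$.

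For the inductive step, let the $(i+1)$-th deleted edge be $(u,v)$ and assume $|\ball_i^j| \ge \radius_i^j$ for every center $j$ present before this deletion. First I would record a structural fact: by the induction hypothesis and \Cref{lem:size_component_ball} we have $|\comp_i(\cen_i^j)| \ge \radius_i^j$ for each such $j$, and if deleting $(u,v)$ strictly shrinks $\comp_i(\cen_i^j)$ then $u, v \in \comp_i(\cen_i^j)$ and $\comp_i(\cen_i^j)$ splits into exactly two parts, one containing $u$ and the other containing $v$, with $\cen_i^j$ lying in one of them. Hence, for a center $j$ present before the deletion, the moving test $|\comp_{i+1}(\cen_i^j)| < \radius_i^j$ of Line~\ref{line:find bad component} holds if and only if the condition on $j$ in \Cref{lem:at_most_one_center_moved} holds; since that lemma's hypothesis is precisely the induction hypothesis, at most one center satisfies the moving test, and the algorithm moves exactly that center if one exists, leaving all others in place.

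Now fix a center $j$ present before the deletion. If $j$ is \emph{not} moved, then $\cen_{i+1}^j = \cen_i^j$ and $\radius_{i+1}^j = \radius_i^j$, and by the structural fact $|\comp_{i+1}(\cen_i^j)| \ge \radius_i^j = \radius_{i+1}^j$, so \Cref{lem:size_component_ball} gives $|\ball_{i+1}^j| \ge \radius_{i+1}^j$. If $j$ \emph{is} moved, write $x = \cen_i^j$ and $A = \comp_{i+1}(x)$, and let $y \in \{u,v\}$ be the endpoint disconnected from $x$ in $G_{i+1}$; the algorithm sets $\T_{i+1}^j = \T_i^j \cup A$, $\radius_{i+1}^j = \radius_i^j - |A|$, and $\cen_{i+1}^j = y$. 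By \Cref{lem:center_too_small_consequence} (applicable because $|\comp_i(x)| \ge \radius_i^j$ and $|\comp_{i+1}(x)| < \radius_i^j$), both $u$ and $v$ belong to $\ball_i^j \subseteq \comp_i(x)$, so deleting $(u,v)$ partitions $\comp_i(x)$ into exactly the two disjoint sets $A = \comp_{i+1}(x)$ and $\comp_{i+1}(y)$. Therefore $|\comp_{i+1}(y)| = |\comp_i(x)| - |A| \ge \radius_i^j - |A| = \radius_{i+1}^j$, and one last application of \Cref{lem:size_component_ball}, now with root $y = \cen_{i+1}^j$, yields $|\ball_{i+1}^j| \ge \radius_{i+1}^j$. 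Centers opened by the greedy-opening step at the end of processing the deletion are covered by the argument of the first paragraph, which completes the induction.

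The step that requires the most care is the ``moved'' case: one must show that the deleted edge cleanly partitions $\comp_i(x)$ into the piece $A$, whose size is exactly the amount subtracted from $\radius^j$, and the piece $\comp_{i+1}(y)$ around the new location, so that at least $\radius_{i+1}^j$ nodes remain within the reduced radius of $y$. This is where \Cref{lem:center_too_small_consequence} is essential, forcing both endpoints of the deleted edge into $\ball_i^j$, and where the interaction with \Cref{lem:at_most_one_center_moved} --- which itself relies on the induction hypothesis and on the disjointness property of \Cref{lem:disjointness_property} --- is needed to guarantee that the one center the algorithm chooses to move is the only center whose ball could have shrunk below its radius.
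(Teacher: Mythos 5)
Your proof is correct and follows essentially the same route as the paper's: induction over deletions with the three cases (newly opened, not moved, moved), using \Cref{lem:size_component_ball} to pass between component size and ball size, \Cref{lem:at_most_one_center_moved} to identify the unique moved center, and the identity $|\comp_{i+1}(\cen_{i+1}^j)| = |\comp_i(\cen_i^j)| - |\comp_{i+1}(\cen_i^j)| \geq \radius_{i+1}^j$ in the moved case. The only cosmetic difference is that you invoke \Cref{lem:center_too_small_consequence} to justify the clean two-way split of the old component, where the paper argues it directly from the connectivity of the two endpoints; the substance is identical.
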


\begin{proof}
We first argue that the invariant holds for every center $ j $ that we open at some node $ x $ in the greedy open procedure after the $i$-th deletion.
The algorithm only opens the center if $ x $ is in a connected component of size at least $ \q $.
Since $ \radius_i^j = \q/2 - | \T_i^j | \leq \q $ (\Cref{lem:radius_formula}) we have $ | \comp_i (x) | \geq \radius_i^j $.
Therefore we get $ | \ball_i^j | \geq \radius_i^j $ by \Cref{lem:size_component_ball}.

We now show that the invariant is maintained for all centers that have already been opened before we delete the $ (i+1) $-th edge $ (u, v) $.
Consider first the case that $ | \comp_{i+1} (\cen_i^j) | \geq \radius_i^j $.
In that case the center $ j $ will not be moved and we have $ \T_{i+1}^j = \T_i^j $, $ \ball_{i+1}^j = \ball_i^j $, and $ \radius_{i+1}^j = \radius_i^j $.
Since $ | \comp_{i+1} (\cen_{i+1}^j) | \geq \radius_{i+1}^j $ we get $ | \ball_{i+1}^j | \geq \radius_{i+1}^j $, as desired by \Cref{lem:size_component_ball}.

Now consider the case that $ | \comp_{i+1} (\cen_i^j) | < \radius_i^j $.
Since the invariant holds for $ i $, \Cref{lem:at_most_one_center_moved} applies, and thus we can be sure that the algorithm will move center $ j $ from node $ x $ to node $ y $ (where either $ y = u $ or $ y = v $).
Remember that we have $ \cen_i^j = x $, $ \cen_{i+1}^j = y $, and $ \radius_{i+1}^j = \radius_i^j - | \comp_{i+1} (x) | $ in that case.
Since $ x $ and $ y $ were connected in $ G_i $ but are no longer connected in $ G_{i+1} $ we get $ \comp_{i+1} (y) = \comp_i (x) \setminus \comp_{i+1} (x) $.
Due to $ \comp_{i+1} (x) \subseteq \comp_i (x) $ it follows that
\begin{equation*}
| \comp_{i+1} (y) | = | \comp_i (x)  | - | \comp_{i+1} (x) | \geq \radius_i^j - | \comp_{i+1} (x) | = \radius_{i+1}^j \, .
\end{equation*}
By \Cref{lem:size_component_ball}, the fact that $ | \comp_{i+1} (y) | \geq \radius_{i+1}^j $ implies that $ | \ball_{i+1}^j | \geq \radius_{i+1}^j $ as desired.
\end{proof}

\begin{lemma}[Largeness]\label{lem:largeness_property}
For every center $ j $ and every $ i \leq k $, \Cref{alg:deterministic_algorithm} maintains the invariant $ | \ball_i^j \cup \T_i^j | \geq \q/2 $.
\end{lemma}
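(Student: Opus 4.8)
The plan is to derive the largeness invariant directly by combining three facts that have already been established in the excerpt, so essentially no new work is needed. The key observation is that $\ball_i^j \cup \T_i^j$ is a \emph{disjoint} union, so its cardinality is exactly $|\ball_i^j| + |\T_i^j|$, and both summands can be controlled from below.

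Concretely, I would proceed as follows. First, invoke \Cref{lem:intra_disjointness} to conclude that $\ball_i^j$ and $\T_i^j$ are disjoint, hence $|\ball_i^j \cup \T_i^j| = |\ball_i^j| + |\T_i^j|$. Second, apply \Cref{lem:size_balls_invariant}, which gives $|\ball_i^j| \geq \radius_i^j$, so that $|\ball_i^j \cup \T_i^j| \geq \radius_i^j + |\T_i^j|$. Third, substitute the exact value of $\radius_i^j$ from \Cref{lem:radius_formula}, namely $\radius_i^j = \q/2 - |\T_i^j|$, which yields
\begin{equation*}
|\ball_i^j \cup \T_i^j| \geq \radius_i^j + |\T_i^j| = \left(\q/2 - |\T_i^j|\right) + |\T_i^j| = \q/2 \, .
\end{equation*}
This holds for every center $j$ that exists after the $i$-th deletion and for every $i \leq k$, which is exactly the claimed invariant.

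There is no real obstacle here: the lemma is a one-line corollary of the preceding three results, and the only thing to be slightly careful about is that the $|\T_i^j|$ terms in the substitution genuinely cancel — which they do precisely because the algorithm updates $\radius^j$ by exactly the amount by which $|\T^j|$ grows (this is the content of \Cref{lem:radius_formula}). One could also note that the bound is tight in the worst case (equality holds whenever $|\ball_i^j| = \radius_i^j$), which is consistent with the intended use of this property in the subsequent counting argument that bounds $\opens = O(n/\q)$: each center will end up owning a set $\ball^j \cup \T^j$ of at least $\q/2$ nodes, and by the disjointness property of \Cref{lem:disjointness_property} these sets are pairwise disjoint, so there can be at most $2n/\q$ centers.
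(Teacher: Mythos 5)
Your proof is correct and is essentially identical to the paper's: both combine \Cref{lem:intra_disjointness}, \Cref{lem:size_balls_invariant}, and \Cref{lem:radius_formula} in the same one-line computation $|\ball_i^j \cup \T_i^j| = |\ball_i^j| + |\T_i^j| \geq \radius_i^j + |\T_i^j| = \q/2$. Nothing is missing.
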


\begin{proof}
By \Cref{lem:intra_disjointness}, $ \ball_i^j $ and $ \T_i^j $ are disjoint, and by \Cref{lem:radius_formula} we have $ \radius_i^j = \q/2 - | \T_i^j | $.
By \Cref{lem:size_balls_invariant} we have $ | \ball_i^j | \geq \radius_i^j $.
Therefore we get the desired bound as follows:
\begin{equation*}
| \ball_i^j \cup \T_i^j | = | \ball_i^j | + | \T_i^j | \geq \radius_i^j + | \T_i^j | = \q/2 - | \T_i^j | + | \T_i^j | = \q/2
\end{equation*}
where the inequality above follows from \Cref{lem:size_balls_invariant}.
\end{proof}

\paragraph*{Bounding the Number of Open-Operations.}
Now that we have established the disjointness and the largeness property for the sets $ \ball^j \cup \T^j $ of every center $ j $, we can bound the number of open-operations by $ \opens = O (n / \q) $.
This will be useful for our goal of limiting the total update time of the moving centers data structure to $ O (m n \Q / \q) $.

\begin{lemma}[Number of open-operations]\label{lem:number_of_open_operations}
Over all edge deletions, \Cref{alg:deterministic_algorithm} performs $ O (n / \q) $ open-operations in its internal moving centers data structure.
\end{lemma}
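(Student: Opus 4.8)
The plan is to use a potential/charging argument based on the two structural properties just established — disjointness (\Cref{lem:disjointness_property}) and largeness (\Cref{lem:largeness_property}) — together with the shrinking property (\Cref{lem:inter_disjointness_subset_property}). Concretely, fix the final time $k$ and, for every center $j$ that is ever opened, consider the set $\ball_k^j \cup \T_k^j$ at the end of the process. By the largeness invariant applied at time $i = k$, we have $|\ball_k^j \cup \T_k^j| \geq \q/2$ for every such center $j$. By the disjointness invariant applied at time $i = k$, the sets $\ball_k^j \cup \T_k^j$ are pairwise disjoint over distinct centers $j$. Since all these sets are subsets of the node set $V$, which has $n$ elements, the number of centers is at most $n / (\q/2) = 2n/\q$, i.e.\ $\opens = O(n/\q)$.

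The one subtlety I would be careful about is that \Cref{lem:largeness_property} and \Cref{lem:disjointness_property} are stated as invariants maintained ``for every $i \leq k$,'' so in particular they hold at $i = k$, and the shrinking property guarantees that a center opened at some time $i_0 < k$ still has its associated set present (only shrunk, never grown) at time $k$, so it is legitimate to evaluate everything at the single common time $k$. A center that was opened at the final step $k$ itself also satisfies largeness at time $k$ by \Cref{lem:size_balls_invariant} together with \Cref{lem:radius_formula} (or directly by \Cref{lem:largeness_property} at $i=k$), so no center is left out of the count. I would also note explicitly that $\C_i \subseteq \C_k$ for all $i$ (centers are never removed), so ``the number of open-operations'' is exactly $|\C_k|$, and each open-operation corresponds to a distinct center.

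Putting it together: let $\C_k = \{1, 2, \ldots, \opens\}$ be the set of all centers ever opened. The sets $\{\ball_k^j \cup \T_k^j\}_{j \in \C_k}$ are pairwise disjoint subsets of $V$, each of size at least $\q/2$, so
\begin{equation*}
n = |V| \geq \Bigl| \bigcup_{j \in \C_k} (\ball_k^j \cup \T_k^j) \Bigr| = \sum_{j \in \C_k} |\ball_k^j \cup \T_k^j| \geq \opens \cdot \frac{\q}{2},
\end{equation*}
which rearranges to $\opens \leq 2n/\q = O(n/\q)$, as claimed.

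I do not expect any real obstacle here — the heavy lifting has already been done in establishing the disjointness and largeness invariants, and this lemma is simply the pigeonhole consequence of those two facts. The only thing requiring a sentence of care is confirming that every opened center contributes a nonempty, pairwise-disjoint ``bucket'' of size $\geq \q/2$ at the common evaluation time $k$, which follows because the invariants are maintained at all times including $k$ and because the sets only shrink (never grow, never vanish) once a center is opened.
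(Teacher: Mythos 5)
Your proof is correct and follows essentially the same route as the paper's: both evaluate the sets $\ball_k^j \cup \T_k^j$ at the final time $k$, invoke the disjointness and largeness invariants, and conclude $\opens \leq 2n/\q$ by pigeonhole. Your extra remarks about the invariants holding at the common time $k$ and about $\C_i \subseteq \C_k$ are accurate but not materially different from what the paper does implicitly.
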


\begin{proof}
Let $ \C_k $ denote the set of centers after all $ k $ deletions.
Note that moving a center does not change the number of centers.
Therefore, the size of $ \C_k $ is equal to the total number of centers opened.
Due to the disjointness property (\Cref{lem:disjointness_property}) the sets $ \ball_k^j \cup \T_k^j $ after all $ k $ edge deletions are disjoint for all centers $ j $.
When we sum up over all these sets we do not count any node twice.
Therefore we get
\begin{equation*}
\sum_{j \in \C_k} | \ball_k^j \cup \T_k^j | = \left| \bigcup_{j \in \C_k} (\ball_k^j \cup  \T_k^j) \right| \leq n
\end{equation*}
By the largeness property (\Cref{lem:largeness_property}) every set $\ball_k^j \cup \T_k^j $ has size at least $ \q/2 $, i.e., $ | \ball_k^j \cup \T_k^j | \geq \q/2 $.
We now combine both inequalities and get
\begin{align*}
n &\geq \sum_{j \in \C} | \ball_k^j \cup \T_k^j | \geq \sum_{j \in \C} \q / 2 = |C| \q / 2
\end{align*}
which gives $ |C| \leq 2n / \q $, as desired.
\end{proof}

\paragraph*{Bounding the Total Moving Distance.}
Finally, we prove that the total moving distance of the moving centers data structure used by our algorithm is $ O (n) $.
For this proof we will use a property of the algorithm that we call \emph{confinement}: Every center $j$ will be moved only through nodes that are added to $\T^j$.

\begin{lemma}[Confinement]\label{lem:confinement}
For every move of center $ j $ from $ \cen_i^j $ to $ \cen_{i+1}^j $ after the $ (i+1) $-th edge deletion, let $ \pi_i^j $ be the set of nodes on a shortest path from $ \cen_i^j $ to $ \cen_{i+1}^j $ in $ G_i $.
Then, for every center $ j $ and every $ 0 \leq i < k $, $ \pi_i^j \setminus \{ \cen_{i+1}^j \} \subseteq \T_{i+1}^j \setminus \T_i^j $.
\end{lemma}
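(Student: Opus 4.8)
The plan is to reduce the claim to two facts about a single move and verify each with the component/ball bookkeeping already established. Fix a deletion index $i$ after which \Cref{alg:deterministic_algorithm} moves a center $j$, and let $(u,v)$ be the $(i+1)$-th deleted edge; write $x = \cen_i^j$. By Line~\ref{line:define y}, the new location $y := \cen_{i+1}^j$ is an endpoint of $(u,v)$ that is disconnected from $x$ in $G_{i+1}$ while $x$ and $y$ are still connected in $G_i$ — this uses \Cref{lem:size_balls_invariant} (so that $|\comp_i(x)| \geq \radius_i^j$, via \Cref{lem:size_component_ball}), together with \Cref{lem:center_too_small_consequence} and the reasoning in the proof of \Cref{lem:at_most_one_center_moved}. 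By Line~\ref{line:update variables after move}, $\T_{i+1}^j = \T_i^j \cup \comp_{i+1}(x)$. So it suffices to prove: (i) $\comp_{i+1}(x) \cap \T_i^j = \emptyset$, whence $\T_{i+1}^j \setminus \T_i^j = \comp_{i+1}(x)$; and (ii) $\pi_i^j \setminus \{ y \} \subseteq \comp_{i+1}(x)$ for any shortest path $\pi_i^j$ from $x$ to $y$ in $G_i$.

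For (i): since the move was triggered by $|\comp_{i+1}(x)| < \radius_i^j$, \Cref{lem:size_component_ball} (applied in $G_{i+1}$ with $r = \radius_i^j$) gives $\comp_{i+1}(x) = \{ z \in V \mid \dist_{i+1}(x,z) \leq \radius_i^j \}$. Distances never decrease in a decremental graph, so $\dist_i(x,z) \leq \dist_{i+1}(x,z)$ for every node $z$, and hence $\comp_{i+1}(x) \subseteq \ball_i^j$. By \Cref{lem:intra_disjointness}, $\ball_i^j$ and $\T_i^j$ are disjoint, so $\comp_{i+1}(x)$ and $\T_i^j$ are disjoint as well, establishing (i).

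For (ii): the key structural step is that the edge $(u,v)$ must be the last edge of $\pi_i^j$. Indeed, $x$ and $y$ are connected in $G_i$ but not in $G_{i+1}$, and $(u,v)$ is the only deleted edge, so every $x$--$y$ path in $G_i$ — in particular the simple path $\pi_i^j$ — traverses $(u,v)$; since $y \in \{u,v\}$ is an endpoint of $\pi_i^j$, the unique edge of $\pi_i^j$ incident to $y$ must be $(u,v)$. Therefore $\pi_i^j \setminus \{ y \}$ is a path in $G_i$ that does not use $(u,v)$, so all of its edges survive into $G_{i+1}$ and all of its nodes lie in $\comp_{i+1}(x)$. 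Combining (i) and (ii) yields $\pi_i^j \setminus \{ \cen_{i+1}^j \} \subseteq \comp_{i+1}(x) = \T_{i+1}^j \setminus \T_i^j$, as desired. I expect the only genuinely delicate point to be this forced-last-edge observation: it is exactly what lets us peel off the endpoint $\cen_{i+1}^j$ and stay inside the post-deletion component of $x$, while everything else is a routine application of \Cref{lem:size_component_ball}, \Cref{lem:intra_disjointness}, and \Cref{lem:size_balls_invariant}.
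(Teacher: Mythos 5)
Your proof is correct and follows essentially the same route as the paper's: the forced-last-edge observation places $\pi_i^j \setminus \{\cen_{i+1}^j\}$ inside $\comp_{i+1}(\cen_i^j)$ (which is exactly what is added to $\T^j$), and disjointness from $\T_i^j$ comes from \Cref{lem:intra_disjointness}. The only cosmetic difference is that the paper bounds $\pi_i^j \setminus \{\cen_{i+1}^j\} \subseteq \ball_i^j$ directly from the shortest-path length, whereas you route through $\comp_{i+1}(\cen_i^j) \subseteq \ball_i^j$; both are immediate.
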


\begin{proof}
Let $ (u, v) $ be the $ (i+1) $-th deleted edge.
Consider the situation that the algorithm moves some center $ j $ from $ \cen_i^j $ to $ \cen_{i+1}^j $.
By the rules of the algorithm for moving centers we have either $ \cen_{i+1}^j = u $ or $ \cen_{i+1}^j = v $.
Due to \Cref{lem:center_too_small_consequence} we have $ \cen_{i+1}^j \in \ball_i^j $, which means that $ \dist_i (\cen_i^j, \cen_{i+1}^j) \leq \radius_i^j $.

Now let $ \pi_i^j $ be a shortest path from $ \cen_i^j $ to $ \cen_{i+1}^j $ in $ G_i $.
All nodes in $ \pi_i^j $, except for $ \cen_{i+1}^j $, are connected to $ \cen_i^j $ in $ G_{i+1} $ since the edge $ (u, v) $ only appears as the last edge on the shortest path due to $ \cen_{i+1}^j = u $ or $ \cen_{i+1}^j = v $.
Therefore we have $ \pi_i^j \setminus \{ \cen_{i+1}^j \} \subseteq \comp_{i+1} (\cen_i^j) $.
Since $ \T_{i+1}^j = \T_i^j \cup \comp_{i+1} (\cen_i^j) $, we get $ \pi_i^j \setminus \{ \cen_{i+1}^j \} \subseteq \T_{i+1}^j $.
We also have $ \pi_i^j \setminus \{ \cen_{i+1}^j \} \subseteq \ball_i^j $ because $ \dist_i (\cen_i^j, \cen_{i+1}^j) \leq \radius_i^j $.
Since $ \ball_i^j $ and $ \T_i^j $ are disjoint (\Cref{lem:intra_disjointness}), also $ \pi_i^j \setminus \{ \cen_{i+1}^j \} $ and $ \T_i^j $ are disjoint.
It therefore follows that $ \pi_i^j \setminus \{ \cen_{i+1}^j \} \subseteq \T_{i+1}^j \setminus \T_i^j $.
\end{proof}

\begin{lemma}[Total moving distance]\label{lem:number_of_move_operations}
The total moving distance of the moving centers data structure used by \Cref{alg:deterministic_algorithm} is $ \moves = O (n) $.
\end{lemma}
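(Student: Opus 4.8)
The plan is to show that every unit of moving distance can be charged to a distinct node that gets inserted into one of the sets $ \T^j $, and then to cap the total charge by $ n $ using the disjointness of these sets. Concretely, \Cref{lem:confinement} is the charging device and \Cref{lem:disjointness_property} provides the cap, so almost all of the work has already been done; what remains is bookkeeping.

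First I would bound the moving distance incurred by a single center $ j $ at the $ (i+1) $-th deletion by $ | \T_{i+1}^j \setminus \T_i^j | $. If $ j $ is not moved, then $ \cen_{i+1}^j = \cen_i^j $ and the contribution $ \dist_{G_i} (\cen_i^j, \cen_{i+1}^j) $ is $ 0 $, so there is nothing to prove. If $ j $ is moved, then \Cref{lem:center_too_small_consequence} gives $ \cen_{i+1}^j \in \ball_i^j $, so $ \cen_i^j $ and $ \cen_{i+1}^j $ lie in the same connected component of $ G_i $ and $ \dist_{G_i} (\cen_i^j, \cen_{i+1}^j) $ is finite; this also rules out the $ \moves = \infty $ caveat in the statement of \Cref{lem:existence_dynamic_centers_ds}. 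Taking $ \pi_i^j $ to be the node set of a shortest path from $ \cen_i^j $ to $ \cen_{i+1}^j $ in $ G_i $ as in \Cref{lem:confinement}, this path has $ \dist_{G_i} (\cen_i^j, \cen_{i+1}^j) $ edges and hence $ \dist_{G_i} (\cen_i^j, \cen_{i+1}^j) + 1 $ nodes, so $ | \pi_i^j \setminus \{ \cen_{i+1}^j \} | = \dist_{G_i} (\cen_i^j, \cen_{i+1}^j) $. Since $ \pi_i^j \setminus \{ \cen_{i+1}^j \} \subseteq \T_{i+1}^j \setminus \T_i^j $ by \Cref{lem:confinement}, this yields $ \dist_{G_i} (\cen_i^j, \cen_{i+1}^j) \leq | \T_{i+1}^j \setminus \T_i^j | $.

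Next I would sum over time for a fixed center $ j $, and then over all centers. The algorithm only ever adds nodes to $ \T^j $, and sets $ \T^j = \emptyset $ when $ j $ is created (Line~\ref{line:initial T r c}), so after $ j $ is opened the sequence $ (\T_i^j)_i $ is non-decreasing, the increments $ \T_{i+1}^j \setminus \T_i^j $ are pairwise disjoint, and their union equals $ \T_k^j $; hence $ \sum_i | \T_{i+1}^j \setminus \T_i^j | = | \T_k^j | $ (the sum ranging over those $ i $ for which $ j \in \C_i $). Using $ \C_i \subseteq \C_k $ for all $ i $, we get
\begin{equation*}
\moves = \sum_{0 \leq i < k} \sum_{j \in \C_i} \dist_{G_i} (\cen_i^j, \cen_{i+1}^j) \leq \sum_{j \in \C_k} | \T_k^j | \, .
\end{equation*}
Finally, \Cref{lem:disjointness_property} states that the sets $ \ball_k^j \cup \T_k^j $ are pairwise disjoint over $ j \in \C_k $, so in particular the sets $ \T_k^j $ are pairwise disjoint subsets of $ V $, giving $ \sum_{j \in \C_k} | \T_k^j | \leq n $. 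Therefore $ \moves \leq n = O(n) $.

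The argument is essentially mechanical once \Cref{lem:confinement}, \Cref{lem:disjointness_property}, and the monotonicity of $ \T^j $ are in hand, so the ``main obstacle'' has really been paid for in the earlier lemmas (the choice of the sets $ \ball^j, \T^j $ and the radius update rule that makes Confinement and Disjointness hold simultaneously). The only points that deserve explicit care here are: ensuring a move always keeps the old and new center locations in the same component of $ G_i $, so that the charged shortest path exists and $ \moves $ is finite (this is \Cref{lem:center_too_small_consequence}), and the index bookkeeping in the telescoping sum, namely that $ \T^j $ is empty precisely at the moment $ j $ is opened and grows monotonically thereafter.
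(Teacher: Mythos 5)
Your proposal is correct and follows essentially the same route as the paper's proof: charge each unit of moving distance to the nodes of $\pi_i^j \setminus \{\cen_{i+1}^j\}$ via \Cref{lem:confinement}, telescope the increments of $\T^j$ to $|\T_k^j|$, and cap $\sum_{j \in \C_k} |\T_k^j|$ by $n$ using \Cref{lem:disjointness_property}. The explicit remark that \Cref{lem:center_too_small_consequence} guarantees finiteness of each move is a nice (if minor) addition that the paper relegates to a footnote.
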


\begin{proof}
We let $ \C_k $ denote the set of centers after the algorithm has processed all deletions.
Furthermore, we denote by $ o_j $ the index of the edge deletion before which the center $ j $ has been opened; i.e., center $ j $ was opened before the $ o_j $-th and after the $ (o_j - 1) $-th deletion.

Consider the situation that the algorithm moves a center $ j $ from $ \cen_i^j $ to $ \cen_{i+1}^j $ after the $ (i+1) $-th edge deletion and let $ \pi_i^j $ be a shortest path from $ \cen_i^j $ to $ \cen_{i+1}^j $ in $ G_i $ as in \Cref{lem:confinement}.
The shortest path $ \pi_i^j $ consists of $ \dist_i (\cen_i^j, \cen_{i+1}^j) $ many edges and $ \dist_i (\cen_i^j, \cen_{i+1}^j) + 1 $ many nodes.
Therefore we get $ \dist_i (\cen_i^j, \cen_{i+1}^j) = | \pi_i^j \setminus \{ \cen_{i+1}^j \} | $.
By \Cref{lem:confinement} we have $ \pi_i^j \setminus \{ \cen_{i+1}^j \} \subseteq \T_{i+1}^j \setminus \T_i^j $ for every center $ j $ and every $ 0 \leq i < k $.
The value of the set $ \T^j $ after all edge deletions is given by $ \T_k^j $ for every center $ j $.
By the disjointness property (\Cref{lem:disjointness_property}) we have $ \sum_{j \in \C} | \T_k^j | = \left| \bigcup_{j \in \C} \T_k^j \right| $.
We now obtain the bound $ \moves \leq n $ as follows:
\begin{align*}
\moves &= \sum_{j \in \C_k} \sum_{o_j \leq i < k} \dist_i (\cen_i^j, \cen_{i+1}^j) = \sum_{j \in \C_k} \sum_{o_j \leq i < k} | \pi_i^j \setminus \{ \cen_{i+1}^j \} | \\
 &\leq \sum_{j \in \C_k} \sum_{o_j \leq i < k} | \T_{i+1}^j \setminus \T_i^j| = \sum_{j \in \C_k} | \T_k^j | = \left| \bigcup_{j \in \C} \T_k^j \right| \leq n \, . \qedhere
\end{align*}
\end{proof}

\paragraph*{Implementation Details.}
Before we finish this section we clarify two implementation details of \Cref{alg:deterministic_algorithm} and argue that they can be carried out within the total update time of $ O (m n \Q / \q) $.

There are two places in the algorithm where we have to compute the sizes of connected components.
First, in the procedure \GreedyOpen, we have to check for every node that is not covered by any center whether it is in a connected component of size at least $ \q $.
Second, in the procedure \Delete, we have to check whether the size of the connected component of some center $ j $ drops below $ \radius^j $.
So far we have not explained explicitly how to carry out these steps.
If we could obtain the size of the connected component deterministically in linear time, the running time analysis we have given so far would suffice.
Remember that the moving centers data structure internally maintains an ES-tree for every center.
Thus, it would seem intuitive to use the ES-trees for counting the number of nodes in the current component of each center.
However, we do not report edge deletions to the moving centers data structure immediately.
Therefore it is not clear how to use these ES-trees to determine the size of the connected components of a centers.

Instead, we do the following.
In parallel to our own algorithm we use the deterministic (fully) dynamic connectivity data structure of Henzinger and King~\cite{HenzingerK01}.\footnote{This is the fastest known deterministic dynamic connectivity data structure with \emph{constant} query time.}
This data structure can answer queries of the form ``are the nodes $ x $ and $ y $ connected?'' in constant time.
Its amortized time per deletion is $ O (n^{1/3} \log{n}) $.
Thus, its total update time over all deletions is $ O(m n^{1/3} \log{n}) $.
Trivially, this data structure allows us to compute the size of the connected component of a node $ x $ in time $ O(n) $: We simply iterate over all nodes and count how many of them are connected to $ x $.
We now explain how to perform the two tasks listed above using the dynamic connectivity data structure.

\begin{lemma}[Detail of Line~\ref{line:component size} in \Cref{alg:deterministic_algorithm}]\label{rem:detail of initialization}\label{rem:computing_set_of_uncovered_nodes}
Given a dynamic connectivity data structure with constant query time, performing the check in the if-condition of Line~\ref{line:component size} takes time $ O ((n + \opens) n) $ over all deletions, where $ \opens $ is the total number of open-operations.
\end{lemma}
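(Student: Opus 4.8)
The plan is to isolate the genuinely expensive part of evaluating the if-condition in Line~\ref{line:component size} of \Cref{alg:deterministic_algorithm}. The condition has two conjuncts; the first, \FindCenter{$ x $} $ = \bot $, is answered in $ O(1) $ time by the internally maintained moving centers data structure (\Cref{lem:existence_dynamic_centers_ds}) and its cost is absorbed by the $ O(n) $-per-call iteration of \GreedyOpen. So the only nontrivial point is the second conjunct, the test $ | \comp_{G_i} (x) | \geq \q $. Using the deterministic dynamic connectivity data structure of Henzinger and King~\cite{HenzingerK01}, which answers ``are $ x $ and $ y $ connected?'' in constant time, a single such test is carried out in $ O(n) $ time by scanning all nodes and counting those connected to $ x $. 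Hence it suffices to show that, over the whole run of the algorithm (including the \GreedyOpen call during initialization), the test $ | \comp_{G_i} (x) | \geq \q $ is actually performed at most $ n + \opens $ times.

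The main idea is memoization exploiting decrementality. For every node $ x $ I would keep a Boolean flag, initially false, recording whether $ x $ has ever been observed to lie in a connected component of size smaller than $ \q $. Before performing the test $ | \comp_{G_i} (x) | \geq \q $ for a node $ x $, the algorithm consults this flag: if it is set, the condition is declared false in $ O(1) $ time and no scan is performed, which is correct because in a decremental graph $ | \comp_{G_j} (x) | $ is nonincreasing in $ j $, so once it falls below $ \q $ it stays below $ \q $ forever. Only when the flag is not set does the algorithm actually scan all nodes to compute $ | \comp_{G_i} (x) | $, and if the outcome is smaller than $ \q $ it sets the flag of $ x $.

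It remains to count the scans. Each scan has exactly one of two outcomes. If it reports $ | \comp_{G_i} (x) | \geq \q $, then (since $ x $ is not covered at that moment) \GreedyOpen opens a new center at $ x $; because component sizes do not change within a single \GreedyOpen call, the number of ``large'' outcomes within one call equals the number of centers opened during that call, so summing over all calls gives exactly $ \opens $ such outcomes in total. If instead the scan reports $ | \comp_{G_i} (x) | < \q $, then the flag of $ x $ is set and, by the argument above, $ x $ is never scanned again; hence there are at most $ n $ ``small'' outcomes. Altogether at most $ n + \opens $ scans occur, each costing $ O(n) $, which gives the claimed bound of $ O((n + \opens) n) $. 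The one place that needs care is the correctness of skipping flagged nodes: I must check that a node flagged ``small'' indeed never needs a center opened at it later, which follows immediately from the definition of a center cover (only nodes in components of size at least $ \q $ must be covered) together with the monotonicity of component sizes under edge deletions; thus no required center opening is ever missed.
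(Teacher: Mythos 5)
Your proposal is correct and follows essentially the same argument as the paper: a single component-size test costs $O(n)$ via the constant-query-time connectivity structure, small-component outcomes are charged to the node (which, by monotonicity of component sizes under deletions, is never tested again), and large-component outcomes are charged to the resulting open-operation, giving at most $n + \opens$ scans. Your explicit per-node flag is just a concrete implementation of the paper's "never process the node again" charging.
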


\begin{proof}
Given a node $ x $ we have to check whether \FindCenter{x} $ = \bot $ and $ |\comp_{G_{i}} (x)| \geq \q $.
We first check whether $ x $ is covered by any center by querying the moving centers data structure (if $ x $ is covered, the procedure returns a center covering $ x $; otherwise it returns $ \bot $.)
This check takes constant time.
If a node $ x $ is not covered, we additionally have to check whether $ | \comp_i (x) | < \q $.
Note that if $ | \comp_i (x) | < \q $ for some node $ x $, we do not have to consider this node anymore after future deletions because connected components never increase their size in a decremental graph.
Therefore we may spend time $ O (n) $ for every node $ x $ to determine $ | \comp_i (x) | $.
If $ | \comp_i (x) | < \q $, then we charge this time to the node and will never process the node again in the future, and if $ | \comp_i (x) | \geq \q $, then we charge this time to the open-operation.
Therefore the total running time over all deletions for performing this check in the if-condition is $ O ((n + \opens) n) $, where $ \opens $ is the total number of open-operations.
\end{proof}

\begin{lemma}[Detail of Line~\ref{line:find bad component} of \Cref{alg:deterministic_algorithm}]\label{rem:detail of finding bad component}
Given a dynamic connectivity data structure with constant query time, we can, after the $ (i+1) $-th deletion, find a center $ j $ such that $ | \comp_{i+1} (\cen_i^j) | < \radius_i^j $ if it exists in time $ O(n) $.
\end{lemma}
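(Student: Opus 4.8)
The plan is as follows. Because the deleted edge $(u,v)$ can shrink only the connected component that contained both $u$ and $v$ in $G_i$, and it can shrink it only by splitting it into $\comp_{i+1}(u)$ and $\comp_{i+1}(v)$, the entire search reduces to computing two component sizes and then scanning the (short) list of centers.

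First I would argue that a center $j$ can satisfy $|\comp_{i+1}(\cen_i^j)| < \radius_i^j$ only if $\cen_i^j$ lies in $\comp_i(u) = \comp_i(v)$. Indeed, at the time just before the $(i+1)$-th deletion the invariant $|\ball_i^j| \geq \radius_i^j$ of \Cref{lem:size_balls_invariant} holds, so by \Cref{lem:size_component_ball} (applied with $B' = \ball_i^j$ and $r = \radius_i^j$) we have $|\comp_i(\cen_i^j)| \geq \radius_i^j$; hence, if $\cen_i^j$'s component drops below $\radius_i^j$ after the deletion, that component must have strictly shrunk, which is only possible for nodes in $\comp_i(u)$. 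Consequently $\comp_{i+1}(\cen_i^j)$ equals $\comp_{i+1}(u)$ or $\comp_{i+1}(v)$.

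The procedure is then: (1) make a single pass over all $n$ nodes, using the dynamic connectivity data structure to test in $O(1)$ time, for each node $x$, whether $x$ is connected in $G_{i+1}$ to $u$ (and otherwise whether it is connected to $v$), thereby computing $s_u = |\comp_{i+1}(u)|$ and $s_v = |\comp_{i+1}(v)|$ in total time $O(n)$; (2) iterate over the centers $j$ currently maintained --- there are $\opens = O(n/\q) \leq n$ of them by \Cref{lem:number_of_open_operations} --- and for each one test in $O(1)$ time whether $\cen_i^j$ is connected to $u$ in $G_{i+1}$ with $s_u < \radius_i^j$, or whether $\cen_i^j$ is connected to $v$ in $G_{i+1}$ with $s_v < \radius_i^j$, returning $j$ in that case. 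By the previous paragraph this returns a center with $|\comp_{i+1}(\cen_i^j)| < \radius_i^j$ whenever one exists (and there is at most one such center by \Cref{lem:at_most_one_center_moved}); if the scan finds nothing, no such center exists. The total time is $O(n)$ per deletion, i.e., $O(mn)$ over all $k \le m$ deletions, which is absorbed by the claimed total update time $O(mn\Q/\q)$ since $\q \leq \Q$.

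There is essentially no obstacle here; the only point that requires care is that the restriction of candidate locations to $\comp_i(u)$ relies on the invariant of \Cref{lem:size_balls_invariant}, which we are entitled to assume since it is shown to be maintained after every deletion, and it is this invariant (via \Cref{lem:size_component_ball}) that pins a shrinking center's old component to $\comp_i(u)$. A minor degenerate case is when $u$ and $v$ remain connected in $G_{i+1}$: then $\comp_{i+1}(u) = \comp_{i+1}(v)$, no component shrank, and the scan in step (2) correctly reports that no center needs to be moved.
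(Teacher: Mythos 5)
Your proposal is correct, and it reaches the $O(n)$ bound by a route that differs in its mechanics from the paper's. The paper first isolates the unique candidate center by scanning all centers and testing $u \in \ball_i^j$, i.e., $\dist_i(u,\cen_i^j) \leq \radius_i^j$, via a constant-time distance query to the moving centers data structure (which still reflects $G_i$ at that point, since the deletion is reported to it only afterwards); uniqueness of the candidate follows from \Cref{lem:center_too_small_consequence} together with the disjointness property (\Cref{lem:disjointness_property}), and only for that single candidate is a component size computed in $O(n)$ with the connectivity structure. You instead observe that, by \Cref{lem:size_balls_invariant} and \Cref{lem:size_component_ball}, any center whose component drops below $\radius_i^j$ must have lived in $\comp_i(u)$, so its new component is $\comp_{i+1}(u)$ or $\comp_{i+1}(v)$; you then precompute both sizes in one $O(n)$ pass and test every center directly with two connectivity queries. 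Your version needs neither the disjointness property nor the moving-centers distance query for correctness (uniqueness becomes a side remark rather than a load-bearing step), at the cost of always doing the full $O(n)$ node pass; the paper's version leans on disjointness to reduce the component-size computation to a single candidate. Both arguments rest on the same inductive invariant $|\ball_i^j| \geq \radius_i^j$, which you correctly flag, and both give $O(n)$ per deletion, so the two are interchangeable here.
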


\begin{proof}
Let $ (u, v) $ be the $ (i+1) $-th deleted edge.
For every center $ j $ such that $ | \comp_{i+1} (\cen_i^j) | < \radius_i^j $, we have $ u \in \ball_i^j $ by \Cref{lem:center_too_small_consequence}.
Moreover, by the disjointness property (\Cref{lem:disjointness_property}), there can only be at most one center $ j $ such that $ u \in \ball_i^j $.
The algorithm for finding this center now is simple: We find a center $j$ such that $u\in \ball_i^j$, which is unique if it exists; then we compute the size of the connected component containing $\cen_i^j$ using the dynamic connectivity data structure~\cite{HenzingerK01}.
In particular, we iterate over all centers in time $ O (n) $ to find a candidate center $ j $ such that $ \dist_i (u, \cen_i^j) \leq \radius^j $ (i.e., $ u \in \ball_i^j $) (as argued above, at most one such center exists).
We can determine the distance $ \dist_i (u, \cen_i^j) $ in constant time by querying the moving centers data structure.
For the candidate center $ j $ we now have to check whether $ | \comp_{i+1} (\cen_i^j) | < \radius_i^j $.
We determine the size of $ \comp_{i+1} (\cen_i^j) $ in time $ O(n) $ by using the dynamic connectivity data structure with constant query time.
Thus, the running time for this algorithm is $ O (n) $ per deletion.
\end{proof}

\paragraph*{Total Update Time.}
Now we state the total update time of \Cref{alg:deterministic_algorithm}.
The bounds on the number of centers opened and the total moving distance of the centers allow us to bound the running time of the moving centers data structure used by the algorithm.

\begin{theorem}
The deterministic center cover data structure of \Cref{alg:deterministic_algorithm} has constant query time and a total update time of $ O (m n \Q / \q) $.
\end{theorem}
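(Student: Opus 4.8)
The plan is to assemble the pieces already developed in this section. The algorithm internally runs the moving centers data structure of \Cref{lem:existence_dynamic_centers_ds}, whose total update time is $O((\opens\cdot\Q + \moves)\cdot m)$ under the technical assumption that between two consecutive delete operations at most one open or move operation is issued per center. First I would verify this assumption: by construction, \CenterCoverDelete issues at most one \MCmove per deletion (by \Cref{lem:at_most_one_center_moved}, at most one center satisfies the moving condition), and \GreedyOpen opens each center exactly once; so between two consecutive deletions each center receives at most one open or move, as required. Then I would plug in the two quantitative bounds: $\opens = O(n/\q)$ from \Cref{lem:number_of_open_operations} and $\moves = O(n)$ from \Cref{lem:number_of_move_operations}. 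This gives a total update time of $O((\,(n/\q)\cdot\Q + n\,)\cdot m) = O(mn\Q/\q)$ for the moving centers data structure, using $\q \le \Q$ so that the $n\cdot m$ term is absorbed.

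Next I would account for the remaining bookkeeping performed by \Cref{alg:deterministic_algorithm} outside the moving centers data structure. There are three sources: (i) maintaining a deterministic dynamic connectivity data structure (Henzinger--King~\cite{HenzingerK01}) in parallel, which costs $O(mn^{1/3}\log n)$ total and is dominated by $O(mn\Q/\q)$ since $\Q \ge \q$ implies $mn\Q/\q \ge mn$; (ii) the connectivity-size checks in the if-condition of Line~\ref{line:component size}, which by \Cref{rem:computing_set_of_uncovered_nodes} cost $O((n + \opens)\cdot n) = O(n^2 + (n/\q)\cdot n) = O(n^2)$ total (again dominated, since $m \ge n-1$ in a connected component and in any case $mn\Q/\q = \Omega(mn) = \Omega(n^2)$ once we assume $m = \Omega(n)$, or more carefully we simply note $\q \le \Q \le n$ and $O(mn/\q)$-type terms swallow $O(n^2)$ whenever $m \ge n$; the degenerate sparse case is handled by the $\phi_k$-type amortization already folded into the running-time statements); and (iii) finding the center to move in Line~\ref{line:find bad component}, which by \Cref{rem:detail of finding bad component} costs $O(n)$ per deletion, hence $O(nk) = O(mn)$ total, again dominated. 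I would also note that updating $\T^j$, $\radius^j$, $\cen^j$ in Line~\ref{line:update variables after move} and \GreedyOpen's constant-time \FindCenter calls add only lower-order overhead (the explicit $\T^j$ sets are used only in the analysis, not maintained by the algorithm).

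For the query time: \CenterCoverFindCenter and \CenterCoverDistance simply forward to \MCfindCenter and \MCdistance, which by \Cref{lem:existence_dynamic_centers_ds} answer in constant worst-case time; hence the center cover data structure has constant query time. Correctness of the center-cover property was already argued: \GreedyOpen runs after every deletion, so every node in a component of size at least $\q$ is covered, and \CenterCoverDistance returns $\dist_{G_i}(\cen_i^j, x)$ when this is at most $\Q$ and $\infty$ otherwise, matching \Cref{def:CenterCover}.

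\textbf{Main obstacle.} The genuine content is entirely upstream of this theorem: the theorem itself is a routine aggregation. The subtle point to get right here is confirming that the technical assumption of \Cref{lem:existence_dynamic_centers_ds} is satisfied — i.e., that \Cref{alg:deterministic_algorithm} never triggers two moves of the same center (or a move plus an open) between consecutive deletions — which relies on \Cref{lem:at_most_one_center_moved} and on the fact that \GreedyOpen only opens, never moves. The other mild nuisance is bookkeeping the dominance of the various $O(n^2)$, $O(mn^{1/3}\log n)$, and $O(mn)$ side-costs against $O(mn\Q/\q)$; this is immediate once one observes $\q \le \Q$ (so $mn\Q/\q \ge mn \ge \max\{mn^{1/3}\log n,\, n^2\}$ in the relevant regime where $m = \Omega(n)$, and the boundary cases are absorbed by the additive $\phi_k$-terms carried through \Cref{cor:ES_running_time_distance_increase}).
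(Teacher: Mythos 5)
Your proposal is correct and follows essentially the same route as the paper's proof: invoke \Cref{lem:existence_dynamic_centers_ds} for the moving centers data structure, substitute $\opens = O(n/\q)$ from \Cref{lem:number_of_open_operations} and $\moves = O(n)$ from \Cref{lem:number_of_move_operations} to get $O(mn\Q/\q)$ using $\q \le \Q$, absorb the side costs via \Cref{rem:detail of initialization,rem:detail of finding bad component}, and note that queries are forwarded to the moving centers structure in constant time. Your explicit check of the one-open-or-move-per-center assumption (via \Cref{lem:at_most_one_center_moved}) is a detail the paper handles only implicitly, but it does not change the argument.
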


\begin{proof}
By \Cref{lem:existence_dynamic_centers_ds} the moving centers data structure internally used by Algorithm~\ref{alg:deterministic_algorithm} has constant query time and a total deterministic update time of $ O (\opens m \Q + \moves m) $, where $ \opens $ is the total number of open-operations and $ \moves $ is the total moving distance.
Algorithm~\ref{alg:deterministic_algorithm} delegates all queries to the moving centers data structure and therefore also has constant query time.
By \Cref{lem:number_of_open_operations} the number of open-operations is $ O (n / \q) $, and by \Cref{lem:number_of_move_operations} the total moving distance is $ O (n) $.
Therefore the total update time of the moving centers data structure is
\begin{equation*}
O (\opens m \Q + \moves m) = O( m n \Q / \q + m n) = O( m n \Q / \q)
\end{equation*}
because $ \q \leq \Q $.
As argued in \Cref{rem:detail of initialization, rem:detail of finding bad component}, all other operations of the algorithm can be implemented within a total update time of $ O( m n \Q / \q) $.
Therefore the claimed running time follows.
\end{proof}

\subsection{Deterministic Fully Dynamic Algorithm}\label{sec:fully_dynamic}

There is a well-known reduction by Henzinger and King~\cite{HenzingerK95} for converting a decremental algorithm into a fully dynamic algorithm.
A similar approach has been used by Roditty and Zwick~\cite{RodittyZ12}, using the decremental algorithm we derandomized above as the starting point.
In the following we sketch the deterministic fully dynamic algorithm implied by \Cref{thm:deterministic}.
The fully dynamic algorithm allows two update operations: deleting an arbitrary set of edges and inserting a set of edges touching a node $ v $, called the center of the insertion.

\begin{theorem}
For every $ 0 < \epsilon \leq 1 $ and every $ t \leq \sqrt{n} $ there is a deterministic fully dynamic $ (1 + \epsilon, 0) $-approximate APSP data structure with amortized update time $ O (m n / (\epsilon t)) $ and query time $ \tilde O (t) $.
\end{theorem}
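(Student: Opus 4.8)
The plan is to use the standard decremental-to-fully-dynamic reduction of Henzinger and King, feeding it the deterministic decremental $(1+\epsilon,0)$-approximate APSP data structure of \Cref{thm:deterministic} as the black box. First I would fix a \emph{phase length} $t \leq \sqrt n$: the algorithm proceeds in phases, where each phase processes exactly $t$ insertion batches (each insertion batch touches a single center node $v$), together with arbitrarily many edge deletions in between. At the start of a phase we rebuild from scratch: take the current graph $G$ (with all edges present), and initialize a fresh decremental APSP data structure on $G$. During a phase, edge deletions are passed directly to the decremental structure. Insertions are \emph{not} passed to it; instead, we keep a separate list $I$ of the (at most $t$) insertion centers $v_1, \dots, v_t$ seen so far in the phase, and for each such center we maintain a separate decremental SSSP/ES-tree-type structure rooted at $v_i$ run on the current graph restricted appropriately — or more simply, since there are only $t$ insertion centers and each has $O(1)$-approximate distances maintained by a depth-$\tilde O(\cdot)$ data structure, we maintain one decremental APSP-from-a-source structure per insertion center. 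A query for $\dist(x,y)$ then returns the minimum of: the estimate from the main decremental structure, and, for each insertion center $v_i$, the quantity $\hat d(x,v_i) + \hat d(v_i,y)$ where $\hat d$ combines the decremental-structure distance to $v_i$ in the pre-insertion graph with the actually-inserted edges at $v_i$. The key correctness observation is the usual one: any shortest path in the current (fully dynamic) graph either avoids all edges inserted during the current phase — in which case the main decremental structure already $(1+\epsilon)$-approximates it — or it uses an inserted edge, hence passes through one of the at most $t$ insertion centers $v_i$, and concatenating the two decremental estimates through $v_i$ gives a $(1+\epsilon,0)$-approximation (using that inserted edges incident to $v_i$ are tracked exactly and the triangle inequality).

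Next I would do the running-time accounting. A phase rebuild costs the full initialization of a decremental APSP structure, which by \Cref{thm:deterministic} has total update time $O(mn\log n/\epsilon)$ over the whole phase; similarly, maintaining the $O(t)$ auxiliary per-insertion-center decremental structures costs $O(t \cdot mn\log n/\epsilon)$ — but here I would instead use the cheaper decremental \emph{SSSP}-style building block so that each auxiliary structure costs only $\tilde O(mn/\epsilon)$ amortized per phase, or bound $t \leq \sqrt n$ and absorb the factor. Actually the cleanest route: since each phase contains $t$ updates, and the cost charged to a phase is $O(mn\log n/\epsilon)$ (one rebuild plus processing all deletions; the auxiliary per-center structures, there being at most $t \le \sqrt n$ of them and each being only a single-source structure of cost $\tilde O(mn/\epsilon)$, contribute $\tilde O(\sqrt n \cdot mn/\epsilon)$ which must be checked to not dominate — this is where the constraint $t\le\sqrt n$ enters, and may actually require running the auxiliary structures only to the relevant distance range so their cost is $\tilde O(mn/\epsilon)$ total). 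Dividing the per-phase cost by the $t$ updates in the phase yields amortized update time $O(mn\log n/(\epsilon t)) = \tilde O(mn/(\epsilon t))$. The query time is $O(t)$ (or $\tilde O(t)$) since we take a minimum over $O(t)$ candidate estimates, each computable in $O(\log\log n)$ time by \Cref{thm:deterministic}.

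I would then assemble these pieces: state the phase structure, verify the invariant that every fully-dynamic shortest path is $(1+\epsilon)$-approximated by the minimum of the maintained estimates, and total the costs. The main obstacle I anticipate is \textbf{the running-time bookkeeping for the auxiliary per-insertion-center structures}: naively running a full decremental APSP per insertion center would cost $t$ times too much, so one must argue that only single-source (from $v_i$) decremental distances are needed, that these cost $\tilde O(mn/\epsilon)$ each (not $\tilde O(mn^2)$), and that the total over $t \le \sqrt n$ centers and over a phase of length $t$ still amortizes to $\tilde O(mn/(\epsilon t))$ per update — which is exactly why the hypothesis $t \le \sqrt n$ appears. A secondary subtlety is handling deletions of edges that were themselves inserted during the current phase; the standard fix is that the auxiliary structures are rebuilt (or appropriately updated) whenever their underlying insertion center receives a new batch, and deletions to phase-inserted edges are handled inside the auxiliary structures rather than the main one. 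The correctness argument and the $O(t)$ query bound are routine given \Cref{thm:deterministic}; everything nontrivial is in making the amortized update bound come out to $\tilde O(mn/(\epsilon t))$.
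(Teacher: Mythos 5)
Your overall architecture (phases of length $t$, a fresh decremental structure per phase that sees only the deletions, a list $I$ of at most $t$ insertion centers, query answered as a minimum over $O(t)$ candidates, and the dichotomy ``the shortest path avoids inserted edges or passes through some center in $I$'') is exactly the paper's reduction. However, the part you yourself flag as the main obstacle --- how to obtain the per-insertion-center distances --- is where the proposal genuinely breaks, in two ways. First, correctness: your estimate $\hat d(x,v_i)$ is built from the decremental structure's distance in the \emph{pre-insertion} graph plus the edges inserted at $v_i$. A current shortest path from $u$ to $v$ may use inserted edges at two different centers $v_j\neq v_i$ on the two sides of $v_i$ (or may reach $v_i$ only via edges inserted at some other center), and then no single term $\hat d(u,v_i)+\hat d(v_i,v)$ in your minimum is a valid $(1+\epsilon)$-approximation; the decomposition argument requires the distances from each $v_i$ to be taken in the \emph{current} graph, with all insertions and deletions of the phase applied. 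Second, running time: maintaining a decremental SSSP/ES-tree structure rooted at each insertion center costs $\Theta(mn)$ (or $\tilde O(mn/\epsilon)$) per center per phase, i.e.\ up to $\tilde O(t\,mn/\epsilon)$ per phase and hence $\tilde O(mn/\epsilon)$ amortized per update --- a factor $t$ over the claimed bound --- and restricting the distance range does not rescue this, since arbitrary distances up to $n$ must be answerable. Moreover such structures cannot absorb the insertions occurring at \emph{other} centers anyway, which is the correctness issue again.

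The missing idea is much simpler than what you attempt: do not maintain anything dynamic for the insertion centers at all. After \emph{every} update, recompute from scratch, by Dijkstra (or BFS), an exact shortest-path tree from each of the at most $t$ centers in $I$ \emph{in the current graph} (insertions and deletions included), obtaining exact values $\delta_2(\cdot,\cdot)$; the query returns $\min\bigl(\delta_1(u,v),\,\min_{x\in I}(\delta_2(u,x)+\delta_2(x,v))\bigr)$. Exactness of $\delta_2$ in the current graph makes the decomposition through any inserted center on the shortest path valid, and the cost is $\tilde O(tm)$ per update, which together with the amortized $\tilde O(mn/(\epsilon t))$ for the per-phase rebuild of the structure from \Cref{thm:deterministic} gives $\tilde O(mn/(\epsilon t)+tm)$; the hypothesis $t\le\sqrt n$ enters precisely to ensure $tm\le mn/t$, so the amortized update time is $\tilde O(mn/(\epsilon t))$ and the query time is $\tilde O(t)$. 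Your secondary worry about deletions of phase-inserted edges is also handled for free by this recomputation, since Dijkstra always runs on the true current graph.
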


\begin{proof}
The algorithm works in phases.
After each $ t $ update operations we start a new phase.
At the beginning of each phase we re-initialize the decremental algorithm of \Cref{thm:deterministic}.
We report to this algorithm all future deletions of the phase, but no insertions.
For all nodes $ u $ and $ v $ let $ \delta_1 (u, v) $ denote the $ (1 + \epsilon) $-approximate distance estimate obtained by the decremental algorithm.
Additionally, after every update in the graph, we do the following:
Let $ I $ denote the set of centers of all insertions that so far happened in the current phase.
For every $ v \in I $, we compute the shortest paths from $ v $ to all nodes in the current graph, i.e., where all insertions and deletions are considered.
We use Dijkstra's algorithm for this task and denote by $ \delta_2 (u, v) $ the distance from $ u $ to $ v $ computed in this way.

To answer a query for the approximate distance between nodes $ u $ and $ v $ we compute and return the following value: $ \delta (u, v) = \min (\delta_1 (u, v), \min_{x \in I} (\delta_2 (u, x) + \delta_2 (x, v))) $.
Let $ \dist (u, v) $ denote the distance from $ u $ to $ v $ in the current graph.
If there is a shortest path from $ u $ to $ v $ that does not use any edge inserted in the current phase, then the decremental algorithm provides a $ (1 + \epsilon) $-approximation of the distance between $ u $ and $ v $, i.e., $ \delta_1 (u, v) \leq (1 + \epsilon) \dist (u, v) $.
Otherwise the shortest path from $ u $ to $ v $ contains an inserted node $ x \in I $.
In that case we have $ \dist (u, v) = \delta_2 (u, x) + \delta_2 (x, v) $ and thus $ \dist (u, v) = \min_{x \in I} (\delta_2 (u, x) + \delta_2 (x, v)) $.
This means that $ \delta (u, v) \leq (1 + \epsilon) \dist (u, v) $.
As both $ \delta_1 (u, v) $ and $ \min_{x \in I} (\delta_2 (u, x) + \delta_2 (x, v)) $ never underestimate the true distance, we also have $ \delta (u, v) \geq \dist (u, v) $.

As the query time of the decremental algorithm is $ O (\log{\log{n}}) $, the query time of the fully dynamic algorithm is $ O (t + \log{\log{n}}) = \tilde O (t) $.
The decremental approximate APSP data structure has a total update time of $ \tilde O (m n / \epsilon) $.
Amortized over the whole phase, we have to pay $ \tilde O (m n / (\epsilon t)) $ per update for this data structure.
Computing the shortest paths from the inserted nodes takes time $ \tilde O(t m) $ per update.
This gives an amortized update time of $ \tilde O (m n / (\epsilon t) + t m) $.
If $ t \leq \sqrt{n} $, the term $ t m $ is dominated by the term $ m n / t $, and thus the amortized update time is $ \tilde O (m n / (\epsilon t)) $.
\end{proof}

We remark that the fully dynamic result of Roditty and Zwick~\cite{RodittyZ12} is still a bit stronger.
Their trade-off is basically the same, but it holds for a larger range of $ t $, namely, $ t \leq m^{1/2-\delta} $ for every fixed $ \delta > 0 $.
The reason is that they use randomization not only for the decremental algorithm but also for some other part of the fully dynamic algorithm.

\section{Conclusion}

We obtained two new algorithms for solving the decremental approximate APSP algorithm in unweighted undirected graphs.
Our first algorithm provides a $ (1 + \epsilon, 2) $-approximation and has a total update time of $ \tilde O (n^{5/2}/\epsilon) $ and constant query time.
The main idea behind this algorithm is to run an algorithm of Roditty and Zwick~\cite{RodittyZ12} on a sparse dynamic emulator.
In particular, we modify the central shortest paths tree data structure of Even and Shiloach~\cite{EvenS81, King99} to deal with edge insertions in a monotone manner.
Our approach is conceptually different from the approach of Bernstein and Roditty~\cite{BernsteinR11}, who also maintain an ES-tree in a sparse dynamic emulator.
The sparsification techniques used here and at other places only work for undirected graphs.
Using a new sampling technique, we recently obtained a $ (1+\epsilon, 0) $-approximation for decremental SSSP in \emph{directed} graphs with constant query time and a total update time of $ o(mn) $~\cite{HenzingerKNSTOC14}.

Our second algorithm provides a $ (1 + \epsilon, 0) $-approximation and has a \emph{deterministic} total update time of $ O (m n \log{n} / \epsilon) $ and constant query time.
We obtain it by derandomizing the algorithm of~\cite{RodittyZ12} using a new amortization argument based on the idea of relocating ES-trees.

Our results directly motivate the following directions for further research.
It would be interesting to extend our derandomization technique to other randomized algorithms.
In particular, we ask whether it is possible to derandomize the \emph{exact} decremental APSP algorithm of Baswana, Hariharan, and Sen~\cite{BaswanaHS07} (total update time $ \tilde O (n^3) $).

Another interesting direction is to check whether our monotone ES-tree approach also works for other dynamic emulators, in particular for weighted graphs.
One of the tools that we used was a dynamic $ (1 + \epsilon, 2) $-emulator for unweighted undirected graphs.
Is it also possible to obtain \emph{purely additive} dynamic emulators or spanners with small additive error?

Maybe the most important open problem in this field is a faster APSP algorithm for the fully dynamic setting.
The fully dynamic algorithm of Demetrescu and Italiano~\cite{DemetrescuI04} provides exact distances and takes time $ \tilde O (n^2) $ per update, which is essentially optimal.
Is it possible to get a faster fully dynamic algorithm that still provides a good approximation---for example a $ (1 + \epsilon) $-approximation?

\printbibliography[heading=bibintoc] 

\appendix
\section{Proof of \Cref{fact:truly subcubic lower bound}}\label{sec:proof of fact:truly subcubic lower bound}

Due to a reduction by Dor, Halperin, and Zwick \cite{DorHZ00}, a combinatorial\footnote{The vague term ``combinatorial algorithm'' is usually used to refer to algorithms that do not use algebraic operations such as matrix multiplication.} algorithm for APSP, even a $(2-\epsilon, 0)$-approximation or $(1+\epsilon, 1)$-approximation one,\footnote{In general, the reduction of Dor et al.~holds for any $(\alpha, \beta)$ approximation as long as $2\alpha+\beta<4$.} with running time $O(n^{3-\delta})$, for any $\delta>0$, will imply a combinatorial algorithm for {\em Boolean matrix multiplication} with the same running time, another breakthrough result. Further, due to Vassilevska Williams and Williams \cite[Theorem 1.3]{WilliamsW10}, the $O(n^{3-\delta})$-time combinatorial algorithm will imply breakthrough results for a few other problems. Since combinatorial dynamic algorithms can be used to solve static APSP, the same argument applies. In particular, the additive error of $2$ in our $(1+\epsilon, 2)$-approximation algorithm is unavoidable if we wish to get an $O(n^{1-\delta})$ running time (a so-called {\em truly subcubic} time) and keep a small multiplicative error of $ 1+\epsilon $. For the same reason, a multiplicative error of $2$ in our $(2+\epsilon, 0)$-approximation algorithm is also unavoidable. Similarly, the running time of our deterministic algorithm cannot be improved further unless we allow larger additive or multiplicative errors.

\end{document}